\title{Algorithmic Thresholds for Refuting Random Polynomial Systems}
\author{Jun-Ting Hsieh\thanks{Carnegie Mellon University, Supported by NSF CAREER Award \#2047933.}
\and Pravesh K. Kothari\thanksmark{1}%{Carnegie Mellon University, Supported by NSF CAREER Award \#2047933}
}
\date{\today}
\begin{document}
\maketitle

\begin{abstract}
Consider a system of $m$ polynomial equations $\{p_i(x) = b_i\}_{i \leq m}$ of degree $D\geq 2$ in $n$-dimensional variable $x \in \mathbb{R}^n$ such that each coefficient of every $p_i$ and $b_i$s are chosen at random and independently from some continuous distribution. We study the basic question of determining the smallest $m$ -- the \emph{algorithmic threshold} -- for which efficient algorithms can find \emph{refutations} (i.e.\ certificates of unsatisfiability) for such systems. This setting generalizes problems such as refuting random SAT instances, low-rank matrix sensing and certifying pseudo-randomness of Goldreich's candidate generators and generalizations. 

% This basic question generalizes the refutation variants of several well-studied problems including refuting random constraint satisfaction problems, security of Goldreich's pseudorandom generator and its generalizations, and statistical learning problems such as matrix completion, sensing and their tensor variants. 

We show that for every $d \in \N$, the $(n+m)^{O(d)}$-time canonical sum-of-squares (SoS) relaxation refutes such a system with high probability whenever $m \geq O(n) \cdot (\frac{n}{d})^{D-1}$. We prove a lower bound in the restricted \emph{low-degree polynomial} model of computation which suggests that this trade-off between SoS degree and the number of equations is nearly tight for all $d$. We also confirm the predictions of this lower bound in a limited setting by showing a lower bound on the canonical degree-$4$ sum-of-squares relaxation for refuting random quadratic polynomials. Together, our results provide evidence for an algorithmic threshold for the problem at $m \gtrsim \widetilde{O}(n) \cdot n^{(1-\delta)(D-1)}$ for $2^{n^{\delta}}$-time algorithms for all $\delta$.

Our upper-bound relies on establishing a sharp bound on the smallest integer $d$ such that degree $d-D$ polynomial combinations of the input $p_i$s generate all degree-$d$ polynomials in the ideal generated by the $p_i$s. Our lower bound actually holds for the easier problem of distinguishing random polynomial systems as above from a distribution on polynomial systems with a ``planted'' solution. Our choice of planted distribution is slightly (and necessarily) subtle: it turns out that the natural and well-studied planted distribution for quadratic systems (studied as the \emph{matrix sensing} problem in machine learning) is easily distinguishable whenever $m\geq \widetilde{O}(n)$ -- a factor $n$ smaller than the threshold in our upper bound above. Thus, our setting provides an example where refutation is harder than search in the natural planted model.

\end{abstract}

\thispagestyle{empty}
\setcounter{page}{0}
\newpage
\tableofcontents
\thispagestyle{empty}
\setcounter{page}{0}
\newpage

%!TEX root = main.tex
\section{Introduction}
Suppose you are given a system of polynomial equations $\{p_i(x)=b_i\}_{i \leq m}$ where each $p_i$ is a homogeneous polynomial of degree $D$ and each $b_i$ and each coefficient of $p_i$ are independent standard Gaussians. When $m \geq n+1$, an elementary argument\footnote{The classical Bezout's theorem says that the number of common complex zeros of $n$ ``generic'' polynomials -- this condition holds with probability 1 whenever the coefficients are independently drawn from a continuous distribution --  of degree $\leq d$ in $n$ variables is at most $d^n$. Apply this to the first $n$ equations and observe that the chance that the $n+1$-th polynomial has a zero at any of the $\leq d^n$ common zeros of the rest is $0$. Via more sophisticated arguments (e.g.~\cite{MR3106890}), such a result can be extended to random polynomial systems with coefficients chosen from discrete distributions.} shows that the system has no real (or complex!) solution with probability $1$. In this work, we study the problem of finding \emph{refutations} -- certificates of unsatisfiability -- for such random polynomial systems with $m \geq n+1$. 

When $D=1$, the classical Gauss-Jordan elimination for solving linear systems efficiently produces a refutation whenever $m \geq n+1$. When $D\geq 2$, the problem is NP-hard in the worst-case (it encodes Max-Cut) and the setting above is a (and perhaps, ``the'') natural average-case formulation. As $m$ increases, finding refutations gets easier and indeed when $m \geq N_D = \sum_{i \leq D} \binom{n+i-1}{i} = \Omega(n^D)$ (we call this the \emph{linearization threshold}), one can simply ``linearize'' the polynomials and apply Gauss-Jordan elimination to linear functions on $N_D$ variables to obtain a refutation algorithm. 

Can efficient algorithms refute random polynomial systems below the linearization threshold? More generally, what's the \emph{algorithmic threshold} -- i.e. the smallest $m=m(n,D)$ -- such that efficient algorithms can (with high probability) come up with efficiently verifiable certificates of unsatisfiability of random polynomial systems with $m$ equations?

Let's cut to the chase: in this paper, we design algorithms and prove lower bounds that suggest that polynomial time algorithms can \emph{non-trivially but not appreciably} beat the ``linearization'' threshold above. Specifically, for any $d \in \N$, and degree $D\geq 2$ polynomials, we give an $n^{O(d)}$ time algorithm that succeeds in refuting random polynomial systems with $m \gtrsim O(n) \cdot \Paren{\frac{n}{d}}^{D-1}$ equations and our lower bounds (in restricted models of computation) suggest that this trade-off is nearly tight. This threshold is smaller (but only by a constant factor) than the linearization threshold for any $d\geq 2$. This may come as a surprise since for related problems such as \emph{maximizing} low-degree polynomials, the case of degree $D=2$ polynomials (in contrast to degree $D\geq 3$) is often ``easy" and exhibits no information-computation gap. For $2^{n^{\delta}}$ time algorithms more generally, our results suggest that the algorithmic threshold beats the linearization threshold by a multiplicative factor of $\sim n^{\delta (D-1)}$. Taken together, our results suggest that the algorithmic threshold ``smoothly'' drops from $\sim n^d$ to the information-theoretic threshold of $\sim n$ as the running time budget for the refutation algorithm grows from $\poly(n)$ to $2^n$.

Before presenting our results, we discuss how the problem above is the refutation counterpart of natural algorithmic questions arising in diverse areas.  

\subsection{Random polynomial systems generalize well-studied problems}

\emph{In algebraic geometry,} the study of random polynomials and their zeros began with the 1932 paper of Bloch and Pólya~\cite{MR1576817} leading to the seminal work of Kac~\cite{MR1575245} on average number of real zeros of random univariate polynomials. More recently, beginning with work of Shub and Smale as part of their ``Bezout series'' ~\cite{MR1175980,MR1230872,MR1213484,MR1377247,MR1294430,MR2496558,MR2496559}, an influential sequence of works has focused on estimating the distribution of the number of common zeros of $n$ Gaussian random $n$-variate polynomials of degree $D$. For example, it is known that the expected number of complex common zeros grows as $\sim D^{m/2}$ -- a quadratic improvement over the the ``worst-case'' bound obtained via Bezout's theorem. Extending this to counting real common zeros requires constraining the combinatorial structure of the monomials with non-zero coefficients and more sophisticated ideas (see Kostlan's work~\cite{MR2021981} for an overview). As his 17th problem for the 21st century, Smale~\cite{smale17} asked if there is a deterministic polynomial time algorithm for finding one such common zero. A sequence of breakthroughs due to Beltrán and Pardo~\cite{MR2403529}, Burgisser and Cucker~\cite{MR2846491} and Lairez~\cite{MR3709332} resolved Smale's question and found a deterministic polynomial time algorithm based on \emph{numerical homotopy methods}. 

The problem we study in this work is a natural extension: when the number of equations is $m \leq n$, the pertinent algorithmic problem is that of counting and finding common zeros. When $m > n$, the relevant question is of finding refutations (i.e.\ certificates of unsatisfiability). 

\emph{In combinatorial optimization,} refuting random polynomial systems generalizes foundational problems such as certifying bounds on combinatorial quantities like the clique number and chromatic number of random graphs. One well-studied special case is that of refuting random constraint satisfaction problems (CSPs), which is equivalent to refuting \emph{sparse} polynomial equations (one for each ``clause'') with random coefficients over the hypercube. A long line of work~\cite{MR2121179-Feige02,MR2286509,4389511,AOW15,DBLP:conf/colt/BarakM16,DBLP:conf/stoc/RaghavendraRS17,10.1145/2746539.2746625,KMOW17,10.5555/3458064.3458092} have led to a complete understanding of algorithmic thresholds for refuting random constraint satisfaction problems in terms of a basic combinatorial property of the underlying predicate. The problem we study in this work is a natural \emph{dense} (i.e.\ all monomials appear with non-zero coefficients) counterpart to the sparse random polynomial systems arising in the study of random CSPs.

\emph{In statistical learning theory,} random polynomial systems arise naturally and in fact are closely related to the well-studied matrix and tensor sensing problems. For example, in the matrix sensing problem with ``random Gaussian measurements'', there is an unknown rank-$r$ matrix $X$ such that one is required to reconstruct $X$ from equations of the form $\langle G_i,X \rangle = b_i$ where $G_i$ are random matrices with Gaussian entries. When the rank $r = 1$ (and $X=xx^{\top}$ is symmetric) this corresponds to the problem of solving random quadratic equations (i.e.\ $D=2$) where the right hand sides correspond to evaluation of the polynomials at some unknown vector $x$. The tensor sensing problem is a generalization where instead of matrices, $G_i$s are random Gaussian tensors of order $D$. A long line of work beginning with~\cite{DBLP:conf/soda/Candes10,DBLP:journals/jmlr/Recht11,5714248} has led to essentially optimal algorithms based on semidefinite programming for solving the matrix sensing (and variants such as matrix completion) problem and more recently, much progress~\cite{DBLP:conf/colt/BarakM16,DBLP:conf/colt/PotechinS17,9317965} has been made even on the tensor variants. This work can be seen as studying the refutation variant of the matrix/tensor sensing problems for rank 1 matrices/tensors. 

\emph{In cryptography,} random polynomial systems over the reals arise naturally in a recent sequence of works that use conjectures about the \emph{hardness} of solving random polynomial systems. The work of~\cite{10.5555/3081770.3081772} led to a program~\cite{10.1007/978-3-319-63688-7_20,10.1007/978-3-319-56620-7_6,cryptoeprint:2017:250} for building \emph{indistinguishability obfuscation} (iO) based, among other components, on a certain variant of Goldreich's~\cite{DBLP:journals/iacr/Goldreich00a} candidate pseudo-random generator. An offshoot of this program recently culminated~\cite{JLS21} in the discovery of the first construction of indistinguishability obfuscation based on standard assumptions. 

At a high-level such works consider maps $f:\Z^n \rightarrow \Z^m$ where each of the $m$ outputs is computed as a low-degree $D$ polynomial $p_i$ of the $n$ inputs. The interest is in finding maps $f$ such that $D$ is small (say $2$) but for $m \gg n$ (say, $m \sim n^{1.1}$ for concreteness), the $m$-dimensional output is computationally indistinguishable from some distribution where each output is independent. Several candidate constructions of such pseudo-random generators were shown to be insecure by describing efficient algorithms (based on the sum-of-squares hierarchy of semidefinite programs studied in this work) that \emph{invert} the map $f$ -- i.e.\ compute a solution to the system of polynomial equations defined by the map~\cite{DBLP:conf/tcc/LombardiV17,BBKK18,BHJK19}. One candidate construction (see~\cite{BHJK19}) was in fact based on choosing each of the $m$ polynomials to be random quadratic polynomials as in the model studied in this paper. This work provides strong evidence for the algorithmic threshold for the refutation version of this problem.

\subsection{Our results}

\paragraph{Algorithms.} Our main algorithmic result uses the sum-of-squares hierarchy to non-trivially improve on the linearization trick for refuting random polynomial systems. We note that for the various special cases (such as random constraint satisfaction problems, the matrix/tensor sensing problems and generalizations of Goldreich's pseudo-random generator), semidefinite programs from the sum-of-squares hierarchy provide the state-of-the-art algorithms for solving/refuting polynomial systems. 

\begin{theorem}[Refutation Algorithm, Informal, See~\pref{thm:nullsatz_upper_bound} for a formal version] \label{thm:algo-main-intro}
For every $D \in \N$ and every $d \geq D$, there is a $n^{O(d)}$ time algorithm -- namely the canonical degree-$d$ sum-of-squares relaxation -- that takes input a system of $m$ polynomial equations and either correctly outputs ``infeasible'' or returns ``don't know''. When each coefficient of each input polynomial is drawn from an independent \underline{nice} distribution and $m \geq O(n^{D}/d^{D-1})$, the algorithm outputs ``infeasible'' with probability $\geq 1-1/n$.
    % Let $\NonhomSystem$ be the system of quadratic polynomial equations $\{\gs(x) = \Bs\}_{\s\in[m]}$ where the coefficients of each $\gs$ and $\Bs$ are independent standard Gaussians.
    % Then, whenever $m \geq \frac{2n^2}{d} \left(1 + \frac{4}{d}\right)$, degree-$d$ SoS can refute $\NonhomSystem$ with probability 1.
\end{theorem}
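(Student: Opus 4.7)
The plan is to realize the degree-$d$ SoS refutation as a \emph{Nullstellensatz} certificate: I will show that with high probability over the random coefficients there exist polynomials $q_1,\dots,q_m$ of degree at most $d-D$ satisfying
\[
-1 \;=\; \sum_{i=1}^m q_i(x)\bigl(p_i(x)-b_i\bigr),
\]
and the canonical degree-$d$ SoS relaxation automatically certifies infeasibility from any such identity (no SoS multiplier is needed). Writing $V_{\le k}$ for the space of polynomials of degree at most $k$ in $n$ variables and using that each $p_i$ is homogeneous of degree $D$, this identity decomposes into two pieces: (i) the components in degrees $k > d-D$ of $\sum_i q_i p_i$ must vanish (a homogeneous linear system in the high-degree coefficients of the $q_i$'s involving only $p_1,\dots,p_m$), and (ii) the remaining low-degree residue of $\sum_i q_i p_i$ together with $-\sum_i b_i q_i$ must assemble to the constant $-1$.

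To exploit this decomposition I study the multiplication map
\[
\Phi : V_{\le d-D}^{\,m} \;\to\; V_{\le d}\big/V_{\le d-D}, \qquad (q_1,\dots,q_m) \;\mapsto\; \sum_{i=1}^m q_i p_i \bmod V_{\le d-D},
\]
with kernel $K := \ker\Phi$. For $(q_i) \in K$ we have $\sum_i q_i p_i \in V_{\le d-D}$, so the restriction to $K$ of the linear map $\Xi(q) := \sum_i q_i(p_i - b_i)$ lands in $V_{\le d-D}$, and the goal reduces to showing $-1 \in \Xi(K)$. Provided $K$ has dimension at least $\dim V_{\le d-D}$ and the $b_i$'s are suitably generic (which is ensured by the ``nice'' continuous-distribution hypothesis on the $b_i$'s), the restriction $\Xi|_K$ will surject onto $V_{\le d-D}$ with high probability, in particular hitting $-1$.

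The technical heart of the proof, and the main obstacle, is establishing the \emph{sharp} rank bound promised by the abstract: for random $p_i$'s with $m \gtrsim n^D/d^{D-1}$, the image of $\Phi$ is as large as the generic prediction permits, so that $\dim K$ matches its expected value. A first-pass dimension count via $\dim V_{\le k} = \binom{n+k}{k}$ shows that full surjectivity of $\Phi$ is consistent once $m \cdot \binom{n+d-D}{d-D}$ exceeds $\binom{n+d}{d}-\binom{n+d-D}{d-D}$; upgrading this consistency into an actual rank lower bound requires analyzing the matrix of $\Phi$ in the monomial basis, whose entries are linear in the coefficients of $p_1,\dots,p_m$. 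I would prove the needed bound by combining independence of coefficients across different $p_i$'s with the continuity of their distribution, via a Schwartz--Zippel-style argument on an appropriately chosen maximal minor, or alternatively by an inductive Hilbert-function computation reminiscent of Koszul-type exactness for generic ideals. Once this rank bound is in hand, the linear algebra of the preceding paragraph immediately produces the Nullstellensatz certificate and hence the SoS refutation.
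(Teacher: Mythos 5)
Your high-level plan --- exhibit a \nullsatz certificate $-1 = \sum_i q_i(p_i - b_i)$ with $\deg q_i \le d-D$ and let the degree-$d$ SoS SDP detect it --- coincides with the paper's, and so does the reduction to a rank statement about a multiplication-by-$p_i$ matrix. The routes diverge after that, and your version leaves the decisive step open.

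The first issue is quantitative. Your dimension count, $m \cdot \binom{n+d-D}{d-D} > \binom{n+d}{d} - \binom{n+d-D}{d-D}$, gives $m \gtrsim (n/d)^D = n^D/d^D$. This is the \emph{necessary} condition for $\Phi$ to have a chance of surjecting, but it is not the threshold the theorem claims and not what one should expect to be attainable: the coefficient matrix has systematic linear dependencies (for instance Koszul-type relations among the $p_i$), and the paper's matching lower bounds (\pref{thm:low_degree_hardness}, \pref{thm:main-sos-lower-bound}) indicate that the true algorithmic threshold is $\Theta(n^D/d^{D-1})$, a factor of $d$ larger. So ``upgrading consistency to an actual rank lower bound'' cannot succeed at the dimension-count value of $m$; a Schwartz--Zippel argument requires first showing the relevant minor polynomial is not identically zero, and that non-vanishing is exactly the hard combinatorial content. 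The paper supplies it with a concrete construction (the \emph{row-rank decomposition} of \pref{def:row_rank_decomposition} and the bucket-covering of \pref{lem:cover_multisets} / \pref{lem:cover_multisets_degree_D}), which covers the degree-$d$ monomials with $O(n^D/d^{D-1})$ diagonally-dominant blocks each tied to a fresh polynomial $g_i$. You gesture at ``Koszul-type exactness for generic ideals'', but results in that direction (Fr\"oberg-type statements) are not available off-the-shelf at the generality and quantitative sharpness needed here, and your proposal gives no replacement for the explicit combinatorics.

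The second issue is that even granting a rank bound on $\Phi$, the step from ``$\dim K \ge \dim V_{\le d-D}$ and $b$ generic'' to ``$\Xi|_K$ surjects onto $V_{\le d-D}$'' does not follow: $\Xi|_K$ is a linear map whose domain $K$ is itself determined by the random $p_i$, so its image could be a proper subspace for every $b$ (and does degenerate in structured ways, e.g.\ on Koszul tuples). You would still need an argument, intertwined with the $p_i$-randomness, that this map has full rank. The paper avoids the two-stage ($\Phi$ then $\Xi|_K$) structure entirely by restricting to \emph{homogeneous} multipliers of degree exactly $d-D$ and invoking the telescoping identity $(g_1-b_1)\cdot \frac{1}{b_1}\sum_{k<d/D}(g_1/b_1)^k = (g_1/b_1)^{d/D}-1$. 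This reduces everything to hitting a single homogeneous degree-$d$ target, so only the degree-$d$ and degree-$(d-D)$ rows of the coefficient matrix matter, and the $b_i$'s enter only through the trivially-diagonal block $M_b = -b_i\,\Id$ handled inside the same decomposition. If you want to salvage your framework you should adopt that trick, since it collapses the multi-degree bookkeeping you would otherwise have to manage, and then supply an explicit full-row-rank construction in place of the appeal to genericity.
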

We note a few important comments about some implications to settings studied in average-case complexity, cryptography and proof complexity. 

\noindent \emph{Computational Model:} The algorithm works in the standard word RAM model of computation. We assume that the coefficients of all our polynomials are rational numbers. The bit-complexity of our algorithm (see~\pref{thm:nullsatz_upper_bound} for details) is polynomial in the input size (including the size of the coefficients of the input polynomials).

\bigskip
\noindent \emph{Nice Distributions:} Our algorithm works for any system of random polynomial equations as long as the coefficients are chosen from independent (possibly different for each coefficient) distributions as long as they satisfy two niceness conditions (see~\pref{def:nice-distributions}). The first asks that the distributions be supported on rational numbers with some upper bound $B$ on the bit complexity. The running time of our algorithm grows polynomially in $B$. Such a condition is essentially\footnote{In principle, there could be specialized algorithms that work with non-standard representations of real numbers. We do not know of any such algorithm. } necessary for any algorithmic result. The second condition forces a certain weak \emph{anti-concentration} property and posits that no rational number should have a probability larger than $n^{-O(d)}$. We note that $n^{O(d)}$-bit rational truncations of any natural continuous distribution such as uniform distribution on $[-1,1]$ or the standard Gaussian distribution $\cN(0,1)$ satisfy such properties.

\bigskip
\noindent \emph{Time vs Signal Strength Trade-off:} The algorithm provides a certificate of unsatisfiability of the input polynomial system since whenever the algorithm outputs ``infeasible'' it is correct. The guarantees of the algorithm provide a trade-off between running time budget (parameterized by $d$) and the number of equations (a measure of ``signal strength'' in this setting) required for the algorithm to succeed in refuting with high probability. For any $d$, the smallest $m$ for which the algorithm succeeds improves on the requirement of the linearization trick by a factor of roughly $d^{\Omega(D)}$. On the other hand, to refute at the information-theoretically minimum required $m$, the algorithm runs in time exponential in $n$. In general, by setting $d=n^{\delta}$, we obtain a $2^{O(n^{\delta})}$ time algorithm that succeeds in refuting random degree-$D$ polynomial systems with $O(n) \cdot n^{(1-\delta)(D-1)}$ equations. 

One can view this result as a generalization of the work of Bhattiprolu, Guruswami and Lee~\cite{DBLP:conf/approx/BhattiproluGL17}, and Raghavendra, Rao and Schramm~\cite{DBLP:conf/stoc/RaghavendraRS17} who proved a sum-of-squares degree vs signal-strength trade-off for certifying maxima of random tensors and refuting random CSPs, respectively. In particular, the result in~\cite{DBLP:conf/stoc/RaghavendraRS17} can be seen as a degree vs number of equations trade-off that is qualitatively similar to ours above for random \emph{$1$-sparse} polynomials (i.e.\ monomials) over the hypercube.

\bigskip
\noindent \emph{The Importance of Solutions with Typical vs Atypical Norm:} There appears to be a key and perhaps surprising difference in the setting of random polynomial system refutation when compared to random CSP refutation (and more generally, related problems such as certifying maxima of random low-degree polynomials) that we wish to highlight. For random polynomial systems arising in the context of refuting CSPs (as in~\cite{AOW15,DBLP:conf/stoc/RaghavendraRS17}), the case of degree $D=2$ polynomial systems is ``easy" and appears to exhibit essentially no information-computation gap. In contrast, in our setting, our upper bound above requires $m =\Omega(n^2)$ for polynomial time algorithms to succeed in refutation. Further, our lower bounds suggest that our algorithm above is in fact essentially optimal in the running time vs number of equations trade-off. 

This apparently paradoxical difference is related to the issue of having a good upper bound on the $\ell_2$ norm in the solution space. In the context of CSP refutation, the goal is to find certificates of unsatisfiability over the $n$-dimensional hypercube -- in particular, the solution vectors have a fixed $\ell_2$ norm of $\sqrt{n}$. Indeed, the spectral (and thus, SoS-based) refutation algorithms developed in that context continue to work even for refuting random (sparse) polynomial systems over the space of all vectors with ``typical" (with respect to the scale of the coefficients of the input polynomials $p_i$s and the right-hand sides $b_i$s) $\ell_2$ norm. 

On the other hand, in our setting, the goal is to refute the given random polynomial system over a solution space of vectors of arbitrarily large norms. This crucial difference appears to make our setting harder even for the usually tame case of  quadratic polynomials. Indeed, this becomes even more apparent when we construct our low-degree polynomial hardness described below where it's crucial to make a subtle choice of planted distribution where the solution vector needs to be of $\ell_2$-norm about $n^{1/2}$-factor larger than ``typical". 

Further, this uncertainty in $\ell_2$-norm of the solution space in fact occurs in random polynomial systems that arise in applications. For e.g., in the cryptographic applications discussed above, the ``planted" solution vectors have integer coordinates with variance $\poly(n)$ and thus, the $\ell_2$-norm is known only up to some (large) $\poly(n)$-factor. That is precisely the setting where our results apply and appear to suggest a difference from the refutation settings studied in prior works. In particular, it suggests that speculating the hardness of solving/refuting random polynomial systems based on CSP hardness results may lead to incorrect conclusions. We believe that it's an interesting goal to chalk out a full trade-off between sparsity, length of the solution vector and algorithmic thresholds. Such an endeavor is likely to yield interesting insights into the phase transitions between the qualitatively different behaviors exhibited by random polynomial systems.
% Our tradeoff is qualitatively similar and provides a similar interpolation via sub-exponential running times between the information-theoretic threshold signal strength and the threshold for the ``basic SDP''. 

\bigskip
\noindent \emph{\nullsatz vs Sum-of-Squares Refutation:} Our proof of the theorem above works by showing that there is a degree-$d$ sum-of-squares ``refutation'' (i.e.\ a proof of unsatisfiability of the polynomial system that can be written in the restricted degree-$d$ \emph{sum-of-squares proof system}) of the input polynomial equations (see~\pref{sec:proof_systems}). Thus, from a proof complexity perspective, our result shows that there are degree-$d$ sum-of-squares refutations for systems of random polynomials over the reals whenever $m \geq O(n^{D}/d^{D-1})$. Our proof in fact establishes a stronger result: we show that our certificate of unsatisfiability can be written in the (formally) weaker \nullsatz proof system. As we discuss next, this shows that for the problem of refuting random polynomial equations, the degree required for \nullsatz and sum-of-squares proof systems can only be different by some fixed constant factor. 

\paragraph{Sharp thresholds at degree 2.} For the special case of degree $d=2$ and $D=2$ (i.e.\ quadratic polynomials and degree-$2$ sum-of-squares algorithm) and standard Gaussian coefficients, we can obtain sharp constants in the threshold $m$. 
\begin{theorem}[Sharp Thresholds for Degree-2 SoS] \label{thm:deg_2_main_thm}
Let $\cG = \{g_i(x) = \B{i}\}_{i \in [m]}$ be a system of $m$ polynomial equations where each coefficient of each $g_i$ is chosen from the standard Gaussian distribution $\cN(0,1)$. Then, 
    \begin{itemize}
        \item if $m \geq \frac{n^2}{4} + \wt{O}(n)$, there is a degree-$2$ sum-of-squares refutation of the system $\cG$ with probability $0.49$.

        \item if $m \leq \frac{n^2}{4} - \wt{O}(n)$, there is no degree-$2$ sum-of-squares refutation of the system $\cG$ with probability $1-1/n$. 
    \end{itemize}
\end{theorem}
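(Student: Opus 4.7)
First I would unpack the degree-$2$ SoS refutation definition into linear algebra. Writing $g_i(x) = x^\top A_i x$ with $A_i$ the symmetric matrix of Gaussian coefficients, and parametrizing a degree-$2$ sum of squares as $q(x) = (1,x)^\top Q (1,x)$ with $Q \succeq 0$, matching coefficients in the identity $-1 = q(x) + \sum_i \lambda_i(x^\top A_i x - b_i)$ forces $Q_{0,1:n} = 0$ and reveals that a degree-$2$ SoS refutation exists iff there is $\lambda \in \mathbb{R}^m$ with $\sum_i \lambda_i A_i \preceq 0$ and $\sum_i \lambda_i b_i \geq 1$. By conic LP duality this is equivalent to the infeasibility of the SDP
\[
\{M \succeq 0 \,:\, \langle A_i, M\rangle = b_i \text{ for all } i \in [m]\},
\]
so the theorem reduces to a sharp phase-transition for random PSD feasibility with iid Gaussian measurements $\{A_i\}$ and independent Gaussian right-hand side $b$.

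Second, I would pin down the statistical dimension of the PSD cone inside $\mathrm{Sym}_n$. Under the Frobenius inner product the PSD cone is self-dual, so the conic identity $\delta(K) + \delta(K^\circ) = \dim V$ together with $\delta(K) = \delta(-K)$ (Gaussian symmetry) gives
\[
\delta(\mathrm{PSD}_n) \;=\; \tfrac{1}{2}\dim \mathrm{Sym}_n \;=\; \tfrac{n(n+1)}{4} \;=\; \tfrac{n^2}{4} + O(n).
\]
Equivalently, by the Wigner semicircle law, $\mathbb{E}\|G_+\|_F^2$ for a GOE matrix $G$ evaluates via $\int_0^2 \lambda^2 \cdot \tfrac{1}{2\pi}\sqrt{4-\lambda^2}\, d\lambda = 1/2$ to the same value, with the $1/4$ constant coming from this integral. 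I would then invoke an Amelunxen--Lotz--McCoy--Tropp conic integral geometry phase transition (or, equivalently, a Gordon min-max inequality): the random affine slice $\{\Phi(M) = b\}$ misses the PSD cone with probability $\to 1$ when $m > \delta(\mathrm{PSD}_n) + \omega(n)$ and hits it with probability $\to 1$ when $m < \delta(\mathrm{PSD}_n) - \omega(n)$, since the $\Theta(n)$-size transition window comes from $\sqrt{\dim \mathrm{Sym}_n}$. The quantitative form yields the ``$1 - 1/n$'' bound on the lower-bound side with a $\wt{O}(n)$-size shift.

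The ``no refutation'' direction then follows directly with probability $\geq 1 - 1/n$. For the ``refutation exists'' direction, infeasibility supplies a nontrivial dual cone $C = \{\lambda : \sum_i \lambda_i A_i \preceq 0\}$; picking any measurable $\lambda_0 = \lambda_0(\{A_i\}) \in C \setminus \{0\}$, the conditional distribution of $\lambda_0^\top b$ given $\{A_i\}$ is a symmetric Gaussian, so $\lambda_0^\top b > 0$ with probability $\tfrac{1}{2} - o(1) \geq 0.49$, after which rescaling by $1/(\lambda_0^\top b)$ yields the required SoS certificate. The main obstacle will be achieving the sharp constant $1/4$ within the $\wt{O}(n)$-wide window: naive constructions such as $\lambda_i \propto -\mathrm{tr}(A_i)$ analyzed via matrix Bernstein only deliver refutation at $m \gtrsim n^2 \log n$ and lose polynomial factors on the constant, whereas reaching $n^2/4$ demands the exact semicircle-law Gaussian-width computation together with a correspondingly sharp Gordon-type comparison controlling sub-leading $\wt{O}(n)$ fluctuations. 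A secondary technical subtlety is that $\Phi(M) = b$ is affine rather than linear, but with $b$ independent Gaussian, homogenization via an extra scalar variable shifts the threshold only by $O(\sqrt m) = O(n)$, which is absorbed into the window.
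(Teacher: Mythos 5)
Your proposal follows essentially the same route as the paper: reduce to feasibility of the canonical degree-$2$ SDP, compute the statistical dimension $\delta(\mathrm{PSD}_n) = n(n+1)/4$, invoke the Amelunxen--Lotz--McCoy--Tropp kinematic phase transition, and use the symmetry of the Gaussian right-hand side to get the $\approx 1/2$ success probability on the refutation side. The paper's Theorem~\ref{thm:deg_2_sos} and its proof are precisely this argument.

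One small remark on a place where the paper's presentation is cleaner than yours. You flag the affine-vs-linear issue (the constraint set $\{\langle A_i, M\rangle = b_i\}$ is an affine slice, not a linear subspace) and propose to handle it by homogenizing with an extra variable, absorbing an $O(n)$ shift into the window. The paper avoids this entirely: it applies the kinematic formula \emph{twice}, once to $K = \mathrm{span}\{G_i\}$ (to obtain, when $m$ is large, a positive-definite $M_1 = \sum c_i G_i$ whose pairing with any hypothetical feasible point forces $\sum c_i b_i \geq 0$, contradicted with probability $1/2$), and once to the orthogonal complement $K^\perp$ (to obtain, when $m$ is small, a positive-definite $M_2$ with $\langle G_i, M_2\rangle = 0$; then any solution $Y$ of the linear system plus $R M_2$ for large $R$ is a feasible PSD point). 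This two-sided use of $K$ and $K^\perp$ handles the affine constraint directly and keeps both directions as clean homogeneous conic-intersection questions, rather than going through a homogenization whose effect on the threshold must be separately quantified. Your dual-certificate framing of the refutation direction is also fine and is really the Farkas-dual view of the paper's $M_1 \in K \cap S_+$ argument.

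Overall the proposal is correct in substance and matches the paper's approach; only the handling of the affine offset differs in bookkeeping, and the paper's $K^\perp$ device is the slicker way to deal with it.
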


Our proof of \pref{thm:deg_2_main_thm} is short and is based on direct application of results from~\cite{ALMT14} that build methods based on conic integral geometry to analyze the feasibility of convex programs with random inputs. Our present analysis does not give a bit-complexity bound on the degree-$2$ sum-of-squares proofs obtained via this technology. As a result, they do not immediately imply algorithmic results. However, they do strongly suggest that the threshold value of $m$ at $d=2$ should be $\sim n^2/4$. 

It is not hard to prove that the threshold $m$ for $d=2$-degree \nullsatz refutation is $\sim n^2/2$. Thus, this result implies a factor $2$ multiplicative gap between the thresholds for \nullsatz and SoS refutations to succeed at degree $d=2$.

\paragraph{Lower bounds in the low-degree polynomial model.} Our algorithms beat the linearization trick non-trivially at all degrees $d$. However, polynomial-time methods from our schema still require $\Omega(n^{D})$ equations for the refutation to succeed. This is a factor $n^{D-1}$ larger than the information-theoretic threshold of $n+1$. Thus, it is natural to ask if this \emph{information vs computation} gap is ``real'' and in particular, if our algorithmic results are suboptimal. We provide lower bounds that suggest that our algorithmic results are tight up to absolute constant factors. 

Our lower bounds actually hold for the formally \emph{easier}\footnote{Any refutation algorithm provides a distinguishing algorithm that succeeds with high probability in distinguishing between an instance of a random polynomial system from an instance chosen at random from any planted distribution. This algorithm runs the refutation algorithm and simply returns ``not planted'' if the algorithm outputs ``infeasible''. } algorithmic task of distinguishing random systems of polynomial equations from an appropriately designed \emph{planted} distribution on polynomial systems that always admit a solution. 

In this work, we prove the following lower bound for the distinguishing variant above in the \emph{low-degree polynomial} model of computation. 

\begin{theorem}[Low-Degree Hardness, Informal, See~\pref{thm:low_degree_hardness} for a formal version] \label{thm:low-degree-hardness-intro}
Fix $D \in \N$. For every $d \leq \frac{2n}{D}$, whenever $m \leq O\Paren{\frac{n^D}{d^{D-1}}}$, there exists a probability distribution $\nu_P$ on systems of $m$ polynomial equations that admit a solution with probability $1$ such that degree-$d$ polynomials fail to distinguish between $\nu_P$ and the distribution of random polynomial systems with $m$ equations. 
\end{theorem}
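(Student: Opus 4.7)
By the now-standard reduction in the low-degree polynomial framework, proving \pref{thm:low-degree-hardness-intro} reduces to exhibiting a planted distribution $\nu_P$, every instance of which admits a common zero, together with the bound $\|L^{\le d}\|_{\nu_N}^2 = O(1)$ on the truncated likelihood ratio $L = d\nu_P/d\nu_N$ against the null distribution $\nu_N$. The plan is thus two-fold: (i) design $\nu_P$, and (ii) compute and bound the squared truncated low-degree likelihood ratio (LDLR).

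For step (i), I would sample a planted vector $z \in \R^n$ from a product distribution $\pi$ on $\R^n$ with mean-zero coordinates, then draw each polynomial $p_i$ with independent Gaussian coefficients and set $b_i := p_i(z)$. By construction, every instance in the support of $\nu_P$ has $z$ as a common zero. The subtle design choice is the scale of $\pi$: as flagged in the introduction, placing $z$ on the unit sphere (the standard matrix-sensing planting) is already distinguishable at $m \ge \wt{O}(n)$ via a spectral test on $\sum_i b_i G_i$, whose expectation carries a rank-one spike at $z^{\otimes D}$ that overpowers the noise envelope at this $m$. The remedy is to inflate $\|z\|$ by a factor of roughly $\sqrt n$, which flattens the spike while keeping the marginals of the $b_i$'s matched after a compatible rescaling of the null.

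For step (ii), I would expand $L$ in the product Hermite basis of the Gaussian null. Conditional on $z$, each pair $(G_i, b_i)$ is jointly Gaussian with cross-covariance captured by the symmetric tensor $z^{\otimes D}$, so orthogonality across the $m$ independent observations produces an expansion of the shape
\[
\|L^{\le d}\|_{\nu_N}^2 \;=\; \sum_{k \le d/2} c_k \binom{m}{k} \; \E_{z,z' \sim \pi}\!\bigl[\langle z, z'\rangle^{Dk}\bigr],
\]
where the coefficients $c_k$ encode the combinatorics of pairing Hermite modes on the $G_i$'s with their partners through $z^{\otimes D}$. The inner moment factors through the product structure of $\pi$ as a sum over partitions of $Dk$ with no singleton blocks (singletons vanish by mean-zeroness), weighted by the moments of the coordinate distribution of $\pi$.

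The principal obstacle is pushing this moment estimate to a tight termwise bound of the form $c_k \binom{m}{k} \E_{z,z'}\langle z,z'\rangle^{Dk} \lesssim (m\, d^{D-1}/n^D)^k$ uniformly across $k \le d/2$, so that the full sum is $O(1)$ precisely when $m \le c\, n^D/d^{D-1}$. Extracting the correct $d$-dependence---rather than a crude $k$-dependence---requires exploiting both the restriction $k \le d/2$ and the fact that only partitions compatible with the tensor geometry of $z^{\otimes D}$ survive; the scale of $\pi$ has been tuned precisely to cancel the leading combinatorial factor at the advertised threshold. The formal version (\pref{thm:low_degree_hardness}) presumably organizes this bookkeeping via a graph-matrix-style decomposition of the Hermite contractions, as has become standard in pseudo-calibrated SoS lower bound arguments.
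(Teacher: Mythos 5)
The central issue is the planted distribution you propose. You set $b_i := p_i(z)$ where the $p_i$ are drawn with fresh independent Gaussian coefficients; this is exactly the ``natural'' planted model that the paper's \pref{rem:planted-vs-ref} explicitly singles out as \emph{distinguishable at $m = \wt{O}(n)$}, which is far below the target threshold $O(n^D/d^{D-1})$. Your proposed remedy---inflate $\|z\|$ by $\sqrt{n}$ and ``rescale the null''---does not actually change anything: if $G_i$ is standard Gaussian and $b_i = \langle G_i, z^{\otimes D}\rangle$, then replacing $z \mapsto cz$ and $b_i \mapsto c^D b_i$ is a pure reparameterization, and rescaling $b_i$ to unit variance brings you back to the case $\|z\|=1$. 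No choice of $\pi$ in your construction can escape the fact that $b_i$ is a \emph{deterministic} linear function of $G_i$ given $z$, which forces maximal correlation between $(G_i, b_i)$ and lets $\|\sum_i b_i G_i\|_F^2$ (a degree-$4$ polynomial) detect the rank-one spike $z^{\otimes D}$ once $m \gg n^{D/2}$.

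The paper's distribution (\pref{def:planted_distribution}) is structurally different: it first samples $b_i \sim \cN(0,1)$ independently, and then samples $G_i$ as a standard Gaussian tensor \emph{conditioned} on $\langle G_i, z^{\otimes D}\rangle = \scaling\, b_i$ for a tiny scale $\scaling = o(1/(d\sqrt{m}))$, with $\|z\|=1$. Here $b_i$ is free, and $G_i$ is only weakly tied to it; the cross-covariance between $G_i$ and $b_i$ is $\scaling\, z^{\otimes D}$, so the spike in $\sum_i b_i G_i$ is damped by a factor $\scaling$, killing the spectral/Frobenius test (see \pref{rem:importance-of-scaling}, which shows $\scaling \gg \sqrt{n/m}$ would again be distinguishable). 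The large norm of the planted \emph{solution} $x^* = z/\scaling^{1/D}$ arises as a side effect of making $\scaling$ small---it is not achieved by inflating the norm of $z$ itself, which your construction conflates. Put differently, the ``knob'' that must be turned down is the coupling strength $\scaling$, not $\|z\|$, and your model has no such knob because determinism fixes the coupling at its maximum. Once this is corrected, the computation proceeds via the Hermite coefficients $\E_{\Pl}[h_\alpha(G)h_\beta(b)]$; the factor $\scaling^{\beta_\s}$ in \pref{lem:scaled_hermite_coefficient} is precisely what makes the odd-degree contributions negligible, and the remaining bookkeeping in \pref{sec:bounding_variance} is a direct count of even-degree labeled hypergraphs rather than a graph-matrix decomposition (graph matrices enter only in the degree-$4$ SoS lower bound of \pref{sec:degree_4_lower_bound}).
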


The trade-off between $d$ and $m$ achieved by \pref{thm:low-degree-hardness-intro} matches that of our algorithm in \pref{thm:algo-main-intro} up to absolute constant factors. This suggests, in particular, that the algorithmic threshold of polynomial-time algorithms might be $\Omega(n^D)$. 

The \emph{low-degree polynomial method} (see~\cite{KWB19} for a great exposition) allows distinguishers that compute thresholds of bounded-degree polynomials of input data. While low-degree polynomials might appear restricted, they capture several algorithms including power iteration, approximate message passing, and local algorithms on graphs (cf.\ \cite{DMM09,GJW20}).  Moreover, it turns out that they are enough to capture the best known spectral algorithms for several canonical problems such as planted clique, community detection, and sparse/tensor principal component analysis~\cite{BHK19,HS17,DKWB19,HKP+17}. This model arose naturally from work on constructing sum-of-squares lower bound for the planted clique problem~\cite{BHK19}. It was formalized in~\cite{HKP+17} with a concrete quantitative conjecture (called the \emph{pseudo-calibration conjecture}) which informally says that for average-case refutation problems satisfying some mild niceness conditions, degree-$(d\log n)$ lower bounds for the low-degree polynomial method for distinguishing a random draw from a random draw of some planted distribution imply lower bounds on the canonical sum-of-squares relaxation of degree-$d$ for the refutation problem. Subsequently, starting with~\cite{HS17,Hop18}, researchers have used the low-degree polynomial method as a technique to demarcate average-case algorithmic thresholds for a number of average-case algorithmic problems including densest $k$-subgraph, sparse/tensor principal component analysis, finding independent sets in random graphs among others~\cite{HKP+17,GJW20,SW20,Wei20}.

\paragraph{Sum-of-Squares lower bound at degree $4$.} We provide further evidence in favor of the thresholds suggested by both our algorithms and hardness results by proving a lower bound on the degree-$4$ sum-of-squares relaxation for refuting random \emph{quadratic} polynomial systems. Our proof is based on constructing a dual witness via pseudo-calibration -- this has become a standard technique for constructing dual witnesses for sum-of-squares lower bounds~\cite{BHK19,HKP+17,MRX20,GJJ20}. We believe that it is possible to extend our lower bounds (with the same construction of the dual witness) to both higher-degree random polynomials and higher-degree SoS relaxations. But this will likely require challenging technical work in the analysis. 

\begin{theorem}[Sum-of-Squares Lower Bound at Degree $4$, see \pref{thm:main-sos-lower-bound} for a formal version]
Let $g_1, g_2, \ldots g_m$ be homogeneous degree-$2$ polynomials in $x_1, x_2, \ldots, x_n$ with independent Gaussian coefficients. Then, whenever $m \leq n^2/\poly(\log n)$, the canonical degree-$4$ sum-of-squares relaxation fails to refute $\{g_i(x) = 0\}_{i \leq m}$ with probability at least $1-o(1)$ over the draw of $g_i$s.
\end{theorem}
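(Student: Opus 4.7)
The plan is to construct, with probability $1-o(1)$ over the draw of $g_1,\ldots,g_m$, an explicit degree-$4$ pseudo-expectation $\widetilde{\mathbb{E}}$ satisfying the ideal constraints $\widetilde{\mathbb{E}}[g_i(x)\, q(x)] = 0$ for all $q$ of degree $\leq 2$ together with positivity on degree-$\leq 2$ squares, obtained via pseudo-calibration as in \cite{BHK19,HKP+17,GJJ20}. The first step is to choose the right planted distribution $\mu_P$. I would sample $x^* \in \mathbb{R}^n$ with i.i.d.\ Gaussian coordinates of variance on the order of $n$, so that $\|x^*\|_2 \asymp n$ --- a factor $\sqrt{n}$ larger than the ``typical'' scale at which the $g_i$'s live. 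As emphasized in the introduction, this oversized norm is crucial: for the more natural planted model (variance $1$ per coordinate of $x^*$, as in matrix sensing) the planted distribution is already distinguishable at $m \gtrsim \wt{O}(n)$, whereas with this scaling the information-theoretic threshold is pushed to $m \gtrsim n^2$. Conditional on $x^*$, each $g_i$ is drawn from the Gaussian distribution on symmetric quadratic forms conditioned on $g_i(x^*) = 0$ (a Gaussian projection onto the hyperplane orthogonal to $x^* (x^*)^\top$).

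Given $\mu_P$, the pseudo-calibrated moments are defined, for each multi-index $\alpha$ of degree $\leq 4$, by
\[
\widetilde{\mathbb{E}}[x^\alpha] \;=\; \sum_{S \,:\, |S| \leq D_{\mathrm{tr}}} \mathbb{E}_{\mu_P}\!\left[ x^\alpha \, H_S(g) \right] H_S(g),
\]
where $H_S$ ranges over Hermite polynomials in the coefficients of $g_1,\ldots,g_m$, truncated at some $D_{\mathrm{tr}} = \Theta(\log n)$. The ideal axioms are easy in spirit: since $g_i(x^*)\, q(x^*) \equiv 0$ under $\mu_P$, the Fourier expansion of $g_i q$ vanishes identically; only truncation-boundary modes survive, and these are killed by a standard symmetric correction that changes the moment matrix by $n^{-\omega(1)}$ in operator norm. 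The real work is to prove the moment matrix $\Lambda := (\widetilde{\mathbb{E}}[x^{\alpha+\beta}])_{\alpha,\beta}$ is PSD on degree-$\leq 2$ indices.

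For PSD-ness I would expand $\Lambda$ as a sum $\sum_a \lambda_a M_a$ over \emph{shapes} $a$ (colored hypergraphs with left boundary $U_a$, right boundary $V_a$, internal index vertices, and hyperedges labeled by the Hermite indices of the $g_i$'s), as in Figures \ref{fig:trivial_shapes}--\ref{fig:special_shape}. The analysis splits the shapes into (i) \emph{trivial} shapes (\pref{fig:trivial_shapes}) which contribute $\Omega(1) \cdot \Id$; (ii) \emph{negligible} shapes (\pref{fig:negligible_shapes}) whose total contribution is $o(1)$ in spectral norm; and (iii) a structured family indexed by an $L$-shaped factorization. Using the identities illustrated in Figures \ref{fig:disconnected_shapes} and \ref{fig:LTL} together with the explicit formulas for $L_2, L_4$ in Figures \ref{fig:L2_shapes} and \ref{fig:L4_shapes}, I would rewrite the sum of family (iii) as $L L^\top + E$, which is manifestly PSD plus an error $E$ of spectral norm $o(1)$ when $m \leq n^2/\poly(\log n)$. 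Combining gives $\Lambda \succeq c\,\Id - o(1)\, \Id \succ 0$.

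The main obstacle is the \emph{special shape} $a^*$ of \pref{fig:special_shape}, which has graph-matrix norm $\wt{O}(n^2)$ --- much larger than the required budget and, naively, enough to destroy PSD-ness already at $m = \wt{\Theta}(n)$. The key is to show that $\lambda_{a^*} M_{a^*}$ is exactly the dominant off-diagonal block of $L_2 L_2^\top$ (expanded via \pref{fig:LTL}) and hence is absorbed into the PSD square; this cancellation, enabled by the oversized planted norm of $x^*$ from Step~1, is what buys the factor-$n$ improvement from the trivial $m \gtrsim n$ threshold to the correct $m \gtrsim n^2$. Extracting the cancellation cleanly requires careful bookkeeping of the pseudo-calibration coefficients $\lambda_a$ for every shape and, I expect, will be the main technical content of the proof; the remaining shape-norm bounds should follow from standard trace-method / graph-matrix arguments.
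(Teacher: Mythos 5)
Your high-level plan (pseudo-calibration against a conditioned-Gaussian planted distribution, expand the moment matrix into graph matrices / shapes, classify shapes, bound norms by trace method) matches the paper's strategy, and your choice of planted distribution — $g_i$ conditioned on $g_i(x^*)=0$ — is morally the one the paper uses (Definition~\ref{def:planted_distribution} with $\scaling=0$). But several of the specifics you identify as the crux of the argument are off, and at least one claim cannot be made to work.

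First, the "oversized norm of $x^*$" has no effect in this theorem. Since $b_i=0$, the constraint system $\{g_i(x)=0\}$ is homogeneous and the conditioning event $g_i(x^*)=0$ is invariant under rescaling $x^*$, so the planted distribution does not depend on $\|x^*\|$. The large-norm issue you invoke is relevant to the \emph{low-degree hardness with nonzero $b_i$} (\pref{rem:importance-of-scaling}), not to the degree-$4$ SoS lower bound.

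Second, you place the shape $a^*$ of \pref{fig:special_shape} at the center of the moment-matrix PSD argument. But $a^*$ has circle vertices in $U_a$ and $V_a$, so it cannot occur in the decomposition of $\MomentMatrix$: the moment-matrix shapes have only square vertices on the boundary (\pref{def:valid_shapes}, condition 1). The shape $a^*$ appears only in the secondary analysis of the \emph{constraint matrix} $Q$, where the paper needs a lower bound on the smallest nonzero singular value of $QQ^\top$ in order to show the perturbation that enforces the constraints exactly has negligible norm. The cancellation you describe (between $L_4L_4^\top$ and $NN^\top$) happens there; it has nothing to do with absorbing $a^*$ into a PSD block of $\Lambda$.

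Third, the actual hard shapes in the moment matrix are (a) the \emph{spider} (\pref{fig:spider}), the unique connected non-trivial non-negligible shape, and (b) the \emph{disconnected} shapes. The spider cannot be charged to anything PSD; it is instead shown to lie (up to small error) in the null space of $\MomentMatrix$ and is removed via \pref{lem:kill_spider}, using the identity $\MomentMatrix L_2^\top L_2 = 0$. The disconnected shapes are absorbed into a rank-one block $\MomentMatrix_{00}\cdot vv^\top$ built from the one-sided shapes in $\MomentMatrix_{20}$. This new charging scheme for disconnected shapes is precisely what improves the threshold from the $n^{3/2}$ of \cite{GJJ20} to $n^2/\poly(\log n)$, not any cancellation involving $a^*$. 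Your proposal does not mention the spider at all, and treating "family (iii)" monolithically as $LL^\top + E$ conflates the null-space removal of the spider with the rank-one charging of disconnected shapes, which are genuinely different mechanisms.

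Finally, the conclusion $\Lambda \succeq c\Id - o(1)\Id \succ 0$ cannot hold: the moment matrix must satisfy $\pE_\mu[q\cdot g_\s]=0$ for all degree-$\le 2$ $q$, which forces $\MomentMatrix$ to have a nontrivial kernel, so $\MomentMatrix$ is at best $\succeq 0$ and the argument has to go through the null-space structure carefully rather than via a strict spectral gap above zero.
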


\begin{remark}[Hardness of Refutation vs Hardness of Natural Planted Variants] \label{rem:planted-vs-ref}
Random polynomial systems arising in applications are often studied in two closely related variants: the refutation version for random polynomial system (the \emph{null} model) studied in this work and a related \emph{planted} variant. In the planted setting, the resulting polynomial system has a solution with probability $1$. There are three natural problems that are studied in this context: 1) efficiently distinguish between a polynomial system chosen from either random or the planted distribution, 2) efficiently find the planted solution, and 3) efficiently refute the existence of a solution.  

For average-case variants of several well-studied problems, the complexities of the three problems for natural planted and null distributions are often conjectured to be related. Indeed, researchers often prove lower bounds for the refutation problem (this turns out to be especially natural in the context of hardness for convex programs) and interpret it as a lower bound for the associated planted variant \footnote{For e.g., the works~\cite{pmlr-v40-Deshpande15,DBLP:conf/stoc/MekaPW15,DBLP:conf/soda/HopkinsKPRS16,BHK19} proving sum-of-squares lower bounds for refuting clique number of random graphs has \emph{planted or hidden clique} appearing in the title.}.

The natural, well-studied planted variant in the context of random polynomial systems happens to be the following model: a) choose polynomials $p_1, p_2, \ldots,p_m$ randomly, say, with independent Gaussian coefficients, b) choose a $z \sim \cN(0,1)^n$, and c) output $\{p_i(x) = p_i(z)\}_{i \leq m}$ with the planted solution $z$. This planted variant captures both the rank-1 case of the matrix/tensor sensing problems in machine learning and the low-degree pseudo-random generators arising in recent works~\cite{DBLP:conf/tcc/LombardiV17,BBKK18,BHJK19} on constructing indistinguishability obfuscation in cryptography. 

The planted distribution that we use in proving both low-degree hardness and our sum-of-squares lower bound is actually different from this natural variant and it turns out that this is necessary! Indeed, for $D=2$, for e.g, the natural planted distribution on quadratic systems above turns out to be solvable at nearly the information-theoretic threshold of $m=\wt{O}(n)$ via the nuclear norm minimization semidefinite program ($\sim$ degree-2 SoS). In contrast, our results suggest that refuting random degree-2 polynomial systems in polynomial time likely requires $\Omega(n^2)$ equations. On the other hand, our slightly subtle variant (see \pref{def:planted_distribution}) of the planted distribution that appears hard even with $\Omega(n^2)$ equations for all the three problems. 

Beyond the application to polynomial systems, this suggests that care must be taken in speculating hardness of natural planted variants of average-case problems based on the hardness of the refutation variants of the problem. 
\end{remark}

\subsection{Overview of our techniques}
In this section, we give a brief overview of our techniques.  

\paragraph{Algorithm via completeness of generated ideals.}
Let $\{p_i(x)-b_i=0\}_{i \leq m}$ be the input polynomial equations of degree $D$ given to the algorithm. If $x$ satisfies this system, observe that it must hold that $p_i(x)x^{\alpha} - b_i x^{\alpha} = 0$ for any monomial $x^\alpha$. Further, if $|\alpha| =d-D$ is the degree of the monomial $x^{\alpha}$, then this reasoning is ``captured'' by the degree-$d$ sum-of-squares proof system (in fact, by simply the degree-$d$ \nullsatz proof system). Thus, starting from the original polynomial system, we can ``derive'' a collection of degree-$d$ polynomial equations that must all be true if the original system is: $\{p_i(x)x^{\alpha} - b_i x^{\alpha}=0\}_{i\leq m}$ -- we call this the \emph{generated ideal at degree $d$}. 

Here's the main idea in our algorithm: suppose that the generated ideal at degree $d$ happens to be \emph{complete} -- that is, for every \emph{homogeneous} polynomial $f$ of degree $d$, there are polynomials $a_1, a_2, \ldots, a_m$ of degree $d-D$ such that $\sum_{i \leq m} a_i(x) (p_i(x)-b_i) = f(x)$. We claim that it is easy to find a refutation in this case. To see why, suppose that for some $i\leq m$, $b_i \neq 0$ (such an $i$ exists whp). Then, note that we can derive ${p_i(x)}^{d/D} = b_i^{d/D}$ from the input equations in degree $d$ (assuming $d$ is a multiple of $D$). On the other hand, since ${p_i(x)}^{d/D}$ is a homogeneous polynomial of degree $d$ and the generated ideal at degree $d$ is complete, we must also have that ${p_i(x)}^{d/D} = \sum_i a_i(x) (p_i(x)-b_i)$ for some polynomials $a_i$ of degree $d-D$. Thus, together, we can infer that $b_i^{d/D} = {p_i(x)}^{d/D}=\sum_j a_j(x) (p_j(x)-b_j)$ or:
\[
1 = b_i^{-d/D} \sum_j a_j(x) (p_j(x)-b_j)\mper
\]
This is a (degree-$d$ \nullsatz and thus, sum-of-squares) refutation since the LHS is the constant $1$ while at any $x$ that satisfies the input system, the RHS must be $0$. Finally, we can argue (see \pref{lem:low-bit-complexity-rep-arbitrary-gen-ideals}) that whenever such a polynomial identity as above exists, the $a_i$s can be guaranteed to exist with coefficients of bit-complexity polynomial in $n^d$ and the bit-complexity of the coefficients of the inputs $p_i$s. This immediately implies (via \pref{fact:finding-pseudo-distributions}) that the $n^{O(d)}$ time algorithm (see \pref{algo:refute-degree-2-polynomials}) for approximately solving the canonical degree-$d$ SoS relaxation of the polynomial system above succeeds in refuting the input polynomial system. 

Thus, our task reduces to establishing that when $m \gg n^D/d^{D-1}$, the generated ideals at degree $d$ of random $(p_i-b_i)$s are complete. Such a condition naturally yields a system of linear equations so our task reduces to proving that this system admits a solution -- that happens if and only if the coefficient matrix of the equations has full row rank. One might be tempted to prove such a claim by showing that when the $p_i$s are random, then for each $i$ and each $\alpha$, $p_i(x)x^{\alpha}-b_ix^{\alpha}$ are linearly independent when viewed as their coefficient vectors. This is false -- there are several linear dependencies between such vectors. 

Indeed, in general, such an argument requires some care as the entries of the matrix defining the linear equations are heavily correlated -- this is not surprising since the there are roughly $mn^D \leq n^{2D}$ independent random variables in the input while the matrix is of dimension roughly $n^{d}$ (and $d \gg D$). We analyze this matrix by a careful decomposition (see \pref{def:row_rank_decomposition}) that exploits the structure of the matrix to argue that whenever $m \gg n^{D}/d^{D-1}$, the resulting matrix is indeed full row rank with probability $1-n^{-O(d)}$ over the draw of the coefficients of $p_i$s. 
% =======
% Indeed, in general, such an argument requires some care as the entries of the matrix defining the linear equations are heavily correlated -- this is not surprising since the there are roughly $mn^D \leq n^{2D}$ independent random variables in the input while the matrix is of dimension $\geq n^{d}$. We analyze this matrix by a careful decomposition (see Definition~\ref{def:row_rank_decomposition}) that exploits the structure of the matrix to argue that whenever $m \gg n^{D}/d^{D-1}$, the resulting matrix is indeed full-rank with probability $1-n^{-O(d)}$ over the draw of the coefficients of $p_i$s. 
% >>>>>>> 8cf28d562e62ebe4dd8154793e06a7436792a8f7

\paragraph{Low-degree hardness and the hard-to-distinguish planted distribution.}
In order to construct our lower bound, we need to come up with a planted distribution on polynomial systems with $m \geq n+1$ equations such that 1) every system in the support always admits a solution but at the same time, 2) a draw from the planted distribution is indistinguishable from random polynomial systems that do not have a solution with probability $1$ by any low-degree polynomial in the coefficients of the input polynomials. Notice that for e.g., this must mean that low-degree polynomials in the coefficients of the input polynomials cannot approximate the fraction of polynomial constraints satisfied by any $x$. Operationally, this means that we must pick a distribution on polynomial equations that is ``as close as possible'' to random polynomial systems (the null model) while being satisfiable. 

As we discussed in \pref{rem:planted-vs-ref}, the design of the planted distribution is slightly subtle. One might be tempted to use the natural (and well-studied) variant where we pick each $p_i$ randomly just as in the null model and then choose $b_i$s to equal $p_i(x^*)$ for each $i$ for some random $x^{*}$. Notice that this must introduce correlations in $b_i$s and it in fact turns out that these correlations are strong enough that the resulting planted model can be easily distinguished from the null model by just a degree-$4$ polynomial\footnote{The degree-$4$ polynomial $(\sum_i b_i^2)^2$ is a distinguisher whp between the null and planted models.} in the coefficients of the input polynomials whenever $m \gg n$ -- the information theoretic threshold. This is not surprising as there is in fact an algorithm (the so called \emph{nuclear norm minimization} semidefinite program) that recovers the planted $x^*$ when given input a random polynomial system chosen from the planted model above. 

Instead, our construction of the planted distribution encodes subtle correlations in the polynomials $p_i$s themselves. Specifically, our planted distribution first picks $b_i$s to be independent draws from the standard Gaussian distribution, chooses a random $x^*$ \emph{of sufficiently large length that $\rightarrow \infty$ as $m \rightarrow \infty$}, and then chooses $p_i$s to be polynomials with standard Gaussian coefficients conditioned on $p_i(x^*) =b_i$. In this version, notice that $b_i$s are clearly independent but unlike the natural planted variant, the coefficients of $p_i$s are mildly correlated. We show that such correlations are subtle enough that no low-degree polynomial can ``notice'' them. The argument crucially requires that the planted solution $x^*$ have sufficiently large norm -- in \pref{rem:importance-of-scaling} we show that there's a simple distinguisher if the planted solution has bounded or slowly growing norm. It turns out that when the planted solution $x^*$ has sufficiently fast-growing norm, there's a \emph{sharp phase transition} for distinguishability by degree-$d$ polynomials at a threshold $m = O_D(n^D/d^{D-1})$ from the planted model -- a threshold that precisely matches the bound at which our algorithm works! This gives a pleasingly tight algorithmic threshold of $\Theta_D(n^D/d^{D-1})$ for distinguishing random polynomial systems from the above planted ones and thus also for refuting them. 

% <>
Our analysis of the performance of low-degree polynomial distinguishers for the above pair relies on expressing the coefficients of the ``likelihood ratio'' (ratio of the probability density functions of the planted and the null distributions) in the Hermite basis -- this is a standard strategy~\cite{KWB19,SW20}  employed in proving such results. The performance of the low-degree polynomial distinguishers is related (again via standard ideas from prior works) to the truncated low-degree likelihood ratio -- a natural quantity that depends on the density functions of the planted and the null models above. Our analysis then proceeds by combinatorial characterization and estimates for the Hermite coefficients of the planted density function. 
% =======
% Our analysis of the performance of low-degree polynomial distinguishers for the above pair relies on expressing the coefficients of the density function of the planted distribution above in the Hermite basis -- this is a standard strategy~\cite{KWB19,SW20} employed in proving such results. The performance of the low-degree polynomial distinguishers is related (again via standard ideas from prior works) to the truncated low-degree likelihood ratio -- a natural quantity that depends on the density functions of the planted and the null model above. Our analysis then proceeds by combinatorial characterization and estimates for the Hermite coefficients of the planted density function. %\Pnote{can we say something more intelligent here?}
% >>>>>>> fa0a31926a0d1a798531bbae5052e8cd124d5a22

\paragraph{Sum-of-Squares lower bounds.}
Our sum-of-squares lower bound shows that for a system of $m$ random homogeneous quadratic equations with RHS all set to $0$, there is no degree-$4$ sum-of-squares refutation as long as $m \leq n^2/\poly(\log n)$. As is standard, we show such a statement by exhibiting a dual witness --a \emph{pseudo-distribution} of degree $4$ (see \pref{def:pseudoexpectation}) -- that is consistent with the input system of polynomial equations. 

More specifically, we view the equations as $\{x^\top G_i x = 0\}_{i \leq m}$ where each $G_i$ is a matrix with independent standard Gaussian entries. A pseudo-distribution of degree $4$ satisfying such constraints is a linear map  $\pE_{\mu}$ that assigns a real number to every degree $\leq 4$ polynomial and satisfies 1) \textbf{Normalization:} $\pE_{\mu}[1] =1$, 2) \textbf{Positivity:} $\pE_{\mu}[q^2] \geq 0$ for every degree $\leq 2$ polynomial $q$, and 3) \textbf{Constraints:} $\pE_{\mu}[(x^{\top} G_i x) q]=0$ for every degree-$2$ polynomial $q$ and every $i$.

Our construction of such a map uses \emph{pseudo-calibration} -- a general technique for constructing candidate dual witnesses discovered in~\cite{BHK19}. Informally speaking, this technique gives a ``mechanical'' method of constructing a candidate pseudo-distribution for an average-case refutation problem on some null distribution given a \emph{planted} distribution that is indistinguishable from the null by low-degree polynomials. Our construction (see \pref{def:candidate-pd}) is based on the planted distribution described above in the context of our proof of lower bounds for the low-degree method. 

While pseudo-calibration makes the job of coming up with candidate pseudo-distributions easy, the analysis of the resulting construction still essentially needs to be done via techniques specific to a given setting (we note that the pseudo-calibration conjecture of~\cite{HKP+17} hypothesizes the existence of a more mechanical translation). Thus, the bulk of our technical work goes into analyzing the construction above. 

Our idea (as is standard in such settings) is to use the Hermite polynomial basis to explicitly write down expressions for the pseudo-distribution. Analyzing the pseudo-distribution requires analyzing the spectrum of the \emph{moment matrix} associated with the pseudo-distribution. The moment matrix $\MomentMatrix$ is indexed by indices of monomials $I,J$ of degree $\leq 2$ on rows and columns and has its $(I,J)$ entry given by $\pE_{\mu}[x^{I} x^{J}]$. In our case, note that this is a random matrix with heavily correlated entries. The positivity property of $\pE_{\mu}$ is equivalent to the positive semidefiniteness of the matrix $\MomentMatrix$. 

Our analysis works by decomposing the $\MomentMatrix$ into a linear combination of \emph{graph matrices} (analogs of the bases used in prior works e.g.~\cite{HKP+17,BHK19,GJJ20}) that form a good basis for analyzing the spectra of such correlated random matrices. Thankfully, some of the technology for understanding the spectra of such graph matrices -- whose spectra can be directly related to combinatorial properties of associated graphs called \emph{shapes} -- was developed in the context of proving $n^{O(1)}$-degree sum-of-squares lower bounds for the Sherrington-Kirkpatrick Hamiltonian by~\cite{GJJ20,AMP20}. 

The high-level outline of our analysis resembles the strategy adopted by~\cite{GJJ20} though the details differ because of the difference in the structure of the pseudo-distribution. First, the construction above does not quite exactly satisfy the constraints but we show that an appropriately small perturbation of it does. To analyze this construction,  we study the decomposition of the $\MomentMatrix$ and identify the shapes that are \emph{negligible} (i.e.\ contribute sufficiently small singular values), \emph{trivial} (these contribute a large positive semidefinite mass) and \emph{spiders} -- these can be ``killed'' -- that is, one can show that the contributions of the corresponding terms adds up to $0$. 

While our analysis follows a similar high-level plan to~\cite{GJJ20} so far, the combinatorial characterization of shapes that fall into each of the three types above differs from that ~\cite{GJJ20} and requires an analysis specialized to our setting. This is because ~\cite{GJJ20} work with a special form of ``rank 1'' polynomial constraints relevant to their setting $\{\brac{x,g_i}^2 = 1\}_{i\leq m}$ where the $g_i$s are random vectors (the ``affine planes'' problem). As a result, the resulting construction of pseudo-distribution leads to a moment matrix with a different set of shapes playing a prominent role -- 2 uniform graphs as opposed to 3-uniform hypergraphs in our case. 

With the above techniques, it is possible to obtain a lower bound that works whenever $m \ll n^{1.5}$ as in the work of~\cite{GJJ20}. But just as in their setting, this bound is off from the optimal bound of $\sim n^2$. This is entirely due to the difficulties in the analysis of the construction above. 

In our setting, we are able to obtain (only at degree $4$) an analysis of the construction that does work all the way to the threshold of $n^2$ up to polylogarithmic factors in $n$. One crucial ingredient is a further sub-classification of non-negligible shapes appearing in the decomposition into \emph{disconnected} and the rest. We give a different ``charging scheme'' for the disconnected shapes in order to show that they cannot contribute negative eigenvalues to the spectrum of $\MomentMatrix$.

\section{Preliminaries}

\subsection{Notations}
We use the standard conventions $\N = \{0,1,2,\dots\}$ and $[N] = \{1,2,\dots, N\}$.
Consider vectors $x\in \R^N$ and $\alpha \in \N^N$.
We use the notation $|\alpha| = \sum_{i=1}^N \alpha_i$ and $\alpha! = \prod_{i=1}^N (\alpha_i!)$, and further denote $x^\alpha \seteq \prod_{i=1}^N x_i^{\alpha_i}$.
Moreover, we say $\alpha$ is \textit{simple} if $\alpha \in \{0,1\}^N$, i.e.\ the monomial $x^{\alpha}$ is multilinear.
With slight abuse of notation, we will often treat $\alpha\in \N^N$ as a multiset of $[N]$.

In this paper, we will encounter the case where $N = m \times n \times n$.
The same notations apply: for $\alpha \seteq (\Alpha{1},\dots, \Alpha{m}) \in \N^{m \times n \times n}$, $|\alpha| = \sum_{\s\in[m]} \sum_{i,j\in[n]} \Alphasij$ and $\alpha! = \prod_{\s\in[m]} \prod_{i,j\in[n]} (\Alphasij)!$.
In this case, we may view $\alpha$ as a \textit{labeled directed multigraph} (with self-loops allowed) on vertex set $[n]$, where each edge has a label in $[m]$.
Thus, $|\alpha|$ is the total number of edges, and $|\Alphas|$ is the number of edges labeled $\s$.

In this work, we will deal with algorithms that operate on numerical inputs. In all such cases, we will rely on the standard word RAM model of computation and assume that all the numbers are rational represented as a pair of integers describing the numerator and the denominator. In order to measure the running time of our algorithms, we will need to account for the length of the numbers that arise during the run of the algorithm. The following definition captures the size of the representations of rational numbers:

\begin{definition}[Bit Complexity]
The bit complexity of an integer $p \in \Z$ is $1+ \lceil \log_2 p \rceil$. The bit complexity of a rational number $p/q$ where $p,q \in \Z$ is the sum of the bit complexities of $p$ and $q$. 
\end{definition}

\subsection{Hermite polynomials}

In this section, we introduce the \textit{Hermite polynomials}, which are orthogonal polynomials with respect to the Gaussian measure (see \cite{Sze39} for a standard reference).
The \textit{univariate Hermite polynomials} $\{h_k\}_{k\in \N}$ are defined by the following recurrence:
\begin{equation*}
    h_0(x) = 1, \quad h_1(x) = x, \quad h_{k+1}(x) = x h_k(x) - k h_{k-1}(x) \mper
\end{equation*}
Next, we define the \emph{multivariate} Hermite polynomials.
% The \textit{normalized Hermite polynomials} are
% \begin{equation*}
%     \NormH_k(x) \coloneqq \frac{1}{\sqrt{k!}}h_k(x).
% \end{equation*}
% Next, we define the \textit{multivariate} Hermite polynomials.
For an index $\alpha \in \N^{N}$ and vector $x\in \R^N$, $h_{\alpha}(x) \coloneqq \prod_{i=1}^N h_{\alpha_i}(x_i)$.
% and $\NormH_{\alpha}(x) \coloneqq \frac{1}{\sqrt{\alpha!}} h_{\alpha}(x)$.
The Hermite polynomials form an orthogonal basis with respect to the Gaussian measure:
for $\alpha_1,\alpha_2 \in \N^N$,
\begin{equation*}
    \expover{x\sim \calN(0,\Id)}{h_{\alpha_1}(x) h_{\alpha_2}(x)} = 
    \begin{cases}
        \alpha_1! & \textnormal{ if $\alpha_1 = \alpha_2$,} \\
        0 & \textnormal{ otherwise.}
    \end{cases}
\end{equation*}

We will need the following facts about Hermite polynomials:

\begin{fact} \label{fact:hermite_zero}
    For an even integer $k \geq 2$, $h_k(0) = (-1)^{k/2} (k-1)!!$. For an odd $k$, $h_k(0) = 0$.
\end{fact}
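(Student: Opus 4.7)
The plan is a straightforward induction on $k$ using the three–term recurrence $h_{k+1}(x) = x\,h_k(x) - k\,h_{k-1}(x)$ specialized at $x=0$. Setting $x=0$ collapses the recurrence to the one–step relation
\begin{equation*}
    h_{k+1}(0) = -k\,h_{k-1}(0),
\end{equation*}
so the values at $0$ are determined purely by the two base cases $h_0(0)=1$ and $h_1(0)=0$, and the even–indexed and odd–indexed subsequences decouple completely.

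For odd $k$, I would observe that $h_1(0)=0$ and that the displayed recurrence gives $h_{k+2}(0)=-(k+1)\,h_k(0)$; a one–line induction on odd $k$ then yields $h_k(0)=0$ for every odd $k$.

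For even $k$, I would set $k = 2j$ and induct on $j$. The base case $j=1$ is $h_2(0) = -1\cdot h_0(0) = -1 = (-1)^{1}\,(1)!!$, matching the claim. For the inductive step, assume $h_{2j}(0) = (-1)^{j}(2j-1)!!$; then the recurrence gives
\begin{equation*}
    h_{2j+2}(0) \;=\; -(2j+1)\,h_{2j}(0) \;=\; -(2j+1)\,(-1)^{j}(2j-1)!! \;=\; (-1)^{j+1}(2j+1)!!,
\end{equation*}
which is precisely the claim for $k=2j+2$. This covers every even $k\ge 2$ and completes the proof.

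There is essentially no obstacle here: the only ``content'' is recognizing that evaluating the recurrence at $0$ decouples the parities, after which the double factorial and the alternating sign emerge by bookkeeping. If a cleaner presentation is preferred, one can alternatively invoke the generating function $\sum_{k\ge 0} h_k(0)\,t^k/k! = e^{-t^2/2}$, read off the coefficients directly, and observe that $(2j)!/(2^j j!) = (2j-1)!!$; but the recurrence argument above is shorter and uses only the definition already given in the excerpt.
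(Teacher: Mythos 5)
Your proof is correct. The paper states this as a standard \emph{fact} about Hermite polynomials without supplying a proof, so there is no in-paper argument to compare against; your induction via the recurrence $h_{k+1}(0) = -k\,h_{k-1}(0)$ (with the generating-function route noted as an alternative) is a clean and complete derivation from the definition given in the preliminaries.
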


\begin{fact} \label{fact:double_factorial_bound}
    For an even integer $k \geq 2$, $(k-1)!! \leq (\frac{k}{2})^{k/2}$.
\end{fact}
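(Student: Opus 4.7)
The plan is to apply the AM--GM inequality directly to the product representation $(k-1)!! = \prod_{j=1}^{k/2}(2j-1)$, which consists of exactly $k/2$ positive factors. AM--GM bounds a product of $n$ positive reals above by the $n$-th power of their arithmetic mean, so $(k-1)!!$ is at most the $(k/2)$-th power of the average of the first $k/2$ odd positive integers.

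The single nontrivial ingredient is the classical identity that the sum of the first $k/2$ odd positive integers equals $(k/2)^2$, so their arithmetic mean is exactly $k/2$. Raising to the power $k/2$ then gives the claimed bound $(k-1)!! \leq (k/2)^{k/2}$. As a sanity check, $k=2$ gives equality ($1 \leq 1$), which is consistent with the AM--GM equality case when all factors coincide.

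There is no real obstacle here: the proof is essentially a one-line application of AM--GM once the odd-sum identity is invoked. The only minor choice is whether to use the full $k/2$-variable AM--GM (cleanest) or instead pair the $j$-th and $(k/2-j+1)$-th factors and use the two-variable AM--GM on each pair, which yields $(2j-1)(k-2j+1) \leq (k/2)^2$ but requires a small case split on the parity of $k/2$. I would go with the $k/2$-variable form throughout.
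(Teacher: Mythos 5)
Your proof is correct: $(k-1)!! = \prod_{j=1}^{k/2}(2j-1)$ has exactly $k/2$ factors summing to $(k/2)^2$, so AM--GM gives $(k-1)!! \leq (k/2)^{k/2}$ immediately. The paper states this as a Fact without proof, so there is nothing to compare against; your argument is a clean, self-contained justification.
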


\begin{fact} \label{fact:hermite_generating_function}
    For any $x\in\R$, the generating function of Hermite polynomials is the following,
    \begin{equation*}
        e^{xt - \frac{t^2}{2}} = \sum_{k=0}^\infty h_k(x) \frac{t^k}{k!} \mper
    \end{equation*}
\end{fact}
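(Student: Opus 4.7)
The plan is to show the identity by checking that both sides, viewed as functions of $t$ with $x$ a fixed parameter, satisfy the same first-order linear ODE with the same initial value. Define
\begin{equation*}
F(x,t) \coloneqq \sum_{k=0}^\infty h_k(x) \frac{t^k}{k!}, \qquad G(x,t) \coloneqq e^{xt - t^2/2},
\end{equation*}
and the goal is to prove $F \equiv G$ pointwise on $\R^2$.

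First I would justify that the series defining $F$ converges absolutely for every $(x,t) \in \R^2$, so that termwise differentiation in $t$ is legitimate. A straightforward induction on the recurrence $h_{k+1}(x) = x h_k(x) - k h_{k-1}(x)$ yields a crude pointwise bound of the form $|h_k(x)| \leq C(x)^k \cdot k^{k/2}$ for some constant $C(x)$ depending on $x$; combining with $k! \geq (k/e)^k$ shows that $t \mapsto F(x,t)$ is entire for every fixed $x$, and that the differentiated series converges uniformly on compact subsets of $\R$.

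Next, differentiate $F$ with respect to $t$ termwise and apply the Hermite recurrence to the shifted series:
\begin{align*}
\partial_t F(x,t)
= \sum_{k=0}^\infty h_{k+1}(x) \frac{t^k}{k!}
= \sum_{k=0}^\infty \bigl(x h_k(x) - k h_{k-1}(x)\bigr) \frac{t^k}{k!}
= x F(x,t) - t F(x,t),
\end{align*}
where the second sum is recognized as $t\,F(x,t)$ after reindexing $k \mapsto k+1$ in the $h_{k-1}$ term. Hence $F$ solves the ODE $\partial_t u = (x - t) u$. Differentiating $G$ directly gives $\partial_t G = (x - t) G$, and both functions satisfy the common initial condition $F(x,0) = h_0(x) = 1 = G(x,0)$. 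Uniqueness of solutions to this scalar linear ODE in $t$ (with $x$ treated as a parameter) forces $F = G$ on $\R^2$, which is the stated identity.

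This is a classical fact and the proof presents no substantive obstacle. The only technical detail requiring attention is the term-by-term differentiation in the second step, which the crude recurrence-based bound on $|h_k(x)|$ handles with room to spare; everything else is a one-line ODE calculation followed by uniqueness.
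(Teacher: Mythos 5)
The paper states this as a standard fact and cites Szeg\H{o}'s monograph without giving a proof, so there is no paper proof to compare against. Your argument is correct and self-contained: the termwise differentiation is justified by the crude inductive bound $|h_k(x)| \leq C(x)^k k^{k/2}$ (which together with $k! \geq (k/e)^k$ gives superexponential decay of the summands, hence an entire function of $t$), the recurrence $h_{k+1} = xh_k - k h_{k-1}$ correctly converts $\sum h_{k+1}(x) t^k/k!$ into $(x-t)F$, both sides satisfy the scalar linear ODE $\partial_t u = (x-t)u$ with $u(0)=1$, and Picard--Lindel\"of uniqueness closes the argument. This ODE route is one of the two or three standard textbook derivations (the others being a direct computation of the Taylor coefficients of $e^{xt-t^2/2}$ or an appeal to Rodrigues' formula), and nothing in it is specific to the paper; it is a clean way to establish the fact from the recurrence used as the paper's definition of $h_k$.
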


\subsection{Sum-of-Squares and \nullsatz proofs vs algorithms}
\label{sec:proof_systems}

Sum-of-squares proof system is a restricted reasoning system for certifying unsatisfiability of a system of polynomial equality and inequality constraints over the reals. We refer the reader to the monograph~\cite{TCS-086} for a detailed exposition. 
% Let $\calF = \{$ $\calH = \{h_1 \geq 0, \dots, h_\ell \geq 0\}$ in addition to the polynomial equations $\calF$.

\begin{definition}[Sum-of-Squares Refutations]
Let $p_1, p_2, \ldots, p_k$ be polynomials in variables $x_1, \ldots, x_n$ with coefficients over the reals.
Given a system of constraints $\{p_i \geq 0\}_{i \leq k}$, a sum-of-squares refutation of the system is a polynomial identity of the following form:
    \begin{equation}\label{eq:sos-proof-def}
        -1 = \sum_{T \subseteq [k]}  S_T \prod_{i \in T} p_i\mcom
    \end{equation}
    where $S_0, S_1, \ldots, S_T$ are sum-of-squares polynomials. 
The \emph{degree} of the sum-of-squares proof is the minimum positive integer $\ell$ such that for every $T \subseteq [k]$ such that $S_T \neq 0$, $\sum_{i \in T}\deg(p_i) + \deg(S_T) \leq \ell$. 
\end{definition}

Observe that if an identity of the form \pref{eq:sos-proof-def} exists, then it immediately proves that the associated constraint system is unsatisfiable. This is because any $x$ that satisfies the constraint system must make the right hand side evaluate to a non-negative real number while the left hand side is the negative real $-1$. Under mild conditions on the polynomials $p_1, p_2, \ldots, p_k$, a converse holds. Such results are called \emph{positivstellensatz}. We state a general one due to Krivine and Stengle~\cite{Kri64,Ste74}.

\begin{fact}[Krivine/Stengle's Positivstellensatz~\cite{Kri64,Ste74}, see Theorem 3.73 in~\cite{TCS-086}]
Let $p_1, p_2, \ldots,p_k$ be $n$-variate real-coefficient polynomials in $x_1, x_2,\ldots, x_n$. If there does not exist $x \in \R^n$ such that $p_i(x) \geq 0$ for every $i \leq k$, then, there are sum-of-squares polynomials $\{S_T\}_{T \subseteq [k]}$ such that the following polynomial identity holds:
\[
-1 = \sum_{T \subseteq [k]} S_T \prod_{i \in T} p_i \mper 
\]
\end{fact}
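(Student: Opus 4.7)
The statement is the classical Krivine--Stengle Positivstellensatz. Since the authors label it as a cited Fact, I do not expect them to reprove it; my plan is therefore to outline the standard textbook argument based on real algebra and Tarski's transfer principle, rather than anything specific to the refutation application under study.

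The plan is to argue the contrapositive: assuming that no identity $-1 = \sum_{T \subseteq [k]} S_T \prod_{i \in T} p_i$ with sum-of-squares $S_T$ exists, I will produce $x \in \mathbb{R}^n$ with $p_i(x) \geq 0$ for all $i$. First, I would introduce the \emph{preordering} generated by $p_1,\dots,p_k$ inside $\mathbb{R}[x_1,\dots,x_n]$, namely
\[
\mathcal{P} \;:=\; \Bigl\{ \textstyle\sum_{T \subseteq [k]} S_T \prod_{i \in T} p_i \;:\; S_T \text{ is a sum of squares}\Bigr\}.
\]
This set is closed under addition and multiplication and contains every square. The hypothesis that no SoS refutation exists is precisely the statement $-1 \notin \mathcal{P}$, i.e.\ $\mathcal{P}$ is a proper preordering.

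Next, I would apply Zorn's lemma to extend $\mathcal{P}$ to a preordering $\mathcal{P}^*$ that is maximal among proper preorderings. The key algebraic input (the Artin--Schreier theorem for commutative rings) says that such a maximal preordering, after quotienting by its support $I := \mathcal{P}^* \cap (-\mathcal{P}^*)$ (which is a prime ideal), induces a total ordering on the integral domain $\mathbb{R}[x]/I$, and hence on its field of fractions $K$. I would then embed $K$ into a real closed field $R \supseteq \mathbb{R}$, in which the images $\bar{x}_1,\dots,\bar{x}_n$ of the variables satisfy $p_i(\bar{x}) \geq 0$ for every $i$, simply because each $p_i$ lies in $\mathcal{P} \subseteq \mathcal{P}^*$.

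The final step is Tarski's transfer principle: the existence of a tuple satisfying the conjunction $\bigwedge_{i \leq k} p_i(x) \geq 0$ is a first-order sentence in the language of ordered fields with parameters in $\mathbb{R}$, and any such sentence true in one real closed field extension of $\mathbb{R}$ is true in $\mathbb{R}$ itself. This yields the desired $x \in \mathbb{R}^n$ and closes the contradiction. The main obstacle in this proof scheme is the Artin--Schreier step, where one must verify that a maximal proper preordering of a commutative ring actually gives rise to an honest ordering of a quotient field; this algebraic content is the real technical heart of the Positivstellensatz. The other two ingredients (Zorn's lemma and transfer) are by now standard black boxes, which is why the authors can invoke the theorem citationally.
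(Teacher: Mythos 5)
The paper does not prove this fact; it is invoked as a citation (to Krivine, Stengle, and the textbook~\cite{TCS-086}), so there is no internal argument to compare your proposal against. That said, your outline is a faithful rendition of the standard proof of the Positivstellensatz: form the preordering $\mathcal{P}$ generated by the $p_i$'s, observe that the nonexistence of an SoS refutation is exactly the properness condition $-1 \notin \mathcal{P}$, use Zorn's lemma to extend to a maximal proper preordering, invoke the Artin--Schreier-type lemma that such a maximal preordering is an ordering whose support is a prime ideal (this is, as you correctly flag, the genuine technical content), pass to the ordered fraction field and its real closure $R \supseteq \mathbb{R}$, and finish with Tarski's transfer principle. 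One detail worth making explicit if you were to flesh this out: the ordering on the fraction field restricts to the \emph{unique} ordering on $\mathbb{R}$ (every positive real is a square and hence lies in any preordering), which is what guarantees $R$ contains $\mathbb{R}$ as an ordered subfield and makes the transfer step legitimate. Your verification that $\mathcal{P}$ is multiplicatively closed (absorbing $\prod_{i \in T \cap T'} p_i^2$ into the SoS factor) is also the right bookkeeping. No gaps beyond the two black boxes you already identify.
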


While positivstellensatz implies that there's always a refutation for all polynomial systems, it provides no bound on the degree of the resulting proof.

It is instructive to compare it with the strictly weaker \nullsatz proof system that we will also encounter in this work. 
\begin{definition}[\nullsatz Refutation] \label{def:nullsatz-refutation}
Let $p_1, p_2, \ldots, p_k$ be polynomials in variables $x_1, x_2, \ldots, x_n$ with coefficients over the reals. Given a system of constraints $\{p_i = 0\}_{i \leq k}$, a \nullsatz refutation of the system is a polynomial identity of the following form:
    \begin{equation}
        1 = \sum_{i \leq k}  a_i p_i\mcom
    \end{equation}
    where $a_1,\ldots,a_k$ are arbitrary polynomials. 
The \emph{degree} of the \nullsatz proof is the minimum positive integer $\ell$ such that for every $i$, $\deg(p_i) + \deg(a_i) \leq \ell$. 
\end{definition}
Unlike the sum-of-squares proof system, the \nullsatz proof systems only deals with polynomial \emph{equality} constraints. Analogously to positivstellensatz, the completeness of the \nullsatz proof systems is implied by Hilbert's \nullsatz. 

\begin{fact}[Corollary of Hilbert's Nullstellensatz, see for e.g.,~\cite{MR1451386}]
Suppose $p_1, p_2, \ldots, p_k$ are real-coefficient polynomials in $x_1, x_2, \ldots, x_n$ such that there is no $x$ satisfying $p_i(x)=0$ for every $i \leq k$. Then, there are polynomials $a_1, a_2, \ldots, a_k$ with real coefficients such that the following polynomial identity holds:
\[
1=\sum_{i\leq k} a_i p_i \mper
\]
\end{fact}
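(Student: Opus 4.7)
The plan is to deduce the statement from the classical strong Hilbert Nullstellensatz (over algebraically closed fields) by base change to $\mathbb{C}$ and then descent back to $\mathbb{R}$. The hypothesis is naturally read as ``no common zero in $\mathbb{C}^n$'': over $\mathbb{R}^n$ the conclusion would fail (take $k=n=1$, $p_1 = x^2+1$), so the intended interpretation must match the classical setting of Hilbert's Nullstellensatz.

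First, I would view each $p_i \in \mathbb{R}[x_1,\ldots,x_n]$ as an element of $\mathbb{C}[x_1,\ldots,x_n]$ via the inclusion $\mathbb{R} \hookrightarrow \mathbb{C}$. The hypothesis is that the common vanishing locus $V(p_1,\ldots,p_k) \subseteq \mathbb{C}^n$ is empty. The (strong) Hilbert Nullstellensatz for the algebraically closed field $\mathbb{C}$ then forces the ideal $I = (p_1,\ldots,p_k) \subseteq \mathbb{C}[x_1,\ldots,x_n]$ to be the unit ideal; otherwise $I$ would lie in some maximal ideal, which by Zariski's lemma over $\mathbb{C}$ has the form $(x_1 - \zeta_1,\ldots,x_n - \zeta_n)$ for a point $\zeta \in \mathbb{C}^n$ that would be a common zero. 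Consequently, there exist polynomials $c_1,\ldots,c_k \in \mathbb{C}[x_1,\ldots,x_n]$ with $\sum_{i \leq k} c_i p_i = 1$.

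Second, I would descend from $\mathbb{C}$ to $\mathbb{R}$ using that the $p_i$ are real. Decompose $c_i = a_i + \sqrt{-1}\,b_i$ with $a_i, b_i \in \mathbb{R}[x_1,\ldots,x_n]$. Substituting into the identity and separating real and imaginary parts (using that each $p_i$ has real coefficients) yields the two real polynomial identities
\[
\sum_{i \leq k} a_i\, p_i \;=\; 1, \qquad \sum_{i \leq k} b_i\, p_i \;=\; 0.
\]
The first identity is exactly the Nullstellensatz refutation with real coefficients that the statement asks for.

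The only genuinely hard ingredient is the strong Nullstellensatz over $\mathbb{C}$, which I would invoke as a black box. Its standard proof combines Zariski's lemma (any field that is finitely generated as an algebra over an algebraically closed field $K$ is $K$ itself), which yields the weak Nullstellensatz, with the Rabinowitsch trick (introduce an auxiliary variable $y$ and pass to the ideal $(p_1,\ldots,p_k, 1 - y f)$) to go from the weak form to the strong form. Since this is the classical argument cited in \cite{MR1451386}, the present task reduces to the short base-change/real-part argument above.
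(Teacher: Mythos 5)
The paper states this as a background \emph{Fact} with only a citation to \cite{MR1451386}, so there is no ``paper's own proof'' to compare against; you are supplying a proof from scratch. Your argument is correct: pass to $\mathbb{C}[x_1,\ldots,x_n]$, use Zariski's lemma to conclude that an ideal $I$ with $V(I)=\varnothing$ over $\mathbb{C}^n$ must be the unit ideal, and then take real parts of a B\'ezout-type identity $\sum_i c_i p_i = 1$ using that the $p_i$ have real coefficients. You are also right to flag, and the paper leaves implicit, that the hypothesis ``no $x$ satisfying $p_i(x)=0$'' must be read over $\mathbb{C}^n$; over $\mathbb{R}^n$ the fact is false ($p_1 = x^2+1$), and the paper's surrounding discussion (Bezout's theorem, ``no real (or complex!) solution with probability $1$'' for random over-determined systems) confirms that the complex reading is the intended one in the regime where the fact is invoked.

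One small terminological slip: the argument as you wrote it uses only the \emph{weak} Nullstellensatz (``$V(I)=\varnothing$ over $\mathbb{C}^n$ implies $I=(1)$''), which is exactly what Zariski's lemma gives via maximal ideals. The strong Nullstellensatz $I(V(I))=\sqrt{I}$ and the Rabinowitsch trick, which you mention in the final paragraph, are not needed here and the reference to them is slightly misleading; you can invoke the weak form directly. This does not affect correctness, only economy of hypotheses.
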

Informally speaking, the key difference between the sum-of-squares and the \nullsatz proof system is the ability to reason about the non-negativity of square polynomials. This seemingly minor change results in a huge difference in the power of the proof systems. For example, the pigeonhole principle requires $\Omega(n)$ degree for \nullsatz to refute but has a degree-4 SoS refutation (see Claim 3.59 on Page 125 of~\cite{TCS-086} for a short proof).

% \textit{Semi-algebraic proof systems} are used to certify the \textit{unsatisfiability} of a
% system of polynomial equations and/or polynomial inequalities over the reals.
% We refer the reader to \cite{Ber18} and \cite{FKP19} for thorough expositions and comparisons of common proof systems.
% In this paper, we will only consider the \textit{\nullsatz} and \textit{Sum-of-Squares} (SoS) proof systems.

% Consider a semi-algebraic proof system $\ProofSystem$ and a system of polynomial equations $\calF = \{f_1=0,\dots, f_k=0\}$ over variables $x_1,\dots,x_n$.
% Suppose $\calF$ is unsatisfiable.
% The goal is to find a $\ProofSystem$-\textit{refutation} of $\calF$, i.e.\ a proof that $\calF$ is unsatisfiable.
% We are mainly interested in the \textit{degree} of the refutation, which corresponds to the maximum degree of polynomials in the proof.

% Now, we introduce the \nullsatz proof system based on Hilbert's \nullsatz.

% \begin{definition}
%     A \nullsatz refutation of $\calF$ is a sequence of polynomials $(p_1,\dots,p_k)$ such that
%     \begin{equation*}
%         \sum_{i=1}^k p_i f_i = -1.
%     \end{equation*}
%     The  \textit{degree} of the \nullsatz refutation is $\max\{ \deg(p_i f_i): i\in [k]\}$.
% \end{definition}
% It is clear from the definitions that SoS is stronger than \nullsatz: if $\calF$ has a degree-$d$ \nullsatz refutation, then it has a degree-$d$ SoS refutation.
% On the other hand, there are examples where SoS is strictly stronger.  

\subsection{Pseudo-distributions}

Pseudo-distributions are generalizations of probability distributions and form dual objects to sum-of-squares proofs in a precise sense that we will describe below. %We can represent a discrete (i.e., finitely supported) probability distribution over $\R^n$ by its probability mass function $\mu\from \R^n \to \R$ that satisfies $\mu \geq 0$ and $\sum_{x \in \mathrm{supp}(\mu)} \mu(x) = 1$.

\begin{definition}[Pseudo-distribution, Pseudo-expectations, Pseudo-moments] \label{def:pseudoexpectation}
A \emph{degree-$\ell$ pseudo-distribution} is a finitely-supported function $\mu:\R^n \rightarrow \R$ such that $\sum_{x} \mu(x) = 1$ and $\sum_{x} \mu(x) f(x)^2 \geq 0$ for every polynomial $f$ of degree at most $\ell/2$. (Here, the summations are over the support of $\mu$.) 

The \emph{pseudo-expectation} of a function $f$ on $\R^d$ with respect to a pseudo-distribution $\mu$, denoted $\pE_{\mu(x)} f(x)$, as
\begin{equation}
  \pE_{\mu(x)} f(x) = \sum_{x} \mu(x) f(x) \,\mper
\end{equation}

The degree-$\ell$ moment tensor of a pseudo-distribution $\mu$ is the tensor $\E_{\mu(x)} (1,x_1, x_2,\ldots, x_n)^{\otimes \ell}$.
In particular, the moment tensor has an entry corresponding to the pseudo-expectation of every monomial of degree at most $\ell$ in $x$.
\end{definition}

Observe that if a pseudo-distribution $\mu$ satisfies, in addition, that $\mu(x) \geq 0$ for every $x$, then it is a mass function of some probability distribution. Further, a straightforward polynomial-interpolation argument shows that every degree-$\infty$ pseudo-distribution satisfies $\mu \ge 0$ and is thus an actual probability distribution. The set of all degree-$\ell$ moment tensors of probability distribution is a convex set. Similarly, the set of all degree-$\ell$ moment tensors of degree-$d$ pseudo-distributions is also convex.

\begin{definition}[Constrained pseudo-distributions]
  Let $\mu$ be a degree-$\ell$ pseudo-distribution over $\R^n$.
  Let $\cA = \{p_1\ge 0, p_2\ge 0, \ldots, p_m\ge 0\}$ be a system of $m$ real-coefficient polynomial inequality constraints.
  We say that \emph{$\mu$ satisfies the system of constraints $\cA$ at degree $\ell$} if for every sum-of-squares polynomial $h$ and any $T \subseteq [m]$ such that $\deg(h) + \sum_{i\in T} \deg(p_i) \leq \ell$, $\pE_{\mu} [h \cdot \prod _{i\in T} p_i] \ge 0$.

  %We write $D \sdtstile{}{} \cA$ (without specifying the degree) if $D \sdtstile{0}{} \cA$ holds.
% Furthermore, we say that $D\sdtstile{r}{}\cA$ holds \emph{approximately} if the above inequalities are satisfied up to an error of $2^{-n^\ell}\cdot \norm{h}\cdot\prod_{i\in S}\norm{f_i}$, where $\norm{\cdot}$ denotes the Euclidean norm\footnote{The choice of norm is not important here because the factor $2^{-n^\ell}$ swamps the effects of choosing another norm.} of the coefficients of a polynomial in the monomial basis.
\end{definition}

The following fact describes the precise sense in which pseudo-distributions are duals to sum-of-squares proofs. 

\begin{fact}[Strong Duality,~\cite{MR3441448-Josz16}, see Theorem 3.70 in~\cite{TCS-086} for an exposition]
Let $p_1, p_2, \ldots, p_k$ be real-coefficient polynomials in $x_1, x_2, \ldots, x_n$. 
Suppose there is a degree-$d$ sum-of-squares refutation of the system $\{p_i(x) \geq 0\}_{i \leq k}$.
Then, there is no pseudo-distribution $\mu$ of degree $\geq d$ satisfying $\{p_i(x) \geq 0\}_{i \leq k}$. 
On the other hand, suppose that there is a pseudo-distribution $\mu$ of degree $d$ consistent with $\{p_i(x) \geq 0\}_{i \leq k}$. Suppose further that the set $\{p_1, p_2, \ldots, p_k\}$ contains the quadratic polynomial $R-\sum_i x_i^2$ for some $R > 0$. Then, there is no degree-$d$ sum-of-squares refutation of the system $\{p_i(x) \geq 0\}_{i \leq k}$.
\end{fact}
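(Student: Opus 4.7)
The statement is the classical strong duality between the sum-of-squares proof system and pseudo-distributions; my plan is to handle the two implications separately. The forward direction (an SoS refutation rules out any pseudo-distribution) is a direct computation, while the converse (existence of a pseudo-distribution rules out any refutation) requires a convex-separation argument in which the ball constraint $R - \sum_i x_i^2 \geq 0$ plays the essential role.

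For the forward direction, assume an SoS refutation $-1 = \sum_{T \subseteq [k]} S_T \prod_{i \in T} p_i$ of degree $d$ and a degree-$d$ pseudo-distribution $\mu$ satisfying $\{p_i \geq 0\}$. I would apply $\pE_\mu$ to both sides: the left-hand side evaluates to $-1$ by normalization, while by linearity the right-hand side equals $\sum_T \pE_\mu[S_T \prod_{i \in T} p_i]$, which is nonnegative by the definition of constraint satisfaction at degree $d$ (the degree condition on the refutation is precisely what makes $\pE_\mu$ evaluable and nonnegative on each term). This yields $-1 \geq 0$, a contradiction.

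For the converse, I would work in the finite-dimensional space $V = \R[x_1,\dots,x_n]_{\leq d}$ and define the convex cone $C_d \subseteq V$ of polynomials expressible as $\sum_T S_T \prod_{i \in T} p_i$ with each $S_T$ a sum of squares and $\deg(S_T) + \sum_{i \in T}\deg(p_i) \leq d$. Non-existence of a degree-$d$ SoS refutation is precisely the statement $-1 \notin C_d$. Assuming $C_d$ is closed, the finite-dimensional separating hyperplane theorem produces a linear functional $L : V \to \R$ with $L[c] \geq 0$ for every $c \in C_d$ and $L[-1] < 0$. Since $1 \in C_d$ (take $T=\varnothing$, $S_\varnothing = 1$) and $L[-1] < 0$, we have $L[1] > 0$; normalizing $L$ so that $L[1] = 1$ yields a degree-$d$ pseudo-expectation satisfying the system: positivity $L[h^2] \geq 0$ comes from choosing $T=\varnothing$ with $S_\varnothing = h^2$, and satisfaction of $\{p_i \geq 0\}$ at degree $d$ comes from $S_T = h^2$ with general~$T$.

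The main obstacle is showing that $C_d$ is closed; in general a convex cone of this form need not be closed even in finite dimensions, and this is exactly where the ball constraint is used. With $R - \sum_i x_i^2$ included in the system, the generated quadratic module becomes Archimedean, which via a standard compactness argument (a truncated-degree analogue of Putinar's Positivstellensatz) implies that the set of admissible degree-$d$ moment tensors of pseudo-distributions satisfying the system is compact; dualizing back then gives closedness of $C_d$. The compactness in turn follows because the ball constraint, combined with positivity of $\pE_\mu[(R-\sum_i x_i^2)\cdot h^2]$ for all low-degree $h$, forces a uniform bound on every pseudo-moment $\pE_\mu[x^\alpha]$ with $|\alpha| \leq d$. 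Once closedness is in place the separation step above closes out the converse direction.
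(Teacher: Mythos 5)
Your forward direction is correct and is the standard ``apply $\pE_\mu$ to both sides'' argument. The problem lies in the converse: as actually stated in the Fact, it is simply the \emph{contrapositive} of the forward direction (a degree-$d$ pseudo-distribution satisfying the system is, in particular, a pseudo-distribution of degree $\geq d$ satisfying it), so it needs no separation argument and the ball constraint plays no role in proving it. The ball constraint appears in the cited references because the full strong-duality theorem asserts a \emph{dichotomy}: either a satisfying pseudo-distribution of degree $d$ exists or a degree-$d$ SoS refutation exists. It is the ``if no pseudo-distribution then a refutation exists'' half that genuinely requires the Archimedean/ball hypothesis, but that half is not what the Fact claims, and so is not what you were asked to prove.

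Your separating-hyperplane argument is aimed at that harder half, and even there it has a gap as written: the inference that compactness of the admissible degree-$d$ moment tensors ``dualizes to'' closedness of $C_d$ is not a valid step. Compactness of a cross-section of the dual cone $C_d^*$ does not, by itself, imply that $C_d$ is closed; the bipolar theorem only tells you $C_d^{**} = \overline{C_d}$, so you would be assuming what you want to prove. The standard routes in the literature are different: one either exhibits a strictly feasible point of the SDP \emph{dual} using the ball constraint (showing that a polynomial of the form $\lambda - \sum_i x_i^2$, and hence $\lambda' - (\sum_i x_i^2)^{\lfloor d/2 \rfloor}$ for $\lambda'$ large, lies in the interior of $C_d$) and invokes Slater's condition to get zero duality gap and dual attainment, or one proves closedness of the truncated quadratic module directly via a properness/coercivity argument for the linear map $(S_T)_T \mapsto \sum_T S_T \prod_{i\in T} p_i$, again using the ball constraint to control the multipliers. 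Both use the ball constraint essentially, but neither is the compactness-to-closedness inference you sketched.
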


% We remark that if $D$ is an actual (discrete) probability distribution, then we have  $D\sdtstile{}{}\cA$ if and only if $D$ is supported on solutions to the constraints $\cA$. We say that a system $\cA$ of polynomial constraints is \emph{explicitly bounded} if it contains a constraint of the form $\{ \|x\|^2 \leq M\}$.

\subsection{Algorithms and numerical accuracy} 

The sum-of-squares proof system is automatizable via semidefinite programming in an appropriate sense that we describe next. Informally, this means that degree-bounded sum-of-squares proofs and low-degree pseudo-distributions satisfying a system of constraints can be found via efficient algorithms. Such algorithms deal with numerical inputs and thus, in the context of algorithms, we only allow our input polynomial systems to have rational coefficients.

% The algorithms for finding pseudo-distributions rely on the following analysis of the ellipsoid algorithm for solving semidefinite programs.  

% \begin{fact}[SDP Solving via Ellipsoid Algorithm, ]
% There's an algorithm with the following properties. The algorithm takes input a set of $N \times N$ matrices $A_1, A_2, \ldots, A_m$ be $N \times N$ matrices with rational entries of bit complexity $B$. If there is a positive semidefinite $N \times N$ matrix $X$ such that $\tr(A_i \cdot X) = 0$ for every $i$, then, for any $\tau$,  in time $\poly(N,m,\frac{1}{\tau})$, the algorithm outputs a positive semidefinite matrix $X$ such that $|\tr(A_i \cdot X)| \leq \tau$. 
% \end{fact}

% As a corollary of this fact, we can derive: 
The following fact follows by using the ellipsoid algorithm for semidefinite programming. The resulting algorithm to compute  pseudo-distributions approximately satisfying a given set of polynomial constraints is called the \emph{sum-of-squares algorithm}. 

\begin{fact}[Computing pseudo-distributions consistent with a set of constraints \cite{MR939596-Shor87,parrilo2000structured,MR1748764-Nesterov00,MR1846160-Lasserre01}] \label{fact:finding-pseudo-distributions}
There is an algorithm with the following properties: The algorithm takes input $B \in \N$, $\tau >0$, and polynomials $p_1, p_2, \ldots, p_k$ of degree $\ell$ with rational coefficients of bit complexity $B$. If there is a pseudo-distribution of degree $d$ consistent with the constraints $\{p_i(x)=0\}_{i \leq k}$, the algorithm in time $\poly(B,\frac{1}{\tau}) \cdot n^{O(d)}$ outputs a pseudo-distribution $\mu$ of degree $d$ satisfying $|\pE_{\mu} p_i(x) x^{\alpha}|\leq \tau$ if it exists and otherwise outputs ``infeasible''.
\end{fact}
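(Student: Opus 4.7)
The plan is to encode the existence of a degree-$d$ pseudo-distribution approximately satisfying the constraints as a semidefinite feasibility program of dimension $n^{O(d)}$ with rational data of bit-complexity $\poly(B)$, and then invoke the ellipsoid method. Specifically, I would parameterize a candidate pseudo-distribution by its pseudo-moments, i.e., a vector $M \in \R^N$ with $N = \binom{n+d}{d} = n^{O(d)}$ indexed by $\alpha \in \N^n$ with $|\alpha| \leq d$. Under this parameterization: (i) normalization becomes the single affine equation $M_0 = 1$; (ii) the relaxed equality constraints $|\pE_\mu[p_i(x) x^\alpha]| \leq \tau$ (for $i \leq k$ and $|\alpha| \leq d - \deg(p_i)$) become $n^{O(d)}$ affine inequalities on $M$ whose coefficients are linear combinations of the rational coefficients of $p_i$, so they have bit-complexity $\poly(B)$; (iii) the positivity condition $\pE_\mu[q^2] \geq 0$ for $\deg(q) \leq d/2$ is equivalent to positive semidefiniteness of the moment matrix $\MomentMatrix$, whose $(\alpha,\beta)$-entry equals $M_{\alpha+\beta}$ and whose dimension is $\binom{n+d/2}{d/2} = n^{O(d)}$. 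Hence the problem becomes an SDP feasibility program with $n^{O(d)}$ variables, $n^{O(d)}$ affine constraints, and one PSD constraint of dimension $n^{O(d)}$, all with rational data of bit-complexity $\poly(B)$.

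Next, I would apply the ellipsoid algorithm to this SDP. A separation oracle for the affine constraints is trivial, and for the PSD constraint one computes a minimum-eigenvalue eigenvector of the candidate moment matrix in time $n^{O(d)}$; if it is negative, it yields a hyperplane separating the candidate from the PSD cone. The standard analysis of the ellipsoid algorithm requires inner-ball/outer-ball bounds on the feasible set: if any exact pseudo-distribution with moment vector $M^\star$ exists, then every $M$ within $\ell_\infty$-distance $\tau / n^{O(d)}$ of $M^\star$ still satisfies the $\leq \tau$ relaxed affine constraints, and by further perturbing $M^\star$ by adding a small multiple of $\Id$ to the moment matrix (absorbed by a correction to $M_0$ and nearby entries) we additionally obtain strict PSDness with margin $\poly(\tau, n^{-d})$; an outer ball of radius $2^{\poly(B,d)}$ suffices by trivial bounds. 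Hence the ellipsoid algorithm terminates in $\poly(B, d \log n, \log(1/\tau))$ iterations, each of cost $n^{O(d)}$, giving total running time $\poly(B, 1/\tau) \cdot n^{O(d)}$. If no such $M^\star$ exists, the standard infeasibility certificate output by the ellipsoid method returns ``infeasible''.

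The main subtlety is that the ellipsoid output is only $\tau'$-almost PSD rather than exactly PSD for some small $\tau'$. This is handled by running the algorithm with slack $\tau/2$ in the affine constraints and target PSD-accuracy $\tau'$ sufficiently small (of order $\tau/n^{O(d)}$), and then adding a tiny multiple of $\Id$ to the returned moment matrix, which is absorbed by a small correction to $M_0$; this preserves the $\leq \tau$ bound on affine violations while restoring exact PSDness. All of the ingredients (SDP encoding, ellipsoid, separation via minimum eigenvector, final PSD perturbation) are classical, so the main obstacle is simply tracking bit-complexities and slacks carefully through the reduction to ensure the running-time bound of $\poly(B, 1/\tau) \cdot n^{O(d)}$ is achieved; this bookkeeping is entirely routine.
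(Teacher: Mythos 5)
The paper states this as a Fact with citations to the standard SDP/SoS literature and does not include a proof, so there is no in-paper argument to compare against; I evaluate your sketch on its own. Your overall plan --- encode the pseudo-moment vector as an $n^{O(d)}$-dimensional SDP feasibility problem with rational data of bit-complexity $\poly(B)$, and solve it by the ellipsoid method with a spectral separation oracle --- is the correct approach and matches what the cited references do. However, the two places you wave off as routine bookkeeping are exactly where the known subtleties live, and as written they do not go through.

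First, the feasible region of this SDP is unbounded, so the claimed outer ball of radius $2^{\poly(B,d)}$ ``by trivial bounds'' does not exist: the degree-$d$ moments of the point mass at $(\sqrt T,0,\dots,0)$ form a valid pseudo-distribution with $\pE[x_1^2]=T$ for every $T>0$, and the relaxed affine constraints $|\pE[p_i\,x^\alpha]|\le\tau$ do not in general rule these out. The ellipsoid/GLS machinery requires a bounded convex body, so one must either include a ball constraint $R-\sum_i x_i^2\ge 0$ among the input polynomials (exactly the hypothesis the paper invokes in its strong-duality Fact) or argue separately that a feasible moment vector of polynomially bounded bit complexity exists whenever any feasible point exists. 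The latter is not automatic; the gap between ``the SoS SDP is feasible'' and ``it has a feasible point of polynomial bit complexity'' is precisely the known subtlety about automatizing SoS, and the Fact is stated with $\poly(1/\tau)$ (rather than $\poly\log(1/\tau)$) running time in part to sidestep it.

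Second, your inner-ball perturbation --- adding $\epsilon\Id$ to the moment matrix, ``absorbed by a correction to $M_0$ and nearby entries'' --- does not produce a matrix in the image of the moment map and hence is not a feasible point of the SDP. For $d\ge 4$ there are multi-indices $\gamma$ with $|\gamma|\le d$ that index both a diagonal entry $(\alpha,\alpha)$ with $2\alpha=\gamma$ and an off-diagonal entry $(\alpha',\beta')$ with $\alpha'\ne\beta'$ and $\alpha'+\beta'=\gamma$; for example $x_1^2x_2^2$ sits on the diagonal at row and column $x_1x_2$ and also off the diagonal at row $x_1^2$, column $x_2^2$. Shifting only the diagonal by $\epsilon$ is therefore inconsistent with the Hankel structure of moment matrices, and there is no local correction of a few entries that repairs it. The standard fix is to take the convex combination $(1-\lambda)M^\star+\lambda M^{(0)}$, where $M^{(0)}$ is the degree-$\le d$ moment vector of an actual nondegenerate measure (e.g.\ a Gaussian) whose moment matrix is strictly positive definite with a computable eigenvalue lower bound; by linearity of the moment map this yields a strictly interior feasible moment vector while moving each affine constraint by at most $\lambda\cdot\poly(n^d)\cdot(\|M^\star\|+\|M^{(0)}\|)$, which can be made $\le\tau$ --- but note this again requires a bound on $\|M^\star\|$, returning you to the boundedness issue above.
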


\subsection{Background on the low-degree polynomial method}
\label{sec:low_degree_method}

% The low-degree method was implicit in \cite{BHK19} and later developed by \cite{HKP+17}.
% In this section, we give a brief overview, following the definitions and notations of the survey by \cite{KWB19}.

The low-degree polynomial method is a restricted class of computationally bounded algorithms for hypothesis testing problems arising in statistics.

In order to describe this method, let $\nu_N$ (for ``null'') and $\nu_P$ (for ``planted'' distribution; often called the ``alternative'' distribution in statistics) be a pair of probability distributions on $\R^K$. 
Informally, we will set $\nu_N$ to be a distribution on instances of some optimization problem that admit no solutions with high probability (such as random polynomial systems in our case) while $\nu_P$ will be the distribution on random polynomial systems that always admit a solution. 

In the hypothesis testing problem, the algorithm is given a sample $z$ with the promise that it is generated by the mixture $0.5 \nu_N + 0.5 \nu_P$. The goal is to determine correctly with high probability if $z$ is generated from $\nu_N$ or $\nu_P$. 
Often $\nu_N$ and $\nu_P$ are parameterized family of distributions (for e.g, the degree $D$, the number of variables $n$ or equations $m$ in our setting).

The key question is to determine the parameter regimes under which the hypothesis testing problem is solvable with high (say $1-o_K(1)$) probability. Any such ``testing'' algorithm can be seen as computing some function $T_K:\R^K \rightarrow \R$ on the input sample $z$ and outputting ``null'' if $T_K(z)$ exceeds some threshold $\tau$. Observe that a family of tests $\{T_K\}_{K}$ succeeds with probability $1-o_K(1)$ as $K \rightarrow \infty$ if $\E_{\nu_P} T_K - \E_{\nu_N} T_K \rightarrow \infty$ as $K \rightarrow \infty$.

Information-theoretically speaking, the classical Neyman-Pearson lemma identifies an optimal (in the sense of achieving optimal trade-off between false positives and false negatives) statistical test -- the \emph{likelihood ratio} -- that distinguishes the given pair of distributions. 

\paragraph{Restricting to low-degree polynomial tests.} While the likelihood ratio test is statistically optimal, it is often hard to compute and thus does not yield an efficiently computable distinguisher. The \emph{low-degree polynomial method} restricts the algorithm to a smaller class of statistical tests so as to gain computational efficiency. 

Specifically, such tests $T$ are restricted to 1) evaluating some degree-$d$ polynomial $f$ on the input sample $z$ and 2) ``accepting'' if $f(z)$ exceeds some chosen threshold $\tau$. Such a test is clearly computable in $K^{O(d)}$ time by explicitly evaluating each monomial of $f$.  

While such tests may appear restricted, recent works showed that $O(\log n)$-degree polynomial tests in fact can \emph{simulate} algorithms such as power iteration (and thus computing spectral norms), approximate message passing, and local algorithms applied to $z$ and more generally matrices/tensors with entries set to constant-degree polynomials of $z$. This allows the method to capture the strongest known algorithms for fundamental distinguishing tasks including planted clique and spiked Wigner models, and more generally, random optimization problems such as clique/independent set and densest $k$-subgraph in random graphs. In what can be construed to be an even more evidence of the power of the method, recent work~\cite{BBH+20} shows that under appropriate restrictions, algorithms in the $O(\log n)$-degree polynomial model are as powerful as polynomial time algorithms in the statistical query model studied arising in learning theory and recently applied~\cite{FGR+17} to prove lower bounds for average-case variants of several foundational combinatorial and statistical learning problems. The low-degree likelihood ratio and the low-degree polynomial tests were introduced in the context of establishing sum-of-squares lower bounds implicitly in \cite{BHK19} and formalized explicitly in~\cite{HKP+17}. In particular, for average-case distinguishing problems satisfying some mild ``niceness'' conditions, \cite{HKP+17} conjecture (this is called the \emph{\pc conjecture}) that indistinguishability by degree-$d$ polynomials implies lower bounds for a canonical $\wt{O}(d)$-degree SoS relaxation for the associated refutation problem. 

Subsequent works (starting with~\cite{HS17}, see Conjecture 2.2.4 in~\cite{Hop18} and 1.16 in~\cite{KWB19}) have proposed the stronger conjecture that concludes a lower bound against all $n^{\wt{O}(d)}$ time distinguishing algorithms.

The following definition presents a formal, quantitative version of what it means to use low-degree polynomials to distinguish between a pair of distributions as above.

\begin{definition}[Distinguishing by Low-Degree Polynomials]  \label{def:distinguishing_polynomial}
Let $\nu_N$, $\nu_P$ be a pair of null and planted distributions on $\R^K$. 
We say that degree-$d$ polynomials succeed in $(1-\delta)$-distinguishing between $\nu_N$ and $\nu_P$ from a single sample if there is a degree $\leq d$ polynomial $f:\R^K \rightarrow \R$ such that:\
\begin{enumerate}
    \item $\E_{\nu_N} [f^2] = 1$. 
    \item $\E_{\nu_P}[f] \geq \frac{1}{\delta}$. 
\end{enumerate}
\end{definition}

It turns out that it is possible to precisely characterize the \emph{best} low-degree polynomial distinguisher $f$ in terms of the density functions of the associated pair of distributions.  

\begin{proposition}[Truncated Low-Degree Likelihood Ratio; see \cite{HKP+17} and Proposition 1.15 of {\cite{KWB19}}] \label{prop:likelihood_ratio_variance}
    Let $\nu_N,\nu_P$ be a pair of probability distributions on $\R^K$. The truncated low-degree likelihood ratio $L^{\leq d}$ at degree $d$ is defined as the unique solution to $\arg \min_{f} \E_{z\sim \nu_N}[(L(z)-f(z))^2]$ where the minimization is over all degree $\leq d$ polynomials $f$. The \emph{normalized} truncated likelihood ratio $L^{\leq d}/\expover{\nu_N}{(L^{\leq d})^2}^{1/2}$ is then the optimal solution to the following optimization problem:
    \begin{equation*}
        \max\ \E_{\nu_P} f \quad \text{s.t. } \E_{\nu_N} f^2 = 1 \text{ and } f \text{ is a degree-$d$ polynomial.}
    \end{equation*}
    Moreover, the value of the optimization problem is $\expover{\nu_N}{(L^{\leq d})^2}^{1/2}$. In particular, $\nu_N$ and $\nu_P$ are $(1-\delta)$-indistinguishable by degree $\leq d$ polynomials if $\expover{\nu_N}{(L^{\leq d})^2}^{1/2} \leq \frac{1}{\delta}$.
\end{proposition}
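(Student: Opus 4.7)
The plan is to interpret the optimization as a projection problem in $L^2(\nu_N)$ and apply Cauchy--Schwarz. Let $L = d\nu_P/d\nu_N$ denote the likelihood ratio (assuming $\nu_P \ll \nu_N$; otherwise the upper bound is vacuous and indistinguishability by any test fails). The space $L^2(\nu_N)$ carries the inner product $\langle f, g \rangle_{\nu_N} = \E_{\nu_N}[fg]$, and the polynomials of degree at most $d$ form a finite-dimensional linear subspace $V_d \subseteq L^2(\nu_N)$. By definition, $L^{\leq d}$ is the orthogonal projection of $L$ onto $V_d$, i.e.\ the minimizer of $\|L - f\|_{\nu_N}^2$ over $f \in V_d$ (existence and uniqueness follow from standard Hilbert-space projection, once one restricts to those $V_d$ in which $L$ has a finite projection).

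The key observation is that for any $f \in V_d$,
\begin{equation*}
    \E_{\nu_P}[f] = \int f(z)\,d\nu_P(z) = \int f(z) L(z)\,d\nu_N(z) = \langle f, L \rangle_{\nu_N} = \langle f, L^{\leq d}\rangle_{\nu_N},
\end{equation*}
where the last equality uses that $f \in V_d$ is orthogonal to $L - L^{\leq d}$. Applying Cauchy--Schwarz in $L^2(\nu_N)$ then gives
\begin{equation*}
    \E_{\nu_P}[f] \leq \|f\|_{\nu_N} \cdot \|L^{\leq d}\|_{\nu_N} = \E_{\nu_N}[(L^{\leq d})^2]^{1/2},
\end{equation*}
where I used the constraint $\E_{\nu_N}[f^2] = 1$. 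Equality holds exactly when $f$ is a positive scalar multiple of $L^{\leq d}$, and the normalization forces $f = L^{\leq d}/\|L^{\leq d}\|_{\nu_N}$, so the stated optimizer and value are correct.

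The final ``indistinguishability'' claim is then an immediate consequence of Definition~\ref{def:distinguishing_polynomial}: a degree-$d$ polynomial $f$ that $(1-\delta)$-distinguishes must satisfy $\E_{\nu_N}[f^2]=1$ and $\E_{\nu_P}[f] \geq 1/\delta$, but the optimization above shows that the maximum achievable value of $\E_{\nu_P}[f]$ under the normalization is $\E_{\nu_N}[(L^{\leq d})^2]^{1/2}$, so if this quantity is at most $1/\delta$ no such distinguisher exists. I do not expect any real obstacle: the only subtlety is justifying that $L \in L^2(\nu_N)$ or at least that its degree-$d$ projection is well-defined, which is standard and can be enforced by truncating to distributions where $L$ has finite $\chi^2$ at degree $d$; this is already implicit in the statement since otherwise $\E_{\nu_N}[(L^{\leq d})^2]$ would be infinite and the bound trivial.
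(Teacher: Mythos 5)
Your proof is correct and is the standard argument from the cited references (Proposition 1.15 of Kunisky–Wein–Bandeira and the earlier work of Hopkins et al.): identify $L^{\leq d}$ as the orthogonal projection of the likelihood ratio onto the degree-$\leq d$ polynomials in $L^2(\nu_N)$, use the change-of-measure identity $\E_{\nu_P}[f] = \langle f, L\rangle_{\nu_N} = \langle f, L^{\leq d}\rangle_{\nu_N}$, and apply Cauchy--Schwarz. The paper does not supply its own proof of this proposition (it is quoted as a known fact), so there is no in-paper argument to compare against; your derivation matches the one in the cited sources.
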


%!TEX root = main.tex
\section{Algorithmic Thresholds: Upper Bound}%\nullsatz Refutation}
\label{sec:nullsatz}

In this section, we describe and analyze our algorithm for refuting random polynomial systems. 
Our algorithmic results apply to all random polynomial systems where all coefficients are independent from some distribution on rational numbers that satisfies some niceness properties. Such properties are satisfied by the uniform distribution on a large enough subset of rational numbers, a polynomial bit truncation of the standard Gaussian distribution among others. 

\begin{definition}[Nice Rational Distributions] \label{def:nice-distributions}
For $B \in \N$, we say that a probability distribution $\nu$ on $\Q$ is $B$-nice if the following hold:
\begin{enumerate}
    \item \textbf{$\nu$ is supported on low-bit complexity rationals}: The support of $\nu$ are rational numbers with numerator and denominator in $[-2^B,2^B]$.
    % \item \textbf{Subgaussian}: $\calD$ is subgaussian. That is, there are constants $v,C$ such that $\Pr_{x \sim \calD}[ |x|>t] \leq C e^{-vt^2}$ for every $t$.
    \item \textbf{$\nu$ is spread-out}: for any $q \in \Q$, $\Pr_{x \sim \nu}[x=q] \leq \frac{1}{B^{100}}$.
\end{enumerate}
\end{definition}

The main result of this section is the following theorem:

% we prove that $m = \Omega(\frac{n^2}{d})$ suffices for degree-$d$ SoS to refute the system of random quadratic equations.

\begin{theorem}[Refutation Algorithm for Random Polynomial Systems]\label{thm:nullsatz_upper_bound}
Fix $D \in \N$. There is an algorithm with the following properties: the algorithm takes input $m$ polynomial equations $\{g_i(x) = b_i\}_{i \in [m]}$ where each $g_i$ is a polynomial of degree $D$ with rational coefficients of bit-complexity $B$, and in $(Bn)^{O(d)}$ time, either correctly outputs ``infeasible'' or returns ``don't know''. 
Further, if $m \geq O_D\Paren{\frac{n^D}{d^{D-1}}}$ and $g_i,b_i$ are obtained by sampling each coefficient of each $g_i$ and each $b_i$ from (possibly different) independent $n^{2d}$-nice rational distributions, then, with probability $1-n^{-d}$ over the choice of the input equations, the algorithm outputs ``infeasible''.
    % Let $\NonhomSystem$ be the system of quadratic polynomial equations $\{\gs(x) = \Bs\}_{\s\in[m]}$ where the coefficients of each $\gs$ and $\Bs$ are independent standard Gaussians.
    % Then, whenever $m \geq \frac{2n^2}{d} \left(1 + \frac{4}{d}\right)$, degree-$d$ SoS can refute $\NonhomSystem$ with probability 1.
\end{theorem}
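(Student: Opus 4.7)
The plan is to reduce the refutation task to establishing a purely linear-algebraic \emph{completeness} property of the ideal generated at degree $d$ by the input polynomials, and then run the standard degree-$d$ SoS algorithm guaranteed by \pref{fact:finding-pseudo-distributions}. Concretely, let $V_d$ be the linear subspace of real polynomials of degree $\leq d$ spanned by $\{x^\alpha (g_i(x) - b_i) : i \leq m,\ |\alpha| \leq d-D\}$, and say $V_d$ is \emph{complete} if its homogeneous degree-$d$ component contains every homogeneous degree-$d$ polynomial. Assuming $d$ is a multiple of $D$ (the general case follows by a trivial rounding) and that some $b_{i^*} \neq 0$ (which holds with probability $1-n^{-\omega(1)}$ by spread-outness), completeness immediately yields a degree-$d$ \nullsatz refutation: on the one hand, $g_{i^*}(x)^{d/D} - b_{i^*}^{d/D}$ lies in the ideal generated by $\{g_i - b_i\}$ and is expressible at degree $d$ by a telescoping identity; on the other hand, since $g_{i^*}(x)^{d/D}$ is a homogeneous polynomial of degree $d$, completeness provides a second representation of it as $\sum_i a_i(x)(g_i(x) - b_i)$ with $\deg a_i \leq d - D$. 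Subtracting the two and dividing by the nonzero rational $b_{i^*}^{d/D}$ produces the desired polynomial identity $1 = \sum_i \tilde a_i(x)(g_i(x) - b_i)$.

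The analytical heart of the argument is thus proving that with probability at least $1 - n^{-d}$ over the random coefficients, $V_d$ is complete whenever $m \geq C_D \cdot n^D/d^{D-1}$. Setting up coordinates, the completeness condition becomes surjectivity of the linear map
\[
\Phi : \R^{m \cdot \binom{n+d-D-1}{d-D}} \to \R^{\binom{n+d-1}{d}}, \qquad (a_{i,\alpha}) \mapsto \sum_{i,\alpha} a_{i,\alpha} \cdot \bigl[x^\alpha g_i(x)\bigr]_{\deg = d},
\]
i.e.\ the statement that a certain random matrix $M_\Phi$ whose entries are degree-$1$ polynomials in the coefficients of the $g_i$ has full row rank. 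I would analyze $M_\Phi$ through the decomposition referenced in \pref{def:row_rank_decomposition}: partitioning rows (monomials of degree $d$) and columns (pairs $(i,\alpha)$) by the labelled-multigraph structure mentioned in the Notations section, so that $M_\Phi$ becomes a block-triangular-like object whose diagonal blocks are products of submatrices for which anti-concentration of the coefficient distribution can be applied cleanly. The main obstacle, and where the argument is genuinely delicate, is that the $\binom{n+d-1}{d} \sim n^d$ rows are driven by only $\sim m n^D \ll n^{2D}$ independent random scalars, so naive row-independence fails; one must exploit the block structure to isolate, for each target row, an appropriate subset of coefficients that acts on it through a nondegenerate matrix with high probability, using the spread-outness condition of \pref{def:nice-distributions} to lower-bound the probability that any individual subdeterminant vanishes.

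Once completeness is established over $\Q$, the existence of a bit-bounded refutation is then produced by \pref{lem:low-bit-complexity-rep-arbitrary-gen-ideals}: since the $a_i$ arise as a solution to a rational linear system whose coefficients have bit-complexity $O(B + d \log n)$, Cramer's rule bounds the bit-complexity of some solution by $\poly(B, n^d)$. Consequently there is a pseudo-distribution-free certificate (a \nullsatz refutation, which in particular is a degree-$d$ SoS refutation) of bit complexity $(Bn)^{O(d)}$, so by strong duality no degree-$d$ pseudo-distribution consistent with the constraints can exist. Setting $\tau$ in \pref{fact:finding-pseudo-distributions} to a sufficiently small inverse-polynomial function of this bit-complexity bound, the resulting ellipsoid-based SoS solver must therefore return ``infeasible'' in time $(Bn)^{O(d)}$, completing the theorem. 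The soundness direction (if the algorithm outputs ``infeasible'', the system truly has no solution) is immediate from the fact that the SoS algorithm only rejects when no consistent pseudo-distribution, and in particular no actual solution, exists.
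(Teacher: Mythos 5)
The high-level plan you describe — reduce refutation to a completeness property of the degree-$d$ generated ideal, produce a Nullstellensatz certificate from the telescoping identity for $g_{i^*}^{d/D}$, bound bit-complexity by Cramer/Kannan, and conclude via the ellipsoid-based SoS solver of \pref{fact:finding-pseudo-distributions} — is the same schema the paper follows. But there is a concrete gap in how you set up the linear-algebra reduction. You define completeness as ``the homogeneous degree-$d$ component of $V_d$ contains every homogeneous degree-$d$ polynomial,'' and your map $\Phi$ has codomain $\R^{\binom{n+d-1}{d}}$: it only tests whether the degree-$d$ \emph{projection} $\pi_d(V_d)$ equals $\mathcal{P}_d$. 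The refutation step, however, requires the stronger fact that $g_{i^*}(x)^{d/D}$ lies \emph{exactly} in $V_d$, i.e.\ $g_{i^*}^{d/D} = \sum_i a_i(x)(g_i(x) - b_i)$ with no lower-degree remainder. These are not equivalent: for homogeneous $a_i$ of degree $d-D$ one has $\sum_i a_i(g_i - b_i) = \sum_i a_i g_i - \sum_i b_i a_i$, and surjectivity of $\Phi$ gives you a preimage of $g_{i^*}^{d/D}$ in the degree-$d$ part while giving no control over the degree-$(d-D)$ term $\sum_i b_i a_i$. Subtracting the telescoping identity then leaves a nonzero degree-$(d-D)$ polynomial in the generated ideal, not the constant $1$. (A further symptom: your $\Phi$ does not involve the $b_i$'s at all, so the ``spread-outness'' of the $b_i$ distribution plays no role in your rank argument — whereas the paper's matrix genuinely depends on the $b_i$.)

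The paper resolves this by working with the linear map onto \emph{both} the degree-$d$ and degree-$(d-D)$ homogeneous components simultaneously: the coefficient matrix $\MnullsatzNonhom$ has $\binom{n+d-1}{d} + \binom{n+d-D-1}{d-D}$ rows, split as $\begin{bmatrix}\Mnullsatz \\ M_b\end{bmatrix}$ where $M_b(\alpha',(\beta,i)) = -b_i\cdot\1\{\alpha'=\beta\}$. Full row rank of this \emph{combined} matrix is what makes the lower-degree residue killable. Incidentally, covering the extra $M_b$ rows only costs one additional fresh equation (a single scaled-identity submatrix $B = -b_i\Id$), so the asymptotics don't change — but without including these rows, the completeness property you invoke in the refutation step is not the one your matrix argument would certify. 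You would either need to enlarge $\Phi$'s codomain to account for the degree-$(d-D)$ target (as the paper does), or supply an additional degree-reduction/recursion argument showing the residual can itself be eliminated within $\GenIdeal{d}{\calG}$, neither of which your write-up currently does.
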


Our algorithm is quite simple. It approximately solves the degree-$d$ SoS relaxation for the constraint system $\{p_i(x)=0\}_{i \leq m}$ where $p_i(x) = g_i(x) -b_i$, and returns ``infeasible'' if the SDP outputs infeasible and ``don't know'' otherwise. 
More precisely: 
 \begin{mdframed}
      \begin{algorithm}[Refute Random Polynomials]
        \label{algo:refute-degree-2-polynomials}\mbox{}
        \begin{description}
            \item[Given:]
                A rational accuracy parameter $\tau = \exp(-n^{O(d)} B)$ and degree-$D$ polynomials $p_1, \ldots, p_m$ with rational coefficients of bit complexity at most $B$ for $B \in \N$. 
            \item[Output:] ``Infeasible'' or ``Don't Know''. 
            \item[Operation:]\mbox{}
                \begin{enumerate}
                      \item Find a degree-$d$ pseudo-distribution $\mu$ such that $|\pE_{\mu} p_i(x) x^{\alpha}| \leq \tau$ for every $i \leq m$ and monomial index $\alpha$ of degree at most $d-\deg(p_i)$. 
                      \item If no such pseudo-distribution exists, return ``Infeasible''. 
                      \item Otherwise output ``don't know''.
                \end{enumerate}
        \end{description}
      \end{algorithm}
\end{mdframed} 

\paragraph{Analysis of algorithm.} 
The key to the proof of the theorem is the following lemma that guarantees the existence of a sum-of-squares refutation for the input random polynomial system.

\begin{lemma}[Sum-of-squares refutation for random polynomial systems]
\label{lem:sos-refutation-polynomials}
Let $D\in \N$ and $d\geq D$ be a multiple of $D$.
Let $\calF = \{g_1,\dots, g_m\}$ be a set of homogeneous degree-$D$ polynomials with each coefficient of each $g_i$ chosen from an independent $B$-nice rational distribution.
Let $\B{1}, \B{2}, \ldots, \B{m}$ be independent samples from a $B$-nice rational distribution. 
Then, whenever $m \geq O_D\Paren{\frac{n^D}{d^{D-1}}}$, with probability at least $1-n^{-d}$ over the choice of the $g_i$s and $b_i$s, there exist polynomials $a_1, a_2, \ldots, a_m$ of degree $d-D$ such that the following polynomial identity holds:
\begin{equation} \label{eq:sos-proof-eq}
-1 = \sum_{i\leq m }a_i (g_i -b_i)\mper
\end{equation}
Further, the coefficients of $a_i$s are rational numbers with bit complexity at most $O(n^{5d}d \log n + n^{5d} B)$.  
\end{lemma}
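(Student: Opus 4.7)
The plan is to reduce the existence of the \nullsatz identity to an algebraic ``completeness'' statement about the degree-$d$ ideal generated by the $g_i-b_i$, and then establish that completeness via a full-row-rank statement for a structured random matrix.

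First I would formalize the reduction as follows. Let $I_d$ denote the $\R$-linear span of $\{x^\alpha (g_i - b_i) : i \in [m], |\alpha|\leq d - D\}$ inside the space of polynomials of degree $\leq d$. The identity~\eqref{eq:sos-proof-eq} is equivalent to $1 \in I_d$. I claim it is enough to prove surjectivity of the map
\[
\phi_d : \bigoplus_{i=1}^m \R[x]_{d-D,\hom} \longrightarrow \R[x]_{d,\hom}, \qquad (a_1,\dots,a_m)\longmapsto \sum_{i=1}^m a_i g_i,
\]
where $\R[x]_{k,\hom}$ denotes homogeneous polynomials of degree $k$. Given surjectivity, pick any $i$ with $b_i\neq 0$ (which holds for some $i$ with overwhelming probability by the spread-out property of the $b_i$-distribution). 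Since $g_i^{d/D}$ is homogeneous of degree $d$ (using that $D \mid d$), surjectivity yields homogeneous $\widetilde a_j$ of degree $d-D$ with $g_i^{d/D} = \sum_j \widetilde a_j g_j$. Combining this with the telescoping identity $g_i^{d/D} - b_i^{d/D} = (g_i-b_i)\sum_{k=0}^{d/D-1} g_i^k b_i^{d/D-1-k}$ places $b_i^{d/D}$ modulo $I_d$ at a polynomial of strictly smaller degree $d - D$; iterating the same reasoning with $\phi_{d-D}, \phi_{d-2D}, \dots, \phi_D$ (each of which we argue surjective from the same random matrix analysis applied at lower degrees) eventually reduces the remainder to a single nonzero constant, and rescaling by $-1/b_i^{d/D}$ yields~\eqref{eq:sos-proof-eq}.

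The main technical step is establishing surjectivity of $\phi_d$ (and of $\phi_{d-jD}$ for each $j$). Writing $\phi_d$ as a matrix $M$ of size $\binom{n+d-1}{d}\times m\binom{n+d-D-1}{d-D}$, with rows indexed by degree-$d$ monomials and columns by pairs $(i,\alpha)$ with $|\alpha| = d-D$, surjectivity is equivalent to $M$ having full row rank. This is where the threshold $m \geq \Omega_D(n^D/d^{D-1})$ enters: the naive attempt to prove linear independence of the columns $x^\alpha g_i$ fails because there are genuine algebraic dependencies among them (the total randomness is only $m n^D \lesssim n^{2D}$ coefficients while the matrix lives in dimension $n^{\Theta(d)}$). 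Instead I would use the structural decomposition described in Definition~\ref{def:row_rank_decomposition}, which partitions rows and columns according to the overlap pattern between $\alpha$ and the monomials appearing in $g_i$, and then bound the smallest singular value of each block. Together with a standard anti-concentration argument in the spread-out rational coefficients (via the second part of Definition~\ref{def:nice-distributions}), this gives full row rank with probability $1 - n^{-\Omega(d)}$.

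For the bit-complexity bound, I would invoke the paper's Lemma~\ref{lem:low-bit-complexity-rep-arbitrary-gen-ideals} on rational solutions to rational linear systems: once surjectivity of $\phi_d$ is known and the entries of $M$ have bit complexity $O(B)$, Cramer's rule applied to an appropriate full-rank submatrix of dimension $n^{O(d)}$ produces coefficients of $a_i$ with bit complexity $O(n^{5d} d \log n + n^{5d} B)$, matching the claimed bound. The main obstacle, by far, will be the full-row-rank analysis: identifying the correct combinatorial partition of rows and columns and verifying that each block of the decomposition is individually well-conditioned is where the exact threshold $n^D/d^{D-1}$ is forced, and the heavy correlations in $M$ prevent any purely generic bound such as a Schwartz-Zippel-style argument from delivering the right dependence on $d$.
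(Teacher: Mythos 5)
The outline --- reduce to ideal-completeness, then a full-row-rank argument on a structured random matrix, then Cramer-type bit-complexity bounds --- is the right skeleton, but your reduction step has a real gap. You reduce to surjectivity of the purely homogeneous map $\phi_d:(a_1,\dots,a_m)\mapsto\sum_i a_i g_i$ and then iterate through $\phi_{d-D},\phi_{d-2D},\dots,\phi_D$ to shrink the residue $\sum_j b_j\widetilde a_j$ down to a constant. But surjectivity of $\phi_{d'}$ needs $m\gtrsim n^D/(d')^{D-1}$, and at the bottom of your chain $\phi_D$ maps $\R^m\to\R[x]_{D,\mathrm{hom}}$, which is surjective only when $m\geq\binom{n+D-1}{D}=\Omega(n^D)$. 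So the iteration forces $m$ back up to the linearization threshold and loses the entire $d^{D-1}$ savings the lemma asserts. Separately, even granting all the surjectivities, the claim that the final constant is nonzero (i.e.\ different from $b_i^{d/D}$) is asserted without justification; if they coincide you have only produced $0\in I_d$, which says nothing.

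The paper sidesteps both issues by proving a strictly stronger completeness statement in one shot: for every homogeneous $f$ of degree $d$, one can \emph{simultaneously} solve $\sum_i a_i g_i = f$ and $\sum_i a_i b_i = 0$ with homogeneous $a_i$ of degree $d-D$. The matrix whose full row rank is established is therefore not the one encoding $\phi_d$ alone; it carries an extra block (rows indexed by degree-$(d-D)$ monomials, entries $-b_i$) stacked beneath it. With this, $p^{d/D}=\sum_i a_i(g_i-b_i)$ directly for $p=g_1/b_1$, and adding the telescoping identity $(g_1-b_1)q=p^{d/D}-1$ with $q=\frac{1}{b_1}\sum_{k=0}^{d/D-1}p^k$ of degree $d-D$ gives $-1\in I_d$ with no iteration at all. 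Crucially, the extra $b_i$-rows cost essentially nothing in the row-rank decomposition --- they are covered by a single additional fresh equation --- so the threshold $m\geq O_D(n^D/d^{D-1})$ is preserved. To repair your proof, replace surjectivity of $\phi_d$ by surjectivity of the augmented map $(a_1,\dots,a_m)\mapsto\bigl(\sum_i a_i g_i,\ \sum_i a_i b_i\bigr)$; the row-rank decomposition and bit-complexity steps you cite then carry over unchanged.
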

\begin{remark}[Nullstellensatz vs Sum-of-Squares] 
Observe that in the refutation identity, there is no additive sum-of-squares term. As a result, our refutation is in fact a \nullsatz refutation (\pref{def:nullsatz-refutation}). As we show, there's a strong indication (see the next section on lower bounds) that the trade-off achieved by \pref{lem:sos-refutation-polynomials} between $m$ and $d$ is tight up to absolute constant factors for the sum-of-squares proof system. Thus, in this case, we expect that the $m$ vs $d$ trade-off for \nullsatz and SoS proof systems to be essentially the same. Interestingly, the constant factor gap allowed by our upper and lower bounds might be ``real''. At degree $d = 2$, it is not hard to argue that $m \geq \frac{n^2}{2}$ is necessary for a \nullsatz refutation to exist. However, \pref{thm:deg_2_sos} shows that $m \gtrsim \frac{n^2}{4}$ is sufficient for degree-2 SoS. %This is an interesting constant factor gap between \nullsatz and SoS proof systems.
\end{remark}

It is easy to complete the analysis of the algorithm using this lemma. 

\begin{proof}[Proof of \pref{thm:nullsatz_upper_bound}]
The running time of the algorithm follows immediately by applying \pref{fact:finding-pseudo-distributions}. 
In order to prove correctness of the algorithm, let's assume that for the given set of $g_i$s, a sum-of-squares proof of the form promised by \pref{lem:sos-refutation-polynomials} holds. Let $B' = O(n^{5d}d \log n + n^{5d} B)$ be an upper-bound on the bit complexity of the coefficients of $a_i$. By \pref{lem:sos-refutation-polynomials}, such an  event happens with probability $1-n^{-d}$ over the choice of $g_i$s and $b_i$s. We will prove that conditioned on this event, the algorithm outputs ``infeasible'' with probability $1$.

By \pref{fact:finding-pseudo-distributions}, if there is a pseudo-distribution of degree $d$ consistent with $\{g_i(x)=b_i\}_{i \leq m}$ then the sum-of-squares algorithm finds a pseudo-distribution $\mu$ such that $|\pE_{\mu}[ x^{\alpha} (g_i -b_i)] |\leq \tau$ for each monomial index $\alpha$ of degree $\leq d-D$. 
We will show that there does not exist a pseudo-distribution satisfying the latter condition. Thus, the SDP solver must output ``infeasible'' as desired.

Assume for the sake of contradiction that for $\tau = 0.5 \cdot 2^{-B'} (n+1)^{-d} m^{-1}$, there is a pseudo-distribution $\mu$ satisfying $|\pE_{\mu}[(g_i-b_i)x^{\alpha}]| \leq \tau$ for every $i \leq m$ and every monomial index $\alpha$ of degree $\leq d-D$. Then, since all $\leq (n+1)^d$ coefficients of each of the $a_i$ are of bit complexity at most $B'$, the pseudo-expectation under $\mu$ of the RHS of \pref{eq:sos-proof-eq} can be upper-bounded by:

\[
\left|\sum_{i = 1}^m \pE_{\mu}[a_i (g_i-b_i)]\right| \leq \sum_{i\leq m, \alpha} 2^{B'} \tau \leq m (n+1)^d 2^{B'} \tau  \leq 0.5\mper
\]
On the other hand, the pseudo-expectation under $\mu$ of the LHS satisfies: $|\pE[-1]| = 1$. This is a contradiction. Thus, there is no such pseudo-distribution $\mu$. 
\end{proof}

\subsection{Proof of \pref{lem:sos-refutation-polynomials}}

\paragraph{Generated ideals.}  Our analysis relies on the key idea of \emph{generated ideals} and their completeness that we define and discuss below. Intuitively speaking, given a set of constraints $\calF = \{f_1=0,f_2=0,\ldots, f_m=0\}$, the generated ideal of $\calF$ at degree $d$ is the set of all degree-$d$ polynomials that the sum-of-squares proof system (and in fact, the \nullsatz proof system) can infer to be $0$ at any simultaneous solutions of $\calF$. The following definition captures this idea.

% \paragraph{Generated Ideals.}  Our analysis relies on the key idea of \emph{generated ideals} and their completeness that we define and discuss below. Intuitively speaking, given a set of constraints $\calF = \{f_1=0,f_2=0,\ldots, f_k=0\}$, the generated ideal of $\calF$ at degree $d$ is the set of all degree $d$ polynomials that sum-of-squares proof system (and in fact \nullsatz proof system) can infer to be $0$ at any simultaneous solutions of $\calF$. In general, these degree $d$ polynomials are not equal to the set of all degree $d$ polynomials that are $0$ at simultaneous solutions to $\calF$ (since all these inferences may not be made by degree $d$ sum-of-squares proof system). The following definition captures this idea.  
% >>>>>>> 4a1b63bb936b5a2544e4cb575721c87ec5f3e127

% We say that a degree-$d$ polynomial $f \in \R[x]$ is homogeneous if $f$ only contains degree-$d$ monomials, i.e.\ $f(x) = \sum_{\alpha: |\alpha| = d} c_{\alpha} x^{\alpha}$,
% We denote $\R^{\leq d}[x]$ to be the set of polynomials of maximum degree $d$, and
% and denote $\calP_d$ to be the set of degree-$d$ homogeneous polynomials.
% Furthermore, we introduce the following important definition.

\begin{definition}[Generated Ideal at Degree $d$]
Let $D,d\in \N$ and $D\leq d$.
Let $\calF = \{f_1,\dots, f_m\}$ be a set of degree-$D$ polynomials.
The \emph{generated ideal} of $\calF$ at degree $d$ is defined as the following set of degree-$d$ polynomials:
\begin{equation*}
    \GenIdeal{d}{\calF} \seteq \{a_1 f_1 + \cdots + a_m f_m : \forall i \text{ } \deg(a_i)\leq d-D \} \mper
\end{equation*}
We say that the generated ideal is \emph{complete} at degree $d$ if $\calP_d \subseteq \GenIdeal{d}{\calF}$ where $\calP_d$ is the set of all homogeneous degree-$d$ polynomials.
\end{definition}

% Note that in general, given a set a polynomials $\calF = \{f_1, f_2,\ldots, f_m\}$, the generated ideal at degree $d$ is \emph{not} the same as the set of all degree-$d$ polynomials in the ideal generated by $\calF$. 

One important consequence of completeness of generated ideals at degree $d$ is the following important lemma that shows that every homogeneous degree-$d$ polynomial can be written as a polynomial combination of the $f_i$s such that the coefficients of all the polynomials appearing in the representation are of polynomial bit complexity. 

\begin{lemma}[Low-Bit Complexity Representations in Complete Generated Ideals] \label{lem:low-bit-complexity-rep-arbitrary-gen-ideals}
Let $D,d\in \N$ and $D\leq d$.
Let $\calF = \{g_1,\dots, g_m\}$ be a set of degree-$D$ polynomials with rational coefficients of bit complexity $B$ such that the generated ideal $\GenIdeal{d}{\calF}$ is complete.
Let $N_{d-D} \leq n^{d-D}$ be the number of all monomials in $x_1, x_2,\ldots, x_n$ of total degree exactly $d-D$.
Let $f$ be an arbitrary homogeneous polynomial of degree $d$ with rational coefficients of bit-complexity $B$. 

Then, there is a vector $v \in \Q^{m \cdot N_{d-D}}$ with entries of bit complexity at most $O(n^{5d}d \log n + n^{5d} B)$ such that $\sum_{i \leq m, \alpha} v_{i,\alpha} g_i(x) x^{\alpha} = f(x)$.
\end{lemma}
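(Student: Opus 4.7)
The plan is to reformulate the membership question as a consistent linear system over $\mathbb{Q}$, restrict to a nonsingular $r \times r$ submatrix (with $r$ the rank), and then invoke Cramer's rule together with the standard expansion-based bit-complexity bound for determinants of rational matrices.

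Set up the linear system as follows. Index the unknowns $v_{i,\alpha}$ by $i \in [m]$ and $\alpha \in \N^n$ with $|\alpha| = d - D$; let the rows of a matrix $A$ be indexed by the $M \leq n^d$ monomials $x^\beta$ of total degree $d$, and let its $(\beta,(i,\alpha))$-entry equal the coefficient of $x^\beta$ in $g_i(x)\,x^\alpha$. This coefficient is exactly the coefficient of $x^{\beta-\alpha}$ in $g_i$ when $\alpha \leq \beta$ coordinatewise and zero otherwise. In particular, every entry of $A$ is either $0$ or a single coefficient of some $g_i$, so $A$ and the coefficient vector $c \in \mathbb{Q}^M$ of $f$ both have entries of bit complexity at most $B$. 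The desired identity is precisely $Av = c$, and completeness of the generated ideal at degree $d$ says exactly that $c \in \mathrm{colspan}(A)$, so the system is consistent.

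Let $r = \mathrm{rank}(A) \leq M \leq n^d$. Choose $r$ columns $S$ of $A$ forming a basis of its column span, and then $r$ rows $R$ such that $A' := A_{R,S}$ is invertible. Solving the $r \times r$ system $A' w = c_R$ gives a unique $w$; extending $w$ by zero on the columns outside $S$ yields a valid solution of $Av = c$. The verification is routine: since the rows of $A_{\cdot, S}$ outside $R$ are linear combinations of those in $A_{R,S}$, say $A_{R^c, S} = T A_{R, S}$, and since consistency of $Av = c$ forces $c_{R^c} = T c_R$, solving on $R$ automatically solves on $R^c$. It therefore suffices to bound the bit complexity of $w = (A')^{-1} c_R$, whose entries, by Cramer's rule, are ratios of determinants of $r \times r$ submatrices of $[\,A \mid c\,]$.

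For an $r \times r$ rational matrix $B$ with entries $p_{ij}/q_{ij}$ satisfying $|p_{ij}|, |q_{ij}| \leq 2^B$, clearing denominators via the common factor $\prod_{i,j} q_{ij}$ (of bit complexity at most $r^2 B$) reduces the task to bounding an integer determinant, expanded as a signed sum of $r!$ products each of absolute value at most $2^{2rB}$. This gives $\det B$ bit complexity $O(r^2 B + r \log r)$, and hence the same bound for every entry of $w$. Substituting $r \leq n^d$ yields $O(n^{2d} B + n^d d \log n)$, which sits comfortably inside the claimed bound $O(n^{5d} d \log n + n^{5d} B)$. There is no substantive obstacle: the argument is a standard combination of Cramer's rule with a Hadamard/expansion determinant bound; the only mild subtlety is clearing denominators cleanly when the entries of $A$ are genuinely rational, handled as above.
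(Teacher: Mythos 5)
Your proof is correct but takes a genuinely different route from the paper's. Both arguments reduce the membership question to a consistent linear system $Av=c$ over $\Q$, with each entry of $A$ being (as you correctly observe) either $0$ or a single coefficient of some $g_i$, hence of bit complexity $\leq B$. The paper then multiplies the whole system by the product of all denominators to get an integer system and invokes Kannan's black-box bound (\pref{fact:kannan}) on the bit complexity of rational solutions to integer linear systems. You instead restrict to a nonsingular $r\times r$ submatrix $A'$ with $r=\mathrm{rank}(A)\leq n^d$ and apply Cramer's rule together with a Leibniz-expansion bound on determinants of rational matrices. Your extraction of $A'$ and the argument that the restricted solution extends are both correct: choosing $r$ independent columns $S$ and then $r$ rows $R$ making $A_{R,S}$ invertible, the relation $A_{R^c,S}=TA_{R,S}$ combined with consistency of $Av=c$ forces $c_{R^c}=Tc_R$, so solving on $(R,S)$ and padding by zero solves the whole system. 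The upshot is a tighter bound, $O(n^{2d}B+n^d d\log n)$, which sits well inside the paper's stated $O(n^{5d}d\log n+n^{5d}B)$; your route is arguably more self-contained since it replaces Kannan's theorem by a direct determinant estimate. One minor arithmetic slip: after clearing denominators row by row, the Leibniz products are bounded by $2^{r^2 B}$, not $2^{2rB}$ as you wrote; this does not affect your stated bound $O(r^2 B+r\log r)$ on the bit complexity of the determinants, which is the correct estimate.
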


We will use the following fact that appears in a classical work of Kannan~\cite{MR807935}. 

\begin{fact}[Bit-Complexity of Solutions to Integer Systems, see Proposition 2.1 in ~\cite{MR807935}] \label{fact:kannan}
Let $Ax=u$ for $A \in \Z^{m \times n}$ and $u \in \Z^m$ be a system of $m$ linear equations in $n$ variables $x$ such that each entry of $A$ and $u$ is an integer of magnitude $\leq B$. 

Suppose that the system is soluble over $\Q$ -- i.e., there is an $x \in \Q^n$ such that $Ax = u$. 
Then, there is in fact an $x \in \Q^n$ such that $Ax=u$ where the entries of $x$ have bit complexity $O(n(B+ \log n))$.
\end{fact}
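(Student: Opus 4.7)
The plan is to translate the polynomial identity $\sum_{i,\alpha} v_{i,\alpha} g_i(x) x^\alpha = f(x)$ into a linear system in the unknowns $v_{i,\alpha}$ and apply Fact~\ref{fact:kannan} as a black box. First I would index the rows of the system by the monomials $x^\beta$ of degree exactly $d$ (there are $N_d \leq (n+1)^d$ of them) and the columns by pairs $(i,\alpha)$ with $i\in[m]$ and $|\alpha|=d-D$. Matching the coefficient of $x^\beta$ on both sides of the identity gives an equation whose coefficient on $v_{i,\alpha}$ is the coefficient of $x^{\beta-\alpha}$ in $g_i$ (and $0$ if $\alpha\not\leq\beta$), while the right-hand side is the coefficient of $x^\beta$ in $f$. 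This is a linear system $A v = u$ with $A\in\mathbb{Q}^{N_d\times (mN_{d-D})}$ and $u\in\mathbb{Q}^{N_d}$, whose nonzero rational entries all have bit complexity $\leq B$.

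Next, completeness of $\GenIdeal{d}{\calF}$ guarantees the existence of some rational solution $v$. A standard linear-algebra reduction (pick a maximal set of linearly independent columns of $A$ and set the other $v_{i,\alpha}$ to zero) then reduces the system to one with at most $\rank(A)\leq N_d\leq (n+1)^d$ unknowns while preserving solvability. Call this reduced system $\widetilde A \widetilde v = u$ with $\widetilde A \in\mathbb{Q}^{N_d\times \wt N}$ and $\wt N\leq N_d$.

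The third step is to clear denominators to obtain an integer system. For each row $\beta$ of $(\widetilde A\mid u)$, multiply the row by the LCM of the denominators of its entries; since every entry has denominator bounded by $2^B$ and there are at most $\wt N + 1 \leq 2n^d$ entries per row, this LCM is at most $2^{O(B n^d)}$, and after the scaling each entry of the resulting integer matrix $A'$ and integer vector $u'$ has magnitude at most $2^{O(B n^d)}$, hence bit complexity $O(B n^d)$. The system $A' \widetilde v = u'$ has the same rational solution set as $\widetilde A \widetilde v = u$ and is integer-valued. Invoking Fact~\ref{fact:kannan} with $n_{\mathrm{vars}} = \wt N \leq (n+1)^d$ and per-entry bit complexity $O(Bn^d + d\log n)$ produces a rational solution $\widetilde v$ whose entries have bit complexity at most $O(n^d \cdot (B n^d + d\log n)) = O(n^{2d}B + n^d d\log n)$, which is comfortably within the claimed bound $O(n^{5d}d\log n + n^{5d}B)$. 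Padding $\widetilde v$ with zeros on the columns discarded in step~two yields the desired $v\in\mathbb{Q}^{m\cdot N_{d-D}}$.

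The only potentially delicate step is bookkeeping the blowup from clearing denominators: one must be careful that scaling rows independently (rather than the whole system by a global LCM) keeps the integer magnitudes as small as $2^{O(Bn^d)}$, and that the column-reduction in step~two is performed before clearing denominators so that $\wt N$ rather than the a priori unbounded $mN_{d-D}$ controls the number of nonzero entries per row. Once these two accounting points are handled, Fact~\ref{fact:kannan} finishes the argument with substantial slack.
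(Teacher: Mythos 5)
You have not actually proved the statement at hand. Fact~\pref{fact:kannan} is a purely linear-algebraic claim about an integer system $Ax=u$, and your argument invokes precisely this fact as a black box ("Invoking Fact~\ref{fact:kannan} \dots finishes the argument"). What you establish is the downstream statement, Lemma~\pref{lem:low-bit-complexity-rep-arbitrary-gen-ideals}: matching coefficients of the polynomial identity, restricting to independent columns, clearing denominators row by row, and then appealing to the Fact. As a proof of Fact~\pref{fact:kannan} this is circular, and as written it proves a different result. For context, the paper never proves the Fact either — it is imported from Kannan's work — while the reduction you describe is essentially the paper's own proof of Lemma~\pref{lem:low-bit-complexity-rep-arbitrary-gen-ideals}, so your write-up would be a reasonable (and in places more carefully accounted) proof of that lemma, but it leaves the assigned statement untouched.

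A direct proof of the Fact needs no polynomials at all. Since $Ax=u$ is solvable over $\Q$, we have $r:=\rank(A)=\rank([A\mid u])$. Pick $r$ rows of $A$ that are linearly independent; using a solution $x_0$ one checks that every discarded row of the augmented matrix $[A\mid u]$ is the same rational combination of the retained augmented rows, so it suffices to solve the retained subsystem. Within those $r$ rows pick $r$ linearly independent columns, set the remaining variables to $0$, and solve the resulting nonsingular $r\times r$ system by Cramer's rule. Each coordinate is then a ratio of two $r\times r$ integer determinants whose entries have magnitude at most $B$, and Hadamard's inequality bounds each determinant by $r^{r/2}B^{r}$, i.e.\ bit complexity $O\bigl(n(\log B+\log n)\bigr)\subseteq O\bigl(n(B+\log n)\bigr)$; if $B$ is instead read as the bit complexity of the entries (magnitude $\le 2^{B}$), the same computation gives exactly $O\bigl(n(B+\log n)\bigr)$. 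Kannan's Proposition~2.1 obtains this bound, with algorithmic guarantees, via the Hermite normal form.
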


\begin{proof}[Proof of \pref{lem:low-bit-complexity-rep-arbitrary-gen-ideals}]
Since $\GenIdeal{d}{\calF}$ is complete at degree $d$ and $f$ is a homogeneous polynomial of degree $d$, $f$ must belong to $\GenIdeal{d}{\calF}$. Thus, there are polynomials $a_1, a_2, \ldots, a_m$ of degree $\leq d-k$ such that $\sum_i a_i g_i = f$. 

For each $i$, write $a_i (x) = \sum_{\alpha} a_i^{\alpha} x^{\alpha}$ where the sum ranges over monomial indices $\alpha$ of total degree $\leq d-k$. Then, we know that $f=\sum_{i,\alpha} a_i^{\alpha} x^{\alpha} g_i$. By matching the $\leq (n+1)^d$ coefficients of $f$ on both sides, we obtain a system of linear equations with rational coefficients. We are guaranteed that this system has a solution over the reals. In fact, since all the coefficients are rational numbers, we can infer that there must be a solution over the rationals. 

Each coefficient in this linear system is a sum of at most $m$ different coefficients of one from each $g_i$. Since each coefficient of each $g_i$ has bit complexity at most $B$, the coefficients of the resulting linear system have bit complexity at most $B+O(d \log n)$. 

The lowest common multiple of all the denominators appearing in the $\leq (n+1)^{2d}$ entries of the equation is at most their product that has bit complexity at most $O(n^{3d}d \log n + n^{3d} B)$. By multiplying all the equations by this integer, we obtain a system of linear equations over the integers. By \pref{fact:kannan}, such a system has a solution of bit complexity at most $O(n^{5d}d \log n + n^{5d} B)$. Thus the original system has a solution over the rationals with bit complexity at most $O(n^{5d}d \log n + n^{5d} B)$. This completes the proof.
\end{proof}

% Moreover, we emphasize that in the definition, $f_1,\dots, f_m$ can be non-homogeneous, but $a_1,\dots,a_m$ must be homogeneous.

Our task thus reduces to showing that the generated ideal of the input polynomials is complete at degree $d$ when $m \geq O(n) \cdot \Paren{\frac{n}{d}}^{D-1}$.

\paragraph{Completeness of generated ideal at degree $d$.} The key to the proof of \pref{lem:sos-refutation-polynomials} is the following lemma that identifies a non-trivial $d$ such that the generated ideal at degree $d$ of a collection of $m$ random polynomials is complete. 

\begin{lemma}[Completeness of Generated Ideals]\label{lem:generating_set}
    Let $D\in \N$ be a constant, let $d,n \in \N$ such that $2 \leq D \leq d\leq n$, and let $m \geq O_D\Paren{\frac{n^D}{d^{D-1}}}$.
    Suppose $\calG = \{g_1(x) - \B{1}, \dots, g_m(x) - \B{m}\}$ is a set of $m$ degree-$D$ polynomials obtained by choosing each coefficient of each $g_i$ and each $\B{i}$ from independent $n^{2d}$-nice rational distributions. Then, the generated ideal of $\calG$ at degree $d$ is complete with probability $1-n^{-d}$.
\end{lemma}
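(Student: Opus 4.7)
First, I would reduce the combinatorial statement to a linear algebra question. Consider the linear map $L: V \to W$ given by $L(a_1,\dots,a_m) = \sum_i a_i(g_i - b_i)$, where $V$ is the space of tuples of polynomials with $\deg a_i \le d - D$ and $W$ is the space of polynomials of degree at most $d$. Completeness $\calP_d \subseteq \GenIdeal{d}{\calG}$ is exactly that $\calP_d$ lies in the image of $L$, and it would be implied by surjectivity of $L$ onto all of $W$. The matrix $M$ of $L$, with rows indexed by monomial exponents $\beta$ with $|\beta| \le d$ and columns indexed by pairs $(i,\alpha)$ with $|\alpha| \le d-D$, has a bipartite structure: $M[\beta, (i,\alpha)] = g_{i,\beta-\alpha}$ when $|\beta| = |\alpha| + D$ and $\beta \ge \alpha$ componentwise, equals $-b_i$ when $\beta = \alpha$, and is zero otherwise. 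A parameter count gives $\dim V = m \cdot \binom{n+d-D}{d-D}$ and $\dim W = \binom{n+d}{d}$, and the hypothesis $m \ge C_D \cdot n^D / d^{D-1}$ exceeds the necessary ratio $\dim W / \dim V \approx (n/d)^D$ by a factor of roughly $d$, giving slack for the analysis.

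Second, I would establish that $M$ has full row rank by exhibiting, for each row $\beta$, a pivot column $(i_\beta, \alpha_\beta)$ so that the resulting square submatrix of size $\dim W$ has determinant equal to a \emph{nonzero} polynomial in the random inputs. Non-vanishing of this polynomial would be certified by producing a single concrete assignment of $g_i$'s and $b_i$'s -- for instance, by choosing each $g_i$ to be a carefully selected monomial and each $b_i$ to be nonzero -- under which the pivot submatrix is, up to a permutation of rows and columns, triangular with nonzero diagonal. Low-degree rows $\beta$ with $|\beta| \le d-D$ would use pivots with $\alpha = \beta$ (diagonal entry $-b_i \ne 0$), while higher-degree rows would use pivots $(i,\alpha)$ with $|\alpha| = |\beta| - D$ so that $g_{i,\beta-\alpha}$ is the intended nonzero coefficient of the assigned monomial $g_i$. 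Once such a nonzero determinant polynomial is identified, the anti-concentration part of the $n^{2d}$-niceness condition (capping every atom's mass at $n^{-200d}$) combined with a Schwartz--Zippel-type union bound over the monomials of the determinant (whose total degree is at most $\dim W$) yields non-vanishing with probability at least $1 - n^{-d}$.

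The main obstacle is the pivot-selection step. The entries of $M$ are heavily correlated -- each coefficient $g_{i,\gamma}$ appears in many positions -- so pivots must be chosen so as not to implicitly force a linear dependence between the selected rows. I expect the row-rank decomposition referred to in the paper's overview to partition the rows and columns by degree and to reduce the problem to separate blocks of the form ``does $\mathrm{span}\{x^\alpha g_i : |\alpha| = j, i \in [m]\}$ cover $\calP_{j+D}$ for generic $g_i$?'', with each block being a familiar multiplication-by-generic-polynomial problem whose rank can be certified via a triangular monomial minor. The delicate point is to apportion the $m$ polynomials across the blocks $j = 0, 1, \dots, d-D$ so that the block pivot counts sum to $\dim W$ while respecting the budget $m \ge C_D \cdot n^D / d^{D-1}$; this is precisely where the factor-of-$d$ slack is used.
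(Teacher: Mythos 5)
Your high-level reduction is the same as the paper's: express the completeness of the generated ideal as the statement that a random structured matrix (whose rows are indexed by degree-$d$ and degree-$(d-D)$ monomial indices $\alpha$ and columns by pairs $(\beta,i)$) has full row rank. You also correctly locate where the factor-of-$d$ slack in $m$ must be spent. From there, though, your proposed route differs from the paper's and — more importantly — stops short of the lemma's actual crux.

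Your route, showing the determinant of a carefully chosen square pivot submatrix is a \emph{nonzero polynomial} in the underlying random coefficients (certified by a concrete monomial assignment $g_i = x^{\gamma_i}$ that makes the pivot submatrix a permutation matrix), and then invoking anti-concentration of $n^{2d}$-nice distributions via a Schwartz--Zippel-type bound, is a valid alternative to what the paper does. The paper instead never touches determinant polynomials: it constructs a \emph{\RankDecomposition} (\pref{def:row_rank_decomposition}) — square submatrices $A_{\gamma_1},\dots,A_{\gamma_N}$ and $B$ with pairwise-disjoint columns, each with the property that its diagonal is a fresh independent scalar times identity — and runs an iterative Schur-complement argument (\pref{lem:block_matrix_independence}, \pref{cor:decomposition_implies_full_rank}) where the fresh diagonal scalar avoids the $\le n^{2d}$ eigenvalues of the rest. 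Both approaches rely on the same anti-concentration property, and your determinant-polynomial framing is arguably more elementary; the paper's version keeps the $g_i$'s random throughout and is perhaps more robust to variations.

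The genuine gap is the combinatorial covering construction, which you flag as ``delicate'' but leave unresolved. Both your approach and the paper's require the same ingredient: a family of size-$D$ multisets $\gamma_1,\dots,\gamma_N$ over $[n]$, with $N \le O_D(n^D/d^{D-1})$, such that every multiset $\alpha$ of size $d$ contains some $\gamma_k$ as a sub-multiset. This is exactly \pref{lem:cover_multisets} (for $D=2$) and \pref{lem:cover_multisets_degree_D} (for $D>2$) in the paper, proved by a bucketing argument: split $[n]$ into $\lfloor (d-1)/(D-1)\rfloor$ buckets of size $\lceil n/t \rceil$ and take all size-$D$ multisets within each bucket, using the pigeonhole observation that any uncovered $\alpha$ could meet each bucket in at most $D-1$ elements and hence have size $< d$. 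Without this construction your proof does not close, since it is precisely what yields the $n^D/d^{D-1}$ bound rather than the naive $\approx (n/d)^D \cdot \binom{n+d}{d}$ count. Your description of the decomposition as ``separate blocks of the form $\mathrm{span}\{x^\alpha g_i : |\alpha| = j, i\in[m]\}$ covers $\calP_{j+D}$'' for $j = 0,\dots,d-D$ also misidentifies its structure: the paper's decomposition is indexed by covering multisets $\gamma$ with $|\gamma| = D$, not by intermediate degrees $j$, and only degree-$d$ and degree-$(d-D)$ rows appear (since the $a_i$ may be taken homogeneous of degree $d-D$). Apportioning $m$ across intermediate degrees is not the right organization, and trying to make it so would not produce the correct count.
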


\begin{proof}[Proof of \pref{lem:sos-refutation-polynomials} by \pref{lem:generating_set}]
    Consider the first polynomial equation $g_1(x) = \B{1}$.
    Then, $\B{1} \neq 0$ with probability $1-n^{-200d}$ since it is sampled from a $n^{2d}$-nice distribution.
    Let's condition on $\B{1} \neq 0$ in the following. 
    Let $p(x) \seteq \frac{1}{\B{1}}g_1(x)$, and let $q(x) \seteq \frac{1}{\B{1}}\sum_{k=0}^{d/D-1} p(x)^k$ (since $d$ is a multiple of $D$).
    Then, we have 
    \begin{equation} \label{eq:create-p}
    (g_1(x)-\B{1}) q(x) = p(x)^{d/D} - 1\mper
    \end{equation}
    Thus, the polynomial $p^{d/D}-1 \in \GenIdeal{d}{\calG}$ and moreover $p^{d/D}$ is a homogeneous polynomial of degree $d$. Thus, by \pref{lem:generating_set}, the following polynomial identity holds for some polynomials $a_1, a_2, \ldots, a_m$ of degree $\leq d-D$ such that each coefficient has bit-complexity $O(n^{5d}(B+\log n))$:
    \begin{equation}  \label{eq:p-is-generated}
        \sum_{i=1}^m (g_i(x)- \B{i}) a_i(x) = - p(x)^{d/D} \mcom
    \end{equation}
    Adding the identities from \pref{eq:create-p} and \pref{eq:p-is-generated}, we obtain:
    \begin{equation*}
        \sum_{i=1}^m (g_i(x)-\B{i}) a_i(x) + (g_1(x) - \B{1}) q(x) = -1 \mper
    \end{equation*}
    This completes the proof.
\end{proof}

We now focus on proving \pref{lem:generating_set}. 
\paragraph{Reduction to rank lower bounds.}
Let  $f$ be an arbitrary polynomial such that there exist polynomials $a_1, a_2, \ldots, a_m$ of degree $d-D$ such that $f = \sum_{i=1}^m (g_i-\B{i}) a_i \in \GenIdeal{d}{\calG}$. This polynomial identity holds if and only if the coefficients of $a_i$s satisfy a system of linear equations as we describe next.
To prove that $\GenIdeal{d}{\calG}$ is complete, it suffices to restrict the polynomials $a_i$ to be homogeneous degree $d-D$.
% We will only rely on the coefficient polynomials $a_i$ to be homogeneous degree $d-D$ themselves in the following.

For every $i \in [m]$, let $g_i(x) = \sum_{\gamma: |\gamma|=D} \wh{g_i}(\gamma) x^{\gamma}$ where $\gamma \in \N^n$ ranges over indices of monomials in $x_1, x_2, \ldots, x_n$ of total degree $D$. Let $f(x) = \sum_{|\alpha|=d,d-D} \wh{f}(\alpha) x^{\alpha}$ and $a_i(x) = \sum_{|\beta|=d-D} \wh{a_i}(\beta) x^{\beta}$, where $\alpha,\beta \in \N^n$ are multisets indexing monomials in $x_1, x_2, \ldots, x_n$.
Then, we have
\begin{equation*}
    f(x) = \sum_{|\alpha|=d,d-D} \wh{f}(\alpha) x^{\alpha}
    = \sum_{i=1}^m \sum_{|\gamma| = D} \sum_{|\beta|=d-D} \wh{g_i}(\gamma) \cdot \wh{a_i}(\beta) x^{\beta + \gamma} 
    - \sum_{|\beta|=d-D} \sum_{i=1}^m \B{i} \cdot \wh{a_i}(\beta) x^{\beta} \mper
\end{equation*}
Comparing coefficients on both sides, we get $\wh{f} = \MnullsatzNonhom \cdot \wh{a}$, where $\wh{f}$ has dimension $\binom{n+d-1}{d} + \binom{n+d-D-1}{d-D}$ (the number of degree $d$ and $d-D$ monomials) and $\wh{a}$ has dimension $\binom{n+d-D-1}{d-D}$. 

Let's write such equations as $f$ varies over all monomials of total degree exactly $d$. If all the resulting equations admit a solution, then clearly, every homogeneous polynomial of degree $d$ is in $\GenIdeal{d}{\calG}$. The coefficient matrix $\MnullsatzNonhom$  of the resulting linear system has the following structure:

\begin{equation*}
    \MnullsatzNonhom = 
    \begin{bmatrix}
        \Mnullsatz \\
        M_{b}
    \end{bmatrix} \mper
\end{equation*}
Here, the rows of $\Mnullsatz$ and $M_{b}$ are indexed by multisets $\alpha$ with $|\alpha| = d$ and $d-D$, respectively.
The columns of $\MnullsatzNonhom$ are indexed by $(\beta, i)$ with $|\beta|=d-D$ and $i\in[m]$.
Writing out the entries of $\MnullsatzNonhom$ explicitly:
\begin{equation*} \label{eq:matrix_M}
    \Mnullsatz(\alpha, (\beta,i)) =
    \begin{cases}
        \wh{g_i}(\gamma) & \alpha = \beta + \gamma \textnormal{ where $|\gamma|=D$} \\
        0 & \textnormal{otherwise}
    \end{cases}
    ,\quad
    M_{b}(\alpha', (\beta,i)) =
    \begin{cases}
        -\B{i} & \alpha' = \beta \\
        0 & \textnormal{otherwise}
    \end{cases}.
    \numberthis
\end{equation*}

To prove \pref{lem:generating_set}, it suffices to show that $\MnullsatzNonhom$ has full row rank.
We will prove this by showing that that the rows of $\MnullsatzNonhom$ are linearly independent.

\begin{lemma} \label{lem:M_rows_lin_indep}
    Let $D\in \N$ be a constant, let $d,n \in \N$ such that $2 \leq D \leq d\leq n$, and let $B \geq n^{2d}$.
    Consider the matrix $\MnullsatzNonhom$ defined in \pref{eq:matrix_M}, where each nonzero entry is sampled from a $B$-nice rational distribution.
    If $m\geq O_D\Paren{\frac{n^D}{d^{D-1}}}$, then the rows of $\MnullsatzNonhom$ are linearly independent with probability $1- n^{-d}$.
\end{lemma}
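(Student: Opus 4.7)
My plan is to establish full row rank of $\MnullsatzNonhom$ by exhibiting a square submatrix $\MnullsatzNonhom'$ whose determinant, viewed as a polynomial in the random entries $\{\wh{g_i}(\gamma)\}_{i,\gamma}$ and $\{\B{i}\}_i$, is not identically zero. Since each nonzero entry is independently drawn from a $B$-nice distribution with $B \geq n^{2d}$ (so each variable takes any specific rational value with probability $\leq B^{-100}$), an iterated DeMillo--Lipton--Schwartz--Zippel argument will then yield $\det(\MnullsatzNonhom') \neq 0$ with probability at least $1 - n^{-d}$, since the determinant is a polynomial of degree at most $\binom{n+d-1}{d} + \binom{n+d-D-1}{d-D} \leq n^{O(d)}$ in the random variables.

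To construct $\MnullsatzNonhom'$, I seek an injective matching that pairs each row with a distinct column. Each $M_b$-row $\alpha'$ (with $|\alpha'|=d-D$) has candidate columns $\{(\alpha', i) : i \in [m]\}$; each $M$-row $\alpha$ (with $|\alpha|=d$) has candidate columns $\{(\alpha - \gamma, i) : \gamma \subseteq \alpha,\ |\gamma|=D,\ i \in [m]\}$, giving at least $\binom{d}{D} m$ candidates per $M$-row. Hall's marriage theorem guarantees the matching once every subset $T$ of rows has at least $|T|$ candidates in its combined neighborhood. A Kruskal--Katona-type shadow estimate on sub-multisets of degree-$d$ multisets verifies Hall's condition under $m \geq O_D(n^D/d^{D-1})$; the tightest case is $T$ = all rows, yielding the dimension count $m \cdot \binom{n+d-D-1}{d-D} \geq \binom{n+d-1}{d} + \binom{n+d-D-1}{d-D}$.

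Given the matching $\alpha \mapsto (\alpha - \gamma_\alpha, i_\alpha)$ and $\alpha' \mapsto (\alpha', j_{\alpha'})$, I specialize the random variables: set $\wh{g_i}(\gamma) = 1$ exactly when $(\gamma, i) = (\gamma_\alpha, i_\alpha)$ for some $M$-row $\alpha$, and $0$ elsewhere; set $\B{j_{\alpha'}} = 1$ for each $M_b$-row $\alpha'$, and $\B{i} = 0$ for unused $i$. Under this specialization, $\MnullsatzNonhom'$ reduces (after row/column reordering) to a permutation matrix, so $\det(\MnullsatzNonhom') = \pm 1 \neq 0$, proving the determinant polynomial is not identically zero.

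\emph{Main obstacle.} Because the lemma's $m$ is only a constant factor above the bare dimension-counting threshold $n^D/d^D$, multiple $M$-rows must share the same $i$-coordinate in their pivot columns. The specialization then requires $\wh{g_i}$ to have several nonzero entries, which can introduce stray $1$'s in non-pivot positions of $\MnullsatzNonhom'$ and destroy the clean permutation structure. The cleanest way to handle this, I expect, is via the row-rank decomposition of \pref{def:row_rank_decomposition}: by partitioning rows so that shared $i$-values occur only inside blocks whose pivot sub-determinants are separately verified to be nonzero (e.g., by themselves reducing to smaller permutation or block-triangular matrices), the collisions are localized and do not propagate. The combinatorial bookkeeping of this decomposition, together with the shadow-inequality verification of Hall's condition, is the main technical burden of the proof.
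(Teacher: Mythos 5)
Your high-level strategy is sound and genuinely different from the paper's: instead of the paper's iterated Schur-complement argument over a row-rank decomposition, you reduce to showing $\det(\MnullsatzNonhom')$ is not the zero polynomial for a suitable square submatrix $\MnullsatzNonhom'$, and then conclude by anticoncentration. The anticoncentration step does need a sentence of care (the same variable $\wh{g_i}(\gamma)$ occurs at many positions, so the determinant can have degree up to $N$ in a single variable; the iterated DeMillo--Lipton--Schwartz--Zippel bound over the $\leq mn^D$ variables still gives failure probability $n^{O(d)} \cdot n^{-200d} \ll n^{-d}$). And the paper's combinatorial work (\pref{lem:cover_multisets}, \pref{lem:cover_multisets_degree_D}) provides exactly the specialization you need — so the route is viable if completed.

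However, as written the proof has a genuine gap, and your "main obstacle'' paragraph is exactly where it lives. An arbitrary Hall matching $\alpha \mapsto (\alpha - \gamma_\alpha, i_\alpha)$ does \emph{not} yield a permutation-matrix specialization: if two matched pivots share an index $i$ with different $\gamma$'s (say rows $\alpha_1,\alpha_2,\alpha_3$ share $i$ with pairs $\gamma_1,\gamma_2,\gamma_3$), then setting $\wh{g_i}(\gamma_1)=\wh{g_i}(\gamma_2)=\wh{g_i}(\gamma_3)=1$ puts a stray 1 at position $(\alpha_1, (\alpha_3-\gamma_3,i))$ whenever $\alpha_1 - \alpha_3 = \gamma_2 - \gamma_3$, which generically happens. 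You correctly anticipate this, but the resolution you gesture at — "via the row-rank decomposition'' — is exactly where the real work is, and you do not carry it out. What actually closes the gap is the specific property of the paper's bucket construction (\pref{lem:cover_multisets}, \pref{lem:cover_multisets_degree_D}) that each index $i$ is \emph{fresh}, tied to a single multiset $\gamma$. With that, the specialization $\wh{g_{i_\gamma}}(\gamma)=1$ (and $0$ elsewhere) makes each $\wh{g_i}$ have a single-point support, and then a short calculation shows the entry at $(\alpha,(\beta,i_{\gamma_0}))$ is nonzero and in a selected column only when $\alpha-\beta = \gamma_0 = \gamma(\alpha)$ — i.e.\ only at the pivot. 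You should state and verify this; as it stands you are deferring the key step to "I expect'' and calling it "bookkeeping.''

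Relatedly, the Hall/Kruskal--Katona paragraph is a red herring. The bottleneck is not the existence of a matching but its structure. And the claim that the "tightest case is $T$ = all rows, yielding the dimension count'' should give you pause: that case requires only $m \gtrsim \binom{n+d-1}{d}/\binom{n+d-D-1}{d-D} \approx n^D/d^D$, which is a factor $\Theta(d)$ \emph{smaller} than the $m \geq O_D(n^D/d^{D-1})$ in the lemma. If a bare Hall matching sufficed, you would be proving something stronger than the paper and contradicting the matching low-degree lower bound (\pref{thm:low_degree_hardness}). The extra factor of $d$ is precisely the price of demanding one fresh $i$ per pair $\gamma$ so that the specialization stays clean, and the counting in \pref{lem:cover_multisets_degree_D} shows that $O_D(n^D/d^{D-1})$ fresh indices suffice to cover all rows. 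Making this explicit, and replacing the Hall-theorem framing with the paper's explicit bucket cover, turns your sketch into a complete (and rather elegant) alternative proof.
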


\begin{remark}
    Observe that $m$ must be at least $\binom{n+d-1}{d} / \binom{n+d-D-1}{d-D}+1$ for $\MnullsatzNonhom$ to have more columns than rows.
    Thus, for small $d$ (e.g.\ $d=o(n)$), $m \geq \Omega(\frac{n^D}{d^D})$ is necessary for the generated ideal of $\calG$ at degree $d$ to be complete.
\end{remark}

\pref{lem:generating_set} is an immediate corollary of \pref{lem:M_rows_lin_indep}.
We proceed to prove \pref{lem:M_rows_lin_indep} in the next section.

\subsection{Rank lower bound by row-decomposition of \texorpdfstring{$\MnullsatzNonhom$}{MG,b}}

To prove that $\MnullsatzNonhom$ is full row rank, it's enough to work with an appropriate permutation of rows/columns and delete any column from $\MnullsatzNonhom$. If the modified matrix is full row rank, then the original matrix is full row rank as well.

The main insight in the proof is that although $\MnullsatzNonhom$ is difficult to analyze, we can extract square submatrices $M_1,\dots, M_N$ of $\MnullsatzNonhom$ that are full rank, and more importantly, can be ``stitched together'' to show that $\MnullsatzNonhom$ is full row rank.
To do so, we define the following,

\begin{definition}[\RankDecomposition] \label{def:row_rank_decomposition}
    We say that the collection $(M_1,\dots, M_N)$ of square submatrices of $\MnullsatzNonhom$ is a \RankDecomposition of $\MnullsatzNonhom$ if
    \begin{enumerate}
        % \item they are full row rank,
        \item they cover all the rows of $\MnullsatzNonhom$ (i.e.\ each row of $\MnullsatzNonhom$ appears in at least one $M_i$),
        \item they have disjoint columns of $\MnullsatzNonhom$ (i.e.\ no column of $\MnullsatzNonhom$ appears in more than one $M_i$),
        \item the entries in the diagonal of each $M_i$ are independent of the off-diagonal entries of $M_i$ and the entries of $M_j$ for every $j\neq i$.
    \end{enumerate}
\end{definition}

The following lemma illustrates why the existence of a \RankDecomposition suffices to prove that $\MnullsatzNonhom$ is full row rank.

\begin{lemma}
    \label{lem:block_matrix_independence}
    Let $A, B$ be submatrices of a matrix $M$ such that $A$ is full row rank and $A,B$ have disjoint columns.
    Let $M'$ be the submatrix of $M$ with rows (columns, respectively) equal to the union of rows (columns, respectively) of $A,B$.
    Suppose further that $B$ is $K\times K$ for some $K \leq n^{2d}$ and $B = B' + g \Id$, where $g$ is a scalar sampled from a $n^{2d}$-nice rational distribution independent of $B'$ and the other entries in $M'$.
    Then, $M'$ is full row rank with probability $1- n^{-100d}$.
\end{lemma}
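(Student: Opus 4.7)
The plan is to exhibit an invertible square submatrix $M^\star$ of $M'$ of order exactly $|R_A \cup R_B|$, thereby certifying that $M'$ has full row rank. Write $R_A, R_B$ and $C_A, C_B$ for the index sets of $A, B$, and let $S := R_A \cap R_B$. Since $A$ has full row rank, I first pick (as a deterministic function of $A$) a set $C_A^\star \subseteq C_A$ of $|R_A|$ columns for which $A^\star := A[R_A, C_A^\star]$ is invertible; this choice is independent of $g$. Using the identity matrix in $B = B' + g\,\Id$ to identify the $K$ rows of $B$ with its $K$ columns, I take $C_B^\star \subseteq C_B$ to be the $K - |S|$ columns that, under this identification, correspond to the row labels in $R_B \setminus S$. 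A direct count gives $|C_A^\star| + |C_B^\star| = |R_A| + (K - |S|) = |R_A \cup R_B|$, so $M^\star := M'[R_A \cup R_B,\, C_A^\star \cup C_B^\star]$ is square of the desired order.

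The structural observation is that, after ordering rows with $R_A$ on top and $R_B \setminus S$ on the bottom, $M^\star$ takes the block form
\[ M^\star = \begin{pmatrix} A^\star & E \\ F & \bar B \end{pmatrix}, \]
where $\bar B = B[R_B \setminus S, C_B^\star]$ is the sub-block of $B$ whose row and column labels coincide under the identity pairing. Consequently the entire $g\,\Id$ contribution of $B$ on this sub-block lies on its diagonal, so $\bar B = \bar B' + g\,\Id_{K-|S|}$ with $\bar B'$ independent of $g$. The top-right block $E$ consists either of entries of $M$ (for rows in $R_A \setminus S$) or of off-diagonal entries of $B'$ (for rows in $S$, since $S \cap (R_B \setminus S) = \emptyset$), while the bottom-left block $F$ consists only of entries of $M$ lying outside $B$ (rows in $R_B \setminus R_A$, columns in $C_A$). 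By the lemma's independence hypothesis, $A^\star$, $E$, $F$, and $\bar B'$ are all independent of $g$.

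Applying the Schur complement to the invertible block $A^\star$ gives
\[ \det(M^\star) = \det(A^\star)\cdot \det\!\left(\bar B - F(A^\star)^{-1} E\right) = \det(A^\star)\cdot \det(C + g\,\Id), \]
where $C := \bar B' - F(A^\star)^{-1} E$ does not depend on $g$. Viewed as a polynomial in the scalar $g$, $\det(C + g\,\Id)$ is monic of degree $K - |S| \leq K \leq n^{2d}$ and hence has at most $n^{2d}$ real roots.

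To conclude, I invoke the spread-out property of the $n^{2d}$-nice distribution: $\Pr[g = q] \leq (n^{2d})^{-100} = n^{-200d}$ for every rational $q$, and irrational values are attained with probability zero. A union bound over the at most $n^{2d}$ real roots of $\det(C + g\,\Id)$ then yields
\[ \Pr[\det M^\star = 0] \leq n^{2d}\cdot n^{-200d} = n^{-198d} \leq n^{-100d}, \]
and on the complement $M^\star$ is invertible, so $M'$ has full row rank. The main subtlety to watch out for is the row overlap $S$: one must choose $C_B^\star$ precisely so that the random shift $g\,\Id$ falls entirely inside the bottom-right block $\bar B$ (rather than leaking into $E$), and one must then verify that every remaining block of $M^\star$ is indeed independent of $g$.
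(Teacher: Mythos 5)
Your proof is correct, and it rests on the same core argument as the paper's: isolate the random shift $g\,\Id$ in the bottom-right block of a square submatrix, take a Schur complement against an invertible block coming from $A$, observe that the determinant is a monic polynomial in $g$ of degree at most $K$, and union-bound over its roots using the spread-out property of the $n^{2d}$-nice distribution. The only place you diverge is in the block decomposition used to handle the overlap $S = R_A \cap R_B$. The paper removes the rows $S$ from $A$, pivots on $A' = A[R_A\setminus R_B, \cdot]$ (noting that $A'$ inherits full row rank as a row-subset of $A$), and keeps the bottom-right block equal to all of $B$, so that $g\,\Id$ sits there automatically. You instead keep all of $R_A$ in the pivot block $A^\star$ and shrink the bottom-right block to $\bar B = B[R_B\setminus S, R_B\setminus S]$. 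Your variant avoids the side-remark about $A'$'s row rank and yields a smaller Schur complement (size $K - |S|$ rather than $K$), but it then requires the extra check you correctly carry out: that the $g\,\Id$ diagonal does not leak into the off-diagonal block $E$, which holds precisely because $S \cap (R_B\setminus S) = \varnothing$. Both routes land on the same union bound $K\cdot n^{-200d} \le n^{-100d}$.
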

\begin{proof}
    First, we write $M'$ (up to permutations of rows and columns) as
    \begin{equation*}
        M' = 
        \begin{bmatrix}
            A' & C_2 \\
            C_1 & B
        \end{bmatrix}
        =
        \begin{bmatrix}
            A' & C_2 \\
            C_1 & B' + g \Id
        \end{bmatrix},
    \end{equation*}
    where $A'$ is the matrix $A$ with the rows that overlap with $B$ removed (those rows are now in $C_1$).
    $A'$ may not be square, but since $A'$ is still full row rank (the rows are linearly independent), we may delete some columns from $A'$ (and $C_1$) such that $A'$ is square and full rank.
    Hence, we may assume that $A'$ and $M'$ are square matrices without loss of generality.

    $A'$ being full rank implies that $(A')^{-1}$ exists.
    Then, $M'$ is full rank if and only if the Schur complement
    \begin{equation*}
        B - C_{1} (A')^{-1} C_{2} = g \Id + B' - C_{1} (A')^{-1} C_{2}
    \end{equation*}
    is full rank.
    Suppose not, then the matrix $g \Id + B' - C_{1} (A')^{-1} C_{2}$ is rank-deficient, which implies that $g$ is an eigenvalue of $C_{1} (A')^{-1} C_{2} - B'$. 
    However, since $g$ is sampled from a $n^{2d}$-nice distribution and is independent of $C_1, C_2, A',B'$, the probability that $g$ is exactly one of the $K$ eigenvalues is $\leq K n^{-200d} \leq n^{-100d}$.
\end{proof}

As an immediate corollary,

\begin{corollary} \label{cor:decomposition_implies_full_rank}
    Let $d \in \N$ and $B\geq n^{2d}$.
    If there exists a \RankDecomposition $(M_1,\dots,M_N)$ of $\MnullsatzNonhom$ for $N \leq n^{2d}$, then $\MnullsatzNonhom$ is full row rank with probability $1- n^{-d}$.
\end{corollary}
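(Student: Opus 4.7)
The plan is a straightforward induction on the index $i$ of the blocks in the \RankDecomposition, combined with a union bound. Concretely, for $i=0,1,\dots,N$, let $A_i$ denote the submatrix of $\MnullsatzNonhom$ whose rows are the union of the rows occurring in $M_1,\dots,M_i$ and whose columns are the union of the columns occurring in $M_1,\dots,M_i$. By property~2 of \pref{def:row_rank_decomposition}, the column sets of the $M_j$'s are pairwise disjoint, so after a suitable permutation $A_i$ contains each $M_j$ ($j\leq i$) as a diagonal block, and $A_0$ is the empty $0\times 0$ matrix, which is vacuously full row rank.

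The inductive step is to apply \pref{lem:block_matrix_independence} with $A = A_{i-1}$ and $B = M_i$. I would verify the hypotheses as follows: $A_{i-1}$ is full row rank by the inductive hypothesis; $A_{i-1}$ and $M_i$ share no columns by property~2 of the decomposition; and the independent scalar $g$ in the expression $M_i = M_i' + g\Id$ (drawn from a $n^{2d}$-nice rational distribution, as inherited from the entries of $\MnullsatzNonhom$) is independent of the off-diagonal part of $M_i$ and of the entries of every other block $M_j$ by property~3 of the decomposition, hence independent of every entry of $A_{i-1}$. The size of $M_i$ is at most the number of rows of $\MnullsatzNonhom$, which is $\binom{n+d-1}{d}+\binom{n+d-D-1}{d-D} \leq n^{2d}$, so the size requirement of the lemma is met. \pref{lem:block_matrix_independence} then yields that $A_i$ is full row rank with conditional failure probability at most $n^{-100d}$.

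Taking a union bound over the $N\leq n^{2d}$ steps, the probability that any inductive step fails is at most $n^{2d}\cdot n^{-100d} = n^{-98d}$, which is much smaller than $n^{-d}$. Hence with probability at least $1-n^{-d}$ the final matrix $A_N$ is full row rank. By property~1 of the \RankDecomposition, the rows appearing in $A_N$ include every row of $\MnullsatzNonhom$, so the rows of $\MnullsatzNonhom$ (restricted even to the columns in $A_N$) are already linearly independent, giving the claim.

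The only slightly delicate point is verifying the independence hypothesis of \pref{lem:block_matrix_independence} at each step: property~3 of \pref{def:row_rank_decomposition} is tailored precisely so that, when we expose the randomness of $A_{i-1}$ before revealing the diagonal shift of $M_i$, the scalar $g$ is still a fresh $n^{2d}$-nice random draw. Everything else is bookkeeping of probabilities and block-matrix indexing, which is routine.
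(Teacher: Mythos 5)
Your proof is correct and takes essentially the same route as the paper's: apply \pref{lem:block_matrix_independence} inductively to stitch the blocks $M_1,\dots,M_N$ together and union-bound over the $N\leq n^{2d}$ steps. The paper's version of the argument is terser, but the underlying induction, the role of the three properties of a \RankDecomposition, and the probability accounting are all the same.
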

\begin{proof}
    We apply \pref{lem:block_matrix_independence} inductively to $M_1,M_2, \ldots,M_N$.
    Each submatrix $M_i$ has dimension at most $n^{2d}$, thus by the union bound, $\MnullsatzNonhom$ is full row rank with probability $1- N n^{-100d} \geq 1- n^{-d}$.
\end{proof}

Thus, to prove \pref{lem:M_rows_lin_indep}, it suffices to construct a \RankDecomposition.
For clarity of exposition, we will first prove \pref{lem:M_rows_lin_indep} for the special case of $D=2$ in the subsequent sections, and then show how the ideas extend to the case of $D>2$ in \pref{sec:nullsatz_degree_D}.

\subsection{Proof of \pref{lem:M_rows_lin_indep}, \texorpdfstring{$D=2$}{D=2} case}

Recall that the rows and columns of $\MnullsatzNonhom$ are indexed by $\alpha$ and $(\beta,i)$ respectively, where $\alpha,\beta$ are multisets with $|\alpha| = d$ or $d-2$, $|\beta| = d-2$, and $i\in [m]$.
To ensure that the decomposition have disjoint columns, the submatrices will be constructed using different $i$, i.e.\ selected from disjoint subsets of $[m]$.
Thus, we need $m$ to be sufficiently large so that we have enough ``fresh random equations'' to select from.
Using different $i$ also ensures that each (random) submatrix is independent of each other, especially the diagonal entries.
All other columns not present in the decomposition are ignored since we can delete columns arbitrarily.

\paragraph{Covering rows of $\Mnullsatz$.}

To extract a submatrix for the decomposition, we first select a pair $\gamma = \{j_1, j_2\} \subseteq [n]$ ($j_1=j_2$ is allowed) and consider all multisets $\alpha$ such that $\alpha = \beta \cup \gamma$ where $|\beta| = d-2$, and pick one ``fresh'' $i\in[m]$.
This gives a square submatrix $A_{\gamma}$ where the columns are indexed by $\beta$ and the rows are indexed by $\alpha = \beta \cup \gamma$ and the entries are defined to be:
\begin{equation*}
    A_{\gamma}(\alpha,\beta) =
    \begin{cases}
       \wh{g_i}(\alpha \setminus \beta) & \textnormal{ if $\beta \subset \alpha$,} \\
        0 & \textnormal{ otherwise.}
    \end{cases}
\end{equation*}
For example, say $\gamma = \{1,2\}$ and $d=4$; the first three columns are indexed by $\{1,1\}, \{1,2\}, \{1,3\}$, and the first three rows are indexed by $\{1,1,1,2\}, \{1,1,2,2\}, \{1,1,2,3\}$.
\begin{equation*}
    A_{\{1,2\}} = 
    \begin{bmatrix}
        \wh{g_i}(\{1,2\}) & \wh{g_i}(\{1,1\}) & 0 & \cdots & 0\\
        \wh{g_i}(\{2,2\}) & \wh{g_i}(\{1,2\}) & 0 & \cdots & 0 \\
        \wh{g_i}(\{2,3\}) & \wh{g_i}(\{1,3\}) & \wh{g_i}(\{1,2\}) & \cdots & 0 \\
        \vdots & \vdots & \vdots & \ddots & \vdots \\
        0 & 0 & 0 & \cdots & \wh{g_i}(\{1,2\})
    \end{bmatrix}.
\end{equation*}

Note that there are non-zero off-diagonal entries, but they are all independent of the diagonal entries $\wh{g_i}({\gamma})$.
By \pref{lem:block_matrix_independence}, $A_{\gamma}$ is full rank with high probability and satisfies the conditions of the \RankDecomposition (\pref{def:row_rank_decomposition}).
Crucially, all multisets $\alpha$ containing $\gamma$ are covered by $A_{\gamma}$.

Now, to construct the \RankDecomposition, we will select pairs $\gamma_1,\dots,\gamma_N$ such that $|\gamma_k| = 2$ and that $A_{\gamma_1},\dots, A_{\gamma_N}$ cover all $\alpha$s with $|\alpha| = d$.

\begin{lemma} \label{lem:cover_multisets}
    Let $2 \leq d \leq n$. There exist pairs $\gamma_1,\dots,\gamma_N$ for $N \leq \frac{n^2}{2(d-1)} + O(n)$ such that the rows of $A_{\gamma_1},\dots, A_{\gamma_N}$ cover all multisets $\alpha$ of size $d$.
\end{lemma}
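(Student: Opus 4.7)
The plan is to recast the row-covering requirement as a pair-covering problem on $[n]$ and then invoke a Tur\'an-type construction. Observe that the row of $\MnullsatzNonhom$ indexed by a size-$d$ multiset $\alpha$ appears in $A_\gamma$ if and only if $\gamma \subseteq \alpha$ as multisets, since $A_\gamma$ was defined to have rows $\alpha = \beta \cup \gamma$ for $|\beta| = d-2$. Thus the task reduces to finding a family $\Gamma$ of size-$2$ multisets of $[n]$ such that every size-$d$ multiset $\alpha$ contains some $\gamma \in \Gamma$, while keeping $|\Gamma| \leq \frac{n^2}{2(d-1)} + O(n)$.

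First I would split the size-$d$ multisets $\alpha$ into two types according to whether some element appears at least twice. For the first type, simply throwing all $n$ ``loops'' $\{j,j\}$ into $\Gamma$ covers every $\alpha$ whose support has size strictly less than $d$, since any such $\alpha$ contains some $\{j,j\}$ as a sub-multiset; this contributes only $O(n)$ to $|\Gamma|$.

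For the second type, the remaining $\alpha$s are genuine $d$-element subsets of $[n]$, so I need an ordinary edge set $E \subseteq \binom{[n]}{2}$ such that every $d$-subset of $[n]$ contains at least one edge of $E$. Equivalently, the graph $G = ([n], E)$ must have independence number at most $d-1$. The explicit construction I would take is to partition $[n]$ into $d-1$ parts $V_1, \dots, V_{d-1}$ each of size $\lceil n/(d-1) \rceil$ or $\lfloor n/(d-1) \rfloor$, and let $E$ be the disjoint union of the cliques on the $V_r$. By pigeonhole, any $d$-subset of $[n]$ has two elements in a common $V_r$, hence contains an edge of $E$, and a direct count gives $|E| = \sum_r \binom{|V_r|}{2} \leq (d-1) \binom{\lceil n/(d-1)\rceil}{2} \leq \frac{n^2}{2(d-1)} + O(n)$ after accounting for the unequal part sizes. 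Combining the two parts yields $N = |\Gamma| \leq \frac{n^2}{2(d-1)} + O(n)$, as required.

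There is no real obstacle in this argument; the only mild bookkeeping is verifying that the loops plus the rounding in the part sizes stay within the $O(n)$ slack. I note in passing that the leading constant $\frac{1}{2(d-1)}$ is sharp for this style of covering: by Tur\'an's theorem, any edge set $E$ with $\alpha(G)\leq d-1$ must have $|E|\geq \frac{n^2}{2(d-1)} - O(n)$, so the construction is essentially optimal and cannot be improved by a different choice of graph.
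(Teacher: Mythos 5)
Your proof is correct and takes essentially the same approach as the paper: partition $[n]$ into $d-1$ buckets, take all size-$2$ multisets (edges and loops) inside each bucket, and argue by pigeonhole that a size-$d$ multiset must put two of its elements (counted with multiplicity) into some common bucket. The only cosmetic difference is that you handle the loops $\{j,j\}$ globally as a separate $O(n)$ term whereas the paper simply includes them bucket-by-bucket, and your Tur\'an remark on sharpness is a nice extra observation not present in the paper.
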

\begin{proof}
    First, we split $[n]$ into $d-1$ buckets, each bucket contains at most $\Ceil{\frac{n}{d-1}}$ items.
    Within each bucket, we choose all pairs in the bucket, giving us $\binom{\ceil{\frac{n}{d-1}}}{2} + \ceil{\frac{n}{d-1}}$ pairs.
    The total number of pairs
    \begin{equation*}
        (d-1) \binom{\ceil{\frac{n}{d-1}}}{2} \leq \frac{n^2}{2(d-1)} + O(n),
    \end{equation*}
    using the fact that $\Ceil{\frac{n}{d-1}} \leq \frac{n}{d-1}+1$.

    Now, it suffices to prove that all $\alpha$s are covered.
    Observe that any multiset $\alpha$ that intersects a bucket in more than 1 element must be covered: if the intersection contains $\{j_1,j_2\}$, then $A_{\{j_1,j_2\}}$ covers $\alpha$.
    Thus, any uncovered $\alpha$ can only have 1 element in each bucket.
    However, there are only $d-1$ buckets whereas $|\alpha| = d$, hence every $\alpha$ must be covered.
\end{proof}

\paragraph{Covering rows of $M_{b}$.}

We use a single submatrix to cover all rows of $M_{b}$ (indexed by $\alpha$ with $|\alpha| = d-2$).
Since $M_{b}(\alpha, (\beta,i)) = - \B{i}$ when $\alpha = \beta$ and $0$ otherwise, we simply take the submatrix of a single $i$:
\begin{equation*}
    B = -\B{i} \cdot \Id.
\end{equation*}
$B$ is full rank and cover all rows in $M_{b}$.

\paragraph{Putting things together.}

Recall that $\MnullsatzNonhom$ being full row rank (\pref{lem:M_rows_lin_indep}) implies that the generated ideal of $\calG = \{g_1(x) - \B{1}, \dots, g_m(x) - \B{m}\}$ at degree $d$ is complete (\pref{lem:generating_set}), which then implies our refutation result (\pref{lem:sos-refutation-polynomials}).

\begin{proof}[Proof of \pref{lem:M_rows_lin_indep}, $D=2$ case]
By \pref{lem:cover_multisets}, the submatrices $A_{\gamma_1},\dots, A_{\gamma_N}$ cover the rows of $\Mnullsatz$, and $B$ covers the rows in $M_{b}$.
Together they form a valid \RankDecomposition of $\MnullsatzNonhom$.

The total number of equations required is
\begin{equation*}
    N+1 \leq \frac{n^2}{2(d-1)} + O(n).
\end{equation*}
Thus, by \pref{cor:decomposition_implies_full_rank}, as long as $m \geq \frac{n^2}{2(d-1)} + O(n)$, $\MnullsatzNonhom$ is full row rank with probability $1 - n^{-d}$.
\end{proof}

\subsection{Proof of \pref{lem:M_rows_lin_indep}, \texorpdfstring{$D>2$}{D>2} case}
\label{sec:nullsatz_degree_D}

The proof strategy is very similar to the case of $D=2$: we construct a \RankDecomposition of $\MnullsatzNonhom$ by considering the rows of $\Mnullsatz$ and $M_{b}$ separately.
We first prove the following analog of \pref{lem:cover_multisets},

\begin{lemma} \label{lem:cover_multisets_degree_D}
    Let $D\in \N$ be a constant and $d\in \N$ such that $3 \leq D \leq d \leq n$.
    There exist multisets $\gamma_1,\dots,\gamma_N$ of size $D$ for $N \leq O\Paren{\frac{n^D}{d^{D-1}}}$ such that the rows of $A_{\gamma_1},\dots, A_{\gamma_N}$ cover all multisets $\alpha$ of size $d$.
\end{lemma}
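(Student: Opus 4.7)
The plan is to mimic the $D=2$ argument using a bucket-based covering, with pigeonhole adapted to size-$D$ multisets instead of pairs. The key observation in \pref{lem:cover_multisets} was: if $[n]$ is split into $d-1$ buckets, then any multiset $\alpha$ of size $d$ contains two elements (with multiplicity) lying in the same bucket, so the pair consisting of those two elements witnesses the cover. For general $D$, I want to split $[n]$ into $k$ buckets so that every multiset $\alpha$ of size $d$ has at least $D$ of its elements (counted with multiplicity) inside some single bucket.

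The right choice is $k = \lfloor (d-1)/(D-1) \rfloor$. Indeed, if each of the $k$ buckets contained at most $D-1$ elements of $\alpha$, then $|\alpha| \leq k(D-1) \leq d-1 < d$, a contradiction; so some bucket $B_j$ contains a size-$D$ sub-multiset $\gamma \subseteq \alpha$, and $A_\gamma$ covers $\alpha$ by the same reasoning as in the $D=2$ case.

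For the construction, I would partition $[n]$ into $k$ buckets each of size at most $\lceil n/k \rceil \leq \frac{n(D-1)}{d-1} + 1$, and take $\gamma_1,\ldots,\gamma_N$ to be the union, over all buckets $B_j$, of every size-$D$ multiset on $B_j$. For fixed $D$, the number of size-$D$ multisets on a set of size $s$ is $\binom{s+D-1}{D} = O_D(s^D)$, so
\[
N \;\leq\; k \cdot \binom{\lceil n/k\rceil + D-1}{D} \;=\; O_D\!\left(k \cdot (n/k)^D\right) \;=\; O_D\!\left(\frac{n^D}{k^{D-1}}\right) \;=\; O_D\!\left(\frac{n^D}{d^{D-1}}\right),
\]
matching the target bound (the hidden constant depends on $D$ through $D!$ and the slack from the ceiling, which is absorbed into the $O_D(\cdot)$).

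There is no genuine obstacle here; the main technical care is the multiset pigeonhole (making sure the repeated-element case works for general $D$ as it did for $D=2$) and checking the counting bound. I would then plug this into the overall strategy: combine the submatrices $A_{\gamma_1},\ldots,A_{\gamma_N}$ (covering all rows of $\Mnullsatz$ indexed by $|\alpha|=d$) with the single diagonal block $B = -\B{i}\cdot \Id$ (covering the $|\alpha|=d-2$ rows of $M_b$, generalized analogously to $|\alpha|=d-D$ for arbitrary $D$) to obtain a valid \RankDecomposition of $\MnullsatzNonhom$ using $N+1 \leq O_D(n^D/d^{D-1})$ equations. By \pref{cor:decomposition_implies_full_rank}, this yields full row rank of $\MnullsatzNonhom$ with probability $1-n^{-d}$ whenever $m \geq O_D(n^D/d^{D-1})$, completing \pref{lem:M_rows_lin_indep} and hence \pref{lem:generating_set}.
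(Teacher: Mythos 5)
Your proof is correct and takes essentially the same approach as the paper: same choice of $t=k=\lfloor(d-1)/(D-1)\rfloor$ buckets, same construction (all size-$D$ multisets within each bucket), same pigeonhole argument (at most $D-1$ elements of $\alpha$ per bucket forces $|\alpha|\leq k(D-1)<d$), and the same counting with the final step $k\gtrsim d/(D-1)$ converting $n^D/k^{D-1}$ into $O_D(n^D/d^{D-1})$. One tiny slip: the intermediate bound $\lceil n/k\rceil\leq \frac{n(D-1)}{d-1}+1$ has the inequality reversed (since $k\leq\frac{d-1}{D-1}$, in fact $n/k\geq\frac{n(D-1)}{d-1}$), but this is harmless because your counting is carried out in terms of $n/k$ and only the last line uses $k=\Omega_D(d)$.
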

\begin{proof}
    We split $[n]$ into $t \seteq \Floor{\frac{d-1}{D-1}}$ buckets of size at most $\Ceil{\frac{n}{t}}$.
    Within each bucket, we choose all size-$D$ multisets, which gives $\binom{\ceil{\frac{n}{t}} + D-1}{D}$.
    The total number is
    \begin{equation*}
        t \cdot \binom{\ceil{\frac{n}{t}} + D-1}{D}
        \leq \frac{t}{D!} \Paren{\frac{n}{t}+D} \Paren{\frac{n}{t}+D-1} \Paren{\frac{n}{t}+D-2} \cdots
        \leq  \frac{t}{D!} \Paren{\frac{n}{t}+D-1}^D,
    \end{equation*}
    using the fact that $\Ceil{\frac{n}{t}} \leq \frac{n}{t}+1$.
    Next, we have $\frac{d}{D-1} \leq 2\cdot \Floor{\frac{d-1}{D-1}} = 2t$ since $d \geq D\geq 3$.
    Furthermore, by $(D-1)t \leq d-1 \leq n$ and $(D-1)! \geq \Paren{\frac{D-1}{e}}^{D-1}$,
    \begin{equation*}
        \frac{t}{D!} \Paren{\frac{n}{t}+D-1}^D \leq \frac{(4e)^{D}}{D} \cdot \frac{n^D}{d^{D-1}} = O_D\Paren{\frac{n^D}{d^{D-1}}}.
    \end{equation*}

    By construction, any uncovered $\alpha$ can only intersect each bucket in $D-1$ elements, hence $|\alpha|$ is at most $t (D-1) < d$, contradicting that $|\alpha| = d$.
    Therefore, all $\alpha$s are covered.
\end{proof}

Finally, the same matrix $B$ covers all rows of $M_{b}$.
Thus, we have a \RankDecomposition and are ready to prove \pref{lem:M_rows_lin_indep}.

\begin{proof}[Proof of \pref{lem:M_rows_lin_indep}]
    The submatrices $A_{\gamma_1},\dots, A_{\gamma_{N}}$ and $B$ together form a valid \RankDecomposition of $\MnullsatzNonhom$.
    By \pref{lem:cover_multisets_degree_D}, the total number of equations required is
    \begin{equation*}
        N+1 \leq O_D\Paren{\frac{n^D}{d^{D-1}}}.
    \end{equation*}
    Thus, by \pref{cor:decomposition_implies_full_rank}, as long as $m \geq O_D\Paren{\frac{n^D}{d^{D-1}}}$, $\MnullsatzNonhom$ is full row rank with probability $1-n^{-d}$.
\end{proof}
%!TEX root = main.tex
\section{Algorithmic Thresholds: Lower Bounds}
\label{sec:low_degree_hardness}

In this section, we prove a lower bound for the problem of distinguishing random polynomial systems from a carefully constructed ``planted'' distribution on random polynomial systems that admit a solution with probability $1$. This algorithmic task is formally easier than refutation: observe that any refutation algorithm for random polynomial systems also serves as a distinguishing algorithm. Our lower bounds hold for algorithms in the restricted computation model called the \emph{low-degree polynomial method} and match (up to constant factors) the trade-offs achieved by our refutation algorithm from the previous section.

Specifically, we will prove the following theorem in this section. 

\begin{theorem}[Low-Degree Hardness of Distinguishing Planted vs Null Polynomial Systems] \label{thm:low_degree_hardness}
Let $D\geq 2$ be a constant and $d,n,m \in \N$.
Let $\nu_N$ be the probability distribution of the system of degree-$D$ $n$-variate polynomial equations $\{g_i(x) = \B{i}\}_{i \in [m]}$ such that $g_i(x) = \Iprod{\G{i}, x^{\otimes D}}$ for a $D$-th order coefficient tensor $\G{i}$ such that each entry of $\G{i}$ is chosen to be an independent standard Gaussian. 
Then, for every $d \leq \frac{2n}{D}$, whenever $m \leq O_D\Paren{\frac{n^D}{d^{D-1}}}$, there exists a probability distribution $\nu_P$ supported on solvable systems of $m$ polynomial equations such that degree-$d$ polynomials fail to $(1/2)$-distinguish between $\nu_N$ and $\nu_P$.
\end{theorem}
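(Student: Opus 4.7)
My plan is to apply \pref{prop:likelihood_ratio_variance}: construct a planted distribution $\nu_P$ supported on solvable systems and show $\E_{\nu_N}[(L^{\le d})^2]^{1/2} \le 2$, which implies $(1/2)$-indistinguishability by degree-$d$ polynomials. Following the guidance of \pref{rem:planted-vs-ref}, I take $\nu_P$ as follows: draw $x^* = R\, y$ with $y$ uniform on $S^{n-1}$ and a scale $R = R(n, m, D)$ to be tuned; draw $\B{i} \sim \calN(0,1)$ independently for $i \in [m]$; and then draw each $\G{i}$ as $\calN(0, I)$ conditioned on $\langle \G{i}, (x^*)^{\otimes D}\rangle = \B{i}$. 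Under $\nu_P$ the system trivially has $x^*$ as a solution, and each $\B{i}$ has the same marginal as under $\nu_N$, so the two models differ only through the subtle correlations that the shared $x^*$ induces among the entries of the $\G{i}$'s.

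Writing $v = (x^*)^{\otimes D}$, the conditional law of $\G{i}$ given $(x^*, \B{i})$ is Gaussian with mean $(\B{i}/\|v\|^2)\, v$ and covariance $I - vv^\top/\|v\|^2$. A direct Gaussian integral then yields the joint cumulant generating function
\[
\E_P\Bigl[\exp\Bigl(\sum_{i=1}^m \langle\G{i}, T_i\rangle - \tfrac{1}{2}\|T_i\|^2 + s_i \B{i} - \tfrac{1}{2}s_i^2\Bigr) \,\Big|\, x^* \Bigr] = \prod_i \exp\Bigl(\tfrac{s_i \tau_i}{\|v\|} - \tfrac{1}{2}\bigl(1-\|v\|^{-2}\bigr)\tau_i^2\Bigr),
\]
where $\tau_i = \langle T_i, v/\|v\|\rangle$. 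Extracting the $T_i^{\alpha_{i,G}} s_i^{\alpha_i^b}$-coefficient via \pref{fact:hermite_generating_function} (with \pref{fact:hermite_zero} killing the odd-parity terms) writes $\E_P[h_\alpha]$ in closed form as a product over $i \in [m]$ of a monomial $(x^*)^{\beta_i}/\|x^*\|^{D(|\alpha_{i,G}|+\alpha_i^b)}$ times explicit double-factorial constants bounded via \pref{fact:double_factorial_bound}. Substituting into $\E_{\nu_N}[(L^{\le d})^2] = \sum_{|\alpha|\le d} \E_P[h_\alpha]^2/\alpha!$ and averaging over two independent copies $x^*, x^{**}$ reduces the bound to a sum over \emph{labeled multi-hypergraphs} $H$ on $[n]$: each $D$-uniform hyperedge and each ``$b$-loop'' carries a label in $[m]$ together with a multiplicity.

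The remaining task is to bound this combinatorial sum. For a hypergraph $H$ of total degree $\ell$, its contribution factors into (i) an enumeration factor counting the $[m]$-labelings, (ii) a spherical moment of $y$ that, by the evenness constraint forced by \pref{fact:hermite_zero}, scales as $n^{-D\ell/2}$, and (iii) an $R^{-Dk}$ factor from the $b$-loops. Choosing $R$ so that $R^{2D}$ grows just fast enough to suppress all terms with $k \ge 1$, the pure-$G$ shapes dominate and each degree-$\ell$ contribution is at most $(C_D\, m\, d^{D-1}/n^D)^{\ell/2}$; summing over $\ell \le d$ as a geometric series then yields the target bound exactly when $m \le c_D\, n^D/d^{D-1}$. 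The main obstacle is sharpening the enumeration in (i): a naive $m^\ell$ estimate gives only $m \le n^D/d^{2(D-1)}$, so one must exploit the constraint that the $D$-uniform hyperedges sharing a common label $i$ must together match a monomial appearing in the $x^*$-moment, which reduces the effective count and yields the claimed threshold. The necessity of a growing $\|x^*\|$ enters here: with $R = O(1)$ the $k\ge 1$ Hermite moments are comparable to the $k = 0$ ones and produce a distinguisher already at $m \gg n$, whereas taking $R$ large effectively decouples the $b$-Hermites and pushes the true threshold out to $n^D/d^{D-1}$.
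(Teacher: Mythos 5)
Your construction of $\nu_P$ is the same as the paper's up to a change of variable: you draw $x^* = Ry$ with $y$ uniform on $S^{n-1}$ and condition on $\langle G_i,(x^*)^{\otimes D}\rangle = b_i$, which is equivalent to conditioning on $\langle G_i, y^{\otimes D}\rangle = R^{-D}b_i$, i.e.\ the paper's $\scaling = R^{-D}$ and its normalized-hypercube $z$ replaced by a unit-sphere $y$. Your overall strategy — Hermite-expand $L^{\le d}$, compute $\E_P[h_\alpha(G)h_\beta(b)]$ in closed form using the conditional Gaussian law, translate the sum of squared coefficients into a weighted enumeration of labeled $D$-uniform hypergraphs with even vertex degrees, and tune $R$ to suppress the odd-$|\alpha_\sigma|$ terms — is exactly the paper's plan (\pref{lem:exp_planted_h} via \pref{lem:pap_expected_h_g}, then the counting in Lemmas~\ref{lem:graph_upper_bound}--\ref{lem:graph_upper_bound_degree_D}). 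The CGF route and the sphere-vs-hypercube choice change the algebra but not the structure; in particular the sphere introduces a nontrivial $\E_y[\prod_i y_i^{\Delta_i}]$ that no longer factors as cleanly as $n^{-|\alpha|}$, so the two-copy averaging device you invoke is doing real work there that the paper sidesteps.

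The genuine gap is the one you flag yourself: you do not carry out the hypergraph enumeration that sharpens the naive $m^{\ell}$ labeling bound down to roughly $(m\ell)^{\ell/2}$, and without that sharpening you get the weaker threshold $m \lesssim n^D/d^{2(D-1)}$, not the claimed $n^D/d^{D-1}$. The idea you gesture at (``the hyperedges sharing a common label must together match a monomial in the $x^*$-moment'') is not quite the right constraint; what actually drives the improvement is that $\E_P[h_\alpha(G) h_\beta(b)]$ vanishes unless \emph{every} label $\sigma\in[m]$ appears an even number of times among the hyperedges (together with the $\beta_\sigma$ contribution), which forces the labels to pair up and replaces $m^{\ell}$ by a quantity of order $\binom{m+\ell/2}{\ell/2}\cdot \ell! \approx (2m\ell)^{\ell/2}$. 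This pairing argument, combined with the separate even-vertex-degree count of hypergraph \emph{skeletons} (bounded by $(8n)^{\ell}$ for $D=2$ and $O(Dn)^{D\ell/2}\ell^{(D/2-1)\ell}$ for $D>2$), is the technical core of the paper's proof (\pref{lem:graph_upper_bound}, \pref{lem:edge_labels_contribution}, \pref{lem:graph_upper_bound_degree_D}) and is entirely absent from your write-up. Until those two counting lemmas are established — and the factor-graph argument for $D>2$ in \pref{lem:graph_upper_bound_degree_D} is genuinely delicate — the proposal is a correct outline but not a proof.

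Two smaller cautions. First, verify that the spherical moments $\E_y[\prod y_i^{\Delta_i}]$ really do give you the same $n^{-D\ell/2}$ scaling as the hypercube moments; the leading behavior matches but the subleading combinatorial factors (ratios of Gamma functions) must be absorbed by your $(C_D)^{\ell}$ constants, and this needs a line of verification rather than an assertion. Second, your claim about $R=O(1)$ producing a distinguisher at $m\gg n$ is qualitatively right (it matches \pref{rem:importance-of-scaling}), but the precise condition you need is $R^{-D} = o\bigl(1/(d\sqrt m)\bigr)$, which controls the ratio $a_{\ell+2}/a_\ell$ in the $\beta$-summation of \pref{lem:sum_of_scaledCoeff}; a looser choice of $R$ only suppresses the $k\ge 1$ terms polynomially, not enough to make the odd contributions disappear into the error budget.
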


\begin{remark}
The random polynomials appearing in the theorem above are obtained by choosing a random-entry tensor instead of choosing the coefficients of the polynomial directly. This leads to the coefficients of different monomials to have variances that differ by constant factors. This choice is convenient for our analysis but not necessary for the result to hold though we do not formally prove this. 
\end{remark}% This section is organized as follows.
%We give a brief overview of the low-degree method in \pref{sec:low_degree_method}.
% Then, we define a planted distribution in \pref{sec:planted_distribution}.
% Finally, in \pref{sec:bounding_variance} we prove that low-degree  polynomials fail to distinguish between the infeasible random system and the planted system.

We will prove \pref{thm:low_degree_hardness} by exhibiting an explicit planted distribution defined below.

\begin{definition}[Planted distribution $\nu_P$] \label{def:planted_distribution}
    Fix a parameter $\scaling = o\Paren{\frac{1}{d\sqrt{m}}}$, the planted distribution $\Pl$ is sampled as follows,
    \begin{enumerate}
        \item Sample $z$ uniformly from $\NormCube{n}$.
        \item For each $i\in[m]$, sample $\B{i} \sim \calN(0,1)$ independently.
        \item For each $i\in[m]$, sample tensor $\G{i} \in (\R^n)^{\otimes D}$ with i.i.d.\ standard Gaussian entries conditioned on $\brac{\G{i}, z^{\otimes D}} = \scaling \B{i}$.
    \end{enumerate}
\end{definition}

From \pref{prop:likelihood_ratio_variance}, the task of proving indistinguishability of $\nu_N$ and $\nu_P$ by low-degree polynomials reduces to analyzing the truncated low-degree likelihood $L^{\leq d}$ of the pair $\nu_N$ and $\nu_P$. We will analyze $L^{\leq d}$ by computing a Hermite expansion for it: 
    \begin{equation*}
        L^{\leq d}(G,b) = \sum_{\alpha,\beta : |\alpha|+|\beta|\leq d} \wh{L}_{\alpha,\beta} \cdot h_{\alpha}(G) h_{\beta}(b)
    \end{equation*}

To analyze $L^{\leq d}$, we will show the following key technical claim:
\begin{lemma} \label{lem:low_degree_variance}
    Let $D \geq 2$ be a constant, let $d,n,m\in \N$ such that $0 < d \leq \frac{2n}{D}$, and let $\scaling = o\Paren{\frac{1}{d\sqrt{m}}}$. Let $\{h_{\beta}\}_{|\beta|\leq d}$ be the multivariate Hermite polynomials. 
    If $m \leq O_D\Paren{\frac{n^D}{d^{D-1}}}$, then
    \begin{equation*}
        \sum_{\stackrel{\alpha, \beta:}{1\leq |\alpha| + |\beta| \leq d}} \expover{\Pl}{h_{\alpha}(G) h_{\beta}(b)}^2 \leq 1 \mper
    \end{equation*}
\end{lemma}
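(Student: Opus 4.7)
The plan is to compute $\E_{\Pl}[h_\alpha(G)\, h_\beta(b)]$ in closed form for every pair of multi-indices $(\alpha, \beta)$ and bound the resulting sum of squares combinatorially. First, I exploit the conditional independence of $(G_i, b_i)_{i \in [m]}$ across $i$ given $z$ under $\Pl$, together with the Gaussian decomposition
\(G_i = \frac{\scaling\, b_i}{\|z\|^{2D}}\, z^{\otimes D} + G_i^{\perp}\),
where $G_i^{\perp}$ is centered Gaussian with covariance $I - \|z\|^{-2D}\, z^{\otimes D}(z^{\otimes D})^{\top}$ and independent of $b_i$. Taking the conditional moment generating function of $G_i$, substituting $u = \langle s, z^{\otimes D}\rangle/\|z\|^D$ to convert the exponent into a univariate Hermite generating function, and applying \pref{fact:hermite_generating_function}, I read off the identity
\[
    \E[h_{\alpha_i}(G_i) \mid b_i, z] \;=\; \frac{z^{\rho(\alpha_i)}}{\|z\|^{D|\alpha_i|}} \cdot h_{|\alpha_i|}\!\left(\frac{\scaling\, b_i}{\|z\|^D}\right),
\]
where $\rho(\alpha_i) \in \N^n$ is the vertex-degree vector of $\alpha_i$ viewed as a multi-set of $D$-tuples in $[n]$.

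Next, I integrate out $b$ and then $z$. Another use of \pref{fact:hermite_generating_function} gives, for $X \sim \mathcal{N}(0,1)$ and $a \in \R$, that $\E[h_n(X)\, h_m(aX)] = 0$ unless $m \geq n$ and $m - n$ is even, in which case it equals $a^n\cdot m!/((m-n)/2)!\cdot ((a^2-1)/2)^{(m-n)/2}$. Applied per $i$ with $a = \scaling/\|z\|^D$, and then averaged over $z$ uniform on $\NormCube{n}$ (where $\E_z[z^{\rho(\alpha)}] = 0$ unless every coordinate of $\rho(\alpha) \seteq \sum_i \rho(\alpha_i)$ is even), this yields the closed form
\[
    \E_{\Pl}[h_\alpha\, h_\beta]
    \;=\; \mathbf{1}\!\left[\rho(\alpha)\text{ all even}\right] \cdot \E_z\!\left[\frac{z^{\rho(\alpha)}}{\|z\|^{D|\alpha|}}\right] \cdot \prod_{i \in [m]} \wh\lambda^{\beta_i} \cdot \frac{|\alpha_i|!}{j_i!} \cdot \left(\frac{\wh\lambda^2 - 1}{2}\right)^{j_i},
\]
where $\wh\lambda = \scaling/\|z\|^D$ and $j_i = (|\alpha_i| - \beta_i)/2$; the expression vanishes unless every $|\alpha_i| \geq \beta_i$ with matching parity.

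With the closed form in hand, the final step is a combinatorial bound on the sum of squares. I organize the sum by the size tuple $(|\alpha_i|, \beta_i)_{i \in [m]}$ and by the combinatorial ``shape'' of the underlying multi-hypergraph on $[n]$. Two layers of parity constraints are decisive: the per-color parity $|\alpha_i| \equiv \beta_i \pmod 2$ forced by the Hermite formula, and the vertex parity of $\rho(\alpha)$ forced by the cube expectation. Together they thin the enumeration of admissible shapes so that the dominant contribution (each color carrying a ``pairing'' of identical hyperedges) satisfies a per-class bound of roughly $(m/n^D)^{|\alpha|/2}$ after combining the shape count with the $n^{-D|\alpha|}$ factor from squaring; under $m \leq O_D(n^D/d^{D-1})$ this is at most $d^{-(D-1)|\alpha|/2}$, so a geometric sum over $1 \leq |\alpha| \leq d$ stays bounded. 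Contributions with $\beta \neq 0$ gain the extra smallness $\wh\lambda^{2\beta_i}$, which the precise hypothesis $\scaling = o(1/(d\sqrt m))$ is tailored to control. The main obstacle is making this combinatorial accounting uniformly tight across \emph{all} shape classes: handling the factorial growth $|\alpha_i|!/j_i!$ inside the closed form (which I will bound using $(2j_i)!/(j_i!)^2 = \binom{2j_i}{j_i} \leq 4^{j_i}$ against the $((\wh\lambda^2-1)/2)^{2j_i}$ factor), and matching the parity-induced shape count shape-by-shape, rather than only in the extreme doubled-edge case, which requires bookkeeping in the spirit of the graph-matrix decompositions used later in the paper's sum-of-squares analysis.
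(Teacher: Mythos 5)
Your overall plan matches the paper's: derive a closed form for $\expover{\Pl}{h_\alpha(G)h_\beta(b)}$ via the conditional Gaussian structure and then bound the sum by organizing the nonzero multi-indices combinatorially. The paper routes the conditional Hermite expectation through a black-box lemma from \cite{GJJ20} (that if $g$ is Gaussian conditioned on $\brac{g,v}=b$ with $\|v\|_2=1$ then $\E[h_\alpha(g)]=v^\alpha h_{|\alpha|}(b)$) and combines it with \pref{lem:scaled_hermite_coefficient}; your moment-generating-function derivation of the same identity is a valid alternative, and since $z\in\NormCube{n}$ has $\|z\|_2=1$ under \pref{def:planted_distribution}, your $\|z\|$ factors all collapse to $1$ and your closed form reduces exactly to the one in \pref{lem:exp_planted_h} (and its generalization in the $D>2$ case).

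The substantive gap is in the combinatorial estimate. You claim the dominant contribution comes from ``each color carrying a pairing of \emph{identical} hyperedges'' and that this class contributes roughly $(m/n^D)^{|\alpha|/2}$, which under $m\leq O_D(n^D/d^{D-1})$ you upper bound by $d^{-(D-1)|\alpha|/2}$. Two problems. First, the configurations with two parallel edges per color are a strict \emph{subclass} of the actual dominant class: the parity constraints force $|\Alphas|$ even and $\Delta_i$ even, so the dominant terms are those with $|\Alphas|=2$ and $\Delta_i=2$, but the two edges of color $\s$ need \emph{not} share their endpoints, and the $n^{-D|\alpha|}$ factor from squaring, together with the vertex/color label count, treats those two cases at the same order. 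Second, and more importantly, even restricting to the dominant class, the count of ways to distribute $\nedge/2$ colors among $\nedge$ edges and to distribute degrees among vertices introduces a factor of roughly $\nedge^{(D-1)\nedge/2}$ (for $D=2$, this is the $\nedge^{\nedge/2}$ term appearing in \pref{lem:edge_labels_contribution}). Your estimate drops this factor, making the per-$\nedge$ term $(m/n^D)^{\nedge/2}$ rather than $(m\nedge^{D-1}/n^D)^{\nedge/2}$. This difference is not cosmetic: it is exactly what makes the threshold $m\sim n^D/d^{D-1}$ rather than $m\sim n^D$. Under the correct bound, terms with $\nedge\approx d$ contribute $\Theta(1)$ rather than $d^{-\Omega(d)}$, so the sum is bounded only with a small enough implied constant in the hypothesis on $m$; the ``geometric decay'' you invoke does not occur. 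The paper's \pref{lem:graph_upper_bound}, \pref{lem:edge_labels_contribution}, and \pref{lem:graph_upper_bound_degree_D} carry out this bookkeeping (carefully matching the $\nedge!$ overcount against the $((|\Alphas|-1)!!)^2/|\Alphas|!$ per-color factor, which is what handles your $|\alpha_i|!/j_i!$ growth, not the $\binom{2j_i}{j_i}\le 4^{j_i}$ bound against the $(1/2)^{2j_i}$ factor alone); you acknowledge this as ``the main obstacle,'' but you should revise the preliminary estimate itself to retain the $\nedge^{(D-1)\nedge/2}$ factor so that the $m$ vs.\ $d$ trade-off comes out right.
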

We finish the proof of \pref{thm:low_degree_hardness} modulo this claim:

\begin{proof}[Proof of \pref{thm:low_degree_hardness} by \pref{lem:low_degree_variance}]
    From \pref{def:distinguishing_polynomial}, it's enough to prove that $\E_{\nu_N}[(L^{\leq d})^2] \leq 2$. 
    We first write the Hermite expansion of $L^{\leq d}$, as a function of $G$ and $b$, in the (unnormalized) Hermite basis,
    \begin{equation*}
        L^{\leq d}(G,b) = \sum_{\alpha,\beta : |\alpha|+|\beta|\leq d} \wh{L}_{\alpha,\beta} \cdot h_{\alpha}(G) h_{\beta}(b)
    \end{equation*}
    where $\alpha \in \N^{m\times n \times n}$ and $\beta\in \N^{m}$ are the Hermite indices.
    Since $\{h_{\alpha}(G)h_{\beta}(b)\}_{\alpha,\beta}$ are orthogonal with respect to $\nu_N$, the degree $\leq d$ Hermite coefficients of $L^{\leq d}$ equal that of $L$.

    Thus, the Hermite coefficients $\wh{L}_{\alpha,\beta}$ can be computed as:
    \begin{equation*}
    \begin{aligned}
        \wh{L}_{\alpha,\beta}
        &= \expover{(G,b)\sim \nu_N}{ L(G,b) \cdot h_{\alpha}(G) h_{\beta}(b)} \cdot \frac{1}{\alpha! \beta!}
        = \expover{(G,b)\sim \nu_N}{ (\nu_P/\nu_N) \cdot h_{\alpha}(G) h_{\beta}(b)} \cdot \frac{1}{\alpha! \beta!}\\
        &=\expover{(G,b)\sim \nu_P}{ h_{\alpha}(G) h_{\beta}(b)} \cdot \frac{1}{\alpha! \beta!} \mper
    \end{aligned}
    \end{equation*}
    
    % \begin{equation*}
    %     \wh{L}_{\alpha,\beta} = \expover{(\bG,\bb)\sim \Pl}{ h_{\alpha}(\bG) h_{\beta}(\bb)} \cdot \frac{1}{\alpha! \beta!}.
    % \end{equation*}
    Note that for $\alpha,\beta = \vec{0}$ (the first coefficient), $\wh{L}_{0,0} = 1$.
    Then, by \pref{lem:low_degree_variance},
    \begin{equation*}
        \expover{\nu_N}{(L^{\leq d})^2} = \sum_{|\alpha|+|\beta|\leq d} \wh{L}_{\alpha,\beta}^2 \cdot \alpha! \beta!
        = 1 + \sum_{1\leq |\alpha|+|\beta|\leq d} \expover{(G,b)\sim \Pl}{ h_{\alpha}(G) h_{\beta}(b)}^2 \cdot \frac{1}{\alpha!\beta!} \leq 2 \mper
    \end{equation*}

    Along with \pref{prop:likelihood_ratio_variance}, this shows that the value $\E_{\nu_P}[f]$ is at most $\expover{\nu_N}{(L^{\leq d})^2}^{1/2} \leq \sqrt{2}$ for any degree $\leq d$ polynomial $f$ such that $\E_{\nu_N}[f^2] = 1$.
\end{proof}

\begin{remark}[The Importance of Scaling $\scaling$] \label{rem:importance-of-scaling}
    The planted distribution outputs a feasible system of polynomial equations $\{\brac{\G{i}, x^{\otimes D}} = \B{i}\}_{i\in[m]}$, where the satisfying assignment is $x = \frac{z}{\scaling^{1/D}}$.
    Note that $x$ has large norm: $\|x\|_2 = \frac{1}{\scaling^{1/D}}$.
    We note that our proof of indistinguishability requires that the scaling $\scaling$ be appropriately small. This is necessary. In particular, there is an efficient distinguisher if $\scaling \gg \sqrt{n/m}$.
    Given input $(G,b)$, calculate the tensor
    \begin{equation*}
        Q \seteq \sum_{i=1}^m \G{i} \cdot \sgn(\B{i}) \mper
    \end{equation*}
    For the null distribution $\nu_N$, $Q$ is distributed as $\sqrt{m} H$ where $H\in (\R^n)^{\otimes D}$ is a tensor with i.i.d.\ standard Gaussian entries.
    On the other hand, for the planted distribution, $Q = \scaling \Paren{\sum_{i=1}^m |\B{i}|} z^{\otimes D} + \calL \cdot \sqrt{m} H$, where $\calL$ is a linear operator of norm 1 operating on the flattened vector of $H$ for $\Norm{z}_2 = 1$.

    In the case of $D=2$, $\|\sqrt{m}H\| = O(\sqrt{mn})$, whereas $\|\scaling \Paren{\sum_{i=1}^m |\B{i}|} z z^\top\| = \Omega(\scaling m)$ with high probability. Thus, if $\scaling \gg \sqrt{n/m}$, then the algorithm that computes the spectral norm of $Q$ is a distinguisher for $\nu_N$ and $\nu_P$. We note that there's an analogous distinguisher based on spectral relaxations of tensor norm for $D>2$.
\end{remark}

For clarity of exposition, we will first prove \pref{lem:low_degree_variance} for the special case of $D=2$ then show that the ideas generalize to the case of $D>2$.

\subsection{Computing Hermite coefficients of \texorpdfstring{$L^{\leq d}$}{L<d} for \texorpdfstring{$D=2$}{D=2}}
The Hermite coefficients of the truncated likelihood $L^{\leq d}$ are naturally characterized if we attach a certain combinatorial interpretation to each Hermite index. Towards this goal, let's associate every index $\alpha\in \N^{m\times n\times n}$ with a \emph{labeled directed multigraph} (with self-loops allowed) with $n$ vertices and $|\alpha|$ edges with labels from $[m]$.

\paragraph{Notations.}
From here on, we will use $\s$ to denote an index in $[m]$, and $i,j$ to denote indices in $[n]$.
For each $\s\in[m]$, $\Alphas\in \N^{n\times n}$ corresponds to the adjacency matrix of the subgraph whose edges have label $\s$, hence $|\Alphas|$ is the number of edges labeled $\s$.
Furthermore, define $\Delta \in \N^n$ such that $\Delta_i \seteq \sum_{\s=1}^m \sum_{j=1}^n \Alphasij + \AlphaEntries{\s}{ji}$ for $i\in [n]$, interpreted as the total degree of vertex $i$.
Note that $\alpha$ can have self-loops and each self-loop contributes an additive $2$ to the definition of $\Delta$.

\begin{lemma}[Hermite Coefficients of $L$]
\label{lem:exp_planted_h}
Let $\alpha\in \N^{m \times n\times n}$, $\beta \in \N^{m}$. 
Let $\Delta = \Delta(\alpha) \in \N^n$ such that $\Delta_i$ is the total degree of vertex $i$ in the labeled directed graph associated with $\alpha$. Then, if 1) $\Delta_i$ is even for all $i\in [n]$, 2) $\beta_{\s} \leq |\Alphas|$, and 3) $|\Alphas|+ \beta_{\s} \equiv 0 \pmod{2}$ for all $\s\in[m]$, then
    \begin{equation*}
    \begin{gathered}
        \expover{(G,b) \sim\Pl}{h_{\alpha}(G) h_{\beta}(b)} =
        n^{-|\alpha|} \prod_{\s=1}^m \scaledCoeff{|\Alphas|,\beta_{\s}}, \\
        \text{where } \scaledCoeff{k,\ell} \seteq \expover{g\sim \calN(0,1)}{h_k(\scaling g) h_\ell(g)} = \scaling^{\ell} \cdot \frac{k!}{(\frac{k-\ell}{2})!} \left(-\frac{1-\scaling^2}{2}\right)^{\frac{k-\ell}{2}}.
    \end{gathered}
    \end{equation*}
    Otherwise, $\expover{(G,b) \sim\Pl}{h_{\alpha}(G) h_{\beta}(b)} = 0$.
\end{lemma}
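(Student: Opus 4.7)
} My approach is to compute the joint bivariate Hermite moment generating function of $(G_\sigma, b_\sigma)$ in the planted model, separately for each $\sigma$, and then take a final expectation over $z$. The first step is to observe that under $\Pl$, conditional on $z$, the pairs $(G_\sigma, b_\sigma)$ are mutually independent across $\sigma \in [m]$, so
\[
    \E_{\Pl}\bigl[h_\alpha(G)\, h_\beta(b)\bigr] \;=\; \E_z \prod_{\sigma=1}^m \E\bigl[h_{\alpha_\sigma}(G_\sigma)\, h_{\beta_\sigma}(b_\sigma)\,\big|\, z\bigr].
\]
For each $\sigma$, I next identify the conditional law of $G_\sigma$ given $(z,b_\sigma)$: writing $v := \vec(zz^\top)$ (a unit vector in $\R^{n^2}$ since $\|z\|_2 = 1$), the planting step gives $G_\sigma \mid (z,b_\sigma) \sim \cN(\scaling b_\sigma\, zz^\top,\; I - vv^\top)$, viewed as an $n^2$-dimensional Gaussian.

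The second step uses the Hermite generating function (\pref{fact:hermite_generating_function}) in product form:
\[
    e^{\iprod{T,G_\sigma} - \frac12 \|T\|_F^2}\cdot e^{u\, b_\sigma - \frac12 u^2} \;=\; \sum_{\alpha_\sigma,\beta_\sigma} \frac{T^{\alpha_\sigma} u^{\beta_\sigma}}{\alpha_\sigma!\, \beta_\sigma!}\, h_{\alpha_\sigma}(G_\sigma)\, h_{\beta_\sigma}(b_\sigma),
\]
so it suffices to compute the left-hand side in expectation. A direct Gaussian MGF calculation, first integrating $G_\sigma$ against its conditional law given $b_\sigma$, then integrating $b_\sigma \sim \cN(0,1)$, yields
\[
    \E\!\left[e^{\iprod{T,G_\sigma} - \frac12\|T\|_F^2}\cdot e^{u b_\sigma - \frac12 u^2}\,\Big|\, z\right] \;=\; \exp\!\left(-\tfrac{1-\scaling^2}{2}\,\tau^2 \,+\, \scaling\, \tau\, u\right),
\]
where $\tau := \iprod{T, zz^\top} = \sum_{i,j} T_{ij} z_i z_j$. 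The two Gaussian projection identities needed are $\|vv^\top T\|_2^2 = \tau^2$ and $\iprod{T, \E[G_\sigma \mid b_\sigma]} = \scaling b_\sigma \tau$.

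The third step is to extract the coefficient of $\frac{T^{\alpha_\sigma} u^{\beta_\sigma}}{\alpha_\sigma!\, \beta_\sigma!}$. Expanding the exponential as a double power series in $u$ and $\tau$ gives
\[
    \exp\!\left(-\tfrac{1-\scaling^2}{2}\tau^2 + \scaling\tau u\right) \;=\; \sum_{\ell,k:\, k \geq \ell,\, k-\ell \text{ even}} \frac{\scaling^\ell}{\ell!\, \bigl(\tfrac{k-\ell}{2}\bigr)!}\left(-\tfrac{1-\scaling^2}{2}\right)^{(k-\ell)/2} \tau^k\, u^\ell,
\]
and the multinomial expansion $\tau^k = \sum_{\alpha_\sigma:\,|\alpha_\sigma|=k} \frac{k!}{\alpha_\sigma!}\, T^{\alpha_\sigma}\, z^{\Delta(\alpha_\sigma)}$ (where $\Delta_i(\alpha_\sigma) = \sum_j(\alpha_{\sigma,ij} + \alpha_{\sigma,ji})$) lets me read off
\[
    \E[h_{\alpha_\sigma}(G_\sigma)\, h_{\beta_\sigma}(b_\sigma) \mid z] \;=\; \scaledCoeff{|\alpha_\sigma|,\beta_\sigma}\cdot z^{\Delta(\alpha_\sigma)}
\]
whenever $|\alpha_\sigma| \geq \beta_\sigma$ and $|\alpha_\sigma|+\beta_\sigma$ is even, and zero otherwise (since the series contains only matching parity terms). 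Matching coefficients with the closed form stated in the lemma for $\scaledCoeff{k,\ell}$ is immediate.

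The final step is to integrate over $z \sim \NormCube{n}$, which is the coordinate-wise product $\prod_i \E[z_i^{\Delta_i(\alpha)}]$ with $\Delta_i(\alpha) = \sum_\sigma \Delta_i(\alpha_\sigma)$. The symmetric $\pm \tfrac{1}{\sqrt{n}}$ distribution on each coordinate gives $\E[z_i^{k}] = 0$ when $k$ is odd and $\E[z_i^{k}] = n^{-k/2}$ when $k$ is even. Since $\sum_i \Delta_i(\alpha) = 2|\alpha|$, the nonzero case yields the advertised factor $n^{-|\alpha|}$, and the odd-parity case kills the expectation, matching the ``otherwise $= 0$'' clause. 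I expect no real obstacle here; the main care is in cleanly setting up the conditional covariance $I - vv^\top$ and the bookkeeping in the multinomial expansion of $\tau^k$ so that the combinatorial interpretation of $\alpha_\sigma$ as a labeled directed multigraph (with $\Delta_i$ the total degree of vertex $i$) aligns with the factor $z^{\Delta(\alpha_\sigma)}$.
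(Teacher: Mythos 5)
Your argument is correct and arrives at exactly the lemma's formula, but it routes around the paper's modular structure in a way worth noting. The paper first applies a black-box lemma from~\cite{GJJ20} (\pref{lem:pap_expected_h_g}), which asserts that $\E[h_\alpha(g)] = v^\alpha\, h_{|\alpha|}(b)$ for a standard Gaussian $g$ conditioned on $\langle g, v\rangle = b$; specializing to $v = zz^\top$ and conditioning value $\scaling b_\sigma$ reduces the left-hand side to $\E_z\bigl[\prod_i z_i^{\Delta_i}\bigr]\cdot\prod_\sigma \E_{b_\sigma}\bigl[h_{|\alpha_\sigma|}(\scaling b_\sigma)\,h_{\beta_\sigma}(b_\sigma)\bigr]$, and the second factor is then computed separately in \pref{lem:scaled_hermite_coefficient} via a univariate generating-function expansion of $h_k(\scaling x)$ in the Hermite basis. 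You instead perform a single joint bivariate MGF calculation: conditional on $z$ you compute the Gaussian MGF of $(G_\sigma, b_\sigma)$, verify it equals $\exp\bigl(-\tfrac{1-\scaling^2}{2}\tau^2 + \scaling\tau u\bigr)$ with $\tau = \langle T, zz^\top\rangle$, and read off the coefficient of $T^{\alpha_\sigma} u^{\beta_\sigma}/(\alpha_\sigma!\,\beta_\sigma!)$ to get $\scaledCoeff{|\alpha_\sigma|,\beta_\sigma}\, z^{\Delta(\alpha_\sigma)}$ in one pass. In effect you reprove the \cite{GJJ20} conditional-Hermite lemma inline --- the intermediate factor $e^{\scaling b_\sigma \tau - \tau^2/2}$ obtained after integrating out $G_\sigma$ is exactly its generating-function form --- which makes the proof self-contained at the cost of slightly heavier bookkeeping in the multinomial step. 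The final integration over $z$, using $\E[z_i^k] = n^{-k/2}$ for even $k$ and zero otherwise together with $\sum_i \Delta_i = 2|\alpha|$, is identical in both arguments.
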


To prove \pref{lem:exp_planted_h}, we first look at the term $\scaledCoeff{k,\ell}$:

\begin{lemma} \label{lem:scaled_hermite_coefficient}
    For any $k,\ell \in \N$ and $\scaling \in [0,1]$,
    \begin{equation*}
        \scaledCoeff{k,\ell}:=\expover{g\sim \calN(0,1)}{h_k(\scaling g) h_\ell(g)} 
        % \seteq \expover{\bg\sim \calN(0,1)}{h_k(\scaling \bg) h_\ell(\bg)} 
        = \scaling^{\ell} \cdot \frac{k!}{(\frac{k-\ell}{2})!} \left(-\frac{1-\scaling^2}{2}\right)^{\frac{k-\ell}{2}}
    \end{equation*}
    if $\ell \leq k$ and $k+\ell \equiv 0 \pmod{2}$.
    Otherwise, $\scaledCoeff{k,\ell} = 0$.
\end{lemma}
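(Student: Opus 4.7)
The plan is to compute $\scaledCoeff{k,\ell}$ using a two-variable generating function, relying on the Hermite generating function identity from \pref{fact:hermite_generating_function}. Introduce formal variables $s, t$ and consider
\[
F(s,t) \;\seteq\; \expover{g\sim\calN(0,1)}{ e^{\scaling g s - s^2/2}\; e^{g t - t^2/2} }.
\]
By \pref{fact:hermite_generating_function} applied to $h_k(\scaling g)$ and $h_\ell(g)$, together with absolute convergence (which lets us swap the expectation and the infinite sums), we have
\[
F(s,t) \;=\; \sum_{k,\ell \geq 0} \frac{s^k t^{\ell}}{k!\,\ell!}\, \expover{g\sim\calN(0,1)}{ h_k(\scaling g)\, h_\ell(g) }
\;=\; \sum_{k,\ell \geq 0} \frac{s^k t^{\ell}}{k!\,\ell!}\, \scaledCoeff{k,\ell}.
\]

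Next, I would evaluate $F(s,t)$ directly. Combining exponents inside the expectation and using the Gaussian moment generating identity $\E_g[e^{c g}] = e^{c^2/2}$ with $c = \scaling s + t$ gives
\[
F(s,t) \;=\; e^{-(s^2+t^2)/2}\, e^{(\scaling s + t)^2/2}
\;=\; \exp\!\left( -\tfrac{1-\scaling^2}{2}\, s^2 \;+\; \scaling\, s\, t \right).
\]
Now expand this as a product of two exponentials in power series:
\[
F(s,t) \;=\; \left(\sum_{j \geq 0} \frac{1}{j!}\left(-\tfrac{1-\scaling^2}{2}\right)^{j} s^{2j}\right)\!\left(\sum_{m \geq 0} \frac{\scaling^m}{m!}\, s^m t^m\right).
\]

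Finally, I would extract the coefficient of $s^k t^\ell$. The only contribution comes from $m = \ell$ (to match the power of $t$) and $2j + m = k$, i.e.\ $j = (k-\ell)/2$. This is a non-negative integer precisely when $\ell \leq k$ and $k + \ell \equiv 0 \pmod{2}$; outside this range the coefficient is $0$, which accounts for the vanishing case of $\scaledCoeff{k,\ell}$. In the non-vanishing case, matching coefficients yields
\[
\frac{\scaledCoeff{k,\ell}}{k!\,\ell!}
\;=\; \frac{\scaling^\ell}{\ell!}\cdot \frac{1}{((k-\ell)/2)!}\left(-\tfrac{1-\scaling^2}{2}\right)^{(k-\ell)/2},
\]
which rearranges to the stated formula after multiplying through by $k!\,\ell!$.

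Overall the argument is a short generating-function calculation, and I do not anticipate a genuine obstacle: the only two things to be careful about are (i) justifying the termwise manipulation of the double Hermite series under the Gaussian expectation (standard, since the generating series converges absolutely with a Gaussian tail for every $s,t$) and (ii) keeping the parity/range conditions $\ell \leq k$, $k \equiv \ell \pmod 2$ correctly aligned with the constraints $m = \ell$ and $2j + m = k$ that arise from coefficient extraction.
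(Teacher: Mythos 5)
Your proof is correct, and it takes a genuinely different (though related) generating-function route. You introduce a second formal variable $s$, form the bilinear generating function $F(s,t) = \E_g[e^{\scaling g s - s^2/2} e^{g t - t^2/2}]$, and evaluate the Gaussian expectation in closed form via the moment generating function $\E_g[e^{cg}] = e^{c^2/2}$, then extract the coefficient of $s^k t^\ell$. The paper instead works with a single variable $t$ and never performs an integral at the coefficient-matching stage: it writes $h_k(\scaling x) = \sum_\ell \scaledCoeff{k,\ell} \, h_\ell(x)/\ell!$ (noting that orthogonality forces $\scaledCoeff{k,\ell}$ to equal the desired expectation), substitutes this into the generating function $e^{\scaling x t - t^2/2} = \sum_k h_k(\scaling x) t^k/k!$, and separately factors $e^{\scaling x t - t^2/2} = e^{x(\scaling t) - (\scaling t)^2/2} \cdot e^{-t^2(1-\scaling^2)/2}$ to re-expand in the Hermite basis $\{h_\ell(x)\}$; matching coefficients of $t^k$ and of $h_\ell(x)$ gives the result purely algebraically. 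Your approach is more directly computational and makes the Gaussian integration explicit, at the small cost of needing to justify swapping the double sum with the expectation (which you correctly flag). The paper's one-variable factorization trick sidesteps the integral entirely, relegating all the probabilistic content to the single upfront orthogonality observation. Both derivations lead to the same coefficient extraction $j = (k-\ell)/2$, $m = \ell$ and to the same parity/range conditions, and both yield the stated formula.
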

\begin{proof}
    First, let us write the function $h_k(\scaling x)$ in the Hermite basis,
    \begin{equation*}
        h_k(\scaling x) = \sum_{\ell=0}^\infty \scaledCoeff{k,\ell} \frac{h_{\ell}(x)}{\ell!} \mcom
    \end{equation*}
    such that the coefficients $\scaledCoeff{k,\ell}$ exactly equals $\expover{g\sim \calN(0,1)}{h_k(\scaling g) h_\ell(g)}$.

    Using the generating function of Hermite polynomials (\pref{fact:hermite_generating_function}), for any $x, t\in \R$,
    \begin{equation*}
        e^{\scaling xt - \frac{t^2}{2}} = \sum_{k=0}^\infty h_k(\scaling x) \frac{t^k}{k!}
        = \sum_{\ell=0}^\infty \sum_{k=0}^\infty \scaledCoeff{k,\ell} \frac{t^k}{k!} \cdot \frac{h_{\ell}(x)}{\ell!} \mper
    \end{equation*}
    On the other hand, we can rewrite the left-hand side:
    \begin{equation*}
    \begin{aligned}
        e^{\scaling xt - \frac{t^2}{2}} = e^{x\cdot \scaling t - \frac{\scaling^2 t^2}{2}} \cdot e^{- \frac{t^2}{2}(1-\scaling^2)}
        & = \sum_{\ell=0}^\infty h_{\ell}(x) \frac{(\scaling t)^\ell}{\ell!} \cdot \sum_{i=0}^{\infty} \frac{1}{i!}\left(-\frac{t^2(1-\scaling^2)}{2}\right)^i \\
        & = \sum_{\ell=0}^\infty \frac{h_{\ell}(x)}{\ell!} \sum_{i=0}^\infty \frac{\scaling^\ell}{i!} \left(-\frac{1-\scaling^2}{2}\right)^i t^{\ell+2i} \mper
    \end{aligned}
    \end{equation*}
    Matching coefficients, we see that $\scaledCoeff{k,\ell}$ is nonzero only if $k = \ell+2i$ for some $i\geq 0$, i.e.\ $\ell \leq k$ and $k+\ell \equiv 0 \pmod{2}$.
    In this case,
    \begin{equation*}
        \scaledCoeff{k,\ell} = \scaling^{\ell} \cdot \frac{k!}{i!} \left(-\frac{1-\scaling^2}{2}\right)^i \mcom
    \end{equation*}
    where $i = \frac{k-\ell}{2}$.
    This completes the proof.
\end{proof}

We will rely  on the following technical computation from \cite{GJJ20}:

\begin{lemma}[{\cite[Lemma 4.5]{GJJ20}}] \label{lem:pap_expected_h_g}
    Let $\alpha\in \N^N$, and fix $v\in \R^N$ and $b\in \R$ such that $\|v\|_2 = 1$.
    Suppose $g \in \R^N$ is sampled from $\calN(0, \Id)$ conditioned on $\brac{g,v} = b$, then
    \begin{equation*}
        \expover{g}{h_{\alpha}(g)} = v^{\alpha} \cdot h_{|\alpha|}(b).
    \end{equation*}
\end{lemma}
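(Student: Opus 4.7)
My plan is to prove the identity $\expover{g}{h_\alpha(g)} = v^\alpha h_{|\alpha|}(b)$ by computing both sides of the multivariate Hermite generating function under the conditional distribution and matching coefficients of $t^\alpha/\alpha!$.

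First, I would invoke the product form of the generating function (\pref{fact:hermite_generating_function} applied coordinatewise):
\begin{equation*}
    \prod_{i=1}^N e^{t_i g_i - t_i^2/2} \;=\; \sum_{\alpha \in \N^N} h_\alpha(g)\,\frac{t^\alpha}{\alpha!} \mper
\end{equation*}
Taking the expectation over $g$ (with $g \sim \cN(0,\Id)$ conditioned on $\langle g,v\rangle = b$), the left-hand side becomes $e^{-\|t\|_2^2/2}\cdot \expover{g}{e^{\langle t, g\rangle}}$ while the right-hand side yields $\sum_\alpha \expover{g}{h_\alpha(g)}\,t^\alpha/\alpha!$. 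So it suffices to evaluate the conditional moment generating function of $g$.

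Second, I would exploit the orthogonal decomposition $g = bv + g_\perp$, where $g_\perp \sim \cN(0, \Id - vv^\top)$ is the projection of an unconditional standard Gaussian onto $v^\perp$ (this is the well-known description of the conditional distribution since $\|v\|_2 = 1$). Then
\begin{equation*}
    \expover{g}{e^{\langle t,g\rangle}} \;=\; e^{b\langle t,v\rangle}\cdot \expover{g_\perp}{e^{\langle t,g_\perp\rangle}} \;=\; e^{b\langle t,v\rangle}\cdot e^{\tfrac{1}{2}(\|t\|_2^2 - \langle t,v\rangle^2)}\mper
\end{equation*}
Multiplying by $e^{-\|t\|_2^2/2}$, the $\|t\|_2^2$ terms cancel and we obtain the remarkably clean expression
\begin{equation*}
    \expover{g}{\prod_{i=1}^N e^{t_i g_i - t_i^2/2}} \;=\; e^{b\,s - s^2/2}, \qquad \text{where } s \seteq \langle t, v\rangle\mper
\end{equation*}

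Third, I would apply the univariate Hermite generating function once more to the right-hand side, writing $e^{bs - s^2/2} = \sum_{k \geq 0} h_k(b)\,s^k/k!$, and expand $s^k = \langle t,v\rangle^k$ by the multinomial theorem as $\sum_{|\alpha|=k} \tfrac{k!}{\alpha!}t^\alpha v^\alpha$. This gives
\begin{equation*}
    e^{bs - s^2/2} \;=\; \sum_{\alpha \in \N^N} h_{|\alpha|}(b)\, v^\alpha\, \frac{t^\alpha}{\alpha!}\mper
\end{equation*}
Matching the coefficients of $t^\alpha/\alpha!$ with those obtained from the Hermite expansion on the other side yields $\expover{g}{h_\alpha(g)} = v^\alpha h_{|\alpha|}(b)$, as desired.

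The argument is essentially a bookkeeping exercise once one decides to compute via generating functions; I do not anticipate a serious obstacle. The only point that requires a moment of care is the identification of the conditional distribution of $g$ given $\langle g,v\rangle = b$ as $bv + g_\perp$ with $g_\perp \sim \cN(0, \Id - vv^\top)$, which follows from the standard fact about conditioning jointly Gaussian vectors on linear functionals (and uses $\|v\|_2 = 1$ to normalize). Everything else is a direct computation.
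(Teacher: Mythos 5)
Your argument is correct, but note that the paper itself does not prove this lemma at all — it is imported verbatim as \cite[Lemma 4.5]{GJJ20} — so what you have produced is a self-contained derivation of a cited external fact rather than an alternative to an in-paper proof. Your route (identify the conditional law as $g = bv + g_\perp$ with $g_\perp \sim \calN(0,\Id - vv^\top)$, compute $\E\bigl[\prod_i e^{t_i g_i - t_i^2/2}\bigr] = e^{bs - s^2/2}$ with $s = \langle t,v\rangle$, then expand by the univariate generating function and the multinomial theorem) is clean and gives exactly the claimed identity; it also makes transparent why only $|\alpha|$ and $v^\alpha$ survive, since the conditional moment generating function depends on $t$ only through $\langle t,v\rangle$. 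Two small points you should make explicit if you write this up: (i) the interchange of the expectation with the infinite Hermite expansion needs a one-line justification (e.g.\ restrict to small $\|t\|$, where the series converges absolutely and is dominated by an integrable exponential, then match Taylor coefficients of the analytic functions), and (ii) the independence of $\langle g,v\rangle$ and the orthogonal projection $P_{v^\perp}g$ — which is what licenses the description of the conditional law — deserves a sentence, though it is standard for jointly Gaussian vectors and indeed uses $\|v\|_2=1$ only for the normalization $\langle g,v\rangle \sim \calN(0,1)$. A common alternative derivation (rotating coordinates so that $v=e_1$, where the conditioning fixes only the first coordinate, and then using the behavior of $h_\alpha$ under orthogonal changes of variables) achieves the same result; your generating-function computation avoids having to discuss how Hermite polynomials transform under rotations, at the cost of the convergence bookkeeping above.
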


We are now ready to prove \pref{lem:exp_planted_h}.

\begin{proof}[Proof of \pref{lem:exp_planted_h}]
    In our planted distribution $\nu_P$, each $\Gs$ is sampled conditioned on $z^\top \Gs z = \brac{\Gs, z z^\top} = \scaling \Bs$.
    Thus, applying \pref{lem:pap_expected_h_g} with $v = z z^\top$ (a vector in $\R^{n^2}$),
    \begin{equation*}\label{eq:exp_planted_h}
    \begin{aligned}
        \expover{(G,b) \sim \Pl}{h_{\alpha}(G) h_{\beta_{\s}}(b)}
        &= \expover{z,b}{ \prod_{\s=1}^m (z z^\top)^{\Alphas} h_{|\Alphas|}(\scaling \Bs) h_{\beta_{\s}}(\Bs)} \\
        &= \expover{z\sim \NormCube{n}}{ \prod_{i=1}^n z_i^{\Delta_i} } \cdot \prod_{\s=1}^m \expover{\Bs\sim \calN(0,1)}{h_{|\Alphas|}(\scaling \Bs) h_{\beta_{\s}}(\Bs)},
    \end{aligned}
    \numberthis
    \end{equation*}
    since $\prod_{\s=1}^m (z z^\top)^{\Alphas} = \prod_{i=1}^n z_i^{\sum_{\s,j} \Alphasij + \AlphaEntries{\s}{ji}} = \prod_{i=1}^n z_i^{\Delta_i}$.

    Note that $\sum_i \Delta_i = 2|\alpha|$, thus $\expover{z}{\prod_{i} z_i^{\Delta_i}} = n^{-|\alpha|}$ if every $\Delta_i$ is even and 0 otherwise.
    Moreover, by \pref{lem:scaled_hermite_coefficient}, $\expover{\Bs\sim \calN(0,1)}{h_{|\Alphas|}(\scaling \Bs) h_{\beta}(\Bs)} = \scaledCoeff{|\Alphas|,\beta_{\s}}$ if $\beta_{\s} \leq |\Alphas|$ and have the same parity, and 0 otherwise.
\end{proof}

\subsection{Bounding Hermite coefficients of \texorpdfstring{$L^{\leq d}$}{L<d} for \texorpdfstring{$D=2$}{D=2}}
\label{sec:bounding_variance}

In this section, we prove \pref{lem:low_degree_variance} for the special case of $D=2$.
We first give a sketch.

\paragraph{Proof sketch.}
To begin, we divide the $\alpha$s based on $|\alpha|$ (number of edges $e$) and write the summation as
\begin{equation*}
    \sum_{\nedge=1}^d \sum_{\alpha: |\alpha|=\nedge} \sum_{\beta: |\beta|\leq d-\nedge} \expover{\Pl}{h_{\alpha}(G) h_{\beta}(b)}^2.
\end{equation*}

We upper bound the above in the following steps.
First, we show that for a fixed $\alpha$, the innermost sum is dominated by the $\beta$s where $\beta_{\s} = 0$ or $1$ if $|\Alphas|$ is even of odd, respectively (\pref{lem:sum_of_scaledCoeff}).
Moreover, any odd $|\Alphas|$ introduces an extra factor of $\scaling^2$ (\pref{cor:beta_contribution}).
Thus, in the end the terms where $|\Alphas|$ are all even dominate if $c$ is appropriately small.

Next, recall that $\Delta_i$ must be even by the condition in \pref{lem:exp_planted_h}.
We show that for all $|\alpha| = \nedge$, the dominating terms are the $\alpha$s with $\Delta_i = 2$ and $|\Alphas| = 2$ for all nonzero $\Delta_i$ and $|\Alphas|$ (\pref{lem:graph_upper_bound}, \pref{lem:edge_labels_contribution}).
Viewing $\alpha$ as a graph, the dominating terms are the graphs with $\nedge$ edges and $\nedge$ vertices such that each vertex has degree 2 and each edge label appears exactly twice.

For the sake of a clean sketch, let's ignore all other terms.
The number of 2-regular graphs with $\nedge$ edges is $\leq 2^{2\nedge}$, and there are $n^{\nedge}$ ways to label the vertices.
For edge labels, we choose $\nedge/2$ labels from $[m]$ and assign to the $\nedge$ edges, thus there are $m^{\nedge/2} \nedge^{\nedge/2}$ ways to do so.
Finally, we multiply by $n^{-2\nedge}$ (the coefficient in \pref{lem:exp_planted_h}) and summing from $\nedge = 2$ to $d$, we get
\begin{equation*}
    \sum_{\nedge \geq 2, \textnormal{ even}} O\left(\frac{m\nedge}{n^2}\right)^{\nedge/2}
    \leq \sum_{\nedge \geq 2, \textnormal{ even}} \left(\frac{\nedge}{2d}\right)^{\nedge/2} \leq 1,
\end{equation*}
when $m = O(\frac{n^2}{d})$. This completes the sketch.

\paragraph{Contributions from $\beta$ for fixed $\alpha$.}
Suppose we fix an $\alpha$ with $|\alpha|=\nedge$.
Note that we must have $\beta_\s \leq |\Alphas|$ due to the condition in \pref{lem:exp_planted_h}.
Thus,
\begin{equation*} \label{eq:beta_contribution}
    \sum_{\beta: |\beta|\leq d-\nedge} \expover{\Pl}{h_{\alpha}(\bG) h_{\beta}(\bb)}^2
    = n^{-2\nedge} \sum_{\beta: |\beta|\leq d-\nedge} \prod_{\s=1}^m \scaledCoeff{|\Alphas|,\beta_\s}^2
    \leq n^{-2\nedge} \prod_{\s=1}^m \sum_{\beta_\s \leq |\Alphas|} \scaledCoeff{|\Alphas|,\beta_\s}^2.
    \numberthis
\end{equation*}

Next, we show that the dominating term is when $\beta_{\s} = 0$ or $1$ for all $\s\in[m]$ (depending on the parity of $|\Alphas|$).

\begin{lemma} \label{lem:sum_of_scaledCoeff}
    For any $k > 0$ and $\scaling = o(\frac{1}{\sqrt{k}})$,
    \begin{equation*}
        \sum_{\ell \leq k} \scaledCoeff{k,\ell}^2 \leq (1+o(1)) \cdot
        \begin{cases}
            ((k-1)!!)^2 & \textnormal{if $k$ is even}, \\
            \scaling^2 (k!!)^2 & \textnormal{if $k$ is odd}.
        \end{cases}
    \end{equation*}
\end{lemma}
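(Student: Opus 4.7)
\textbf{Proof plan for Lemma on $\sum_{\ell \le k}\scaledCoeff{k,\ell}^2$.}

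The plan is to show that the sum is dominated by its smallest nonzero term, namely $\ell=\ell_0$ where $\ell_0 = 0$ if $k$ is even and $\ell_0 = 1$ if $k$ is odd (these are the smallest $\ell \le k$ with $k+\ell$ even, which is the support of $\scaledCoeff{k,\cdot}$ by \pref{lem:scaled_hermite_coefficient}). All other $\ell$ contribute a geometrically smaller amount, controlled by the small parameter $c$.

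First I would evaluate the dominant term directly. Using the closed form
\[
\scaledCoeff{k,\ell} = \scaling^{\ell}\cdot \frac{k!}{\bigl(\tfrac{k-\ell}{2}\bigr)!}\left(-\frac{1-\scaling^2}{2}\right)^{\frac{k-\ell}{2}},
\]
for $k$ even and $\ell_0 = 0$ one computes $\frac{k!}{(k/2)!\,2^{k/2}} = (k-1)!!$, giving
\[
\scaledCoeff{k,0}^2 = ((k-1)!!)^2\,(1-\scaling^2)^k \le ((k-1)!!)^2.
\]
For $k$ odd and $\ell_0 = 1$, $\frac{k!}{((k-1)/2)!\,2^{(k-1)/2}} = k \cdot (k-2)!! = k!!$, giving
\[
\scaledCoeff{k,1}^2 = \scaling^2 (k!!)^2 (1-\scaling^2)^{k-1} \le \scaling^2 (k!!)^2.
\]

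Second, I would bound the ratio of successive terms. Writing any admissible $\ell$ as $\ell = \ell_0 + 2j$ for $j \in \{0,1,\dots,(k-\ell_0)/2\}$, a direct calculation from the closed form gives
\[
\frac{\scaledCoeff{k,\ell}^2}{\scaledCoeff{k,\ell_0}^2}
= \scaling^{4j}\cdot \frac{\bigl((k-\ell_0)/2\bigr)!^2}{\bigl((k-\ell)/2\bigr)!^2}\cdot \left(\frac{2}{1-\scaling^2}\right)^{2j}
\le \left(\frac{\scaling^2 (k-\ell_0)}{1-\scaling^2}\right)^{2j},
\]
using the falling-factorial bound $\tfrac{((k-\ell_0)/2)!}{((k-\ell_0)/2 - j)!} \le ((k-\ell_0)/2)^{j}$.

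Third, I would exploit the hypothesis $\scaling = o(1/\sqrt{k})$, which makes the base $\tfrac{\scaling^{2}(k-\ell_0)}{1-\scaling^{2}}$ equal to $o(1)$. Summing the resulting geometric series gives
\[
\sum_{\ell \le k}\scaledCoeff{k,\ell}^2 = \scaledCoeff{k,\ell_0}^2 \sum_{j\ge 0} \left(\frac{\scaling^2(k-\ell_0)}{1-\scaling^2}\right)^{2j} \le (1+o(1))\,\scaledCoeff{k,\ell_0}^2.
\]
Combined with the evaluations of $\scaledCoeff{k,0}^2$ and $\scaledCoeff{k,1}^2$ above, this yields the two cases of the claimed bound. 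There is no serious obstacle here: the only care needed is the parity bookkeeping (only $\ell$ with $k+\ell$ even contribute) and verifying the reduction of $\tfrac{k!}{((k-\ell_0)/2)!\,2^{(k-\ell_0)/2}}$ to a double-factorial in each parity case.
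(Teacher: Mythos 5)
Your proposal is correct and takes essentially the same approach as the paper: both proofs identify $\ell_0 \in \{0,1\}$ as the dominant term, bound the ratio of successive summands by $\bigl(\tfrac{\scaling^2(k-\ell_0)}{1-\scaling^2}\bigr)^2 = o(1)$ under $\scaling = o(1/\sqrt{k})$, and conclude via the resulting geometric decay. The only cosmetic difference is that you spell out the uniform geometric-series bound explicitly, whereas the paper states the per-step ratio $a_{\ell+2}/a_{\ell} = o(1)$ and asserts domination without writing out the series.
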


\begin{proof}
    Using \pref{lem:scaled_hermite_coefficient},
    \begin{equation*}
        \sum_{\ell \leq k} \scaledCoeff{k,\ell}^2
        \leq \sum_{\stackrel{\ell \leq k}{\ell+k \equiv 0 \mod{2}}} \scaling^{2\ell} \left( \frac{k!}{(\frac{k-\ell}{2})!}\right)^2 \left(\frac{1-\scaling^2}{2}\right)^{k-\ell}.
    \end{equation*}
    Let $a_{\ell}$ be the summand.
    We have that $\frac{a_{\ell+2}}{a_{\ell}} = \scaling^4 (\frac{k-\ell}{2})^2(\frac{2}{1-\scaling^2})^2$, which is $o(1)$ if $\scaling = o(\frac{1}{\sqrt{k}})$.
    Thus, the term with the smallest $\ell$ in the summation dominates.
    If $k$ is even, then $\ell=0$ dominates,
    \begin{equation*}
        \sum_{\ell \leq k} \scaledCoeff{k,\ell}^2
        \leq (1+o(1)) \cdot \left( \frac{k!}{(\frac{k}{2})!}\right)^2 2^{-k} = (1+o(1)) \cdot ((k-1)!!)^2.
    \end{equation*}
    If $k$ is odd, then $\ell=1$ dominates,
    \begin{equation*}
        \sum_{\ell \leq k} \scaledCoeff{k,\ell}^2
        \leq (1+o(1)) \cdot \scaling^2 \left( \frac{k!}{(\frac{k-1}{2})!}\right)^2 2^{-(k-1)} = (1+o(1)) \cdot \scaling^2 (k!!)^2.
        \qedhere
    \end{equation*}
\end{proof}

As an immediate corollary, we can upper bound \pref{eq:beta_contribution} based on the parity of $|\Alphas|$:

\begin{corollary} \label{cor:beta_contribution}
    Fix an $\alpha \in \N^{m\times n \times n}$ with $|\alpha| = \nedge \leq d$.
    Let $\odd{\alpha} = \{\s\in[m]: |\Alphas| \textnormal{ odd}\}$, and $\even{\alpha} = \{\s\in[m]: |\Alphas| > 0, \textnormal{ even}\}$.
    Then,
    \begin{equation*}
        \sum_{\beta: |\beta|\leq d-\nedge} \expover{\Pl}{h_{\alpha}(G) h_{\beta}(b)}^2
        \leq n^{-2\nedge} \cdot \prod_{\s\in\odd{\alpha}} \scaling^2 (|\Alphas|!!)^2 \prod_{\s\in \even{\alpha}} ((|\Alphas|-1)!!)^2\mper
    \end{equation*}
\end{corollary}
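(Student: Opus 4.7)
}
The corollary is essentially a bookkeeping consequence of \pref{lem:sum_of_scaledCoeff} applied termwise to the product on the right-hand side of \pref{eq:beta_contribution}. The plan is to start from the bound
\begin{equation*}
    \sum_{\beta: |\beta|\leq d-\nedge} \expover{\Pl}{h_{\alpha}(G) h_{\beta}(b)}^2
    \;\leq\; n^{-2\nedge} \prod_{\s=1}^m \sum_{\beta_\s \leq |\Alphas|} \scaledCoeff{|\Alphas|,\beta_\s}^2,
\end{equation*}
which is exactly \pref{eq:beta_contribution}, and then analyze each factor separately depending on the value of $|\Alphas|$.

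The natural partition of $[m]$ is into three classes: the indices with $|\Alphas|=0$, the indices in $\even{\alpha}$, and the indices in $\odd{\alpha}$. For $|\Alphas|=0$, the inner sum reduces to the single term $\scaledCoeff{0,0}^2 = 1$, so these factors drop out. For the remaining indices I would invoke \pref{lem:sum_of_scaledCoeff}: the hypothesis there requires $\scaling = o(1/\sqrt{|\Alphas|})$, and this is comfortably satisfied because $|\Alphas| \leq |\alpha| = \nedge \leq d$ while the planted distribution is defined with $\scaling = o(1/(d\sqrt{m}))$. Applying the lemma gives $\sum_{\ell \leq |\Alphas|} \scaledCoeff{|\Alphas|,\ell}^2 \leq (1+o(1))\,((|\Alphas|-1)!!)^2$ for $\s\in\even{\alpha}$ and $(1+o(1))\,\scaling^2 (|\Alphas|!!)^2$ for $\s\in \odd{\alpha}$. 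Multiplying these bounds across the at most $\nedge \leq d$ relevant indices yields the stated product, with an overall $(1+o(1))^d$ factor that is absorbed into the inequality (since the hypothesis $\scaling = o(1/(d\sqrt{m}))$ makes the ratio between consecutive terms in the proof of \pref{lem:sum_of_scaledCoeff} bounded by an arbitrarily small constant, so the $(1+o(1))$ is uniform and harmless).

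The proof is essentially a direct calculation, and I do not expect any real obstacle. The only mildly delicate point is making sure the $(1+o(1))$ factors coming from each application of \pref{lem:sum_of_scaledCoeff} compound to something harmless; this is immediate because at most $\nedge \leq d$ factors are non-trivial and the per-factor slack is $o(1)$ uniformly (indeed, as small as one wishes) given the assumed smallness of $\scaling$. No further structure of $\alpha$ is needed at this stage; the combinatorial dependence on the graph structure of $\alpha$ enters only at the next step when \pref{cor:beta_contribution} is summed over $\alpha$ to bound the total variance.
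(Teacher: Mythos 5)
Your proof is correct and takes exactly the approach the paper intends (the paper simply declares the result "an immediate corollary" of \pref{eq:beta_contribution} and \pref{lem:sum_of_scaledCoeff} without spelling out the details). You correctly split $[m]$ into the three classes $|\Alphas|=0$, $\even{\alpha}$, and $\odd{\alpha}$, observe that the first gives factor $\scaledCoeff{0,0}^2=1$, and apply \pref{lem:sum_of_scaledCoeff} to the remaining at most $\nedge$ indices.

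One remark worth making: you correctly note that each application of \pref{lem:sum_of_scaledCoeff} carries a $(1+o(1))$ factor while the corollary as stated drops it. Your accounting is right --- with $\scaling = o(1/(d\sqrt{m}))$ and $|\Alphas|\leq\nedge\leq d$, the per-factor slack is $1+O(\scaling^4 d^2) = 1+o(1/(d^2 m^2))$, so the product over at most $\nedge\leq d$ non-trivial factors is still $1+o(1)$. Strictly speaking, the inequality in the corollary is only true up to such a $(1+o(1))$ factor; the paper implicitly carries this factor forward (indeed the $(1+o(1))$ reappears explicitly in the proof of \pref{lem:edge_labels_contribution}). Your account is, if anything, more careful than the paper's presentation on this point.
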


\paragraph{Contributions from $|\alpha| = \nedge$.}

Fix the number of edges $\nedge\leq d$, we upper bound the contribution of all $\alpha$ with $|\alpha| = \nedge$.
The nonzero condition of \pref{lem:exp_planted_h} means that the only nonzero terms are the $\alpha$s (viewed as graphs) where
each vertex has even degree (counting each self-loop twice).
To upper bound the total contribution of all such graphs, we
\begin{enumerate}
    \item upper bound the number of graphs with even degrees where the vertices have labels in $[n]$,
    \item upper bound the contributions from assigning labels in $[m]$ to the edges.
\end{enumerate}
Note that the contribution of each $\alpha$ can vary based on how we label the edges.

% For any $\nedge > 0$, $|\alpha|=\nedge$ corresponds to graphs having $\nedge$ edges, and we will see that the total contribution is dominated by graphs with $\nedge$ vertices.

\begin{lemma}\label{lem:graph_upper_bound}
    Let $\nedge \in \N$ such that $0 < \nedge \leq d \leq n$.
    Consider directed graphs with $\nedge$ unlabeled edges (parallel edges and self-loops allowed) such that the vertices have even degrees and have distinct labels in $[n]$.
    The number of such graphs is upper bounded by $(8n)^{\nedge}$.
\end{lemma}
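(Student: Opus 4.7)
The plan is to use an encoding argument: encode each admissible graph as an ordered sequence of closed walks in the underlying undirected multigraph, augmented with per-edge orientation bits, and then bound the number of such encodings.

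First, I will observe that since every vertex of $\alpha$ has even total degree (self-loops counting doubly), the underlying undirected multigraph obtained by forgetting edge directions also has every vertex of even degree. By the classical Eulerian-circuit theorem applied componentwise, its edge set decomposes into edge-disjoint closed walks (with self-loops appearing as walks of length one). I will then encode any admissible $\alpha$ by fixing one such decomposition and recording: (i) a number $k \geq 1$ of walks and a composition $\ell_1 + \cdots + \ell_k = \nedge$ of $\nedge$ into positive parts; (ii) an ordering of the walks; (iii) for each $j$, a vertex sequence $(v_0^{(j)}, v_1^{(j)}, \ldots, v_{\ell_j - 1}^{(j)}) \in [n]^{\ell_j}$ tracing the walk; (iv) for each edge in each walk, a single bit recording whether the directed edge in $\alpha$ points along or against the traversal direction. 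Every admissible graph produces at least one such encoding, so the number of graphs is bounded by the number of encodings.

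The counting step will be routine. The number of positive-integer compositions of $\nedge$ into $k$ parts is $\binom{\nedge - 1}{k-1}$; for each composition, the vertex sequences contribute $\prod_j n^{\ell_j} = n^{\nedge}$ choices and the orientation bits contribute $\prod_j 2^{\ell_j} = 2^{\nedge}$ choices. Summing over $k$ gives
\[
\sum_{k=1}^{\nedge} \binom{\nedge - 1}{k-1} \cdot n^{\nedge} \cdot 2^{\nedge} \;=\; 2^{\nedge - 1} \cdot (2n)^{\nedge} \;=\; \tfrac{1}{2}\,(4n)^{\nedge} \;\leq\; (8n)^{\nedge},
\]
which is the desired bound.

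The main subtlety I expect to address is verifying the Eulerian-decomposition step for arbitrary undirected multigraphs with self-loops and multiple connected components; this follows by applying the classical Eulerian theorem to each connected component that contains at least one edge, with self-loops contributing walks of length one. A minor bookkeeping point is that self-loops carry no genuine orientation, so the encoding over-counts each self-loop by a factor of $2$; however, the slack between the obtained bound $(4n)^{\nedge}$ and the target $(8n)^{\nedge}$ comfortably absorbs this over-count (and in fact allows for far worse losses). No further delicate estimate is needed.
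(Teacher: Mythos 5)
Your proof is correct and takes a genuinely different route from the paper's, and in fact yields the slightly sharper bound $\tfrac{1}{2}(4n)^{\nedge}$. The paper proceeds in two stages: it first counts unlabeled undirected graphs on exactly $\nedge$ vertices with all even degrees --- these must be $2$-regular, hence disjoint unions of cycles and self-loops, giving at most $2^{\nedge}$ by a partition count --- and then handles $\nvertices < \nedge$ by arguing that every such graph arises by contracting $\nedge - \nvertices$ vertices of a $2$-regular graph, which costs an extra $\nedge^{\nedge-\nvertices}\binom{\nedge}{\nvertices}$ factor; finally it multiplies by $2^{\nedge}$ for orientations and $n^{\nvertices}$ for vertex labels and sums over $\nvertices$ to reach $(8n)^{\nedge}$. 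You instead invoke the Eulerian-circuit theorem directly: since all degrees are even, the undirected edge set decomposes into closed walks (one per component with edges), and any admissible directed graph can therefore be encoded by a composition of $\nedge$ into walk lengths, the vertex sequences tracing each walk, and per-edge orientation bits, giving $\sum_k \binom{\nedge-1}{k-1}\,n^{\nedge}\,2^{\nedge} = 2^{\nedge-1}(2n)^{\nedge}$. The two proofs hinge on the same structural fact --- even-degree multigraphs decompose into cycles/closed walks --- but your encoding handles all vertex counts $\nvertices \leq \nedge$ uniformly and avoids the contraction bookkeeping, which is what tightens the constant. The one point worth stating more explicitly is that a closed walk of length $\ell_j$ uses $\ell_j$ vertex slots and $\ell_j$ edges (closing back to $v_0^{(j)}$), so the total vertex choices are indeed $\prod_j n^{\ell_j} = n^{\nedge}$; and the self-loop over-count you mention is a non-issue, since the argument only needs the encoding map to be surjective onto the set of admissible graphs, not injective.
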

\begin{proof}
    We first count the number of unlabeled graphs.
    Let $\calG(\nedge,\nvertices)$ be the set of unlabeled undirected graphs with $\nedge$ edges, $\nvertices$ vertices, and has even degrees.
    We will prove an upper bound on undirected graphs. For directed graphs, we can simply multiply our upper bound by $2^{\nedge}$, since each edge can be in either direction.

    First, we look at the case when $\nvertices = \nedge$.
    In this case, all vertices must have degree 2, hence the graphs must consist of disjoint cycles and isolated vertices with self-loops.
    This is easily upper bounded by the number of ways to partition $\nedge$ identical elements.
    The number of ways to partition $\nedge$ identical elements into $j$ non-empty groups is $\binom{\nedge-1}{j-1}$. Thus,
    \begin{equation*}
        |\calG(\nedge,\nedge)| \leq \sum_{j=1}^{\nedge} \binom{\nedge-1}{j-1} = 2^{\nedge-1} \leq 2^{\nedge}.
    \end{equation*}

    For $\nvertices < \nedge$, observe that every graph in $\calG(\nedge,\nvertices)$ can be obtained by contracting $\nedge-\nvertices$ vertices from a graph in $\calG(\nedge,\nedge)$ without deleting any edge (possibly forming self-loops or parallel edges).
    The number of ways to do so can be upper bounded the number of ways to partition $\nedge$ distinct items into $\nvertices$ non-empty identical buckets, which we can bound by
    \begin{equation*}
        \binom{\nedge - 1}{\nedge - \nvertices} \cdot \frac{\nedge!}{\nvertices!} \leq \nedge^{\nedge-\nvertices} \binom{\nedge}{\nvertices}.
    \end{equation*}
    Thus,
    \begin{equation*}
        \Card{\calG(\nedge,\nvertices)} \leq \nedge^{\nedge-\nvertices} \binom{\nedge}{\nvertices} \cdot \Card{\calG(\nedge,\nedge)} \leq 2^{\nedge} \nedge^{\nedge-\nvertices} \binom{\nedge}{\nvertices}.
    \end{equation*}
    For directed graphs, we multiply the upper bound by an additional $2^{\nedge}$.

    Next, we assign labels to the vertices.
    For graphs with $\nvertices$ vertices, there are $n^{\nvertices}$ ways to assign labels.
    \begin{equation*}
        \sum_{\nvertices=1}^{\nedge} n^{\nvertices} \cdot 2^{2\nedge} \nedge^{\nedge-\nvertices} \binom{\nedge}{\nvertices}
        = (4\nedge)^{\nedge} \sum_{\nvertices=1}^{\nedge} \binom{\nedge}{\nvertices} \left(\frac{n}{\nedge}\right)^\nvertices
        \leq (4\nedge)^{\nedge} \Paren{ 1 + \frac{n}{\nedge}}^{\nedge}
        \leq (8n)^{\nedge}.
    \end{equation*}
    Here we use the fact $\nedge \leq d \leq n$.
\end{proof}

Next, fix a graph $H$, we assign labels to the edges.

\begin{lemma} \label{lem:edge_labels_contribution}
    Let $\nedge\in \N$ such that $0 < \nedge \leq d \leq n$ and let $\scaling = o\Paren{\frac{1}{d\sqrt{m}}}$.
    Let $H$ be any directed graph with $\nedge$ (unlabeled) edges and $n$ vertices, and let $A_H$ be the set of $\alpha$s that have $H$ as its graph, i.e.\ $\sum_{\s} \Alphas$ is the adjacency matrix of $H$.
    Then,
    \begin{equation}
        \sum_{\alpha \in A_H} \sum_{\beta} \expover{\Pl}{h_{\alpha}(G) h_{\beta}(b)}^2
        \leq
        \begin{cases}
            n^{-2\nedge} (2m\nedge)^{\nedge/2} & \textnormal{$\nedge$ is even} \\
            (\scaling^2 \nedge) n^{-2\nedge} (2m\nedge)^{\frac{\nedge+1}{2}} & \textnormal{$\nedge$ is odd}.
        \end{cases}
    \end{equation}
\end{lemma}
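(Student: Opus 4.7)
My plan is to combine \pref{cor:beta_contribution} with an exponential generating function extraction. By the corollary, the inner sum over $\beta$ is already bounded by $n^{-2\nedge}\prod_\s w(|\Alphas|)$, where $w(0)=1$, $w(k)=((k-1)!!)^2$ for even $k\geq 2$, and $w(k)=\scaling^2(k!!)^2$ for odd $k$. The key observation is that an $\alpha\in A_H$ is precisely a labeling $\phi\colon E(H)\to[m]$ of the $\nedge$ edges of $H$, and $\prod_\s w(|\Alphas|)=\prod_\s w(|\phi^{-1}(\s)|)$ depends only on the multiset of label-sizes; hence the sum over $\alpha\in A_H$ is independent of the internal structure of $H$ and, by the standard ``each of the $m$ labels contributes an independent slot'' identity, equals $\nedge!\,[x^\nedge]\,W(x)^m$ with $W(x)\seteq\sum_{k\geq 0}w(k)x^k/k!$.

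Next, I compute $W$ in closed form. Using $\sum_{r\geq 0}\binom{2r}{r}y^r=(1-4y)^{-1/2}$ and its $y$-derivative, the even-$k$ contribution sums to $(1-x^2)^{-1/2}$ and the odd-$k$ contribution sums to $\scaling^2 x(1-x^2)^{-3/2}$. Hence
\begin{equation*}
W(x)^m=(1-x^2)^{-m/2}\bigl(1+\scaling^2 x/(1-x^2)\bigr)^m,
\end{equation*}
and binomial expansion of the second factor together with $[x^{2r}](1-x^2)^{-\eta}=\binom{\eta+r-1}{r}$ yields
\begin{equation*}
[x^\nedge]W(x)^m=\sum_{\substack{0\leq k\leq \nedge\\ k\equiv \nedge\,(\mathrm{mod}\,2)}}R_k,\qquad R_k\seteq\binom{m}{k}\scaling^{2k}\binom{(m+\nedge+k)/2-1}{(\nedge-k)/2}.
\end{equation*}

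For $\nedge$ even the leading term is $k=0$; using $\binom{a}{b}\leq a^b/b!$ and $\nedge!/(\nedge/2)!\leq \nedge^{\nedge/2}$, together with $\nedge\leq d$ and the relevant regime $\nedge\leq m$, this gives $\nedge!\,R_0\leq \nedge^{\nedge/2}((m+\nedge)/2)^{\nedge/2}\leq (m\nedge)^{\nedge/2}$. For $\nedge$ odd the leading term is $k=1$, and the same arithmetic gives $\nedge!\,R_1\leq \scaling^2(m\nedge)^{(\nedge+1)/2}$. Multiplying by the $n^{-2\nedge}$ prefactor from \pref{cor:beta_contribution} and comparing with the stated targets $(2m\nedge)^{\nedge/2}$ and $(\scaling^2\nedge)(2m\nedge)^{(\nedge+1)/2}$ leaves a multiplicative cushion ($2^{\nedge/2}$ in the even case; $\nedge\cdot 2^{(\nedge+1)/2}\geq 2$ in the odd case for all $\nedge\geq 1$) in which the contributions of the remaining $k$-terms must fit.

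The main obstacle will be controlling that cushion, i.e.\ showing that $\sum_{k\geq k_0+2}R_k$ does not spoil the leading estimate (where $k_0\in\{0,1\}$ is the leading index). A direct ratio computation yields
\begin{equation*}
\frac{R_{k+2}}{R_k}=\frac{(m-k)(m-k-1)}{(k+1)(k+2)}\cdot \scaling^4\cdot \frac{(\nedge-k)(m+\nedge+k)}{(m+2k)(m+2k+2)}=O\Paren{\frac{m\scaling^4\nedge}{(k+1)(k+2)}},
\end{equation*}
and the hypothesis $\scaling=o(1/(d\sqrt{m}))$ forces $m\scaling^2=o(1/d^2)$, hence $m\scaling^4\nedge=(m\scaling^2)^2\cdot(\nedge/m)=o(1/d^3)$. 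Thus the series is geometric with ratio $o(1)$ and its total is at most $(1+o(1))R_{k_0}\leq 2R_{k_0}$; this factor of $2$ is absorbed by the cushion above, completing the proof of both bounds. The only delicate point is verifying that the ratio estimate is uniform in $k$ across the allowed range, which will follow from the monotonicity of $(\nedge-k)(m+\nedge+k)/((m+2k)(m+2k+2))$ in $k$.
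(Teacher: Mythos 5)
Your proposal reaches the stated bound by a genuinely different route than the paper, and the argument is sound once one corrects a claim of equality to an inequality. The paper proves the lemma by fixing the pair $(i,j) = (|\even{\alpha}|, |\odd{\alpha}|)$, bounding the count of $\alpha$'s via an explicit four-step combinatorial charging argument, running a ratio argument in $j$, and then collapsing the sum over $i$ with a Vandermonde identity. You instead package the whole count as an exponential generating function coefficient $\nedge!\,[x^\nedge]W(x)^m$, evaluate $W$ in closed form as $(1-x^2)^{-1/2}\bigl(1+\scaling^2 x/(1-x^2)\bigr)$, and extract $R_k$ directly. Your $k$ (number of odd-multiplicity labels) corresponds to the paper's $j$, and your ratio argument in $k$ plays the role of the paper's ratio argument in $j$, but you avoid the separate summation over $i$ entirely because the binomial-series expansion of $(1-x^2)^{-m/2-k}$ already performs that Vandermonde collapse. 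This is cleaner and more systematic than the paper's step-by-step count.

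One claim needs repair. You write that $\sum_{\alpha\in A_H}\prod_\s w(|\Alphas|)$ ``is independent of the internal structure of $H$ and \ldots equals $\nedge!\,[x^\nedge]W(x)^m$.'' That equality only holds when $H$ is a simple graph. For a multigraph, the number of $\alpha\in A_H$ with $|\Alphas| = k_\s$ is the coefficient of $\prod_\s t_\s^{k_\s}$ in $\prod_{(i,j)} h_{w_{ij}}(t_1,\dots,t_m)$ (complete homogeneous symmetric polynomials at the edge multiplicities $w_{ij}$), which is strictly smaller than the multinomial $\binom{\nedge}{k_1,\dots,k_m}$ whenever some $w_{ij}\geq 2$ and two or more labels share a parallel pair. (Concretely: if $H$ is a single doubled edge and $m=2$, there is exactly one $\alpha$ with $(k_1,k_2)=(1,1)$, not $\binom{2}{1,1}=2$.) Since $h_w(t)$ is dominated coefficientwise by $(t_1+\cdots+t_m)^w$ and the weights $w(k)$ are nonnegative, the quantity you need is bounded \emph{above} by $\nedge!\,[x^\nedge]W(x)^m$, which is exactly the direction required. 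So ``equals'' should be ``is at most,'' after which the rest of your argument — the $R_0$ / $R_1$ leading-term estimates under $\nedge\le m$, and the ratio bound $R_{k+2}/R_k = O(m\scaling^4\nedge/((k+1)(k+2))) = o(1)$ — goes through as written. (You do not actually need the monotonicity you flagged: the uniform bound $(m-k)(m-k-1)\le m^2$, $(\nedge-k)(m+\nedge+k)\le 3m\nedge$, $(m+2k)(m+2k+2)\ge m^2$ already yields the geometric decay for all $k$.) For what it is worth, the paper's multiplicative overcount in its step 3 (``assign all $\nedge$ edges: $\nedge!$, double counted $|\Alphas|!$ times'') is the same multinomial and hence the same overcount, so both proofs rely on this slack in the same way.
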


\begin{proof}
    We need to handle the odd and even $|\Alphas|$ differently.
    Recall that $\odd{\alpha} = \{\s\in[m]: |\Alphas| \textnormal{ odd}\}$ and $\even{\alpha} = \{\s\in[m]: |\Alphas| > 0, \textnormal{ even}\}$.
    Suppose we assign labels from $[m]$ to edges such that $|\even{\alpha}| = i$ and $|\odd{\alpha}| = j$ where $2i + j\leq \nedge$.
    Note that $\nedge$ and $j$ must have the same parity since $\nedge-j$ must be even.

    We first fix $i \leq \floor{\nedge/2}$.
    We will see that $j = 0$ or $1$ is the dominating term, depending on the parity of $\nedge$.
    We upper bound the contribution as follows:
    \begin{enumerate}
        \item Choose $i$ different labels for $\even{\alpha}$ and $j$ labels for $\odd{\alpha}$. The number of ways to choose is $\binom{m}{i} \binom{m-i}{j}$.

        \item Choose the $|\Alphas|$. First, set the default values: $|\Alphas| = 2$ for $\s\in \even{\alpha}$ and $|\Alphas| = 1$ for $\s\in \odd{\alpha}$.
        Next, for the other $\nedge - 2i - j$, we can add any even number to any $|\Alphas|$.
        This is the same as the number of ways $i+j$ nonnegative integers add up to $\frac{\nedge-2i-j}{2}$, which is
        \begin{equation*}
            \binom{\frac{\nedge-2i-j}{2} + (i+j)-1}{i+j-1} = \binom{\frac{\nedge+j}{2}-1}{i+j-1}.
        \end{equation*}
        
        \item Assign all $\nedge$ edges: $\nedge!$. Note that each $\s$ is double counted $|\Alphas|!$ times.

        \item For each $\s\in[m]$, the contribution is scaled by a factor given by \pref{cor:beta_contribution}.
        In this step, we also adjust the contribution due to the double counting in the previous step.
        \begin{itemize}
            \item $|\Alphas|$ even: $\frac{((|\Alphas|-1)!!)^2}{|\Alphas|!} \leq 1$.
            \item $|\Alphas|$ odd: $\scaling^2 \frac{(|\Alphas|!!)^2}{|\Alphas|!} \leq \scaling^2 |\Alphas| \leq \scaling^2 \nedge$.
        \end{itemize}
        Thus, the contribution is scaled by $n^{-2\nedge} (\scaling^2 \nedge)^j$.
    \end{enumerate}

    For a fixed $i$, the total contribution is
    \begin{equation*} \label{eq:sum_of_j}
        \sum_{\stackrel{j\leq \nedge - 2i}{\nedge - j \textnormal{ even}}}
        n^{-2\nedge} \scaling^{2j} \nedge^j \cdot \binom{m}{i} \binom{m-i}{j} \cdot \binom{\frac{\nedge+j}{2}-1}{i+j-1} \cdot \nedge! \mper
        \numberthis
    \end{equation*}
    Let $a_j$ be the summand.
    $\frac{a_{j+2}}{a_j} \leq m^2 \cdot \scaling^4 \nedge^2 \cdot (\frac{\nedge+j}{2}) (\frac{\nedge-2i-j}{2}) \leq (\scaling^2 \nedge^2 m)^2 = o(1)$ since $\scaling = o\Paren{\frac{1}{d\sqrt{m}}}$.
    Thus, the summation is dominated by $j =0$ and $1$ for even and odd $\nedge$ respectively.

    If $\nedge$ is even, then $j=0$ dominates: \pref{eq:sum_of_j} equals $(1+o(1)) n^{-2\nedge} \nedge! \cdot \binom{m}{i} \binom{\frac{\nedge}{2}-1}{\frac{\nedge}{2}-i}$.
    Summing $i$ from $1$ to $\nedge/2$, we get
    \begin{equation*}
        % \sum_{\alpha \in A_H} \sum_{\beta} \expover{\Pl}{h_{\alpha}(\bG) h_{\beta}(\bb)}^2
        (1+o(1)) n^{-2\nedge} \nedge! \cdot \sum_{i=1}^{\nedge/2} \binom{m}{i} \binom{\frac{\nedge}{2}-1}{\frac{\nedge}{2}-i}
        = (1+o(1)) n^{-2\nedge} \nedge! \cdot \binom{m+\frac{\nedge}{2}-1}{\frac{\nedge}{2}}.
    \end{equation*}
    Since $m \geq n$ and $\frac{\nedge}{2} \leq \frac{n}{2} \leq \frac{m}{2}$, we can upper bound the above by $n^{-2\nedge} \nedge! \frac{(2m)^{\nedge/2}}{(\nedge/2)!}$.
    Thus,
    \begin{equation*}
        \sum_{\alpha \in A_H} \sum_{\beta} \expover{\Pl}{h_{\alpha}(G) h_{\beta}(b)}^2
        \leq n^{-2\nedge} (2m)^{\nedge/2} \frac{\nedge!}{(\nedge/2)!}
        \leq n^{-2\nedge} (2m\nedge)^{\nedge/2}.
    \end{equation*}

    If $\nedge$ is odd, then $j=1$ dominates: \pref{eq:sum_of_j} equals $(1+o(1)) n^{-2\nedge} \nedge! (\scaling^2 \nedge) (m-i) \cdot \binom{m}{i} \binom{\frac{\nedge-1}{2}}{i}$.
    In this case, we sum $i$ from $0$ to $\frac{\nedge-1}{2}$,
    \begin{equation*}
        (1+o(1)) n^{-2\nedge} \nedge! (\scaling^2 \nedge) \cdot \sum_{i=0}^{\frac{\nedge-1}{2}} (m-i) \binom{m}{i} \binom{\frac{\nedge-1}{2}}{\frac{\nedge-1}{2}-i}
        \leq (1+o(1)) n^{-2\nedge} \nedge! (\scaling^2 \nedge m) \cdot \binom{m+\frac{\nedge-1}{2}}{\frac{\nedge-1}{2}}.
    \end{equation*}
    Similar analysis shows that
    \begin{equation*}
        \sum_{\alpha \in A_H} \sum_{\beta} \expover{\Pl}{h_{\alpha}(G) h_{\beta}(b)}^2
        \leq (\scaling^2 \nedge) n^{-2\nedge} (2m\nedge)^{\frac{\nedge+1}{2}}.
    \end{equation*}
    This completes the proof.
\end{proof}

\paragraph{Proof of \pref{lem:low_degree_variance}.}
Now, it suffices to sum up the contributions of $\nedge$ from $1$ to $d$.

\begin{proof}[Proof of \pref{lem:low_degree_variance}]
    Combining \pref{lem:graph_upper_bound} and \pref{lem:edge_labels_contribution},
    in total we have,
    \begin{equation*}
    \begin{gathered}
        \sum_{\nedge> 0, \textnormal{ even}} (8n)^{\nedge} \cdot n^{-2\nedge} (2m\nedge)^{\nedge/2}
        + \sum_{\nedge \textnormal{ odd}} (8n)^{\nedge} \cdot (\scaling^2 \nedge) n^{-2\nedge} (2m\nedge)^{\frac{\nedge+1}{2}} \\
        = \sum_{\nedge > 0, \textnormal{ even}} \left(\frac{128 m \nedge}{n^2}\right)^{\nedge/2}
        + O(\scaling^2 \nedge n) \sum_{\nedge \textnormal{ odd}} \left(\frac{128 m \nedge}{n^2}\right)^{\frac{\nedge+1}{2}}.
    \end{gathered}
    \end{equation*}
    Then, setting $\scaling = o\Paren{\frac{1}{d\sqrt{m}}}$, we can ignore the odd terms.
    Moreover, take $m = \frac{n^2}{256d}$, we have
    \begin{equation*}
        \sum_{1 \leq |\alpha| + |\beta| \leq d} \expover{\Pl}{h_{\alpha}(G) h_{\beta}(b)}^2
        \leq \sum_{\nedge \geq 2, \text{ even}}^d \left(\frac{\nedge}{2d}\right)^{\nedge/2}
        \leq 1.
        \qedhere
    \end{equation*}
\end{proof}

\subsection{Generalizing to  \texorpdfstring{$D > 2$}{D>2}}

In this section, we prove \pref{lem:low_degree_variance} for arbitrary $D$.
In this case, we have polynomial equations $\gs(x) = \brac{\Gs, x^{\otimes D}} = \Bs$ for $\s\in[m]$, where $\Gs\in (\R^n)^{\otimes D}$.

For Hermite indices $\alpha \in \N^{m\times n \times \cdots \times n}$ and $\beta \in \N^{m}$, we calculate $\expover{(G,b) \sim\Pl}{h_{\alpha}(G) h_{\beta}(b)}$.
Here, we view $\alpha$ as a labeled directed $D$-uniform hypergraph with edges labeled $1,\dots, m$, and define $\Delta \in \N^n$ as the total degree of vertex $i\in[n]$.
Note that $|\Delta| = \sum_{i=1}^n \Delta_i = D|\alpha|$.
The following lemma is almost identical to \pref{lem:exp_planted_h}.

\begin{lemma}
    For $D \geq 2$, indices $\alpha\in \N^{m \times n\times \cdots \times n}$, $\beta \in \N^{m}$, and $\scaling > 0$,
    define $\Delta\in \N^n$ such that $\Delta_i$ is the total degree of vertex $i$ when viewing $\alpha$ as a labeled $D$-uniform hypergraph.
    Then, if $\Delta_i$ is even for all $i\in [n]$ and $\beta_{\s} \leq |\Alphas|$, $|\Alphas|+ \beta_{\s} \equiv 0 \pmod{2}$ for all $\s\in[m]$, then
    \begin{equation*}
    \begin{gathered}
        \expover{(G,b) \sim\Pl}{h_{\alpha}(G) h_{\beta}(b)} =
        n^{-D |\alpha|/2} \prod_{\s=1}^m \scaledCoeff{|\Alphas|,\beta_{\s}}.
    \end{gathered}
    \end{equation*}
    Otherwise, $\expover{(G,b) \sim\Pl}{h_{\alpha}(G) h_{\beta}(b)} = 0$.
\end{lemma}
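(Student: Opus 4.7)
The plan is to mirror the proof of \pref{lem:exp_planted_h} (the $D=2$ case) almost verbatim, replacing the rank-1 matrix $zz^\top$ with the rank-1 symmetric tensor $z^{\otimes D}$. The structural fact driving the argument is that for each $\s\in[m]$, the tensor $\Gs$ is a Gaussian conditioned on a single linear constraint $\brac{\Gs, z^{\otimes D}} = \scaling \Bs$, and that $\|z^{\otimes D}\|_2 = \|z\|_2^D = 1$ since $z\in \NormCube{n}$. So \pref{lem:pap_expected_h_g}, applied to each $\Gs$ viewed as a vector in $\R^{n^D}$, gives $\expover{\Gs}{h_{\Alphas}(\Gs)\mid \Bs} = (z^{\otimes D})^{\Alphas}\cdot h_{|\Alphas|}(\scaling \Bs)$.

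Next I would unwind the monomial $(z^{\otimes D})^{\Alphas}$. By definition, $(z^{\otimes D})^{\Alphas} = \prod_{i_1,\ldots,i_D\in[n]} z_{i_1}z_{i_2}\cdots z_{i_D}$ raised to the entry $\AlphaEntries{\s}{i_1\ldots i_D}$, which collapses to $\prod_{i=1}^n z_i^{\Delta_i^{(\s)}}$, where $\Delta_i^{(\s)}$ is the total degree of vertex $i$ in the hyperedges labeled $\s$. Taking the product over $\s$ and using $\Delta_i = \sum_\s \Delta_i^{(\s)}$ yields $\prod_{\s}(z^{\otimes D})^{\Alphas} = \prod_i z_i^{\Delta_i}$. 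Since each hyperedge contributes $D$ to the total degree sum, $\sum_i \Delta_i = D|\alpha|$. Taking the expectation over $z\sim \NormCube{n}$, each coordinate $z_i=\pm 1/\sqrt{n}$ independently, so $\expover{z}{\prod_i z_i^{\Delta_i}}$ equals $n^{-D|\alpha|/2}$ when every $\Delta_i$ is even and vanishes otherwise.

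After factoring out this $z$-expectation, the remaining factor decouples across $\s$ as
\begin{equation*}
    \prod_{\s=1}^m \expover{\Bs \sim \calN(0,1)}{h_{|\Alphas|}(\scaling \Bs) h_{\beta_{\s}}(\Bs)},
\end{equation*}
using independence of the $\Bs$ in $\nu_P$. Applying \pref{lem:scaled_hermite_coefficient} to each factor gives $\scaledCoeff{|\Alphas|,\beta_{\s}}$ when $\beta_{\s}\leq |\Alphas|$ and $|\Alphas|+\beta_{\s}$ is even, and $0$ otherwise. Combining the $z$-factor with this product yields the claimed identity and the claimed vanishing conditions.

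There is no real obstacle here; the only thing to be careful about is the bookkeeping that turns the tensor exponent $(z^{\otimes D})^{\Alphas}$ into the vertex-degree monomial $\prod_i z_i^{\Delta_i^{(\s)}}$, and the verification that $\|z^{\otimes D}\|_2=1$ so that \pref{lem:pap_expected_h_g} applies with the correct normalization. Everything else is a direct transcription of the $D=2$ argument, with $n^{-|\alpha|}$ replaced by $n^{-D|\alpha|/2}$ to reflect that a single hyperedge now contributes $D$ (rather than $2$) to the total exponent in the $z$-monomial.
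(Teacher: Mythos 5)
Your proof mirrors the paper's argument exactly: apply \pref{lem:pap_expected_h_g} with $v = z^{\otimes D}$ (noting $\|z^{\otimes D}\|_2 = \|z\|_2^D = 1$), collapse the tensor-power monomial into the vertex-degree monomial $\prod_i z_i^{\Delta_i}$, take the $z$-expectation using $\sum_i \Delta_i = D|\alpha|$, and factor the remaining $b$-expectations via \pref{lem:scaled_hermite_coefficient}. The paper's proof is terser but follows the identical route; yours is correct and simply fills in the bookkeeping the paper delegates to the $D=2$ case.
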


\begin{proof}
    Similar to the proof of \pref{lem:exp_planted_h}, we apply \pref{lem:pap_expected_h_g} with $v = z^{\otimes D}$ (a vector in $\R^{n^D}$),
    \begin{equation*}
    \begin{aligned}
        \expover{(G,b)\sim\Pl}{h_{\alpha}(G) b^{\beta}}
        & = \expover{z,b}{ \prod_{\s=1}^m (z^{\otimes D})^{\Alphas} h_{|\Alphas|}(\scaling \Bs) h_{\beta_{\s}}(\Bs)} \\
        & = n^{-D|\alpha|/2} \prod_{\s=1}^m \scaledCoeff{|\Alphas|,\beta_{\s}},
    \end{aligned}
    \end{equation*}
    since $|\Delta| = \sum_{i=1}^n \Delta_i = D|\alpha|$.
    This completes the proof.
\end{proof}

The proof of \pref{lem:low_degree_variance} for arbitrary $D$ is almost identical to the case of $D=2$, except for counting the number of graphs with even degrees.
The following is the generalization of \pref{lem:graph_upper_bound}.

\begin{lemma}\label{lem:graph_upper_bound_degree_D}
    Let $D,d,\nedge,n \in \N$ such that $D>2$ and $0 \leq \nedge \leq d \leq \frac{2n}{D}$.
    % Suppose $D\geq 2$ and $\nedge \leq d \leq o(\sqrt{n})$.
    Consider directed $D$-uniform hypergraphs with $\nedge$ unlabeled edges (parallel edges and self-loops allowed) such that the vertices have even degrees and have distinct labels in $[n]$.
    The number of such graphs is upper bounded by
    \begin{equation*}
        O(Dn)^{\frac{D \nedge}{2}} \nedge^{(\frac{D}{2}-1)\nedge}\mcom
    \end{equation*}
    if $D\nedge$ is even.
    Otherwise, there are no such graphs.
\end{lemma}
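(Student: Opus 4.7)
The plan is to follow the $D=2$ argument of \pref{lem:graph_upper_bound} with two main modifications: a parity check, and a configuration-model replacement for the cycle-counting step. First, since every edge contributes $D$ to the total vertex degree, the sum of degrees is exactly $D\nedge$; so if $D\nedge$ is odd, no such hypergraph can exist and the bound is vacuous. For the remainder, assume $D\nedge$ is even. Every vertex has even degree at least $2$, so the number of vertices $\nvertices$ satisfies $\nvertices \leq D\nedge/2$, with equality exactly when every vertex has degree $2$.

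Letting $\calG^{(D)}(\nedge,\nvertices)$ denote the set of unlabeled directed $D$-uniform hypergraphs on $\nvertices$ vertices with $\nedge$ edges and all vertex degrees even, I will first bound $|\calG^{(D)}(\nedge, D\nedge/2)|$ via a configuration model. View each edge as $D$ ordered, distinguishable slots (totaling $D\nedge$ slots) and each of the $D\nedge/2$ vertices as a pair of half-edges. Each bijection between slots and half-edges produces a labeled-edge, labeled-vertex, directed $2$-regular hypergraph; quotienting by edge permutations ($\nedge!$), vertex permutations ($(D\nedge/2)!$), and intra-vertex half-edge swaps ($2^{D\nedge/2}$) gives
\begin{equation*}
|\calG^{(D)}(\nedge, D\nedge/2)| \;\leq\; \frac{(D\nedge)!}{\nedge!\,(D\nedge/2)!\,2^{D\nedge/2}}.
\end{equation*}
Stirling's approximation reduces this ratio to $(D^{D/2} e^{1-D/2})^{\nedge} \cdot \nedge^{(D/2-1)\nedge}$ up to polynomial-in-$\nedge$ factors, which can be folded into a constant $C_D^{\nedge}$ with $C_D = O(D^{D/2})$.

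For $\nvertices < D\nedge/2$, I will apply the same contraction argument as in the $D=2$ case, with $\nedge$ replaced by $D\nedge/2$: any hypergraph in $\calG^{(D)}(\nedge,\nvertices)$ arises from a $2$-regular hypergraph in $\calG^{(D)}(\nedge, D\nedge/2)$ by partitioning its $D\nedge/2$ vertices into $\nvertices$ non-empty groups and merging, so the count multiplies by at most $(D\nedge/2)^{D\nedge/2-\nvertices}\binom{D\nedge/2}{\nvertices}$. Finally, assigning vertex labels from $[n]$ contributes $n^{\nvertices}$ per graph, and summing via the binomial identity
\begin{equation*}
\sum_{\nvertices=1}^{D\nedge/2} n^{\nvertices}\binom{D\nedge/2}{\nvertices}(D\nedge/2)^{D\nedge/2-\nvertices} = (n+D\nedge/2)^{D\nedge/2} \leq (2n)^{D\nedge/2}
\end{equation*}
(using $D\nedge/2 \leq Dd/2 \leq n$ from the hypothesis $d \leq 2n/D$), then combining with the $2$-regular bound yields $O(Dn)^{D\nedge/2} \cdot \nedge^{(D/2-1)\nedge}$, as desired.

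The main obstacle is verifying that the configuration-model count in Step 1 indeed delivers the exact exponent $(D/2-1)\nedge$ on $\nedge$ and only a base-$D$ constant raised to $D\nedge/2$. A looser approach (e.g.\ a simple $n^{D\nedge}/\nedge!$ unrestricted slot-labeling bound) produces $\nedge^{(D-1)\nedge}$, which overshoots the target by a factor $\nedge^{D\nedge/2}$. The key cancellation between $(D\nedge)!$ in the numerator and the product $\nedge!\,(D\nedge/2)!$ in the denominator must be tracked carefully through Stirling, since that cancellation is what converts the naive $\nedge^{D\nedge}$-style blow-up into vertex-labeling gains (absorbed by $n^{\nvertices}$) and vertex-identification gains (absorbed by $(D\nedge/2)!$). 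The remainder of the argument — contraction and binomial summation — translates the $D=2$ proof essentially verbatim with $\nedge$ replaced by $D\nedge/2$, and the $D$-dependent constants are benign since they fit inside the $O(Dn)^{D\nedge/2}$ factor.
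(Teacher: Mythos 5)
Your plan mirrors the paper's $D=2$ proof (unlabeled count for the $2$-regular case, contraction, then vertex labeling), whereas the paper's $D>2$ proof works directly with vertex-labeled hypergraphs via a factor-graph/degree-sequence argument. That structural difference would be fine, but there is a genuine gap in your first step: the configuration-model quotient has the inequality in the wrong direction.

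Concretely, each unlabeled $2$-regular directed $D$-uniform hypergraph $H$ on $D\nedge/2$ vertices corresponds to exactly $\frac{\nedge!\,(D\nedge/2)!\,2^{D\nedge/2}}{|\mathrm{Aut}(H)|}$ slot-to-half-edge bijections, so
\[
(D\nedge)! \;=\; \sum_{H}\frac{\nedge!\,(D\nedge/2)!\,2^{D\nedge/2}}{|\mathrm{Aut}(H)|},\qquad\text{hence}\qquad \frac{(D\nedge)!}{\nedge!\,(D\nedge/2)!\,2^{D\nedge/2}}\;=\;\sum_H \frac{1}{|\mathrm{Aut}(H)|}\;\le\;|\calG^{(D)}(\nedge,D\nedge/2)|.
\]
That is a \emph{lower} bound, not an upper bound; you only get an upper bound if every such hypergraph is automorphism-free, which fails (parallel edges and cyclic symmetries give nontrivial automorphisms). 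The formula is not even an integer in general, which already rules out your claimed inequality being tight or an overcount. Since the magnitude of your formula is what produces the exponent $\nedge^{(D/2-1)\nedge}$, this is a real logical gap rather than a removable factor: the orbit quotient can underestimate the count by a factor as large as $\max_H |\mathrm{Aut}(H)|$, and you give no control on that.

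The paper avoids this by never passing through unlabeled hypergraphs. It fixes the vertex labels first (choosing $\nvertices$ elements of $[n]$ and an even degree sequence summing to $D\nedge$), forms the \emph{labeled} multiset $V'$ of $D\nedge$ vertex-copies, and counts partitions of $V'$ into $\nedge$ ordered $D$-tuples. That count, $\frac{(D\nedge)!}{\nedge!}$, is a genuine upper bound because no part of a set partition can repeat, so the $\nedge!$ division exactly removes the ordering of the parts, and the partition-to-hypergraph map is surjective. You could salvage your strategy by making the same switch: upper-bound the number of $2$-regular hypergraphs with vertices labeled in $[D\nedge/2]$ by $\frac{(D\nedge)!}{\nedge!}$ (do not also divide by $2^{D\nedge/2}$ or $(D\nedge/2)!$ — parallel edges break both of those divisions, and they are not needed since $\binom{n}{D\nedge/2}$ already supplies the $(D\nedge/2)!$), and run contraction and the binomial-theorem summation on these labeled objects. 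Your parity observation, the contraction count $(D\nedge/2)^{D\nedge/2-\nvertices}\binom{D\nedge/2}{\nvertices}$, the Stirling reduction, and the final sum via $(n+D\nedge/2)^{D\nedge/2}\le(2n)^{D\nedge/2}$ are all fine.
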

\begin{proof}
    Note that $D\nedge$ is the total degree, which must be even.
    The number of vertices $\nvertices$ can range from $1$ to $\frac{D\nedge}{2}$.
    In order to perform the counting, we view a hypergraph $H$ as a bipartite factor graph $(V,F,E)$ with left-hand side vertex set $V$ that contains the vertices of $H$, and right-hand side vertices $F$ that contains a vertex for each hyperedge. Note, in particular, that the right-degree of the bipartite graph is $D$. And all left-degrees have to be even in the hypergraphs we intend to count.

    We directly analyze the number of $\nvertices$-vertex graphs.
    \begin{enumerate}
        \item We choose $\nvertices$ labels from $[n]$, giving us $\binom{n}{\nvertices}$.
        \item We choose the left-degrees: the degrees must be even and sum to $D\nedge$.
        This is the same as the number of ways $\nvertices$ positive integers add up to $\frac{D\nedge}{2}$, which is $\binom{\frac{D\nedge}{2}-1}{\frac{D\nedge}{2}-\nvertices}$.

        \item We add edges between $V$ and $F$ while ensuring that the degrees are consistent.
        To do so, we construct a vertex set $V'$ by including $\deg(i)$ copies of vertex $i\in V$, hence $|V'| = D\nedge$.
        Then, we add edges between $F$ and $V'$ such that each $f\in F$ has degree $D$ and each $i\in V'$ has degree 1.
        Since the factor vertices are unlabeled, there are at most $\frac{(D\nedge)!}{\nedge!}$ ways to do so.
    \end{enumerate}

    \noindent Then, combining the above and summing $\nvertices$ from $1$ to $\frac{D\nedge}{2}$,
    \begin{equation*}
        \frac{(D\nedge)!}{\nedge!} \sum_{\nvertices=1}^{\frac{D\nedge}{2}} \binom{n}{\nvertices} \binom{\frac{D\nedge}{2}-1}{\frac{D\nedge}{2}-\nvertices}
        \leq \frac{(D\nedge)!}{\nedge!} \binom{n + \frac{D\nedge}{2}-1}{\frac{D\nedge}{2}}\mper
    \end{equation*}
    Using the fact that $\frac{D\nedge}{2} \leq n$ and Stirling's approximation, we can upper bound the above by
    \begin{equation*}
        O(Dn)^{\frac{D\nedge}{2}} \nedge^{(\frac{D}{2}-1)\nedge}\mper
        \qedhere
    \end{equation*}
\end{proof}

The contributions from assigning labels to edges is exactly the same as \pref{lem:edge_labels_contribution}, except with coefficient $n^{-D\nedge}$.
Thus, we are in position to prove \pref{lem:low_degree_variance}.

\begin{proof}[Proof of \pref{lem:low_degree_variance}]
    We sum over all contributions of $|\alpha|=\nedge$ from $1$ to $d$.
    Setting $\scaling = o\Paren{\frac{1}{d\sqrt{m}}}$, we can ignore the odd terms; in fact, if $D$ is odd, the odd terms are exactly zero since $D\nedge$ must be even.
    Thus, the total contribution is
    \begin{equation*}
        \sum_{\stackrel{\alpha, \beta:}{1 \leq |\alpha| + |\beta| \leq d}} \expover{\Pl}{h_{\alpha}(\bG) h_{\beta}(\bb)}^2
        \leq \sum_{\nedge \textnormal{ even}} O\Paren{\frac{D\nedge}{n}}^{\frac{D\nedge}{2}} \Paren{\frac{m}{\nedge}}^{\frac{\nedge}{2}}
        \leq 1,
    \end{equation*}
    when $m \leq O_D\Paren{\frac{n^D}{d^{D-1}}}$.
    This completes the proof.
\end{proof}
%!TEX root = main.tex
\section{Algorithmic Thresholds at Degree 2}
\label{sec:degree_2}

In this section, we give a short proof of the following theorem that gives a sharp threshold on the number of quadratic equations $m$ required for the existence of degree-$2$ SoS refutations. 
\begin{theorem}\label{thm:deg_2_sos}
For any homogeneous quadratic polynomials $g_1, g_2, \ldots, g_m$ in $x_1, x_2, \ldots, x_n$ and real numbers $b_1, b_2,\ldots, b_m$, let $\DegreeTwoNonhom$ be the degree-$2$ SoS relaxation of the system of constraints $\{g_i(x)=\B{i}\}_{i \leq m}$. Specifically, let $\G{i} \in \R^{n \times n}$ be matrices such that $g_i(x) = x^{\top} \G{i} x$ for each $i\in[m]$. Then, the degree-$2$ SoS relaxation is the following SDP:
\begin{equation} \label{eq:sdp-deg2}
X \succeq 0, \text{ } \tr(\G{i} X) = b_i\ \text{ for all } 1 \leq i \leq m \mper
\end{equation}
Suppose each coefficient of $g_i$ is chosen to be an independent draw from the standard Gaussian distribution $\cN(0,1)$. 
Then, there is an absolute constant $C$ such that if $m \geq \frac{n^2}{4}+Cn\log n$, the semidefinite program above is infeasible with probability at least $0.49$. 
On the other hand if $m \leq \frac{n^2}{4} -Cn\log n$ then the semidefinite program above is feasible with probability at least $1-\frac{1}{n}$. 
    % With probability $1-\frac{1}{\poly(n)}$, the semidefinite program $\DegreeTwoHom$ is infeasible if $m \geq \frac{n^2}{4} + \wt{O}(n)$.
    % Further, with probability $1-\frac{1}{\poly(n)}$, $\DegreeTwoHom$ is feasible if $m \leq \frac{n^2}{4} - \wt{O}(n)$.
\end{theorem}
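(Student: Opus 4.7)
The plan is to view feasibility of \pref{eq:sdp-deg2} as the question of whether a uniformly random affine subspace of codimension $m$ in the $N = \binom{n+1}{2}$-dimensional space of symmetric matrices $\mathbb{S}^n$ intersects the PSD cone $\mathbb{S}_+^n$, and then invoke the sharp phase transition for conic feasibility problems due to Amelunxen--Lotz--McCoy--Tropp~\cite{ALMT14}. Their result pinpoints the threshold in terms of the \emph{statistical dimension} $\delta(\mathbb{S}_+^n)$ of the PSD cone and yields a transition window of width $O(\sqrt{N \log N}) = O(n\sqrt{\log n})$, which is comfortably narrower than the $Cn\log n$ slack allowed by the theorem.

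First I would set up the geometry. After symmetrizing $G_i \leftarrow (G_i + G_i^\top)/2$, the linear map $\mathcal{A}: \mathbb{S}^n \to \mathbb{R}^m$ given by $X \mapsto (\tr(G_i X))_{i \leq m}$ has i.i.d.\ rows distributed as standard Gaussians on $\mathbb{S}^n$ (with respect to the Frobenius inner product, up to a harmless diagonal rescaling absorbed into constants), and the SDP is feasible iff $\mathcal{A}^{-1}(b) \cap \mathbb{S}_+^n \neq \emptyset$. A standard homogenization (append a nonnegative coordinate $t$, consider the cone $\mathbb{S}_+^n \times \mathbb{R}_{\geq 0}$ sliced by $\mathcal{A}(X) = tb$) converts this into the ``does a uniformly random linear subspace meet a cone nontrivially'' form analyzed in~\cite{ALMT14}; the extra coordinate contributes only $O(1)$ to the statistical dimension, so the location of the threshold is unaffected. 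Their theorem then states: for any closed convex cone $\mathcal{K}$ in $\mathbb{R}^N$ and a uniformly random linear subspace $L$ of codimension $m$, the feasibility probability $\Pr[L \cap \mathcal{K} \neq \{0\}]$ exceeds $1 - e^{-\Omega((\delta(\mathcal{K}) - m)^2/N)}$ once $m \leq \delta(\mathcal{K}) - O(\sqrt{N \log N})$, and symmetrically drops below $\tfrac{1}{2}$ (with matching Gaussian-tail decay below that) once $m \geq \delta(\mathcal{K}) + O(\sqrt{N \log N})$.

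The final step is the computation $\delta(\mathbb{S}_+^n) = n^2/4 + O(n)$. By definition $\delta(\mathbb{S}_+^n) = \mathbb{E}\|\Pi_{\mathbb{S}_+^n}(W)\|_F^2$, where $W$ is a standard Gaussian in $\mathbb{S}^n$ (a suitably normalized GOE matrix). Since projection onto the PSD cone keeps only the positive part of the spectrum, $\|\Pi_{\mathbb{S}_+^n}(W)\|_F^2 = \sum_i (\lambda_i(W))_+^2$; as $W$ and $-W$ are identically distributed, this expectation equals $\tfrac{1}{2}\mathbb{E}\|W\|_F^2 = N/2 = n(n+1)/4 = n^2/4 + O(n)$. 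Plugging into the~\cite{ALMT14} bound with $C$ chosen large enough to dominate the $O(\sqrt{N \log N})$ transition window yields both halves of \pref{thm:deg_2_sos}: probability $\geq 1 - 1/n$ for feasibility below the threshold (from the exponential tail), and probability $\geq 0.49$ on the infeasibility side (obtained by crossing the $\tfrac{1}{2}$ mark above the threshold). The main (minor) obstacle is verifying that the \emph{affine} version of the phase transition, with a random Gaussian right-hand side $b$, is indeed covered by the homogenization step above; the statistical-dimension computation itself is a one-line symmetry argument on GOE eigenvalues.
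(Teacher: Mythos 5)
Your approach reaches for the same central tool as the paper's proof --- the Amelunxen--Lotz--McCoy--Tropp (ALMT) approximate kinematic formula --- and your computation of the statistical dimension of the PSD cone, $n(n+1)/4$ by eigenvalue symmetry of a Gaussian symmetric matrix, is the right one-liner (the paper just cites Table 3.1 of \cite{ALMT14}). But the homogenization hides a genuine gap on the feasibility side, and it is more than the ``minor obstacle'' you flag at the end. A nonzero $(X,t)$ in the intersection of the lifted cone (the PSD cone times $\R_{\geq 0}$) with the subspace $L=\{(X,t):\tr(G_iX)=tb_i\text{ for all }i\}$ certifies feasibility of the affine SDP only when $t>0$. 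ALMT controls only whether the intersection is nontrivial, not where it sits. And in the feasibility regime $m\leq n^2/4-Cn\log n$, applying ALMT \emph{directly} to the PSD cone and the null space $K^\perp=\{X:\tr(G_iX)=0\text{ for all }i\}$ already shows that $K^\perp$ contains a nonzero PSD matrix with high probability; every such matrix lifts into your intersection at $t=0$. So below the threshold the event ``lifted intersection nontrivial'' is triggered by $t=0$ witnesses that say nothing about solvability of the affine system, and you cannot read off feasibility from it. Closing the gap requires producing a strictly positive definite $X\in K^\perp$ and then shifting: for any $Y$ satisfying $\tr(G_iY)=b_i$ (such a $Y$ exists generically since the $G_i$ are a.s.\ linearly independent and $m<n(n+1)/2$), one has $RX+Y\succeq 0$ for $R$ large. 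That shift is exactly the paper's argument, and it is the idea your plan is missing.

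The paper avoids homogenization altogether. Writing $K=\mathrm{span}\{G_1,\dots,G_m\}$, it applies ALMT to the \emph{open} cone $S_+$ of positive definite matrices twice: once against $K$, once against $K^\perp$. When $m\leq n^2/4-Cn\log n$, $K^\perp$ contains a PD matrix $M_2$ with probability at least $1-1/n$, and $RM_2+Y$ witnesses feasibility. When $m\geq n^2/4+Cn\log n$, $K$ contains a PD matrix $M_1=\sum_ic_iG_i$, and, conditionally on the $G_i$, the scalar $\sum_ic_ib_i$ is symmetric about $0$; when it is negative (probability $1/2$, which is the source of the theorem's $0.49$), feasibility would force $0\leq\langle M_1,Y\rangle=\sum_ic_ib_i<0$. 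As a side remark, your homogenization actually \emph{does} work on the infeasibility side and gives something slightly stronger: a trivial lifted intersection in particular rules out $(X,1)$, hence any PSD solution of $\tr(G_iX)=b_i$, and ALMT gives this with probability $\geq 1-1/n$ rather than $0.49$. But the feasibility half, as written, is not justified.
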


Our proof is an immediate application of a classical work~\cite{ALMT14} on understanding phase transitions for convex programs with random data that relies on deep results from conic integral geometry~\cite{SW08}. In particular, our proof relies on the following \textit{approximate kinematic formula}.

% Our main insight is that if we treat each $\Gs$ and $X$ as flattened vectors in $\R^{n^2}$, the constraints $\brac{\Gs,X} = 0$ ensure that $X$ lies in a dimension-$(n^2-m)$ random subspace orthogonal to each $\Gs$.
% Furthermore, $X\succeq 0$ says that $X$ must lie within the PSD cone, which is a convex cone in $\R^{n^2}$.
% Thus, $\DegreeTwoHom$ is feasible if and only if the random subspace and the PSD cone has non-trivial intersection (i.e.\ share a ray).

% The exact probability that a randomly rotated convex cone non-trivially intersects a fixed convex cone is given by the \textit{conic kinematic formula} .

\begin{fact}[{\cite[Theorem I]{ALMT14}}]
\label{fact:kinematic_formula}
    Fix a tolerance $\eta \in (0,1)$. Let $C$ and $K$ be convex cones in $\R^N$, and let $Q \in \R^{N \times N}$ be a uniformly random (i.e.\ Haar distributed) orthogonal matrix. Then,
\begin{equation*}
\begin{aligned}
    \delta(C) + \delta(K) \leq N - O(\sqrt{N} \log(1/\eta)) &\imp \probover{Q}{C \cap Q K  \neq \{0\}} \leq \eta; \\
    \delta(C) + \delta(K) \geq N + O(\sqrt{N} \log(1/\eta)) &\imp \probover{Q}{C \cap Q K  \neq \{0\}} \geq 1-\eta.
\end{aligned}
\end{equation*}
Here, $Q K = \{Q z\mid z \in K\}$ is the rotation of the cone $K$ by $Q$ and $\delta(C), \delta(K)$ are \emph{statistical dimensions} of the cones $C,K$ respectively.
\end{fact}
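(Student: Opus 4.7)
The plan is to recast feasibility of the SDP \pref{eq:sdp-deg2} as the question of whether two random convex cones in a Euclidean space intersect trivially, and then invoke \pref{fact:kinematic_formula}. Work in the ambient space $V := \mathrm{Sym}_n \oplus \R$ under the Frobenius inner product, with dimension $N := n(n+1)/2 + 1$. Homogenize by introducing a scalar $t$ and defining the closed convex cone $K := \mathbb{S}_+^n \times \R_+$ and the linear subspace $L := \{(X,t) \in V : \tr(G_i X) = t b_i \text{ for all } i \in [m]\}$, which has dimension $N - m$ almost surely. The SDP is feasible iff $K \cap L$ contains a point with $t > 0$: a feasible $X_0$ gives $(X_0, 1) \in K \cap L$, and any $(X, t) \in K \cap L$ with $t > 0$ yields the feasible point $X/t$. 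The only degenerate case is when $K \cap L$ is nontrivial but supported entirely at $t = 0$; this would require $\ker \mathcal{A} \cap \mathbb{S}_+^n \neq \{0\}$ together with $b \notin \mathcal{A}(\mathbb{S}_+^n)$, which for Gaussian $b$ independent of $G_i$ has probability zero. Because the $(G_i, -b_i)$ are i.i.d.\ Gaussian vectors in $V$, the orthogonal complement $L^\perp$ is a uniformly random $m$-dimensional subspace, so (up to isometry) $L$ is exactly a Haar-rotated fixed $(N-m)$-dimensional subspace and \pref{fact:kinematic_formula} applies directly with $\delta(L) = N - m$.

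Next I would compute $\delta(K) = \delta(\mathbb{S}_+^n) + \delta(\R_+) = \delta(\mathbb{S}_+^n) + \tfrac{1}{2}$. The key quantity is $\delta(\mathbb{S}_+^n) = \E\|\Pi_{\mathbb{S}_+^n}(W)\|_F^2 = \E \sum_i \max(\lambda_i(W), 0)^2$ where $W$ is standard Gaussian on $\mathrm{Sym}_n$ under the Frobenius inner product, i.e.\ a GOE with variance $1$ on the diagonal and $1/2$ off-diagonal. For this ensemble Wigner's semicircle law gives the limiting spectral density $\rho(\lambda) = \frac{1}{\pi n}\sqrt{2n - \lambda^2}$ on $[-\sqrt{2n}, \sqrt{2n}]$. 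The substitution $\lambda = \sqrt{2n}\sin\theta$ yields
\begin{equation*}
    \int_0^{\sqrt{2n}} \lambda^2\, \rho(\lambda)\, d\lambda = \frac{4n}{\pi} \int_0^{\pi/2} \sin^2\theta \cos^2\theta\, d\theta = \frac{n}{4},
\end{equation*}
so the leading behavior is $\delta(\mathbb{S}_+^n) = n^2/4 + O(n)$ once the $O(n)$ finite-$n$ correction to the semicircle approximation is controlled via standard GOE eigenvalue concentration.

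With $\delta(K) + \delta(L) = n^2/4 + (N - m) + O(n)$, I would apply \pref{fact:kinematic_formula} twice. For the infeasibility direction take $\eta = 0.51$: the condition $\delta(K) + \delta(L) \leq N - O(\sqrt{N})$ becomes $m \geq n^2/4 + O(n)$, so $m \geq n^2/4 + Cn\log n$ forces $K \cap L = \{0\}$ (hence SDP infeasibility) with probability $\geq 0.49$. For the feasibility direction take $\eta = 1/n$: the condition $\delta(K) + \delta(L) \geq N + O(\sqrt{N}\log n)$ becomes $m \leq n^2/4 - O(n\log n)$, so $m \leq n^2/4 - Cn\log n$ forces $K \cap L \neq \{0\}$ with probability $\geq 1 - 1/n$, which translates to SDP feasibility after discarding the measure-zero $t = 0$ exception from the first paragraph. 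The main delicate point is the sharpness of the leading $n^2/4$: one must verify that the finite-$n$ GOE correction to $\delta(\mathbb{S}_+^n)$ is really $O(n)$ so that it can be absorbed into the $Cn\log n$ slack. A secondary nuisance is confirming for the feasibility direction that a nontrivial homogenized intersection does give an actual feasible $X$, which reduces to observing that whenever $\ker \mathcal{A}$ contains a positive-definite direction, any $X_1$ with $\mathcal{A} X_1 = b$ can be shifted by a large multiple of that direction to lie in $\mathbb{S}_+^n$, and that such an $X_1$ exists almost surely when $m \leq N$.
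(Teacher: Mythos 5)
There is a fundamental mismatch: the statement you were asked to prove is \pref{fact:kinematic_formula} itself --- the approximate kinematic formula of \cite{ALMT14}, a statement about \emph{arbitrary} convex cones $C,K \subseteq \R^N$ and a Haar-random rotation $Q$, asserting a phase transition for $\Pr[C \cap QK \neq \{0\}]$ at $\delta(C)+\delta(K) = N$. Your write-up never proves anything of this kind. Instead, it \emph{invokes} this fact as a black box and uses it to derive the degree-2 SoS threshold, i.e.\ it is a proof sketch of \pref{thm:deg_2_sos} (with a homogenization $K = \mathbb{S}_+^n \times \R_+$ and a semicircle-law computation of $\delta(\mathbb{S}_+^n) \approx n^2/4$), not of the fact. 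In the paper, \pref{fact:kinematic_formula} is imported without proof precisely because establishing it requires the machinery of conic integral geometry: one must introduce the conic intrinsic volumes $v_0(C),\dots,v_N(C)$, prove the exact conic kinematic (Crofton-type) formula expressing $\Pr[C \cap QK \neq \{0\}]$ in terms of these intrinsic volumes, and then prove that the intrinsic volume sequence of any convex cone concentrates around its statistical dimension with width $O(\sqrt{N})$ --- this concentration step is the technical heart of \cite{ALMT14} and nothing in your proposal touches it.

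Separately, even read as a proof of \pref{thm:deg_2_sos}, your route differs from the paper's: the paper takes $C = S_+$ (the PSD cone) and $K$ the random linear span of the $\G{i}$ (and its orthogonal complement), reads off $\delta(S_+) = \tfrac14 n(n+1)$ from Table 3.1 of \cite{ALMT14} rather than re-deriving it from GOE asymptotics, and handles feasibility/infeasibility by exhibiting a positive definite matrix in $K$ or $K^\perp$ and arguing directly about the sign of $\sum_i c_i \B{i}$, with no homogenizing coordinate $t$. Your homogenized formulation and your semicircle computation of $\delta(\mathbb{S}_+^n)$ are plausible and could be made to work (the finite-$n$ correction is indeed $O(n)$), but they answer a different question than the one posed, so the submission has to be judged as not proving the stated result.
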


We will not define statistical dimension formally in this work but note that the statistical dimension of a subspace of dimension $r$ is $r$ and that of the cone of positive semidefinite $n \times n$ matrices is $\frac{1}{4}n(n+1)$ (see Table 3.1 of \cite{ALMT14}). For background and proofs, we refer the reader to~\cite{ALMT14}.

% In this section, we analyze the degree-2 SoS relaxation of the system of random quadratic equations.
% We first focus on the homogeneous case.
% The degree-2 SoS relaxation, denoted $\DegreeTwoHom$, can be formulated as the following semidefinite program,
% \begin{equation*}
% \begin{gathered}
%     \brac{\Gs, X} = 0,\ \forall \s\in [m], \\
%     \tr(X) = 1, \\
%     X \succeq 0.
% \end{gathered}
% \end{equation*}
% Clearly, degree-2 SoS can refute $\HomSystem$ if and only if the above program is infeasible.
% To analyze the feasibility of $\DegreeTwoHom$, we use tools developed in the field of conic integral geometry (cf.~\cite{SW08}).
% Conic geometry has been used to understand phase transitions in random convex optimization problems (cf.~\cite{MT14,ALMT14}) by analyzing the intersection probability of randomly rotated convex cones.
% In a similar fashion, we show that the feasibility of $\DegreeTwoHom$ has a phase transition at $m \approx \frac{n^2}{4}$.

\begin{proof}[Proof of \pref{thm:deg_2_sos}]

    Let $S_+$ be the open convex cone of positive definite matrices. 
    Let $K$ be the linear span of the symmetric matrices $\G{1},\dots,\G{m}$ viewed as $\frac{n(n+1)}{2}$ dimensional vectors. 
    Let $K^{\perp}$ be the orthogonal complement of $K$ in $\R^{n^2}$.

    Since $0 < m < \frac{n(n+1)}{2}$, $K$ and $K^{\perp}$ have dimension $m$ and $\frac{n(n+1)}{2}-m$ with probability $1$ over the draw of the $\G{i}$s. Thus, the statistical dimension of $K$, $K^{\perp}$ is $m$ and $\frac{n(n+1)}{2}-m$ respectively. Observe that because the coefficients of $g_i$s are independent standard Gaussians, $\G{i}$s are standard Gaussian vectors, and $K$, $K^{\perp}$ are random (rotations of) subspaces of their dimension. 
    The statistical dimension of $S_+$ is $\frac{1}{4} n (n+1)$. 
    
    Applying \pref{fact:kinematic_formula} to $K$ and $K^{\perp}$ with $\eta = \frac{1}{n}$ yields that there is a constant $C>0$ such that:
    \begin{enumerate}
        \item \textbf{Case 1:} If $m \geq \frac{n^2}{4} + Cn \log n$, then, there is a positive definite matrix $M_1$ in $K$. 
        \item \textbf{Case 2:} If $m \leq \frac{n^2}{4} - Cn \log n$, with probability at least $1-1/n$, there is a positive definite matrix $M_2$ in $K^{\perp}$.
    \end{enumerate}

    Let's now condition on the existence of $M_1$/$M_2$ in the two cases and analyze the SDP \pref{eq:sdp-deg2}.
    % Let's first assume that $m \geq \frac{n^2}{4} + Cn \log n$ and let $c_1, c_2, \ldots, c_m$ be such that $M=\sum_i c_i \G{i}$ is positive definite. Such an event happens with probability at least $1-1/n$. In this case, we claim that the SDP \eqref{eq:sdp-deg2} is infeasible with probability at least $1/2$. 

    \textbf{Case 1:} Suppose for the sake of contradiction that there is a PSD $Y$ such that $\langle \G{i},Y \rangle = \B{i}$ for every $i\in[m]$. Let $M_1 = \sum_i c_i\G{i}\in K$ for $c_i\in\R$. Then, $\langle M_1, Y \rangle =\sum_i c_i \B{i}$. Now, the LHS is non-negative since $M_1, Y$ are both positive semidefinite. The RHS $\sum_i c_i \B{i}$, on the other hand, is distributed as a standard scalar Gaussian and is thus $<0$ with probability $1/2$. Thus, there can be no such $Y$ with probability at least $1/2$.

    \textbf{Case 2:} Let $M_2$ be the positive definite matrix such that $\langle M_2, \G{i} \rangle = 0$ for every $i\in[m]$. Let $Y \in \R^{n \times n}$ be any solution to $\langle \G{i}, Y \rangle =\B{i}$ for every $i$. Such a $Y$ exists since $\G{i}$s are linearly independent with probability $1$. Then, observe that for some large enough scaling $R$, $RM_2+Y$ is positive semidefinite and is feasible for the SDP \pref{eq:sdp-deg2}. 

    This completes the proof. 
\end{proof}

\section{Sum-of-Squares Lower Bounds at Degree 4}
\label{sec:degree_4_lower_bound}

In this section, we show that there is an $m = n^2/\poly(\log n)$ such that for random homogeneous quadratic polynomials $g_1, g_2, \ldots, g_m$ of degree $2$, the constraint system $\{g_i(x)=0\}_{i \leq m}$ does not admit a degree-$4$ sum-of-squares refutation. Specifically, we will establish the following dual version of such a claim:

\begin{theorem} \label{thm:main-sos-lower-bound}
Fix $m = m(n) \leq n^2/\poly(\log n)$. Let $g_1, g_2, \ldots, g_m$ be homogeneous degree-$2$ polynomials in $x_1, x_2, \ldots, x_n$ such that each coefficient of each $g_i$ is an independent draw of the standard Gaussian distribution $\cN(0,1)$. Then, with probability $1-o(1)$, there exists a degree-$4$ pseudo-distribution $\mu$ on $x_1, x_2, \ldots, x_n$ consistent with the constraint system $\{g_i(x)=0\}_{i \leq m}$. 
\end{theorem}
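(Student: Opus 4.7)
The natural strategy, following the blueprint of \cite{BHK19,HKP+17,GJJ20}, is to construct an explicit candidate degree-$4$ pseudo-distribution via pseudo-calibration from the planted distribution $\nu_P$ of \pref{def:planted_distribution} (specialized to the $b=0$ version, i.e.\ $G_i$ conditioned on $z^\top G_i z = 0$ with $z$ uniform on a scaled sphere). Concretely, I would define, for each monomial index $I$ with $|I|\le 4$,
\[
\pE_\mu[x^I] \;=\; \sum_{\substack{\alpha\in \N^{m\times n\times n}\\ |\alpha|\le \tau}} \wh{c}_{I,\alpha}\, h_\alpha(G),
\]
where $\wh{c}_{I,\alpha} = \E_{\nu_P}[x^I h_\alpha(G)]/\alpha!$ are precisely the Hermite coefficients computed (up to adapting the calculation of \pref{lem:exp_planted_h}) when the planted solution $x$ has large but controlled norm, and $\tau = \Theta(\log n)$ is a low-degree truncation parameter. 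By the Hermite identities in \pref{sec:low_degree_hardness}, only $\alpha$s corresponding to hypergraphs with all vertex-degrees even contribute, so each coefficient has a clean combinatorial form.

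Next, I would verify the pseudo-distribution axioms. Normalization ($\pE_\mu[1]=1$) is immediate from the $I=\emptyset$, $\alpha=0$ term. For the constraint axiom $\pE_\mu[x^I \cdot x^\top G_i x]=0$ for all $i$ and $|I|\le 2$, pseudo-calibration guarantees this up to the truncation error coming from Hermite indices $\alpha$ with $|\alpha|>\tau$; these are precisely the tail terms whose Hermite norm can be bounded by the same kind of shape-counting argument as in \pref{sec:bounding_variance}. One then applies a standard ``correction'' step (as in \cite{BHK19,HKP+17,GJJ20}): add a low-norm perturbation supported on trivial shapes to zero out the residual, and argue that the correction does not destroy PSDness.

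The main technical content is showing that the moment matrix $\MomentMatrix$ indexed by degree-$\le 2$ monomials and with entries $\pE_\mu[x^I x^J]$ is PSD with probability $1-o(1)$. I would decompose
\[
\MomentMatrix \;=\; \sum_{a\in \mathcal{S}} \lambda_a\, M_a,
\]
where the sum ranges over shapes $a$ (labeled hypergraphs on $U_a, V_a, W_a$ with hyperedges indexed by $[m]$), $M_a$ is the corresponding graph matrix, and $\lambda_a$ are the pseudo-calibration coefficients. Following the illustrations in Figures~\ref{fig:trivial_shapes}--\ref{fig:L4_shapes}, I would classify the shapes into: (i)~\emph{trivial shapes} where $U_a=V_a$, $E(a)=\emptyset$, and $M_a = \mathrm{Id}$, which together contribute the dominant PSD mass $\sum_{\text{trivial}} \lambda_a M_a \succeq c \cdot \mathrm{Id}$ for some $c > 0$; (ii)~\emph{spiders} (Figure~\ref{fig:spider}), whose contributions telescope via an appropriate linear combination through the $L_2, L_4$ approximators (Figures~\ref{fig:L2_shapes},~\ref{fig:L4_shapes}); and (iii)~\emph{negligible shapes} (Figure~\ref{fig:negligible_shapes}), whose norms can be bounded using trace-moment estimates on graph matrices \cite{AMP20,GJJ20}. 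The target is to prove
\[
\bigl\|\MomentMatrix - \sum_{\text{trivial}} \lambda_a M_a\bigr\| \;\le\; o(1)\cdot \bigl\|\sum_{\text{trivial}} \lambda_a M_a\bigr\|,
\]
which yields $\MomentMatrix \succeq 0$.

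The hard part, and the place where the proof must improve on prior work, is pushing $m$ all the way to $n^2/\poly(\log n)$ rather than stopping at the $n^{1.5}$ barrier of \cite{GJJ20}. The obstruction is a family of \emph{disconnected shapes} (Figure~\ref{fig:disconnected_shapes}) arising when $M_{a_1}M_{a_2}^\top$ is expanded: na\"ive norm bounds on these shapes overcount their contribution. I would therefore introduce a refined charging scheme specifically for disconnected shapes, grouping each disconnected shape together with the ``collapsed'' connected shapes (as in the rightmost term of Figure~\ref{fig:disconnected_shapes}) that arise by identifying separator vertices, and showing that within each group the signed sum of contributions produces a PSD (or nearly PSD) operator rather than a negative eigenvalue. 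In parallel, the special shape $a^*$ of Figure~\ref{fig:special_shape}, whose norm is $\widetilde{O}(n^2)$, has to be handled delicately because it is the single shape whose norm competes with the trivial mass when $m \sim n^2/\polylog$; I expect its contribution to be absorbed by the trivial shapes through an explicit PSD domination argument using its rank-$1$ structure. Putting these ingredients together, combined with polylogarithmic slack to handle union bounds over all $n^{O(1)}$ shapes of size $\le \tau$, should complete the lower bound at $m = n^2/\poly(\log n)$.
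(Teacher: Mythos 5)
Your high‑level plan — pseudo‑calibrate from the $b=0$ planted distribution, expand the moment matrix into graph matrices of shapes, classify shapes into trivial/negligible/spider/disconnected, and then patch the residual constraint violation — is indeed the route the paper takes. However, there are two concrete misfires in the middle of the argument that would stop a careful execution.

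\textbf{The target inequality is false and the spider must be killed, not dominated.} You propose to show
$\bigl\|\MomentMatrix - \sum_{\text{trivial}} \lambda_a M_a\bigr\| \le o(1)\cdot \bigl\|\sum_{\text{trivial}} \lambda_a M_a\bigr\|$
as the PSDness target. But the spider (Figure~\ref{fig:spider}) has $\lambda_{\spider}=n^{-4}$ and $\|M_{\spider}\|=\wt{O}(n^2)$, so $\|\lambda_{\spider}M_{\spider}\|=\wt{O}(n^{-2})$, which is comparable to (not $o(1)$ of) the trivial contribution $n^{-2}\mathrm{Id}$ in the $(2,2)$ block. This is not a coincidence: $\MomentMatrix$ satisfies the constraints and hence necessarily has a non‑trivial null space, so no decomposition can dominate the full matrix by a scaled identity. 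Your earlier remark about the spider ``telescoping via $L_2,L_4$'' is the right instinct, but you must then change the target: after establishing $\lambda_{\spider}M_{\spider}\approx n^{-4}L_2^\top L_2$ up to small errors, you invoke the fact (Lemma~\ref{lem:kill_spider}) that subtracting any $A$ with $\MomentMatrix A=0$ preserves PSDness, and only then does the residual become dominated by trivial plus negligible plus the disconnected rank‑one piece.

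\textbf{The special shape $a^*$ does not appear in $\MomentMatrix$ at all.} You write that $a^*$ ``is the single shape whose norm competes with the trivial mass'' and expect it to be ``absorbed by the trivial shapes through a PSD domination argument.'' This is a misreading: $a^*$ has circle ($\circle{}$) vertices in $U_{a^*}$ and $V_{a^*}$, while every shape contributing to $\MomentMatrix$ has $U_a,V_a$ consisting only of square vertices (condition~1 of Definition~\ref{def:valid_shapes}), because $\MomentMatrix$ is indexed by monomials in $x$. The shape $a^*$ shows up in an entirely different part of the proof — in the expansion of $QQ^\top$ when bounding the nonzero singular values of the ``constraint check'' matrix $Q$ for the projection step $\pE_{\text{fix}}=\pE-Q^\top(QQ^\top)^\dag Q\pE$. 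There $M_{a^*}$ appears once in $L_4L_4^\top$ and once with opposite sign in $NN^\top$ (where $N$ spans a null space of $Q$), and the two cancel (Lemmas~\ref{lem:LLT} and~\ref{lem:null_space_NNT}); there is no domination argument for $a^*$ because it never threatens PSDness of the moment matrix. Your proposal treats the ``fixing'' step as a routine afterthought, but the $a^*$ cancellation is precisely what makes the nonzero singular values of $Q$ controllable, and thus what lets the correction stay negligible.

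On the disconnected shapes, your instinct to group each disconnected shape with its collapsed cousins is consistent with the paper's plan, but the PSDness there does not come from per‑group signed cancellation. Rather, the paper writes $M_{(a_1,a_2^\top)}=M_{a_1}M_{a_2}^\top+\text{(negligible collapsed shapes)}$ and then observes that summing over all pairs of left one‑sided shapes reconstructs the rank‑one block $\MomentMatrix_{00}\cdot vv^\top$ with $v=(1,0,\MomentMatrix_{20}/\MomentMatrix_{00})$; positivity comes from this outer‑product structure across the whole sum, not from within any single group.

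Finally, a small point: the planted $z$ is drawn uniformly from the scaled hypercube $\NormCube{n}$, not a sphere, and the identities $x_i^2=1/n$ are used explicitly when constructing $L_2$ and $L_4$.
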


We will prove \pref{thm:main-sos-lower-bound} by giving an explicit construction of a pseudo-distribution $\mu$ satisfying the requirements of the theorem. Our construction of $\mu$ will rely on the standard technique of \pc and will use the planted distribution constructed in the previous section. Our analysis adapts the high-level analysis strategy invented in~\cite{GJJ20} who proved a sum-of-squares lower bound for optimizing the Sherrington-Kirkpatrick Hamiltonian. The details of this strategy in our setting are somewhat different.

\paragraph{Candidate \pd.}

We construct a candidate \pd $\mu$ based on the \pc method, using the planted distribution $\nu_P$ in \pref{def:planted_distribution} (with $D=2$ and $\scaling = 0$).
In a nutshell, the \pc method is a mechanical way to construct each entry of the candidate pseudo-moment matrix based on $\nu_P$.

\begin{definition}[Candidate \pd] \label{def:candidate-pd}
    Fix $m = m(n) \leq n^2 / \poly(\log n)$ and truncation threshold $\tau = \poly(\log n)$.
    Given $\bG$ sampled from the null distribution $\nu_N$,
    we define the \pd $\mu$ over $\NormCube{n}$ (as a function of $G$) by describing the \pe of all degree $\leq 4$ monomials:
    % the degree-4 moment matrix $\MomentMatrix$ of the candidate \pd $\mu$ is defined as follows:
    for $I \subseteq[n]$ and $|I| \leq 4$,
    \begin{equation*}
        \pE_{\mu}[x^I] \seteq \sum_{\stackrel{\alpha\in \N^{m\times n\times n}}{|\alpha| \leq \tau}} \expover{(\bG',\bz)\sim \nu_P}{\bz^{I} h_{\alpha}(\bG')} \cdot \frac{h_{\alpha}(\bG)}{\alpha!} \mper
    \end{equation*}
\end{definition}
Note that we have the ``normalized'' booleanity constraint $x_i^2 = \frac{1}{n}$.
Our final construction that yields \pref{thm:main-sos-lower-bound} will be obtained by a small perturbation of the construction in \pref{def:candidate-pd}.

% In other words, we write each $\pE_{\mu}[x^I]$ (as a function of $\bG$) in the Fourier basis where the Fourier coefficients $\wh{\pE[x^I]}(\alpha) \seteq \expover{(\bG',\bz)\sim \nu_P}{\bz^{I} h_{\alpha}(\bG')}$ are given by the planted distribution.

To analyze this construction, it is helpful to study a matrix -- the moment matrix -- associated with the pseudo-distribution.

\paragraph{The Moment Matrix.} The moment matrix $\MomentMatrix$ of $\mu$ is a matrix indexed by subsets $I,J\subseteq [n]$ of size $\leq 2$ and entries defined by:
\begin{equation*}
    \MomentMatrix(I,J) \seteq \pE_{\mu}[x^{I+J}] = \sum_{\stackrel{\alpha\in \N^{m\times n\times n}}{|\alpha| \leq \tau}} \expover{(\bG',\bz)\sim \nu_P}{\bz^{I+J} h_{\alpha}(\bG')} \cdot \frac{h_{\alpha}(\bG)}{\alpha!} \mper
\end{equation*}

We can explicitly compute the coefficient of the Hermite polynomial $h_{\alpha}(\bG)$ in the above expression for $ \MomentMatrix(I,J)$ as follows.
Again, we will use $\s$ to denote an index in $[m]$ and $i,j$ to denote indices in $[n]$.
By the computation we did in the context of our low-degree lower bounds, specifically \pref{lem:exp_planted_h} (setting $\beta=0$ and $\scaling=0$), we obtain that for any $I, J\subseteq [n]$ and any $\alpha\in \N^{m\times n \times n}$,
\begin{equation*} \label{eq:lambda_explicit}
    \lambda_{\alpha,I,J} \seteq \frac{1}{\alpha!}\expover{(\bG', \bz) \sim \Pl}{\bz^{I+J} h_{\alpha}(\bG')}
    = (-1)^{|\alpha|/2} n^{-|\alpha| - \frac{|I|+|J|}{2}} \prod_{\s=1}^m (|\Alphas|-1)!! \cdot \frac{1}{\alpha!}
    \numberthis
\end{equation*}
if $|\Alphas|$ is even for all $\s\in[m]$ and $\Delta_i + I_i + J_i$ is even for all $i\in[n]$ (here we denote $I_i \seteq \1\{i\in I\}$), and 0 otherwise
(recall that $\Delta \in \N^n$ where $\Delta_i \seteq \sum_{\s=1}^m \sum_{j=1}^n \Alphasij + \AlphaEntries{\s}{ji}$, interpreted as the total degree of vertex $i$).
Thus, we have
\begin{equation*}
    \begin{gathered}
    \MomentMatrix(I,J) \seteq \sum_{\stackrel{\alpha: |\alpha|\leq \tau}{|\Alphas| \text{ even, } \Delta_i +I_i + J_i \text{ even}}}
    \lambda_{\alpha,I,J} h_{\alpha}(\bG).
    % \text{where }  \lambda_{\alpha,I,J} \seteq (-1)^{|\alpha|/2} n^{-|\alpha| - \frac{|I|+|J|}{2}} \prod_{\s=1}^m (|\Alphas|-1)!! 
    % \cdot \frac{1}{\alpha!}
    \end{gathered}
\end{equation*}
Note the $1/\alpha!$ factor in \pref{eq:lambda_explicit} is there because we use the unnormalized Hermite polynomials.
By an upper bound on the double factorial (\pref{fact:double_factorial_bound}),
\begin{equation*} \label{eq:lambda_bound}
    |\lambda_{\alpha,I,J}| \leq n^{-|\alpha|-\frac{|I|+|J|}{2}} \left(\frac{|\alpha|}{2}\right)^{|\alpha|/2}.
    \numberthis
\end{equation*}

Keep in mind that $\MomentMatrix$ will only approximately satisfy the conditions of a pseudo-moment, e.g.\ $\MomentMatrix(\varnothing, \varnothing) \approx 1$ and $\MomentMatrix(\{i\},\{i\}) \approx \frac{1}{n}$.
However, we will show that we can ``fix'' the moment matrix such that it represents a valid \pd and satisfies all constraints. Note that the positivity property, i.e., $\pE_{\mu}[q^2] \geq 0$ for every degree-$2$ polynomial $q$ is equivalent to the positive semidefiniteness of the moment matrix $\MomentMatrix$ of $\mu$.

\begin{lemma} \label{lem:M_E_PSD}
    There exist constants $C_1, C_2 > 0$ such that if $m = n^2 / \log^{C_1} n$ and $\tau = \log^{C_2} n$, then there exists a correction matrix $\calE$ such that $\MomentMatrix - \calE$ satisfies all constraints $\{g_\s(x) = 0\}_{\s\leq m}$ and that $\MomentMatrix - \calE \succeq 0$.
\end{lemma}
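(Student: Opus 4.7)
\noindent\textbf{Proof plan for \pref{lem:M_E_PSD}.}
The plan is to expand $\MomentMatrix$ in the graph-matrix (\emph{shape}) basis as $\MomentMatrix = \sum_a \lambda_a M_a$, where each shape $a$ packages the combinatorial data $(U_a, V_a, E(a))$ exhibited in the figures, $\lambda_a$ is read off from \pref{eq:lambda_explicit}, and $M_a$ is the graph matrix whose entries collect the Hermite monomials $h_\alpha(G)$ consistent with the embedding of $a$. The goal is to reorganize the sum as $\MomentMatrix = T + P - \calE_0$, where $T$ is a ``trivial'' diagonal mass of order $\Omega(1/n)$ on the degree-one block and $\Omega(1/n^2)$ on the degree-two block, $P$ is manifestly PSD, and $\calE_0$ has operator norm much smaller than $1/n$. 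The desired correction $\calE$ will then be $\calE_0$ plus a further small perturbation $\calE_{\mathrm{fix}}$ enforcing the constraints exactly.

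First, I would isolate the \emph{trivial shapes} of \pref{fig:trivial_shapes}, which contribute exactly $\frac{1}{n}\Id$ and $\frac{1}{n^2}\Id$ to the two diagonal blocks of $\MomentMatrix$ and form the leading mass $T$. Next, I would partition the remaining shapes into three buckets: \emph{negligible} shapes (\pref{fig:negligible_shapes}), for which \pref{eq:lambda_bound} together with the trace-moment norm bound on $\|M_a\|$ gives $|\lambda_a|\|M_a\| \leq n^{-\Omega(|E(a)|)}\poly(\tau)$; \emph{spider} shapes (\pref{fig:spider}), which one pairs with their collapsed images and cancels via the ``spider-killing'' step of \cite{GJJ20}; and \emph{disconnected} shapes (\pref{fig:disconnected_shapes}), which require a tailored argument. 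With $m \leq n^2/\log^{C_1} n$ and $\tau = \log^{C_2} n$ for $C_1$ sufficiently large relative to $C_2$, the sum of the negligible contributions over all $|\alpha|\leq\tau$ telescopes into a geometric series bounded by $o(1/n)$ and is absorbed into $\calE_0$.

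The main obstacle is the disconnected shapes, which are precisely what kept \cite{GJJ20} from the optimal threshold. My plan is a dedicated charging scheme: when a disconnected shape $a = a_1 \sqcup a_2$ appears, I would regroup it with the collapsed shapes obtained by identifying separators (and possibly endpoint pairs) as in the second and third diagrams of \pref{fig:disconnected_shapes}, so that the combined contribution matches the expansion of an outer product $L L^\top$ for a block matrix $L$ built from shapes of the type $L_2, L_4$ in \pref{fig:L2_shapes} and \pref{fig:L4_shapes}. This $L L^\top$ structure is manifestly PSD and is absorbed into $P$. The special shape $a^*$ of \pref{fig:special_shape}, with $\|M_{a^*}\|=\wt{O}(n^2)$ and $|\lambda_{a^*}|=n^{-3}$, sits at the borderline scale $\wt{O}(1/n)$; I would handle it by a direct low-rank estimate and absorb it into the leading $\frac{1}{n}\Id$ mass of $T$. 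The expansion identities from \pref{fig:LTL} will be the bookkeeping tool that makes the disconnected-vs-collapsed accounting exact.

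Finally, to enforce the constraints exactly I would project $T + P - \calE_0$ orthogonally onto the affine subspace cut out by $\{\pE_\mu[g_\s(x)\cdot x^K] = 0 : \s \leq m,\ |K|\leq 2\}$ and let $\calE_{\mathrm{fix}}$ be the least-norm correction realizing this projection. The only source of constraint violation in the pseudo-calibrated construction is the truncation $|\alpha|\leq \tau$, whose residual decays as $n^{-\Omega(\tau)}$ by the same shape-norm estimates used above. A subtle point is bounding the condition number of the constraint-evaluation operator on the relevant invariant subspace, which I expect to follow from the row-rank lower bound for $\MnullsatzNonhom$ established in \pref{lem:M_rows_lin_indep} applied at degree $4$; this gives a polynomially-bounded right inverse and hence $\|\calE_{\mathrm{fix}}\| = o(1/n)$. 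Combining, $\calE = \calE_0 + \calE_{\mathrm{fix}}$ satisfies $\|\calE\| \ll \|T\|_{\min}$, so $\MomentMatrix - \calE = T + P - \calE$ remains PSD while exactly satisfying the constraints.
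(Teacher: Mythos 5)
Your high-level outline (decompose into shapes, isolate the trivial diagonal mass, bound negligibles by a geometric series, kill the spider, charge disconnected shapes to a rank-one PSD outer product, then project to fix the constraints) matches the paper's strategy at a coarse level. But there are two substantive gaps, one of which is a misreading of the structure of the argument.

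First, the special shape $a^*$ is not a shape appearing in $\MomentMatrix$ at all. Its left and right vertex sets contain circle vertices (the rows/columns of $M_{a^*}$ are indexed by triples $\{\s, i, j\}$), which violates condition~\ref{cond:Ua_Va} of $\ValidShapes$: moment-matrix shapes must have $U_a, V_a \subseteq \calS$. The shape $a^*$ lives inside the analysis of the constraint-checking operator $Q$, not inside $\MomentMatrix$, so there is nothing to ``absorb into the leading $\frac{1}{n}\Id$ mass of $T$.'' If $a^*$ really were a $\MomentMatrix_{11}$-shape with $|\lambda_{a^*}|\,\|M_{a^*}\| = \wt\Theta(1/n)$ it would \emph{not} be negligible against the trivial $\frac{1}{n}\Id$, and a ``direct low-rank estimate'' would not save you; the actual resolution is that it simply does not appear there.

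Second, and more seriously, your plan to bound the condition number of the constraint operator by invoking \pref{lem:M_rows_lin_indep} at degree $4$ does not work. That lemma proves $\MnullsatzNonhom$ is full row rank precisely when $m \geq \Omega(n^D/d^{D-1})$, i.e.\ $m = \Omega(n^2)$ for $D=2,\,d=4$. In the lower-bound regime $m \leq n^2/\poly(\log n)$, the hypothesis fails, and indeed $Q$ \emph{must} have a nontrivial null space (otherwise there would be a Nullstellensatz refutation and no valid pseudo-distribution could exist). What one actually needs is a lower bound on the smallest \emph{nonzero} singular value of $Q$. The paper achieves this by exhibiting an explicit matrix $N$ in the left null space of $L_4$ (built from the degree-$4$ syzygies $\pE[(x^\top G_{\s_1}x)\,g_{\s_2}(x)] - \pE[(x^\top G_{\s_2}x)\,g_{\s_1}(x)]=0$) and showing that $L_4L_4^\top + NN^\top$ is $\Omega(n^2)\cdot\Id$ plus PSD plus small error. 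The crucial cancellation that makes this possible is that the borderline shape $a^*$ appears with coefficient $+1$ in $L_4L_4^\top$ and $-1$ in $NN^\top$, so the two contributions annihilate. Without the $N$ matrix and the $a^*$ cancellation there is no way to conclude that the nonzero spectrum of $QQ^\top$ is bounded away from zero, and hence no way to control $\|\calE_{\mathrm{fix}}\|$.

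A smaller point: the paper charges the disconnected shapes in $\MomentMatrix_{22}$ to the rank-one matrix $\MomentMatrix_{00}\cdot vv^\top$ with $v = (1, 0, \MomentMatrix_{20}/\MomentMatrix_{00})$, i.e.\ an outer product formed from the first column of the moment matrix itself, using the identity $\lambda_{(a_1,a_2^\top)} M_{(a_1,a_2^\top)} = (\lambda_{a_1}M_{a_1})(\lambda_{a_2}M_{a_2})^\top$ up to collapsed-shape errors. Your proposal to build the PSD part from $L_2, L_4$ conflates the constraint operator with this moment-matrix outer product; they are different matrices with different index sets, and the disconnected-shape accounting lives entirely inside $\MomentMatrix$.
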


This lemma is the bulk of the proof of \pref{thm:main-sos-lower-bound} and requires a relatively technically involved argument. In order to prove PSDness of $\MomentMatrix$ we need to analyze its spectrum. This is somewhat challenging as the matrix has dependent random entries. Our proof relies on a strategy invented in previous works (starting with~\cite{DBLP:conf/focs/BarakHKKMP16} and built further in~\cite{HKP+17,GJJ20}) that decomposes moment matrices built via pseudo-calibration into a sum of structured random matrices (called \emph{graph matrices}) that are helpful in spectral analysis. We start with a brief background of graph matrices specialized to our setting before giving an outline of our proof. 

% \pref{lem:M_E_PSD} implies \pref{thm:main-sos-lower-bound}.
% We will prove \pref{lem:M_E_PSD} in the subsequent sections.

\subsection{Background on graph matrices}
\label{sec:graph_matrices}

Our notations and definitions follow that of \cite{AMP20,GJJ20} who also studied with graphical matrices when the input data is random Gaussian.

We represent each Hermite index $\alpha\in \N^{m\times n\times n}$ as a \textit{3-uniform hypergraph} with two types of vertices: circles $\circle{}$ and squares $\square{}$.
Each square $\square{i}$ has a label $i\in[n]$, and each circle $\circle{\s}$ has a label $\s\in [m]$.
A nonzero entry $\Alphasij$ is represented by a hyperedge $(\square{i},\square{j}, \circle{\s})$.
See \pref{fig:alpha_hypergraph} for an example.
Note that the order of $\square{i}$ and $\square{j}$ matters since we allow $\AlphaEntries{\s}{ij} \neq \AlphaEntries{\s}{ji}$, but for simplicity we don't draw it out explicitly.

\begin{figure}[h!]
    \centering
    \begin{subfigure}[b]{0.4\textwidth}
        \centering
        \begin{tikzpicture}
        \draw
            node at (0,0) [squarenode] (i1) {$1$}
            node at (0,\VerticalSep) [squarenode] (i2) {$2$}
            node at (0,2*\VerticalSep) [squarenode] (i3) {$3$}
            node at (\HorizontalSep,\VerticalSep/2) [circlenode] (s1) {$1$}
            node at (\HorizontalSep,\VerticalSep*1.5) [circlenode] (s2) {$4$}
        ;
        \Hyperedge{i1}{i2}{s1}{\HorizontalSep/2}{\VerticalSep/2}{2}
        \Hyperedge{i2}{i3}{s2}{\HorizontalSep/2}{\VerticalSep*1.5}{3}
        \end{tikzpicture}
        \caption{$\AlphaEntries{1}{12} = 2$ and $\AlphaEntries{4}{23} = 3$.}
    \end{subfigure}
    \quad
    \begin{subfigure}[b]{0.4\textwidth}
        \centering
        \begin{tikzpicture}
        \draw
            node at (0,0) [squarenode] (i1) {$1$}
            node at (0,\VerticalSep) [squarenode] (i2) {$2$}
            node at (0,2*\VerticalSep) [squarenode] (i3) {$3$}
            node at (\HorizontalSep,\VerticalSep) [circlenode] (s1) {$1$}
        ;
        \ParallelEdge{i1}{s1}{\HorizontalSep/2}{\VerticalSep/2}{2}
        \Hyperedge{i2}{i3}{s1}{\HorizontalSep/2}{\VerticalSep*1.5}{}
        \end{tikzpicture}
        \caption{$\AlphaEntries{1}{11} = 2$ and $\AlphaEntries{1}{23} = 1$.}
    \end{subfigure}
    \caption{Examples of $\alpha\in\N^{m\times n \times n}$ represented as hypergraphs.}
    \label{fig:alpha_hypergraph}
\end{figure}
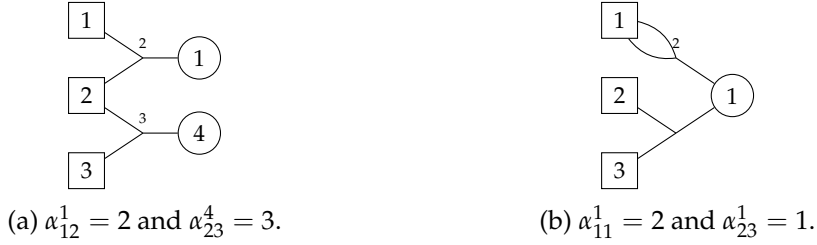

Next, we define \textit{ribbons} and \textit{shapes} (see Definitions 2.9--2.12 in \cite{GJJ20}).
Denote $\calS\seteq \{\square{i}: i\in[n]\}$ and $\calC \seteq \{\circle{\s}: \s\in[m]\}$.
A ribbon $R$ is simply a hypergraph $(V(R), E(R))$ of some $\alpha$ (as in \pref{fig:alpha_hypergraph}) with a set of ``left'' and ``right'' vertices $A_R, B_R\subseteq V(R)$.
Each ribbon defines a matrix with a single entry.
\begin{definition}[Ribbons] \label{def:ribbon}
    A ribbon is a 3-uniform hypergraph $R = (V(R), E(R), A_R, B_R)$ such that $V(R) \subseteq \calS \cup \calC$ contains labeled square and circle vertices, and $A_R, B_R\subseteq V(R)$ (not necessarily disjoint).
    The edges in $E(R)$ are labeled and must be connected to two square vertices and one circle vertex.
\end{definition}

\begin{definition}[Matrix of a ribbon]
    Let a ribbon $R = (V(R), E(R), A_R, B_R)$, and let $\alpha\in \N^{m\times n \times n}$ be the multiset represented by $(V(R), E(R))$.
    The matrix of a ribbon $M_R$, indexed by subsets of $\calS \cup \calC$, is defined as
    \begin{equation*}
        M_R(I,J) = \begin{cases}
            h_\alpha(\bG) & I = A_R, J = B_R, \\
            0 &  \text{otherwise}.
        \end{cases}
    \end{equation*}
\end{definition}

The \textit{shape} is a ribbon with the labels of each vertex removed, i.e.\ ribbons with the same hypergraph structure but different labels have the same shape.
\begin{definition}[Shape] \label{def:shape}
    A shape is a 3-uniform hypergraph $a = (V(a), E(a), U_a, V_a)$ where $V(a)$ contains unlabeled circle and square vertices and $U_a,V_a\subseteq V(a)$ (not necessarily disjoint).
    The edges in $E(a)$ are labeled and must be connected to two square vertices and one circle vertex.

    We call $U_a,V_a$ the ``left'' and ``right'' vertices.
    Moreover, define $W_a \seteq V(a) \setminus (U_a \cap V_a)$ to be the ``middle'' vertices of the shape and $\Wiso$ to be the isolated vertices in $W_a$.
\end{definition}

\begin{definition}[Graph matrix] \label{def:graph_matrix}
    The matrix of a shape $M_a$ is defined as
    \begin{equation*}
        M_a \seteq \sum_{R: \textnormal{ ribbon of shape $a$}} M_R.
    \end{equation*}
\end{definition}

Ribbons and shapes are best explained by examples.
Consider the ribbon $R$ and shape $a$ in \pref{fig:ribbons_shapes}.
The matrix $M_R$ has entries $M_R(I, J) = h_2(\GEntries{1}{12}) h_3(\GEntries{4}{23})$ if $I=\{1\}$, $J = \{3\}$, and $0$ otherwise.
The graph matrix $M_a$ is a sum of all ribbons of shape $a$, including $R$.
Thus, $M_a(\{i\}, \{j\}) = \sum_{k\in[n], k\neq i,j}\sum_{\s_1 \neq \s_2\in [m]} h_2(\GEntries{\s_1}{ik}) h_3(\GEntries{\s_2}{kj})$ for $i\neq j$.

\begin{figure}[ht!]
    \centering
    \begin{subfigure}[b]{0.4\textwidth}
        \centering
        \begin{tikzpicture}
        \draw
            node at (0,0) [squarenode] (i1) {$1$}
            node at (0,\VerticalSep) [squarenode] (i2) {$2$}
            node at (0,2*\VerticalSep) [squarenode] (i3) {$3$}
            node at (\HorizontalSep,\VerticalSep/2) [circlenode] (s1) {$1$}
            node at (\HorizontalSep,\VerticalSep*1.5) [circlenode] (s2) {$4$}
        ;
        \Hyperedge{i1}{i2}{s1}{\HorizontalSep/2}{\VerticalSep/2}{2}
        \Hyperedge{i2}{i3}{s2}{\HorizontalSep/2}{\VerticalSep*1.5}{3}
        \LeftVertices{0}{0}{1}{$A_R$}
        \RightVertices{0}{2*\VerticalSep}{1}{$B_R$}
        \end{tikzpicture}
        \caption{Ribbon $R$.}
    \end{subfigure}
    \quad
    \begin{subfigure}[b]{0.4\textwidth}
        \centering
        \begin{tikzpicture}
        \draw
            node at (0,0) [squarenode] (i1) {$i$}
            node at (0,\VerticalSep) [squarenode,separator] (i2) {$k$}
            node at (0,2*\VerticalSep) [squarenode] (i3) {$j$}
            node at (\HorizontalSep,\VerticalSep/2) [circlenode] (s1) {$\s_1$}
            node at (\HorizontalSep,\VerticalSep*1.5) [circlenode] (s2) {$\s_2$}
        ;
        \Hyperedge{i1}{i2}{s1}{\HorizontalSep/2}{\VerticalSep/2}{2}
        \Hyperedge{i2}{i3}{s2}{\HorizontalSep/2}{\VerticalSep*1.5}{3}
        \LeftVertices{0}{0}{1}{$U_a$}
        \RightVertices{0}{2*\VerticalSep}{1}{$V_a$}
        \end{tikzpicture}
        \caption{Shape $a$. $\MVS = \{\square{k}\}$.
        \label{fig:first_shape_example}
        }
    \end{subfigure}
    \caption{Example of a ribbon and shape. The minimum vertex separator of a shape is colored green.}
    \label{fig:ribbons_shapes}
\end{figure}
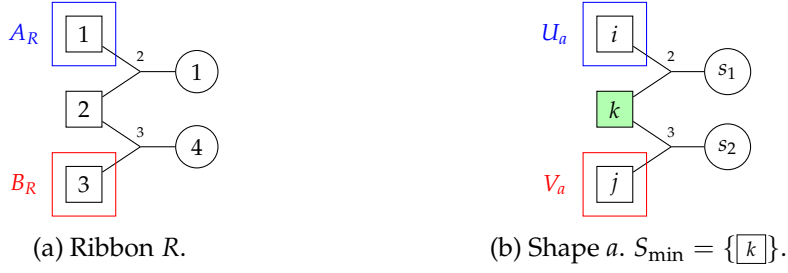

\begin{definition}[Transpose of a shape]
    The transpose of a shape $a = (V(a),E(a),U_a,V_a)$ is defined as $a^\top \seteq (V(a), E(a), V_a, U_a)$.
    This implies that $M_{a} = (M_{a^\top})^\top$.
\end{definition}

\paragraph{Graph matrix norm bounds.}
We will require spectral norm bounds of graph matrices.
We can directly use the norm bounds from \cite{AMP20}, which are obtained using the trace power method.
First, define the weights of square and circle vertices: $w(\square{}) = 1$ and $w(\circle{}) = \log_n(m)$.
This is defined such that for any shape $a$ and any subsets $S, C$ of square and circle vertices,
$n^{w(S) + w(C)} = n^{|S|} m^{|C|}$, which is roughly the number of ways you can label $S,C$;
such quantities naturally arise in trace moment calculations.

Next, we define the \textit{minimum vertex separator}:
\begin{definition}[Minimum vertex separator]
    For a shape $a$, a set $S \subseteq V(a)$ is a vertex separator if all paths from $U_a$ to $V_a$ pass through $S$.
    A minimum vertex separator $\MVS$ is the smallest weight vertex separator.
\end{definition}

See \pref{fig:first_shape_example} for example; in our figures the minimum vertex separator is colored green.
Note that by definition, $U_a \cap V_a$ must be in the minimum vertex separator.
Using the norm bounds from \cite[Corollary 8.16]{AMP20} and the same calculations from \cite[Appendix A]{GJJ20}), we have
\begin{proposition} \label{prop:graph_matrix_norm_bound}
    With probability over $1-o(1)$, for all shapes $a$ the graph matrix satisfies
    \begin{equation*}
        \|M_a\| \leq (|V(a)| \cdot |E(a)| \cdot \log n)^{O(|V(a)| + |E(a)|)} \cdot n^{\frac{w(V(a)) - w(\MVS) + w(\Wiso)}{2}} = \wt{O}\left( n^{\frac{w(V(a)) - w(\MVS) + w(\Wiso)}{2}} \right).
    \end{equation*}
\end{proposition}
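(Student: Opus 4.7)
The plan is to apply the trace power method for graph matrices with Gaussian entries established in~\cite{AMP20} (Corollary~8.16) and adapt the bookkeeping from~\cite[Appendix~A]{GJJ20} to our $3$-uniform hypergraph setting with both square (coordinate) and circle (equation-index) vertices. Fix a shape $a$, and bound the spectral norm via the even moment inequality
\begin{equation*}
\E\|M_a\|^{2q}\leq \E\tr\bigl((M_a M_a^\top)^q\bigr),
\end{equation*}
where $q=\Theta(\log n/(|V(a)|+|E(a)|))$ will be chosen at the end. Expanding the trace produces a sum over closed ribbon walks $R_1,\ldots,R_{2q}$ alternating between $a$ and $a^\top$ such that the right boundary of $R_k$ matches the left boundary of $R_{k+1}$.

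By Hermite orthogonality under the Gaussian measure, only walks whose combined edge multiset $\alpha_{R_1}+\cdots+\alpha_{R_{2q}}$ is even on every hyperedge contribute. Following~\cite[Appendix~A]{GJJ20}, labels along a minimum vertex separator are shared across every ribbon of the walk and are therefore paid only once rather than $q$ times, yielding a factor of $n^{(w(V(a))-w(\MVS))q}$ instead of the naive $n^{w(V(a))q}$. Isolated middle vertices contribute an additional $n^{w(\Wiso)q/2}$: each such vertex is confined to a single ribbon and its incident edges pair within that ribbon, producing (via \pref{fact:hermite_zero} and \pref{fact:double_factorial_bound}) a double-factorial contribution of size $n^{w(\Wiso)/2}$ per ribbon. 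Bounding the Wick-matching prefactor and Hermite coefficient sizes by $(|V(a)|\cdot|E(a)|\cdot\log n)^{O((|V(a)|+|E(a)|)q)}$ gives
\begin{equation*}
\E\tr\bigl((M_a M_a^\top)^q\bigr)\leq (|V(a)|\cdot|E(a)|\cdot\log n)^{O((|V(a)|+|E(a)|)q)}\cdot n^{q(w(V(a))-w(\MVS)+w(\Wiso))}.
\end{equation*}

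Taking $2q$-th roots and plugging in $q=\Theta(\log n/(|V(a)|+|E(a)|))$ converts the prefactor into a $\wt{O}(1)$ multiplicative loss and gives the claimed per-shape bound with probability $1-n^{-\omega(1)}$. A union bound completes the argument: the pseudo-calibration sum is truncated at $|\alpha|\leq \tau=\poly\log n$, so only shapes with $|V(a)|+|E(a)|\leq \poly\log n$ need to be considered, and there are at most $2^{\poly\log n}$ of them, which is absorbed by the $n^{-\omega(1)}$ failure probability. The main obstacle is handling isolated middle vertices cleanly: they do not lie on any $U_a$-to-$V_a$ path and so are invisible to the separator-based counting, so their contribution must be pulled out of the Wick expansion separately and carefully attributed to the $+w(\Wiso)$ term in the exponent rather than being absorbed into the polylogarithmic prefactor.
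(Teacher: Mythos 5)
Your high-level plan — the trace power method of~\cite{AMP20} with the separator accounting from~\cite{GJJ20} — is exactly what the paper's one-line proof is invoking, so the framework is right. The concrete gap is in the isolated-vertex step you flagged as ``the main obstacle.'' Vertices in $\Wiso$ are isolated in $W_a$, so they have \emph{no} incident hyperedges; the explanation ``its incident edges pair within that ribbon, producing a double-factorial contribution'' cannot apply, and the appeals to \pref{fact:hermite_zero} and \pref{fact:double_factorial_bound} are misplaced (those facts govern Wick pairings of \emph{edges} in the trace expansion, not labelings of edgeless vertices). The actual mechanism is purely combinatorial: each of the $2q$ factors of $M_a$ or $M_a^\top$ in $\tr\bigl((M_aM_a^\top)^q\bigr)$ carries its own independent summation over the isolated vertex's label, contributing $n^{2q\,w(\Wiso)}$ in total. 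The separator-based count $n^{q(w(V(a))-w(\MVS))}$ already accounts for $n^{q\,w(\Wiso)}$ of this (since $\Wiso\subseteq V(a)$), leaving an excess of $n^{q\,w(\Wiso)}$, which after the $2q$-th root contributes the extra $n^{w(\Wiso)/2}$ needed to match the explicit $+w(\Wiso)$ in the exponent. Your stated excess of $n^{w(\Wiso)q/2}$ is off by a factor of two in the exponent, and your alternative reading ``$n^{w(\Wiso)/2}$ per ribbon'' is internally inconsistent with it (over $2q$ ribbons that would give $n^{q\,w(\Wiso)}$).

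A second, independent issue is the choice $q=\Theta(\log n/(|V(a)|+|E(a)|))$. The prefactor $(|V(a)|\cdot|E(a)|\cdot\log n)^{O(|V(a)|+|E(a)|)}$ that survives the $2q$-th root is independent of $q$, so $q$ does nothing to tame it; what $q$ controls is the Markov tail probability. For shapes with $|V(a)|+|E(a)|=\Theta(\tau)=\Theta(\poly\log n)$, which the truncation permits, your choice gives $q=o(1)$, which cannot yield the $n^{-\omega(1)}$ per-shape failure probability needed for the union bound over the $2^{\poly\log n}$ surviving shapes. Taking $q=\Theta(\poly\log n)$ uniformly over shapes repairs this.
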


\subsection{Proof overview of \pref{lem:M_E_PSD}}
\label{sec:SoS_lower_bound_proof_overview}

% \Pnote{Push proofs of Lemma statements from the overview to the appendix. Add pointers to the proof of all the lemmas in the overview to the appendix section where they appear. Title the proof subsections a bit more suggestively perhaps even adding explicitly what lemma they prove. Add a "putting it together" paragraph after the overview that completes the proof of the main theorem of the section. }
Since the proof is rather technical, we first provide an overview of the proof and defer the technical details to the Appendix. At a high-level, our strategy resembles that of~\cite{GJJ20} who proved a sum-of-squares lower bound for the problem of certifying the optimum value of the Sherrington-Kirkpatrick Hamiltonian. However, there are important differences to adapt this strategy to our setting as we describe in \pref{rem:difference_from_GJJ}.

At a high-level, our strategy works in two steps which are rather common in the analyses of moment matrices arising in several prior works on SoS lower bounds using pseudo-calibration. In the first step, we will prove that the moment matrix $\MomentMatrix$ is positive semidefinite and approximately (but not exactly) satisfies the polynomial constraints. In the second step, we will modify the pseudo-distribution $\mu$ so as to satisfy the constraints exactly and further show that this correction is small and does not affect the analysis of PSDness.
% In general, this modification can result in the introduction of negative eigenvalues to the matrix $\MomentMatrix$. To show that this does not happen in our case, we strengthen the conclusion of our first step and show that the $\MomentMatrix$ has lower-bounded eigenvalues in an appropriate subspace -- this lower-bound is enough to account for the modification introduced in the second step.  

\paragraph{Decomposition of $\MomentMatrix$.}

Observe that the coefficients $\lambda_{\alpha,I,J}$ in \pref{eq:lambda_explicit} only depend on the shapes.
Thus, we can write $\MomentMatrix$ as
\begin{equation*}
    \MomentMatrix = \sum_{a: \text{ shape}} \lambda_a M_a.
\end{equation*}

We will first identify combinatorial conditions on the shapes defining the graphical matrices that appear with nonzero coefficients in the above expansion. 
The shapes with $\lambda_a \neq 0$ need to satisfy the following conditions,

\begin{definition}\label{def:valid_shapes}
    Let $\ValidShapes$ be the set of shapes $a$ such that
    \begin{enumerate}
        \item $U_a, V_a$ contain only square vertices and $|U_a|, |V_a| \leq 2$,
        \label{cond:Ua_Va}

        \item $\deg(\square{i}) + \1\{\square{i} \in U_a\} + \1\{\square{i} \in V_a\}$ is even for all $\square{i}\in V(a)$,
        \label{cond:square_degree}

        \item $\deg(\circle{\s})$ is even for all $\circle{\s}\in V(a)$,
        \label{cond:circle_degree}

        \item $|E(a)| \leq \tau$,
        \label{cond:num_edges}

        \item There are no isolated vertices in $W_a$.
        \label{cond:no_isolated_vertices}
    \end{enumerate}
\end{definition}
In words, \pref{cond:Ua_Va} is because $\MomentMatrix$ only contains moments of degree $\leq 4$; \pref{cond:square_degree} ensures that $\Delta_i + I_i + I_j$ is even; \pref{cond:circle_degree} ensures that $|\Alphas|$ is even; \pref{cond:num_edges} ensures that $|\alpha| \leq \tau$; and finally \pref{cond:no_isolated_vertices} is simply because shapes with isolated vertices don't appear in the decomposition (there can only be isolated vertices in $U_a\cap V_a$).

\begin{remark}
    For any $a\in \ValidShapes$, the conditions in \pref{def:valid_shapes} also imply that $|E(a)|$ is even and the total degree of square vertices is a multiple of 4.
\end{remark}

Thus, we can decompose $\MomentMatrix$ into shapes in $\ValidShapes$:
\begin{equation*}
    \MomentMatrix = \sum_{a\in \ValidShapes} \lambda_a M_a.
\end{equation*}

Next, observe that we can break $\MomentMatrix$ into blocks indexed by $(k,\ell)\in \{0,1,2\}^2$.
The $(k,\ell)$ block $\MomentMatrix_{k\ell}$ is $\binom{n}{k}\times \binom{n}{\ell}$ whose rows are indexed by subsets $\binom{[n]}{k}$ and columns are indexed by subsets $\binom{[n]}{\ell}$.
Clearly, shapes $a$ with $|U_a| = k$, $|V_a| = \ell$ contribute to $\MomentMatrix_{k\ell}$ only.
Moreover, $|U_a| + |V_a|$ must be even because the total degree of the square vertices must be even (each hyperedge contributes two).
Thus, the blocks $\MomentMatrix_{01}$, $\MomentMatrix_{10}$, $\MomentMatrix_{12}$, and $\MomentMatrix_{21}$ are zero, i.e.\ all odd moments are zero.
Thus, $\MomentMatrix$ has the following structure
\begin{equation*}
    \MomentMatrix =
    \begin{bmatrix}
        \MomentMatrix_{00} & 0 & \MomentMatrix_{02} \\
        0 & \MomentMatrix_{11} & 0 \\
        \MomentMatrix_{20} & 0 & \MomentMatrix_{22}
    \end{bmatrix}
    ,
\end{equation*}
where $\MomentMatrix_{00}$ is a scalar, and $\MomentMatrix_{02} = \MomentMatrix_{20}^\top$ is a vector and has the same entries as $\MomentMatrix_{11}$.
% For $\MomentMatrix$ to be a valid \pd, $\MomentMatrix_{00}$ must be 1 and $\MomentMatrix\succeq 0$.

We need to show that $\MomentMatrix$, with some small modifications, is positive semidefinite and satisfies all constraints in the quadratic system.

\paragraph{Proving PSDness.} 
We parameterize $m = n^{2-\eps}$ for $\eps = \frac{C \log\log n}{\log n}$ for a sufficiently large constant $C$ in the analysis that follows. We know that $\MomentMatrix$ can be expanded as a sum of graphical matrices indexed by shapes $a$ in $\ValidShapes$ with coefficient $\lambda_a$. 
We first identify the shapes that contribute scaled identity matrices in the diagonal blocks.
We call these shapes the \textit{trivial shapes}; see \pref{fig:trivial_shapes} for examples.
\begin{definition}[Trivial shape] \label{def:trivial_shapes}
    A shape $a$ is trivial if $U_a = V_a$, $W_a = \varnothing$, and $E(a) = \varnothing$.
    Its associated matrix $M_a = \Id$.
\end{definition}

% Trivial shape figure
\FigTrivialShapes

In other words, the trivial shapes correspond to the Hermite indices $\alpha = \vec{0}$ and $|I| = |J|$.
For a trivial shape $\TrivialShape{k}$ with $|U_{\TrivialShape{k}}| = |V_{\TrivialShape{k}}| = k$, its matrix $\lambda_{\TrivialShape{k}} M_{\TrivialShape{k}} = n^{-k}\cdot \Id$ is a component in $\MomentMatrix_{kk}$.
Crucially, it is full rank and has minimum singular value $n^{-k}$, hence we can \textit{charge} other shapes that have small norm to the trivial shapes.
We call this procedure a \emph{charging scheme}.

\paragraph{Negligible shapes.}
We can charge several shapes to the trivial shapes if the contribution from those shapes are dominated by the scaled identity matrices from the trivial shapes;
we call all shapes that can be charged this way \emph{negligible}.

\begin{definition} \label{def:negligible_shapes}
    We say a shape is negligible if $|E(a)| \neq 0$ and
    \begin{equation*}
        \|\lambda_a M_a\| \leq n^{-\frac{|U_a|+|V_a|}{2}} \cdot n^{-\Omega(\eps|E(a)|)} \mper
    \end{equation*}
\end{definition}

Intuitively, for a negligible shape $a$ in block $\MomentMatrix_{kk}$ (meaning $|U_a| = |V_a| = k$), its contribution $\|\lambda_a M_a\| \ll n^{-k}$, which is the minimum singular value of $\lambda_{\TrivialShape{k}} M_{\TrivialShape{k}}$.

In \pref{sec:negligible_shapes}, we will identify a simple criterion to determine whether a shape is negligible or not (\pref{lem:shape_phi}), then we will prove that $M_{\TrivialShape{k}}$ dominates all negligible shapes, hence forming a PSD component in $\MomentMatrix$:

\begin{lemma} \label{lem:trivial_negl_psd}
    For $k = 0,1,2$, let $\NeglShapes{k}$ be the set of negligible shapes in block $\MomentMatrix_{kk}$, and let $\NeglError{k} \seteq \sum_{a\in \NeglShapes{k}} \lambda_a M_a$.
    There exist constants $c_1,c_2 > 0$ such that if the threshold $\tau \leq n^{c_1 \eps}$, then
    \begin{equation*}
        \|\NeglError{k}\| \leq n^{-k - c_2\eps} \mper
    \end{equation*}
    This implies that
    \begin{equation*}
        \lambda_{\TrivialShape{k}} M_{\TrivialShape{k}} + \NeglError{k} \succ 0 \mper
    \end{equation*}
\end{lemma}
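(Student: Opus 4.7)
The plan is to split the lemma into two pieces: an operator-norm bound on $\NeglError{k}$ obtained by a triangle-inequality / geometric-series argument, and the PSD conclusion, which is immediate from that bound since $\lambda_{\TrivialShape{k}} M_{\TrivialShape{k}} = n^{-k}\Id$, so any perturbation of operator norm strictly less than $n^{-k}$ cannot drive the smallest eigenvalue down to $0$.

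For the norm bound, let $c > 0$ be the absolute constant hidden inside the $\Omega(\eps\,|E(a)|)$ of \pref{def:negligible_shapes}, so that every $a\in\NeglShapes{k}$ satisfies $\|\lambda_a M_a\| \le n^{-k - c\eps |E(a)|}$. Grouping by $e := |E(a)| \ge 1$, the triangle inequality gives
\begin{equation*}
\|\NeglError{k}\| \;\le\; \sum_{a\in\NeglShapes{k}} \|\lambda_a M_a\| \;\le\; n^{-k} \sum_{e=1}^{\tau} N_{k,e} \cdot n^{-c\eps e},
\end{equation*}
where $N_{k,e}$ counts the shapes in $\NeglShapes{k}$ with exactly $e$ hyperedges. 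The key combinatorial input is the crude estimate $N_{k,e} \le e^{O(e)}$. This follows from \pref{def:valid_shapes}: since $|U_a|, |V_a| \le 2$ consist only of squares and $W_a$ contains no isolated vertices, every non-boundary vertex is incident to one of the $e$ hyperedges, so $|V(a)| \le 3e + 2$; a shape is then determined by assigning square/circle types to vertices, selecting $U_a, V_a$ among the squares, and specifying the three endpoints of each hyperedge, a product bounded by $(3e+2)^{3e+O(1)} = e^{O(e)}$.

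Substituting the count, each summand takes the form $e^{O(e)}\,n^{-c\eps e} = \bigl(e \cdot n^{-c\eps/C}\bigr)^{Ce}$ for some absolute constant $C$. Choosing $c_1 < c/(2C)$ ensures $e \le \tau \le n^{c_1 \eps} \le n^{c\eps/(2C)}$ throughout, so every summand is at most $n^{-c\eps e / 2}$; the geometric series in $e\ge 1$ then sums to $O(n^{-c\eps/2})$. Setting $c_2 := c/3$ yields $\|\NeglError{k}\| \le n^{-k - c_2 \eps}$, from which the PSD claim is immediate because the smallest eigenvalue of $\lambda_{\TrivialShape{k}} M_{\TrivialShape{k}} + \NeglError{k}$ is at least $n^{-k}(1 - n^{-c_2\eps}) > 0$.

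There is no deep obstacle here; it is essentially a Weyl-style perturbation of the scaled identity. The main thing to watch is the order in which constants are fixed: first $c$ from \pref{def:negligible_shapes}, then $c_2$, then $c_1$ small relative to both $c$ and the implicit constant in the shape count. A minor side remark: the $(\log n)^{O(1)}$ overhead hidden in the graph-matrix norm bound \pref{prop:graph_matrix_norm_bound} contributes at most $(\log n)^{O(e)} = n^{o(\eps e)}$ per term, which is harmlessly absorbed into the $n^{-c\eps e}$ factor as long as $\eps = \omega(\log\log n/\log n)$, exactly the regime indicated by $\eps = C\log\log n/\log n$ with $C$ large.
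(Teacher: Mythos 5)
Your proof is correct and follows essentially the same route as the paper: bound $\|\NeglError{k}\|$ by the triangle inequality, combine the definition of negligibility with a crude $e^{O(e)}$ count of shapes on $e$ hyperedges (the paper states this as a separate lemma), choose $c_1$ small enough that $\tau \le n^{c_1\eps}$ absorbs the $e^{O(e)}$ factor into the $n^{-\Omega(\eps e)}$ decay, sum the geometric series, and conclude PSDness by Weyl perturbation of $n^{-k}\Id$. The only cosmetic difference is that you rederive the shape count inline and start the sum at $e=1$ rather than $e=2$ (the first nonempty term); neither affects anything.
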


Note that in the case $k=0$, we have $\MomentMatrix_{00} = 1 + o(1)$.
This is consistent with the calculations of low-degree hardness in \pref{sec:low_degree_hardness}.
Note also that $\MomentMatrix$ must have a non-trivial null space due to the constraints, hence there must be non-negligible shapes in $\ValidShapes$ which we deal with next.

The same analysis also shows the following norm bounds,
\begin{lemma} \label{lem:M_norm}
    There exists a constant $c_1 > 0$ such that if the threshold $\tau \leq n^{c_1\eps}$, then for any $k,\ell$, $\|\MomentMatrix_{k\ell}\| \leq n^{-\frac{k+\ell}{4}}$.
\end{lemma}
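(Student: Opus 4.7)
I will decompose $\MomentMatrix_{k\ell} = \sum_{a} \lambda_a M_a$ over shapes $a \in \ValidShapes$ with $|U_a|=k$, $|V_a|=\ell$, and bound each term by the triangle inequality. Combining the coefficient estimate \pref{eq:lambda_bound} with the graph matrix norm bound \pref{prop:graph_matrix_norm_bound}, using that $\Wiso = \emptyset$ by \pref{cond:no_isolated_vertices} of \pref{def:valid_shapes}, each term satisfies
\[
\|\lambda_a M_a\| \;\leq\; \wt O\!\Paren{ n^{-|E(a)| - (k+\ell)/2 + (w(V(a)) - w(\MVS))/2} } \cdot (|E(a)|/2)^{|E(a)|/2}\mper
\]
The task then reduces to a \emph{structural inequality} on shapes in $\ValidShapes$: for every such $a$ with $|U_a|=k, |V_a|=\ell$,
\[
\tfrac{1}{2}\Paren{w(V(a)) - w(\MVS)} - |E(a)| \;\leq\; \tfrac{k+\ell}{4} - c\,\eps \cdot |E(a)|
\]
for some absolute constant $c > 0$, where $\eps = 2 - \log_n m$.

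\textbf{Key combinatorial inequality.} I will establish the inequality by viewing $a$ as a bipartite factor graph with $V_s$ square variables, $V_c$ circle variables, and $|E(a)|$ factor nodes each of degree $3$ (two square endpoints, one circle endpoint). An Euler-characteristic count gives $V_s + V_c - 2|E(a)| \leq \#\mathrm{components}(a)$. Substituting $w(V(a)) = V_s + (2-\eps)V_c$, the left-hand side of the target inequality rewrites as
\[
\tfrac{1}{2}\Paren{V_s + V_c - 2|E(a)|} + \tfrac{1}{2}(1-\eps) V_c - \tfrac{1}{2} w(\MVS)\mper
\]
The even-degree conditions in \pref{def:valid_shapes} (every circle degree even and every square degree matched modulo $2$ with $U_a + V_a$) force each component with at least one edge to contain $\geq 1$ circle. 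Combined with $U_a \cap V_a \subseteq \MVS$ and the constraint $k+\ell \leq 4$, a case analysis will show that $V_c - w(\MVS)/(2-\eps) + \#\mathrm{components}(a)$ is controlled by $(k+\ell)/2$ up to constants, while the residual $\eps V_c$ supplies the $\Omega(\eps |E(a)|)$ slack because every circle outside $\MVS$ contributes a full $\eps$ to the gap and the number of such circles is $\Omega(|E(a)|)$ once $|E(a)| \geq 1$.

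\textbf{Summing over shapes.} Granting the per-shape bound $\|\lambda_a M_a\| \leq \wt O(n^{-(k+\ell)/4 - c\eps|E(a)|}) \cdot (|E(a)|/2)^{|E(a)|/2}$, I sum over shapes with $e = |E(a)| \leq \tau$. The number of isomorphism classes of valid shapes with $e$ hyperedges and $|U_a|+|V_a|\leq 4$ is at most $e^{O(e)}$, since the factor graph has at most $O(e)$ vertices and only $O(1)$ ways to designate the small sets $U_a, V_a$. Therefore
\[
\|\MomentMatrix_{k\ell}\| \;\leq\; n^{-(k+\ell)/4} \sum_{e=0}^{\tau} e^{O(e)} \cdot n^{-c\eps e} \;\leq\; n^{-(k+\ell)/4},
\]
provided $\tau \leq n^{c_1 \eps}$ for a small enough constant $c_1$, so that the geometric decay $n^{-c\eps e}$ dominates $e^{O(e)}$ and the series converges to $O(1)$.

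\textbf{Main obstacle.} The principal difficulty is a clean proof of the structural inequality. The delicate cases are shapes with nontrivial $U_a \cap V_a$ (where the contribution of this intersection to both $w(V(a))$ and $w(\MVS)$ must cancel precisely to yield the $(k+\ell)/4$ term) and disconnected shapes (where the inequality must be applied componentwise and the $\MVS$ allocated across components). The cleanest route I envision is first to reduce to connected shapes with $U_a \cap V_a = \emptyset$ by peeling off the trivial contributions of $U_a \cap V_a$ and handling each connected component independently, then to verify the inequality for each irreducible component by the Euler-plus-parity argument sketched above.
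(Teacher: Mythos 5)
Your overall architecture matches the paper's: decompose $\MomentMatrix_{k\ell}$ over shapes, bound each $\|\lambda_a M_a\|$ via the coefficient estimate and the graph-matrix norm bound, reduce to a structural inequality of the form $\varphi(a) \geq -\tfrac{k+\ell}{4} + c\eps|E(a)|$, and sum over shapes using the $e^{O(e)}$ count against the $n^{-c\eps e}$ geometric decay. The paper proves exactly this structural inequality (it is \pref{lem:shape_phi}) and cites it directly in its two-line proof of \pref{lem:M_norm}; so the substance of the question is whether your proposed proof of the structural inequality is correct.

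It is not, and the gap is concrete. You claim the $\Omega(\eps|E(a)|)$ slack is supplied by the term $\tfrac{\eps}{2}V_c$ on the grounds that ``the number of circles outside $\MVS$ is $\Omega(|E(a)|)$.'' This is false: since each circle has even degree $\geq 2$, the number of circles satisfies $V_c \leq |E(a)|/2$, and it can be as small as $1$ even when $|E(a)|$ is of order $\tau$. For a concrete valid shape, take a single circle $\circle{\s}$, four degree-$1$ squares forming $U_a = \{i_1,i_2\}$, $V_a = \{j_1,j_2\}$, hyperedges $(i_1,j_1,\s), (i_2,j_2,\s)$, and $t$ additional doubled hyperedges $(k_p, l_p, \s)$ each appearing twice, so that $k_p, l_p$ have degree $2$. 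This is in $\ValidShapes$ with $V_c = 1$, $|E(a)| = 2+2t$ arbitrary. Your Euler bound $V_s + V_c - 2|E(a)| \leq \#\mathrm{components}(a) = 1$ is off by $\Theta(|E(a)|)$ here (the true value is $1 - 2t$), and $\tfrac{\eps}{2}V_c = \tfrac{\eps}{2}$ gives constant, not linear-in-$|E(a)|$, slack. Plugging the Euler bound and $V_c = 1$ into your reformulated left-hand side gives a constant that is not $\leq \tfrac{k+\ell}{4} - c\eps|E(a)|$ for large $|E(a)|$, so the chain of inequalities as written does not close.

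The actual source of the $\eps|E(a)|$ slack is the opposite of your intuition: circles are \emph{few}, not many. Each circle carries weight $2-\eps$ in the norm exponent but must have degree $\geq 2$, so $|C_a| \leq |E(a)|/2$; substituting this into the $-\tfrac{2-\eps}{2}|C_a|$ term in $-\varphi(a) = \tfrac{1}{2}V_s + \tfrac{2-\eps}{2}V_c - |E(a)| - \tfrac{1}{2}w(\MVS)$ yields $-\tfrac{|E(a)|}{2} + \tfrac{\eps}{4}|E(a)|$. Pairing this with the square-side bound $|\wt{S}_a| \leq |E(a)| + \tfrac{1}{2}(|\wt{U}_a|+|\wt{V}_a|)$ (from minimum degree $2$ on middle squares, $1$ on $\wt{U}_a,\wt{V}_a$) is what gives the $-\tfrac{k+\ell}{4} + \tfrac{\eps}{4}|E(a)|$ lower bound. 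This is a direct degree count, not an Euler-characteristic argument: the Euler inequality loses exactly the cycle excess that your shapes necessarily have, and that excess is indispensable to the bound.

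Everything after the structural inequality — the $e^{O(e)}$ shape count, the absorption of $(|E(a)|\log n)^{O(|E(a)|)}$ into $n^{O(c_1\eps|E(a)|)}$ for $\tau \leq n^{c_1\eps}$, and the geometric summation — is correct and the same as the paper.
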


% Connected shapes
\paragraph{Connected shapes and spider.}
We look at the shapes in $\MomentMatrix_{11}$ and $\MomentMatrix_{22}$ that are connected, meaning there is path from $U_a$ to $V_a$ and $\MVS\neq \varnothing$.
We show in \pref{sec:connected_shapes} that there is only \textit{one} connected non-trivial shape that is not negligible, namely the \textit{spider}; see \pref{fig:spider} for illustration.

\FigSpider

\begin{lemma} \label{lem:connected_shapes_negligible}
    If $a \in \ValidShapes$ is a connected shape and not a trivial shape nor a spider, then $a$ is negligible.
\end{lemma}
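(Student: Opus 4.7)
The plan is to combine the coefficient bound \pref{eq:lambda_bound} with the graph matrix norm bound from \pref{prop:graph_matrix_norm_bound}. Since $a \in \ValidShapes$ has no isolated vertices in $W_a$ (condition 5 of \pref{def:valid_shapes}), we have $\Wiso = \varnothing$, and the graph matrix bound yields
$$\|\lambda_a M_a\| \;\leq\; \wt{O}\!\left(n^{-|E(a)| \,-\, (|U_a|+|V_a|)/2 \,+\, (w(V(a)) - w(\MVS))/2}\right).$$
Hence the negligibility requirement in \pref{def:negligible_shapes} reduces to exhibiting a lower bound
$$\varphi(a) \;:=\; 2|E(a)| - w(V(a)) + w(\MVS) \;\geq\; \Omega(\eps\,|E(a)|)$$
for every connected $a \in \ValidShapes$ that is neither trivial nor the spider.

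The next step is to extract generic bounds on $w(V(a))$ in terms of $|E(a)|$ using the parity and non-isolation requirements. Each hyperedge contributes $1$ to the total circle degree and $2$ to the total square degree. Since every circle has even degree $\geq 2$, the number of circle vertices satisfies $C_V \leq |E(a)|/2$, with equality only when every circle has degree exactly $2$. Similarly, every square outside $U_a \cap V_a$ has degree $\geq 1$ (even in $W_a$, odd in $U_a \triangle V_a$), which bounds the number of square vertices linearly in $|E(a)|$ plus a $O(1)$ boundary correction from $|U_a|,|V_a|\leq 2$. Substituting $w(\square)=1$ and $w(\circle)=2-\eps$ into $\varphi(a)$, each circle of degree $>2$ contributes $\eps$ to $\varphi(a)$, each square vertex present in $\MVS$ saves $1-\eps$ over the corresponding circle, and any ``extra'' edge beyond the minimum connectivity skeleton contributes as well.

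With these generic inequalities in hand, the proof reduces to identifying the unique extremal configuration where every slack term vanishes. This demands (i) every circle has degree exactly $2$, (ii) $\MVS$ consists purely of circle vertices, and (iii) the shape has the minimum edge count compatible with connectivity for the given $(|U_a|,|V_a|)$. A short case analysis on $|U_a|,|V_a|\in\{1,2\}$ is then needed: if $|U_a|=|V_a|=1$ with $U_a=V_a$, the single boundary square already sits inside $\MVS$, contributing $w(\MVS)=1$ and thus $\varphi(a)\geq 1$ as soon as any edge is present; if $U_a\neq V_a$, parity forces each boundary square to be incident to at least $2$ edges, yielding slack $\Omega(1)$. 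If $|U_a|=|V_a|=2$ and $a$ is not the spider, then any deviation — a second circle on the $U_a$-to-$V_a$ path, an extra hyperedge, a nonempty $U_a\cap V_a$ forcing an extra square into $\MVS$, or the presence of a square separator — contributes either $\Omega(\eps)$ or $\Omega(1)$ to $\varphi(a)$.

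The main obstacle will be enumerating all ``near-spider'' configurations carefully without overlooking subtle exceptions such as spiders with a self-loop, shapes whose two legs share a middle square, or shapes where $U_a\cap V_a$ absorbs part of the spider's skeleton. A clean way to organize the analysis is to classify connected shapes by $w(\MVS)$: the spider is the unique shape with $w(\MVS)=2-\eps$ achieving $\varphi(a)=0$; any shape with a square in $\MVS$ satisfies $w(\MVS)\geq 1$ with accompanying boundary slack; and any shape forcing $w(\MVS)\geq 2(2-\eps)$ (two circles on the separator) or more edges than the spider picks up $\Omega(\eps|E(a)|)$ from the circle count inequality. Combining these cases with $|E(a)|\leq\tau=\poly(\log n)$ to handle the comparison between $\eps|E(a)|$ and $O(1)$ slack terms then completes the proof.
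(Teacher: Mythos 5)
Your overall strategy is the same as the paper's: cast negligibility in terms of $\varphi(a) = |E(a)| - \frac{1}{2}\bigl(w(V(a)) - w(\MVS) + w(\Wiso)\bigr)$ (you use a rescaled $\varphi$), derive a generic lower bound on $\varphi$ from the parity constraints in $\ValidShapes$, and then do a case analysis on $|U_a|,|V_a|$ and $w(\MVS)$, identifying the spider as the unique extremal shape. This matches the paper's \pref{lem:shape_phi} + case analysis. However, there are two substantive problems in your case analysis.

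First, your claim that ``if $U_a \neq V_a$, parity forces each boundary square to be incident to at least $2$ edges'' is false. Condition~\pref{cond:square_degree} of $\ValidShapes$ says that a square $\square{i} \in \wt{U}_a = U_a \setminus V_a$ has $\deg(\square{i}) + 1$ even, i.e.\ odd degree, so the minimum is $1$, not $2$. A concrete example: $U_a=\{\square{i}\}$, $V_a=\{\square{j}\}$ with $i\neq j$, $W_a=\{\square{k},\circle{\s}\}$, and two hyperedges $(\square{i},\square{k},\circle{\s})$ and $(\square{k},\square{j},\circle{\s})$ is a valid connected shape with $\deg(\square{i})=\deg(\square{j})=1$. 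The paper instead uses that a connected shape with $U_a\cap V_a=\varnothing$ has $w(\MVS)\geq 1$, which exactly cancels the $-\tfrac14(|\wt{U}_a|+|\wt{V}_a|)$ term in \pref{lem:shape_phi}; your claimed $\Omega(1)$ slack does not come from the boundary degree.

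Second, in the hardest case $|U_a|=|V_a|=2$, $U_a\cap V_a=\varnothing$, and $\MVS$ consisting of a single square vertex, the deficit is $-1 + \tfrac12 w(\MVS) = -\tfrac12$, and the generic inequality alone does not close it. The paper closes it with \pref{lem:connected_shape_mvs_degree}: a parity argument on the two sides of the separator shows the separator square must have degree $\geq 4$, hence is a large-degree vertex contributing an additional $\tfrac12\delta_s \geq \tfrac12$. Your sketch asserts that ``the presence of a square separator contributes $\Omega(\eps)$ or $\Omega(1)$'' but gives no argument for why a degree-$2$ square separator cannot occur; without this parity argument the proof of that case is incomplete. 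You should prove (or cite) the analogue of \pref{lem:connected_shape_mvs_degree} rather than assert the slack.
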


% In particular, all connected non-trivial shapes in $\MomentMatrix_{11}$ can be charged to $\TrivialShape{1}$.

% We use the same strategy as \cite{GJJ20} to ``kill'' the spider.
Next, we handle the spider $a_{\spider}$.
The main insight is that $M_{\spider}$ is ``almost'' in the null space of $\MomentMatrix$,
i.e.\ $\MomentMatrix M_{\spider} \approx 0$.
Then, we use the following result (see also \cite[Fact 3.1]{GJJ20}); we give a short proof for completeness.
\begin{lemma}  \label{lem:kill_spider}
    Suppose a matrix $A$ satisfies $\MomentMatrix A = 0$, then $\MomentMatrix - A \succeq 0$ implies $\MomentMatrix \succeq 0$.
\end{lemma}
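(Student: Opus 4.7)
This is a clean linear-algebraic fact that I would prove by a kernel/image decomposition. The plan is: for an arbitrary test vector $v$, split it into a component in $\ker(\MomentMatrix)$ and its orthogonal complement, show that the orthogonal component is annihilated by $A$, and then the quadratic form $v^\top \MomentMatrix v$ collapses to $v_1^\top (\MomentMatrix - A) v_1$, which is non-negative by hypothesis.

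First I would verify that $A$ is symmetric. Since $\MomentMatrix$ is symmetric (being a pseudo-moment matrix) and the hypothesis $\MomentMatrix - A \succeq 0$ implicitly asserts that $\MomentMatrix - A$ is symmetric, we immediately get $A = A^\top$. Combined with $\MomentMatrix A = 0$, this gives $\mathrm{image}(A) \subseteq \ker(\MomentMatrix)$. Taking orthogonal complements and using $\ker(A) = \mathrm{image}(A)^\perp$ by symmetry of $A$, I obtain the key inclusion $\ker(\MomentMatrix)^\perp \subseteq \ker(A)$.

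Next, given $v \in \R^N$, decompose $v = v_0 + v_1$ with $v_0 \in \ker(\MomentMatrix)$ and $v_1 \in \ker(\MomentMatrix)^\perp$. Symmetry of $\MomentMatrix$ together with $\MomentMatrix v_0 = 0$ makes the cross-terms $v_0^\top \MomentMatrix v_1$ and $v_1^\top \MomentMatrix v_0$ vanish, so $v^\top \MomentMatrix v = v_1^\top \MomentMatrix v_1$. By the inclusion just derived, $A v_1 = 0$, so $v_1^\top A v_1 = 0$, and therefore $v^\top \MomentMatrix v = v_1^\top (\MomentMatrix - A) v_1 \geq 0$, which is the desired PSDness.

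There is no substantive obstacle here; the entire argument is a two-step kernel/image decomposition leveraging the symmetry of both matrices. The only thing to double-check carefully is that symmetry of $A$ is indeed forced by the PSD hypothesis on $\MomentMatrix - A$ together with symmetry of $\MomentMatrix$, after which the rest is automatic. In the application to the spider shape, I expect the actual work to be (a) establishing the symmetry of the chosen $A$ (which will be a specific graph-matrix expression) and (b) showing the identity $\MomentMatrix A = 0$ exactly or up to a controllable error; this lemma will then be invoked as a black box.
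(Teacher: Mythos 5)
Your proof is correct and matches the paper's approach: both decompose the test vector into a kernel part of $\MomentMatrix$ and its orthogonal complement (the column space of the symmetric matrix $\MomentMatrix$), observe the quadratic form lives entirely on the column-space part, and conclude by showing the $A$-term vanishes there. The only cosmetic difference is that the paper kills $y^\top A y$ directly by writing $y = \MomentMatrix z$ and using $\MomentMatrix A = 0$ (never needing $A$ symmetric), whereas you first deduce $A = A^\top$ from the PSD hypothesis and obtain the slightly stronger statement $A v_1 = 0$; both are fine.
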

\begin{proof}
    For any vector $x$, let $y$ be its projection onto the column space of $\MomentMatrix$.
    We have $y \perp \Null(\MomentMatrix)$ and $y^\top A y = 0$.
    Then, $x^\top \MomentMatrix x = y^\top \MomentMatrix y = y^\top (\MomentMatrix - A) y$.
    Thus, $\MomentMatrix - A\succeq 0$ implies $x^\top \MomentMatrix x \geq 0$ for all $x$, which means $\MomentMatrix \succeq 0$.
\end{proof}

Intuitively, \pref{lem:kill_spider} allows us to add/remove any component of $\MomentMatrix$ which is in the null space of $\MomentMatrix$.
In our case, we can thus remove the component $\lambda_{\spider} M_{\spider}$ from $\MomentMatrix$ modulo some small error $\SpiderError$.
More specifically, in \pref{sec:spider}, we will show the following,
\begin{lemma} \label{lem:spider_expansion}
    Suppose $\MomentMatrix$ exactly satisfies all constraints $\{\gs(x) = 0\}_{\s\leq m}$. Then there exists a matrix $A$ such that $\MomentMatrix A = 0$ and
    \begin{equation*}
        \lambda_{\spider} M_{\spider}
        = A + \SpiderErrorBlock{00} + \SpiderErrorBlock{20} + \SpiderErrorBlock{20}^\top + \SpiderErrorBlock{22},
    \end{equation*}
    where $\SpiderErrorBlock{00}$, $\SpiderErrorBlock{20}$, $\SpiderErrorBlock{20}^\top$, $\SpiderErrorBlock{22}$ are errors in blocks $\MomentMatrix_{00}$, $\MomentMatrix_{20}$, $\MomentMatrix_{02}$, $\MomentMatrix_{22}$ respectively,
    and $|\SpiderErrorBlock{00}| = \wt{O}(n^{-3})$, $\|\SpiderErrorBlock{20}\| = \wt{O}(n^{-5/2})$, and $\|\SpiderErrorBlock{22}\| = \wt{O}(n^{-2-\eps})$.
\end{lemma}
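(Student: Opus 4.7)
The plan is to build $A$ directly from the kernel of $\MomentMatrix$ induced by the constraints, and then verify the claimed block decomposition. Since the candidate pseudo-distribution enforces $x_i^2 = 1/n$ via booleanity, each constraint $\gs$ reduces in the monomial basis to
\[
\gs(x) \;=\; \frac{\tr(\Gs)}{n} + \sum_{i<j}(\GEntries{\s}{ij}+\GEntries{\s}{ji})\, x_i x_j.
\]
Let $v_{\gs}$ denote its coefficient vector indexed by subsets $I\subseteq[n]$ of size $\leq 2$, so $v_{\gs}$ is supported only on $\varnothing$ and on pairs. The assumed constraint $\pE_\mu[\gs p] = 0$ for every degree-$\leq 2$ polynomial $p$ is exactly $\MomentMatrix v_{\gs} = 0$ for every $\s\in [m]$. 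Setting
\[
A \;:=\; \lambda_{\spider} \sum_{\s=1}^m v_{\gs} v_{\gs}^\top
\]
immediately gives $\MomentMatrix A = 0$ by linearity. Since $v_{\gs}$ is supported only on indices of size $0$ and $2$, the matrix $A$ lives on the four blocks $(0,0),(0,2),(2,0),(2,2)$, whereas $\lambda_{\spider} M_{\spider}$ lives on $(2,2)$ alone. The three error terms in the statement are then simply $\SpiderErrorBlock{00} := -A|_{(0,0)}$, $\SpiderErrorBlock{20} := -A|_{(2,0)}$, and $\SpiderErrorBlock{22} := \lambda_{\spider} M_{\spider}|_{(2,2)} - A|_{(2,2)}$.

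The scalar block is direct: $\SpiderErrorBlock{00} = -\lambda_{\spider} n^{-2}\sum_\s \tr(\Gs)^2$, and since each $\tr(\Gs)$ is Gaussian with variance $n$, a Hanson--Wright estimate gives $\sum_\s \tr(\Gs)^2 = O(mn)$ with high probability, so $|\SpiderErrorBlock{00}| = \wt{O}(\lambda_{\spider} m/n) = \wt{O}(n^{-3})$. For the $(2,0)$ block, the entry at $(\{j_1,j_2\},\varnothing)$ is $-\lambda_{\spider} n^{-1}\sum_\s \tr(\Gs)(\GEntries{\s}{j_1 j_2}+\GEntries{\s}{j_2 j_1})$, so $\SpiderErrorBlock{20}$ equals $-\lambda_{\spider} n^{-1}$ times the random matrix $\sum_\s \tr(\Gs)(\Gs+\Gs^\top)$, whose Frobenius norm is $\wt{O}(\sqrt{m n^3})$ by a routine Gaussian concentration bound; this yields $\|\SpiderErrorBlock{20}\| = \wt{O}(n^{-5/2})$. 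Neither of these estimates requires the graph-matrix machinery.

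The bulk of the work is bounding $\|\SpiderErrorBlock{22}\|$. On $(\{i_1,i_2\},\{j_1,j_2\})$ with all four indices distinct, expanding the product $(\GEntries{\s}{i_1 i_2}+\GEntries{\s}{i_2 i_1})(\GEntries{\s}{j_1 j_2}+\GEntries{\s}{j_2 j_1})$ inside the sum defining $A$ produces $\lambda_{\spider} M_{\spider}$ along with three additional graph matrices coming from the alternative orientations of the two hyperedges of the spider. When $\{i_1,i_2\}\cap\{j_1,j_2\}\neq\varnothing$, the same product instead generates graph matrices whose shapes have $|U_a\cap V_a|\in\{1,2\}$---the collapsed variants analogous to those arising in \pref{fig:LTL}. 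Each such residual shape still lies in $\ValidShapes$: it keeps $U_a,V_a\subseteq\calS$ of size $\leq 2$, preserves the parity conditions on square and circle degrees, is connected (the circle vertex remains the only separator between $U_a$ and $V_a$), has at least one edge, and is distinct from the canonical spider. Consequently each is negligible by \pref{lem:connected_shapes_negligible} and its graph matrix contributes at most $n^{-2-\Omega(\eps)}$ in spectral norm by \pref{prop:graph_matrix_norm_bound}; summing the $O(1)$ residual shapes gives $\|\SpiderErrorBlock{22}\| = \wt{O}(n^{-2-\eps})$.

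The main obstacle is the combinatorial bookkeeping in the $(2,2)$ analysis: one must carefully enumerate the four hyperedge orientations together with each possible overlap pattern between the two square pairs, match every resulting combinatorial object to a concrete shape in $\ValidShapes$, and verify that after the symmetrization the only surviving ``main'' contribution is precisely $\lambda_{\spider} M_{\spider}$ itself, while every other piece is a connected, non-trivial shape to which \pref{lem:connected_shapes_negligible} applies with the required $n^{-\Omega(\eps)}$ slack beyond $\|\lambda_{\spider} M_{\spider}\|$.
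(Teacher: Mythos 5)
Your construction of $A$ is exactly the paper's: the matrix $\lambda_{\spider}\sum_\s v_{\gs} v_{\gs}^\top$ is precisely $n^{-4} L_2^\top L_2$ where $L_2$ is the paper's ``check matrix'' (its $\s$-th row is your $v_{\gs}^\top$), and $\MomentMatrix A = 0$ follows from $L_2\MomentMatrix = 0$ as you observe. Your direct Gaussian computations for the $(0,0)$ and $(2,0)$ blocks are a pleasantly elementary alternative to invoking \pref{prop:graph_matrix_norm_bound}, and both estimates check out: $\sum_\s \tr(\Gs)^2 = O(mn)$ gives $|\SpiderErrorBlock{00}| = \wt{O}(n^{-3})$, and the $\ell_2$-norm of $\lambda_{\spider}n^{-1}\sum_\s \tr(\Gs)(\Gs+\Gs^\top)$ is $\wt{O}(n^{-5/2})$.

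There is, however, a genuine gap in your $(2,2)$-block bookkeeping. First, a small terminology slip: the ``three additional graph matrices coming from the alternative orientations'' are not additional shapes --- since the two square vertices of each $U_a,V_a$ are unlabeled and interchangeable in the shape, all four orientation terms $\GEntries{\s}{i_1 i_2}\GEntries{\s}{j_1 j_2}, \GEntries{\s}{i_2 i_1}\GEntries{\s}{j_1 j_2},\dots$ are ribbons of the single spider shape, so $A|_{(2,2)}$ and $\lambda_{\spider}M_{\spider}$ already agree exactly on disjoint index pairs. More importantly, on the diagonal, when $\{i_1,i_2\} = \{j_1,j_2\}$, the expansion of squares such as $(\GEntries{\s}{i_1 i_2})^2$ requires the Hermite identity $h_1(z)^2 = h_2(z) + h_0(z)$, and the $h_0$ piece, after summing over $\s\in[m]$, yields a multiple of $m\cdot\Id$. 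This corresponds to a shape with an isolated circle vertex and \emph{no} edges, which violates \pref{cond:no_isolated_vertices} of $\ValidShapes$ and falls outside the scope of both \pref{def:negligible_shapes} (which requires $|E(a)|\neq 0$) and \pref{lem:connected_shapes_negligible} (which excludes trivial/edge-free shapes). Your assertion that ``each such residual shape\ldots has at least one edge'' is therefore false, and the conclusion ``each is negligible by \pref{lem:connected_shapes_negligible}'' does not apply to this term. This is not a harmless omission: $\lambda_{\spider}\cdot m = n^{-4}\cdot n^{2-\eps} = n^{-2-\eps}$ is exactly the \emph{dominant} contribution to $\|\SpiderErrorBlock{22}\|$ and is what makes the bound $\wt{O}(n^{-2-\eps})$ tight --- the genuinely negligible collapsed shapes from the $|U_a\cap V_a|=1$ cells and the $h_2$ diagonal term contribute strictly less (on the order of $n^{-2-(1+\eps)/2}$). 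This $m\cdot\Id$ piece must be bounded directly, as the paper does implicitly when it notes that the non-spider shapes in \pref{fig:LTL} have norms $\wt{O}(n^{2-\eps})$.
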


Thus, we have
\begin{equation*} \label{eq:M_minus_A}
\begin{aligned}
    \MomentMatrix' \seteq \MomentMatrix - A &= \MomentMatrix - \lambda_{\spider} M_{\spider} + \SpiderError \\
    &= 
    \begin{bmatrix}
        \MomentMatrix_{00} + \SpiderErrorBlock{00} & 0 & \MomentMatrix_{20}^\top + \SpiderErrorBlock{20}^\top \\
        0 & \MomentMatrix_{11} & 0 \\
        \MomentMatrix_{20} + \SpiderErrorBlock{20} & 0 & \MomentMatrix_{22}' + \SpiderErrorBlock{22}
    \end{bmatrix},
\end{aligned}
\numberthis
\end{equation*}
where $\MomentMatrix_{22}'$ is the block $\MomentMatrix_{22}$ with the spider removed.
% The error terms $|\SpiderErrorBlock{00}| = \wt{O}(n^{-3})$, $\|\SpiderErrorBlock{20}\| = \wt{O}(n^{-5/2})$, and $\|\SpiderErrorBlock{22}\| = \wt{O}(n^{-2-\eps})$ (see \pref{lem:spider_expansion}).

Then, by \pref{lem:kill_spider}, it suffices to prove that $\MomentMatrix- A \succeq 0$.
% Without loss of generality, we can assume that the top-left entry is exactly 1 since normalizing the matrix by $1+ \SpiderErrorBlock{00} = 1+o(1)$ has negligible effect.
Next, we turn to the shapes in $\MomentMatrix_{20}, \MomentMatrix_{02}$ and the disconnected shapes in $\MomentMatrix_{22}$.

% Disconnected shapes
\paragraph{Disconnected shapes.}
Several disconnected shapes in $\ValidShapes$ (with $\MVS = \varnothing$) are not negligible.
We note that all shapes in $\MomentMatrix_{11}$ must be connected due to the conditions in \pref{def:valid_shapes}.
We will show that all disconnected shapes can be captured in a positive semidefinite component while introducing negligible errors.

We first introduce the following definition,

\begin{definition}[One-sided shape] \label{def:one_sided_shape}
    We say a shape is one-sided if either $U_a$ or $V_a$ is empty and there is no isolated component disconnected from $U_a$ or $V_a$.
    If $V_a = \varnothing$, we call it a left one-sided shape;
    if $U_a = \varnothing$, we call it a right one-sided shape.
\end{definition}

Note that the transpose of a left one-sided shape is a right one-sided shape, and further any disconnected shape in $\MomentMatrix_{22}$ contains a left and right one-sided shape.
The main observation is that for any disconnected shape $a = (a_1,a_2^\top)$, $M_a \approx M_{a_1} M_{a_2}^\top$.

\begin{lemma} \label{lem:collapsed_shapes_negligible}
    For a disconnected shape $a = (a_1,a_2^\top)$ where $a_1,a_2$ are left one-sided shapes,
    \begin{equation*}
        M_{a} = M_{a_1} M_{a_2}^\top + \CollapsedErrorShapes{a_1,a_2^\top}
    \end{equation*}
    where $\CollapsedErrorShapes{a_1,a_2^\top}$ consists of shapes obtained from \emph{collapsing} $a_1$ and $a_2^\top$.
    Moreover, all such collapsed shapes are negligible.
\end{lemma}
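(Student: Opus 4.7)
The plan is to first establish the product identity by expanding $M_{a_1} M_{a_2}^\top$ as a sum over pairs of ribbons, and then prove negligibility of each collapsed shape using the graph-matrix norm bound together with the explicit formula for $\lambda_a$.

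For the identity, since $a_1, a_2$ are left one-sided we have $V_{a_1} = V_{a_2} = \varnothing$, so entry-wise
\begin{equation*}
(M_{a_1} M_{a_2}^\top)(I,J) = \sum_{R_1, R_2} h_{\alpha_1}(\bG) \cdot h_{\alpha_2}(\bG),
\end{equation*}
where $R_1$ ranges over ribbons of $a_1$ with $A_{R_1} = I$ and $R_2$ over ribbons of $a_2$ with $A_{R_2} = J$, and $\alpha_i$ is the hypergraph index associated to $R_i$. Applying the univariate Hermite product identity $h_k(x) h_\ell(x) = \sum_{s \leq \min(k,\ell)} \binom{k}{s}\binom{\ell}{s} s!\, h_{k+\ell-2s}(x)$ at each Gaussian entry $\GEntries{\s}{ij}$ that is used by both $\alpha_1$ and $\alpha_2$ decomposes each pair's contribution into two types: the ``clean'' term (with $s=0$ at every shared entry), which contributes a ribbon of the disconnected shape $a = (a_1, a_2^\top)$; and terms with $s \geq 1$ at some shared entry, which arise only when some square or circle vertex label of $R_1$ agrees with one from $R_2$. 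Summing over all ribbon pairs, the clean terms reconstruct $M_a$ exactly, while the remaining terms, grouped by collision pattern, are precisely sums over ribbons of the collapsed shapes obtained from $(a_1, a_2^\top)$ by identifying vertices between the two components (and, when $s \geq 2$ on a shared hyperedge, merging the associated edges). Packaging these yields $\CollapsedErrorShapes{a_1,a_2^\top}$.

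For negligibility of a collapsed shape $a'$, I will bound $\|\lambda_{a'} M_{a'}\|$ by combining \pref{eq:lambda_bound} with \pref{prop:graph_matrix_norm_bound}: up to polylog factors, the product is at most $n^{-|E(a')| - (|U_{a'}|+|V_{a'}|)/2} \cdot n^{(w(V(a')) - w(\MVS) + w(\Wiso))/2} \cdot (|E(a')|/2)^{|E(a')|/2}$, where $\MVS$ and $\Wiso$ are computed in $a'$. The key point, to be verified case-by-case over the elementary collision types illustrated in \pref{fig:disconnected_shapes}, is that each identification of two vertices strictly decreases $w(V(a'))$ -- by $1$ for a square collapse and $2-\eps$ for a circle collapse, where $m = n^{2-\eps}$ -- while any simultaneous gain in $w(\Wiso)$ or decrease in $w(\MVS)$ cannot fully compensate, since a collapsed vertex cannot be both isolated and lie on the separator. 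Consequently each collision contributes an extra multiplicative factor of at most $n^{-\Omega(\eps)}$ over what is already accounted for in the disconnected shape, delivering the $n^{-\Omega(\eps|E(a')|)}$ suppression required by \pref{def:negligible_shapes}.

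The main obstacle will be the combinatorial bookkeeping for the mixed-collision case, where a square identification causes two hyperedges to merge into a parallel pair (as in the third panel of \pref{fig:disconnected_shapes}): the Hermite product formula then contributes a multinomial factor, which must be absorbed into the $(|E(a')|/2)^{|E(a')|/2}$ part of $\lambda_{a'}$ without destroying the per-collapse $n^{-\Omega(\eps)}$ gain. This should go through because $|E(a)| \leq \tau = (\log n)^{O(1)}$ in the regime fixed by \pref{lem:M_E_PSD}, so the multinomial factors are only polylogarithmic, but the accounting must be uniform across all collision patterns and must interact correctly with the criterion implicit in \pref{lem:shape_phi}.
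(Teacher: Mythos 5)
Your plan is broadly right, but the negligibility argument as set up has a gap that would make the claim false. You propose to show $\|\lambda_{a'} M_{a'}\| \leq n^{-(|U_{a'}|+|V_{a'}|)/2}\cdot n^{-\Omega(\eps|E(a')|)}$ for each collapsed shape $a'$, using $a'$'s own coefficient $\lambda_{a'}$. However, the error term that actually appears when expanding $(\lambda_{a_1}M_{a_1})(\lambda_{a_2}M_{a_2})^\top$ carries the coefficient $\lambda_a = \lambda_{a_1}\lambda_{a_2}$ of the disconnected shape $a = (a_1,a_2^\top)$, not $\lambda_{a'}$, and these can differ by a polynomial-in-$n$ factor precisely when edges are deleted: if a collision creates parallel edges that are then cancelled by the $h_0$ contribution of $h_1^2 = h_2 + h_0$, then $|E(a')| < |E(a)|$ and $|\lambda_{a'}| \approx |\lambda_a|\cdot n^{|E(a)|-|E(a')|}$, while $M_{a'}$ simultaneously acquires isolated middle vertices whose weight inflates $\|M_{a'}\|$ by a compensating power of $n$. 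For a concrete failure, take $a_1 = a_2$ to be the two-edge left one-sided shape with $U_{a_1} = \{i_1,i_2\}$, one middle square $k$, one middle circle $\sigma$, and edges $(i_1,k,\sigma)$, $(i_2,k,\sigma)$: the total collapse identifying all vertices and deleting all four edges yields $a'$ with $E(a') = \varnothing$, two isolated middle vertices, and $\|M_{a'}\| = \Theta(nm) = \Theta(n^{3-\eps})$, so $\|\lambda_{a'}M_{a'}\| = \Theta(n^{1-\eps})$, which vastly exceeds the target $n^{-2}$; yet $\|\lambda_a M_{a'}\| = \Theta(n^{-3-\eps})$ is fine. The obstacle you flag (Hermite multinomial factors) is real but only polylogarithmic since $|E(a)|\leq\tau$; the essential missing piece is this polynomial factor. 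The paper's proof is organized around exactly this point: it bounds $\lambda_a M_b$ from the start, records $|\lambda_a| = |\lambda_b|\cdot O(n^{-|\mathrm{Del}|})$ where $\mathrm{Del}$ is the set of deleted edges, and reduces the claim to $\varphi(b) + |\mathrm{Del}| \geq \Omega(\eps|E(b)|)$, which it verifies by casing on whether the newly isolated vertices were originally adjacent only to circles or also to squares.

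A smaller secondary point: in your derivation of the product identity, the ``clean'' terms that reconstruct $M_a$ should be the ribbon pairs with no shared vertex labels at all, not those with $s=0$ at every shared Gaussian entry. A pair that shares a Gaussian entry but takes $s=0$ there contributes a single Hermite factor of degree $k+\ell$, which is a ribbon of a collapsed shape with a merged (higher-multiplicity) edge rather than a ribbon of $a$, whose ribbons have pairwise-distinct labels across the two components. Apart from these two issues your expansion is essentially the paper's.
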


The collapsed shapes are a result of graph matrix multiplication; see details in \pref{sec:disconnected_shapes}.
We will show that all disconnected shapes can be captured in a PSD component.
The intuition is that since $\lambda_a = \lambda_{a_1} \lambda_{a_2}$, the term $\lambda_a M_a \approx (\lambda_{a_1} M_{a_1}) (\lambda_{a_2} M_{a_2})^\top$.

\begin{lemma} \label{lem:disconnected_shapes_captured}
    Consider the first column of $\MomentMatrix'$: $(\MomentMatrix_{00}, 0, \MomentMatrix_{20})$, and let $v \seteq (1, 0, \frac{\MomentMatrix_{20}}{\MomentMatrix_{00}})$.
    The matrix $\MomentMatrix_{00} \cdot vv^\top$ captures all disconnected shapes in $\MomentMatrix_{22}$ modulo some error consisting of negligible shapes.
\end{lemma}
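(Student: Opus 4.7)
}
Since $v = (1, 0, \MomentMatrix_{20}/\MomentMatrix_{00})$, the rank-one matrix has block structure
\[
    \MomentMatrix_{00}\, vv^\top \;=\;
    \begin{bmatrix}
        \MomentMatrix_{00} & 0 & \MomentMatrix_{20}^\top \\
        0 & 0 & 0 \\
        \MomentMatrix_{20} & 0 & \MomentMatrix_{20}\MomentMatrix_{20}^\top/\MomentMatrix_{00}
    \end{bmatrix},
\]
so the $(0,0), (2,0), (0,2)$ entries agree with those of $\MomentMatrix'$ exactly. The content of the lemma therefore reduces to showing that $\MomentMatrix_{20}\MomentMatrix_{20}^\top/\MomentMatrix_{00}$ equals the disconnected-shape contribution $\mathcal{D} \seteq \sum_{a \in \mathcal{D}_{22}} \lambda_a M_a$ to the $(2,2)$ block up to negligible shapes, where $\mathcal{D}_{22}$ is the set of disconnected shapes in $\ValidShapes$ with $|U_a|=|V_a|=2$.

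First I would establish a factorization. Every $a \in \mathcal{D}_{22}$ splits uniquely as $a = (a_1, a_2^\top)$ with $a_1, a_2$ left one-sided shapes satisfying $|U_{a_1}| = |U_{a_2}| = 2$. A direct substitution into \pref{eq:lambda_explicit} using $|\alpha|=|\alpha_1|+|\alpha_2|$, $\alpha!=\alpha_1!\alpha_2!$, and the fact that the two components share no circle labels (so each $|\Alphas|$ lives entirely in one half) gives the multiplicativity $\lambda_a = \lambda_{a_1}\lambda_{a_2}$. Combined with \pref{lem:collapsed_shapes_negligible}, which asserts $M_{a_1}M_{a_2}^\top = M_a + \CollapsedErrorShapes{a_1,a_2^\top}$, and summing over disconnected shapes,
\[
    \mathcal{D} \;=\; \sum_{a_1,a_2}\lambda_{a_1}\lambda_{a_2}\,M_{a_1}M_{a_2}^\top \;-\; \mathcal{E}_{\mathrm{coll}}, \qquad \mathcal{E}_{\mathrm{coll}} \seteq \sum_{a_1,a_2}\lambda_{a_1}\lambda_{a_2}\,\CollapsedErrorShapes{a_1,a_2^\top}.
\]
The first sum factors as $(\sum_{a_1}\lambda_{a_1}M_{a_1})(\sum_{a_2}\lambda_{a_2}M_{a_2})^\top = \MomentMatrix_{20}\MomentMatrix_{20}^\top$, where I use that $\MomentMatrix_{20}$ decomposes over left one-sided shapes with $|U_{a_1}|=2$ (any residual disconnected $(0,0)$-type components inside $\MomentMatrix_{20}$ are absorbed into the normalization by $\MomentMatrix_{00}$ in the next step). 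Hence $\mathcal{D} = \MomentMatrix_{20}\MomentMatrix_{20}^\top - \mathcal{E}_{\mathrm{coll}}$.

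Next I would strip the $\MomentMatrix_{00}$-denominator. By \pref{lem:trivial_negl_psd} at $k=0$, $\MomentMatrix_{00} = 1 + \mathcal{N}_{00}$ with $|\mathcal{N}_{00}| \le n^{-c_2\eps}$, so $1/\MomentMatrix_{00} = 1 - \mathcal{N}_{00}/\MomentMatrix_{00}$; combined with $\|\MomentMatrix_{20}\| \le n^{-1/2}$ from \pref{lem:M_norm}, this yields $\MomentMatrix_{20}\MomentMatrix_{20}^\top/\MomentMatrix_{00} = \MomentMatrix_{20}\MomentMatrix_{20}^\top - \mathcal{E}_0$ with $\|\mathcal{E}_0\| \le n^{-1-\Omega(\eps)}$. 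Putting everything together,
\[
    \frac{\MomentMatrix_{20}\MomentMatrix_{20}^\top}{\MomentMatrix_{00}} \;=\; \mathcal{D} \;+\; \mathcal{E}_{\mathrm{coll}} \;-\; \mathcal{E}_0,
\]
so it remains to certify that $\mathcal{E}_{\mathrm{coll}}$ and $\mathcal{E}_0$ are accounted for by negligible shapes in the sense of \pref{def:negligible_shapes}.

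The main obstacle is controlling $\mathcal{E}_{\mathrm{coll}}$. Each individual $\CollapsedErrorShapes{a_1,a_2^\top}$ is a sum of collapsed shapes that are negligible by \pref{lem:collapsed_shapes_negligible}, but $\mathcal{E}_{\mathrm{coll}}$ is a massive double sum over $(a_1,a_2)$-pairs, so a summand-by-summand bound will not suffice. I would regroup $\mathcal{E}_{\mathrm{coll}}$ by the resulting collapsed shape $a'\in\ValidShapes$: each such $a'$ arises from the pairs $(a_1,a_2)$ admitting a prescribed identification of inner square and circle vertices, and the combinatorial multiplicities together with the multiplicativity of $\lambda$ reassemble into the canonical shape decomposition $\sum_{a'}\lambda_{a'}M_{a'}$ ranging over shapes obtained by nontrivially collapsing a disconnected pair. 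This sum is then bounded using the same charging scheme as in \pref{lem:trivial_negl_psd}: any nontrivial collapse strictly shortens the minimum vertex separator of the parent shape while keeping $|E(a')|>0$, so \pref{prop:graph_matrix_norm_bound} delivers the required decay. The error $\mathcal{E}_0$ is then bounded directly from the norm estimates of \pref{lem:M_norm}.
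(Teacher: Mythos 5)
The main gap is in your opening factorization claim. You assert that ``every $a \in \mathcal{D}_{22}$ splits \emph{uniquely} as $a = (a_1, a_2^\top)$ with $a_1, a_2$ left one-sided shapes.'' This is false for disconnected shapes that contain a \emph{floating} component -- a connected component disjoint from both $U_a$ and $V_a$. By \pref{def:one_sided_shape}, one-sided shapes have no isolated components, so such an $a$ admits no decomposition of the claimed form, and these shapes are exactly what the lemma's $\MomentMatrix_{00}$-normalization is designed to absorb. The paper's proof handles this by observing that $\MomentMatrix_{20}$ factors as $\MomentMatrix_{00}\cdot\sum_{a\in\LeftOneSidedShapes}\lambda_a M_a$ plus collapse errors (the discussion preceding the lemma: summing over all floating pieces $b$ multiplies by $1+\sum_b \lambda_b M_b = \MomentMatrix_{00}$), and likewise that the disconnected part of $\MomentMatrix_{22}$ equals $\MomentMatrix_{00}\cdot(\sum_{a\in\LeftOneSidedShapes}\lambda_a M_a)(\sum_{a\in\LeftOneSidedShapes}\lambda_a M_a)^\top$ plus negligible terms, so the identity $\MomentMatrix_{20}\MomentMatrix_{20}^\top/\MomentMatrix_{00} \approx \mathcal{D}$ holds \emph{exactly} in this bookkeeping. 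Your parenthetical -- that residual $(0,0)$-type components ``are absorbed into the normalization by $\MomentMatrix_{00}$ in the next step'' -- gestures at this but is not correct as stated: the Taylor step yields an $\mathcal{E}_0$ term of norm on the order of $\|\MomentMatrix_{20}\|^2 |\MomentMatrix_{00}-1| = \wt{O}(n^{-1-\Omega(\eps)})$, which is \emph{not} dominated by the $n^{-2}$ trivial-shape identity and hence is not ``negligible'' in the sense of \pref{def:negligible_shapes}. Your derived equation $\MomentMatrix_{20}\MomentMatrix_{20}^\top/\MomentMatrix_{00} = \mathcal{D} + \mathcal{E}_{\mathrm{coll}} - \mathcal{E}_0$ therefore cannot have both error terms negligible unless $\mathcal{D}$ is silently redefined to exclude floating-component shapes -- but then it no longer equals the full disconnected contribution, and the missing piece $\mathcal{D}^{\text{fl}}$ is precisely $\mathcal{E}_0$ up to collapse errors, so the accounting in your write-up is off by a non-negligible term.

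A secondary issue: your proposed regrouping of $\mathcal{E}_{\mathrm{coll}}$ ``into the canonical shape decomposition $\sum_{a'}\lambda_{a'}M_{a'}$'' is not quite right. Each collapsed shape $a'$ appears in $\CollapsedErrorShapes{a_1,a_2^\top}$ with coefficient $\lambda_{a_1}\lambda_{a_2}=\lambda_a$ (the parent disconnected shape), and $\lambda_a \ne \lambda_{a'}$ in general (collapsing may delete edges or change multiplicities, which changes $\alpha!$ and $|\alpha|$). The quantity that \pref{lem:collapsed_shapes_negligible} actually controls is $\|\lambda_a M_{a'}\|$, not $\|\lambda_{a'} M_{a'}\|$, and the sum over pairs is then bounded directly by combining that lemma with the shape count $\ell^{O(\ell)}$ of \pref{lem:number_of_shapes} (the same charging as in \pref{lem:trivial_negl_psd}) -- so a summand-by-summand bound does in fact suffice, without needing the $\lambda_{a'}$ reassembly you propose.
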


% Let $\LeftOneSidedShapes$ be the set of left one-sided shapes in $\MomentMatrix_{20}$ such that $\sum_{a\in \LeftOneSidedShapes} \lambda_a M_a = \MomentMatrix_{20}$.
% Further, let $\OneSidedVector \seteq e_1 + \MomentMatrix_{20} + \SpiderErrorBlock{20}$ (note that this is a vector), which is simply the first column of $\MomentMatrix-A$ in \pref{eq:M_minus_A}.
% The main insight is that $\OneSidedVector \OneSidedVector^\top$ captures all shapes in $\MomentMatrix_{20}, \MomentMatrix_{02}$ (including the errors $\SpiderErrorBlock{20},\SpiderErrorBlock{02}$) and also all disconnected shapes in $\MomentMatrix_{22}$.

% where $\MomentMatrix_{22}''$ is $\MomentMatrix_{22}$ without the spider and the disconnected shapes.
% The error $\SpiderErrorBlock{22}'' = \SpiderErrorBlock{22}' + \CollapsedError$, where $\SpiderErrorBlock{22}'$ is the error from before and has norm $\|\SpiderErrorBlock{22}'\| = \wt{O}(n^{-2-\eps})$, and $\CollapsedError$ consists of the extra shapes introduced in the graph matrix multiplication.
% Fortunately, we will show that these extra shapes are all negligible (\pref{lem:collapsed_shapes_negligible}).

At this point, we can conclude that all shapes in $\MomentMatrix$ are accounted for and thus $\MomentMatrix$ is positive semidefinite. However, it does not exactly satisfy the constraints.
We now proceed to prove that we can correct $\MomentMatrix$ with a small perturbation.

\paragraph{Fixing the \pd.}
We show that we can ``fix'' $\MomentMatrix$ such that $\FinalMomentMatrix = \MomentMatrix + \calE$ satisfies all constraints exactly and that $\calE$ is negligible. Suppose we view the \pe $\pE$ as a flattened vector, then there exists a matrix $Q$ such that $Q\pE = 0$ if and only if $\pE$ satisfies all constraints.
Here we assume that $\pE$ only includes even-degree monomials, since odd-degree monomials are zero already and don't need to be fixed.

To begin, in \pref{sec:truncation_error} we show that if the truncation threshold $\tau$ in \pref{def:candidate-pd} is chosen appropriately, then $\MomentMatrix$ already approximately satisfies the constraints, i.e.\ the norm of the ``error vector'' $\|Q\pE\|_2 \approx 0$.
\begin{lemma} \label{lem:error_upper_bound}
    There exist constants $C,C_1, c_2, c_3 > 0$ such that if $\eps \geq \frac{C\log\log n}{\log n}$ and $\frac{C_1}{\eps} \leq \tau \leq n^{c_2\eps}$, then $\|Q\pE\|_2 \leq n^{-c_3 \eps \tau}$.
\end{lemma}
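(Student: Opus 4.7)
\textbf{Proof proposal for \pref{lem:error_upper_bound}.} My plan is to express each coordinate of $Q\pE$ as the \emph{tail} of a Hermite expansion (the part discarded by truncation), apply Plancherel over $\nu_N$ to convert the $L^2$-error into a sum of squared coefficients, and then reuse the combinatorial Hermite-counting machinery already developed in \pref{sec:low_degree_hardness} to show the tail sum decays geometrically with $\tau$.

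\emph{Step 1: the untruncated \pd\ exactly satisfies the constraints.} For $s \in [m]$ and $|I| \le 2$, the $(s,I)$ row of $Q\pE$ is $\pE_\mu[g_\s(x)\, x^I] = \sum_{i,j} \G{s}^{ij}\,\pE_\mu[x_i x_j x^I]$. Using the pseudo-calibration identity from \pref{def:candidate-pd}, if we formally sent $\tau \to \infty$ then $\pE_\mu[x_i x_j x^I]$ would become the conditional moment $\expover{(\bG',\bz)\sim\Pl}{\bz_i \bz_j \bz^I \mid \bG' = G}$, so the whole row would reduce to $\expover{\Pl}{\langle \G{s}, \bz^{\otimes 2}\rangle \bz^I \mid \bG'=G}$. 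Since we set $\scaling=0$, the conditioning in $\Pl$ forces $\langle \G{s}, \bz^{\otimes 2}\rangle = 0$ identically, so this untruncated row vanishes. Therefore the truncated row equals the negative of the discarded tail:
\begin{equation*}
    (Q\pE)_{s,I} \;=\; -\sum_{i,j} \G{s}^{ij} \sum_{\substack{\alpha\in \N^{m\times n\times n}\\ |\alpha| > \tau}} \expover{(\bG',\bz)\sim\Pl}{\bz_i \bz_j \bz^I h_\alpha(\bG')}\cdot \frac{h_\alpha(\bG)}{\alpha!}\,.
\end{equation*}

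\emph{Step 2: Plancherel and reduction to a Hermite tail sum.} Treating $\G{s}^{ij}$ as the Hermite coefficient of $h_1$ in the $(s,i,j)$ coordinate, the product $\G{s}^{ij} h_\alpha(\bG)$ is (up to bookkeeping on parities, which match those in \pref{lem:exp_planted_h}) a single higher-degree Hermite polynomial $h_{\alpha + e_{(s,i,j)}}(\bG)$ plus a ``lower'' Hermite term coming from the Hermite product rule. After grouping, $(Q\pE)_{s,I}$ becomes a linear combination of orthonormal Hermite polynomials $h_\gamma(\bG)/\sqrt{\gamma!}$ indexed by hypergraphs $\gamma$ with $|\gamma| \ge \tau$, and the coefficients are exactly the quantities $\expover{\Pl}{\bz^{I\cup\{i,j\}} h_\gamma(\bG')}/\sqrt{\gamma!}$ (possibly with $|\gamma|=\tau$ from the raising term, which we can absorb by redefining $\tau \to \tau-1$). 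Consequently, by orthonormality of Hermite polynomials under $\nu_N$,
\begin{equation*}
    \expover{\bG\sim\nu_N}{(Q\pE)_{s,I}^2} \;\le\; \sum_{\substack{\gamma:\ |\gamma|\ge \tau}} \frac{1}{\gamma!}\,\expover{(\bG',\bz)\sim\Pl}{\bz^{I\cup\{i,j\}} h_\gamma(\bG')}^2\,.
\end{equation*}

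\emph{Step 3: reuse the low-degree counting, restricted to $|\gamma| > \tau$.} Summing over $s \in [m]$, $|I|\le 2$, $i,j \in [n]$ contributes at most an $mn^4 \le n^{6}$ prefactor, and the remaining sum has exactly the form analyzed in \pref{sec:bounding_variance}: a bound on $\expover{\Pl}{\bz^K h_\gamma(\bG')}^2/\gamma!$ summed over hypergraphs $\gamma$. Repeating the analysis of \pref{lem:graph_upper_bound} and \pref{lem:edge_labels_contribution} restricted to $|\gamma|=\nedge \ge \tau$, each edge of $\gamma$ contributes a factor of order $m\nedge/n^2 = O(\nedge\cdot n^{-\eps})$, and the over-counting $n^K$ factor from the monomial $\bz^K$ is absorbed by the $n^{-D\nedge/2}$ prefactor from \pref{lem:exp_planted_h} (since $|K|\le 4$ is constant). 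Thus $\expover{\nu_N}{\|Q\pE\|_2^2} \le n^{6}\cdot \sum_{\nedge\ge \tau}\, (C\tau\cdot n^{-\eps})^{\nedge/2}$, which is geometric in $\nedge$ provided $C\tau\cdot n^{-\eps}\le 1/2$; this is guaranteed by the hypothesis $\tau \le n^{c_2 \eps}$ with $c_2$ small. The geometric sum is dominated by its first term $\nedge=\tau$, giving $\expover{\nu_N}{\|Q\pE\|_2^2} \le n^{6-c\eps\tau}$, which is $\le n^{-2c_3\eps\tau}$ once $\tau \ge C_1/\eps$ for $C_1$ sufficiently large. A standard Markov/median concentration over the randomness of $\bG$ (or directly combining the second-moment bound with the polynomial-concentration tools that control graph-matrix norms in \pref{prop:graph_matrix_norm_bound}) upgrades this to a pointwise bound $\|Q\pE\|_2 \le n^{-c_3\eps\tau}$ with probability $1-o(1)$.

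\emph{Expected obstacle.} The main technical subtlety is not the counting itself — that is essentially a restriction of the calculation already completed in \pref{lem:low_degree_variance} — but rather the bookkeeping for the multiplication $\G{s}^{ij}\cdot h_\alpha(\bG)$, which raises the Hermite degree by one in coordinate $(s,i,j)$ and produces a lower-order correction. One must verify that the lower-order term also has index $|\gamma|\ge \tau-1$ (shifting $\tau$ by a constant is harmless) and that the parity constraints of \pref{lem:exp_planted_h} carry through after this shift. Once this is verified, the rest is a direct application of the techniques of \pref{sec:bounding_variance}, and the constants $C, C_1, c_2, c_3$ can be read off explicitly from those computations.
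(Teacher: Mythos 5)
Your overall blueprint — express $Q\pE$ as a Hermite-truncation residual, invoke the planted constraint $\bz^\top \Coeff{\bG'}{\s}\bz = 0$ to kill everything except a boundary term, and then bound that boundary term by shape-counting — is the right one and matches the paper's strategy. But the way you set up the boundary term in Steps 1--2 has a gap that your Step 3 can't survive.

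The claim in Step 1 that ``the truncated row equals the negative of the discarded tail'' treats the untruncated pseudo-calibration series as a convergent object, but it is only a formal power series and the partial-sum identity it would license is exactly what needs to be proved. Concretely, multiplying the truncated $\pE_\mu[x_i x_j x^I]$ by $\G{\s}^{ij}$ and applying the recurrence $x h_k(x) = h_{k+1}(x) + k h_{k-1}(x)$ produces Hermite polynomials of \emph{all} degrees from $0$ up to $\tau+1$, not only degrees $\geq \tau$. The content of the paper's argument is a telescoping identity: pairing the $h_k$ term with the $h_{k+2}$ term, the coefficient of $h_{k+1}(\GEntries{\bG}{\s}{ij})$ is exactly $\frac{1}{(k+1)!} h_{k+1}(\GEntries{\bG}{\s}{ij}) \cdot \GEntries{\bG'}{\s}{ij} h_{k+1}(\GEntries{\bG'}{\s}{ij})$; that is, multiplication by $\GEntries{\bG}{\s}{ij}$ \emph{transfers} to multiplication by $\GEntries{\bG'}{\s}{ij}$ inside the planted expectation, where the constraint $\bz\Coeff{\bG'}{\s}\bz^\top = 0$ annihilates it. The only thing that survives is the single unmatched boundary layer $|\alpha|=\tau$, giving Hermite support concentrated at degree exactly $\tau+1$, not a tail over all $|\gamma|\geq \tau$. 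Your ``Expected obstacle'' paragraph gestures at this but treats it as routine bookkeeping to be checked; it is actually the proof.

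This matters downstream in Step 3: the sum $\sum_{\nedge\geq\tau}(C\nedge\cdot n^{-\eps})^{\nedge/2}$ is \emph{not} a convergent geometric series — the ratio $C\nedge\cdot n^{-\eps}$ exceeds $1$ once $\nedge \gtrsim n^{\eps}$, so summed to infinity it diverges. In the paper's setting $n^{\eps}$ is only polylogarithmic, so even with $\tau \leq n^{c_2\eps}$ you are not far from the blow-up regime and cannot wave this away. The series is only finite because, after the telescoping, the boundary residual is a single shell at $\nedge=\tau+1$, at which point you are just quoting a single shape-counting bound, not summing a series. If you first establish the single-shell fact, your Step 3 bound (one term of the form $n^{O(1)}\cdot(C\tau n^{-\eps})^{\tau/2}$) does give $n^{-\Omega(\eps\tau)}$ once $\tau\geq C_1/\eps$, which is correct.

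One further difference from the paper: you propose to bound $\E_{\nu_N}\|Q\pE\|_2^2$ and then upgrade to a high-probability pointwise bound by Markov/concentration, whereas the paper bounds $\|Q\pE\|_2$ directly via the graph-matrix norm bounds of \pref{prop:graph_matrix_norm_bound}, which already hold with probability $1-o(1)$. Your route is workable (Markov with a polynomial slack factor suffices for the application in \pref{lem:correction_norm}), but it re-proves concentration that the graph-matrix machinery already supplies for free, and it quietly changes the constant $c_3$; it is cleaner to follow the paper and apply \pref{lem:shape_phi} directly to the boundary shapes with $|E(a)|=\tau$ to get a pointwise $n^{-\Omega(\eps\tau)}$ bound.
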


Next, we fix $\pE$ by projecting it to the null space of $Q$,
\begin{equation*}
    \FinalpE \seteq \pE - Q^\top(QQ^\top)^\dag Q\pE,
\end{equation*}
where $(QQ^\top)^\dag$ is the pseudo-inverse of $QQ^\top$ since it is not invertible.
Clearly, $Q \FinalpE = 0$.

Finally, to bound the norm of the correction $\|Q^\top(QQ^\top)^\dag Q\pE\|_2$, it suffices to upper bound $\|Q\|$ and $\|(QQ^\top)^\dag\|$.
For $\|(QQ^\top)^\dag\|$, we need to \textit{lower bound} the smallest \textit{nonzero} singular value of $Q$.
We prove the following in \pref{sec:fixing_pd},
\begin{lemma} \label{lem:Q_norm_singular_value}
    There exists a constant $C$ such that for $\eps \geq \frac{C\log\log n}{\log n}$,
    $\|Q\| \leq \wt{O}(n)$ and the smallest nonzero eigenvalue of $QQ^\top$ is $\Omega(n^2)$.
\end{lemma}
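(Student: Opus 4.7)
The matrix $Q$ encodes the linear constraints on $\pE$: for every $s \in [m]$ and every monomial $x^\alpha$ with $|\alpha| \leq 2$, a row enforcing $\pE[g_s(x) x^\alpha] = 0$, and for every $i \in [n]$ and every $x^\alpha$ with $|\alpha| \leq 2$, a row enforcing $\pE[(x_i^2 - 1/n) x^\alpha] = 0$. Columns are indexed by even-degree monomials of degree $\leq 4$, so $Q$ is roughly of dimension $(mn^2 + n^3) \times n^4$. Each polynomial-constraint row corresponding to $s$ is supported on exactly $n^2$ columns (the ``shift'' of $x^\alpha$ by the monomials of $g_s$), and has entries equal to the matching entries of the Gaussian matrix $G^{(s)}$; in particular, rows associated with different $s$ are independent.

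\textbf{Upper bound $\|Q\| \leq \wt{O}(n)$.} I would decompose $Q$ row-wise into blocks $Q_1,\ldots,Q_m$ (one per polynomial $g_s$) plus a deterministic booleanity block $Q_{\mathrm{bool}}$, so that $QQ^\top = \mathrm{diag}(Q_s Q_s^\top) + (\text{off-diagonal blocks } Q_{s_1} Q_{s_2}^\top) + \text{booleanity contributions}$. The diagonal entry of $Q_s Q_s^\top$ at $(\alpha,\alpha)$ equals $\|G^{(s)}\|_F^2$, concentrating around $n^2$, while off-diagonal entries $\langle g_s x^{\alpha_1}, g_s x^{\alpha_2}\rangle = \sum_r G^{(s)}_r G^{(s)}_{r + \alpha_1 - \alpha_2}$ are Gaussian bilinear forms bounded by $\wt{O}(n)$ via Hanson-Wright, yielding $\|Q_s Q_s^\top\| \leq (1+o(1))\cdot n^2$. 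For $s_1 \neq s_2$, each entry of $Q_{s_1} Q_{s_2}^\top$ is an inner product of independent Gaussian vectors of norm $n$, i.e.\ a centered variable of standard deviation $O(n)$; matrix Bernstein (or a graph matrix norm bound in the style of Section~\ref{sec:graph_matrices}) applied to $\sum_{s_1 \neq s_2}Q_{s_1}Q_{s_2}^\top$ then yields an operator-norm bound of $\wt{O}(\sqrt{m}\cdot n) = \wt{O}(n^{2-\eps/2}) = o(n^2)$. The booleanity block contributes $O(n^{3/2})$. Combining gives $\|QQ^\top\| \leq \wt{O}(n^2)$.

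\textbf{Lower bound $\lambda_{\min,\neq 0}(QQ^\top) \geq \Omega(n^2)$.} The strategy is to show that the diagonal of $QQ^\top$ already contributes $\Omega(n^2)$ times the projector onto the row span, and that all perturbations are of strictly smaller order. Each $Q_s Q_s^\top$ is, up to a perturbation of norm $\wt{O}(n) = o(n^2)$, equal to $n^2 \cdot \Pi_s$ where $\Pi_s$ is the orthogonal projector onto the row span of $Q_s$ (this follows from Gershgorin applied to $Q_s Q_s^\top - n^2 \Id$ using the Hanson-Wright bound on its entries, together with Gaussian concentration of the diagonal). Summing across $s$ and accounting for the booleanity block, the block-diagonal part of $QQ^\top$ is lower bounded by $\Omega(n^2)\cdot \Pi$ on the row span $\Pi$ of $Q$. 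The cross blocks were shown to contribute $o(n^2)$ in Step 1, hence $QQ^\top \succeq \Omega(n^2)\cdot \Pi$. This mirrors the \RankDecomposition argument from Section~\ref{sec:nullsatz}: viewing each $Q_s$ as a Gaussian ``near-isotropy'' witness on its row span, we stitch the blocks together via a perturbation (equivalently, Schur complement) argument analogous to \pref{lem:block_matrix_independence}.

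\textbf{Main obstacle.} The delicate point is the cross-block estimate in Step 1 and its reuse in Step 2: we have $m \approx n^{2-\eps}$ blocks, and the $\binom{m}{2} \approx n^{4-2\eps}$ cross-block contributions, each of per-entry scale $O(n)$, must aggregate to operator norm $o(n^2)$. This is precisely where the polylogarithmic slack $m \leq n^2/\poly(\log n)$ is consumed: one must avoid losing a factor $\sqrt{m}$ at any stage that would otherwise push the bound past $n^2$. A clean execution likely requires either the matrix Khintchine/Bernstein inequalities with tight constants or, more structurally, treating the cross-block contributions as graph matrices in the formalism of Section~\ref{sec:graph_matrices} and invoking the norm bounds of \cite{AMP20}. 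Handling the interaction between the $g_s$-rows and the deterministic booleanity rows (which share columns) is a secondary technical wrinkle that should not affect the polylog factors.
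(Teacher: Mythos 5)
Your upper bound $\|Q\| \leq \wt{O}(n)$ is fine, and matches the paper (which simply uses the graph matrix norm bounds on the two shapes $L_2,L_4$ that make up $Q$). The problem is the lower bound on the smallest nonzero eigenvalue of $QQ^\top$, where the cross-block estimate in your Step 1 — which you re-use as the crux of Step 2 — is wrong in a way that is fatal, not merely a lost constant.

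Concretely, restricting to the rows with $|I|=2$ and to two distinct labels $\s_1\neq\s_2$ and to index quadruples $i_1,j_1,i_2,j_2$ pairwise distinct, the entry of $Q_{\s_1}Q_{\s_2}^\top$ at $(\{\s_1,i_1,j_1\},\{\s_2,i_2,j_2\})$ is exactly $G^{(\s_1)}_{i_2 j_2}\,G^{(\s_2)}_{i_1 j_1}$; this is the graph matrix $M_{a^*}$ of the paper's \pref{fig:special_shape}. This matrix is the map $X \mapsto AX^\top A$ with $A(\s,\{i,j\})=G^{(\s)}_{ij}$ an $m\times\binom{n}{2}$ Gaussian matrix, so $\|M_{a^*}\| = \|A\|_{op}^2 = \Theta(n^2)$ since $m \le n^2$. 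In particular neither matrix Bernstein nor the graph matrix bounds of \pref{prop:graph_matrix_norm_bound} give you $\wt{O}(\sqrt{m}\,n)$ for the off-diagonal blocks: the true order is $\Theta(n^2)$, the same order as the diagonal. There is also a structural reason your intended conclusion cannot hold: $Q$ has a large nontrivial left null space coming from the identities $\sum_{k,\ell}(x^\top G^{(\s_1)}x)\,G^{(\s_2)}_{k\ell}x_k x_\ell - \sum_{k,\ell}(x^\top G^{(\s_2)}x)\,G^{(\s_1)}_{k\ell}x_k x_\ell = 0$, one for each pair $\s_1<\s_2$. Since each block $Q_\s Q_\s^\top$ is $\approx n^2\,\Id$ (i.e.\ each $\Pi_\s=\Id$), a genuine cross-block bound of $o(n^2)$ would force $QQ^\top \succeq \Omega(n^2)\,\Id$ on the full block and contradict the existence of this null space.

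The missing idea, and what the paper actually does, is to make that null space explicit as a matrix $N$ with $Q^\top N = 0$, compute $NN^\top$, and observe the exact cancellation: $NN^\top$ contains $-M_{a^*}$ plus a PSD piece $A_2 A_2^\top$ plus $\wt{O}(n^{2-\eps/2})$ error, while $L_4 L_4^\top$ contains $+M_{a^*}$ plus $\Theta(n^2)\,\Id$ plus a PSD piece $A_1 A_1^\top$ plus error. Hence $L_4 L_4^\top + NN^\top = \Theta(n^2)\Id + A_1A_1^\top + A_2A_2^\top + \wt{O}(n^{2-\eps/2}) \succeq \Omega(n^2)\Id$, and since $N$ is orthogonal to the column space of $Q$ this yields a lower bound of $\Omega(n^2)$ on the nonzero eigenvalues of $QQ^\top$. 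Your Schur-complement/perturbation framework cannot substitute for this because the perturbation is not small; you need the structured cancellation against the explicitly constructed null space. (A secondary point: the paper's $Q$ contains only the $\pE[x^I(x^\top G^{(\s)}x)]$ rows; booleanity is built into working over $\NormCube{n}$, so your booleanity block should not be there, but that is minor compared to the cross-block issue.)
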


\pref{lem:error_upper_bound} and \pref{lem:Q_norm_singular_value} immediately imply the following
\begin{lemma} \label{lem:correction_norm}
    There exist constants $C,C_1, c_2, c_3 > 0$ such that if $\eps \geq \frac{C\log\log n}{\log n}$ and $\frac{C_1}{\eps} \leq \tau \leq n^{c_2\eps}$, then there exists a matrix $\CorrectionMatrix$ that corrects the nonzero blocks of $\MomentMatrix$ such that $\MomentMatrix + \CorrectionMatrix$ satisfies all constraints $\{\gs(x) = 0\}_{\s\leq m}$ and that $\|\CorrectionMatrix\| \leq n^{-\Omega(\eps\tau)}$.
\end{lemma}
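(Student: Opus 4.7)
The plan is to recognize \pref{lem:correction_norm} as a purely linear-algebraic consequence of \pref{lem:error_upper_bound} and \pref{lem:Q_norm_singular_value}, obtained by projecting the candidate pseudo-expectation onto the constraint variety via the Moore--Penrose pseudo-inverse. Following the setup in the overview, I would flatten all degree-$\leq 4$ moments into a single vector $\pE$ indexed by monomials $x^S$ with $|S|\leq 4$, and let $Q$ be the matrix whose rows encode the linear equations imposed by $\{g_\s(x)\cdot x^T = 0\}_{\s\in[m],\,|T|\leq 2}$, so that $\pE$ satisfies the constraint system exactly iff $Q\pE = 0$. I would then set
\[
\FinalpE \;\seteq\; \pE \;-\; Q^\top (QQ^\top)^{\dag}\, Q\pE,
\]
which automatically satisfies $Q\FinalpE = 0$: since $Q\pE$ lies in the range of $Q$, which equals the range of $QQ^\top$, the operator $(QQ^\top)(QQ^\top)^\dag$ fixes $Q\pE$. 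The induced correction to the moment matrix is $\CorrectionMatrix(I,J) = -[Q^\top(QQ^\top)^\dag Q\pE]_{I+J}$, a matrix with entries in the nonzero blocks of $\MomentMatrix$ only.

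Next I would control the $\ell_2$-norm of the correction vector via the standard operator inequality
\[
\bigl\|Q^\top(QQ^\top)^\dag Q\pE\bigr\|_2 \;\leq\; \|Q^\top\|\cdot\bigl\|(QQ^\top)^\dag\bigr\|\cdot\|Q\pE\|_2.
\]
By \pref{lem:Q_norm_singular_value}, $\|Q^\top\|=\|Q\|\leq \wt{O}(n)$ and the smallest \emph{nonzero} eigenvalue of $QQ^\top$ is $\Omega(n^2)$, which gives $\|(QQ^\top)^\dag\|\leq O(n^{-2})$. By \pref{lem:error_upper_bound}, $\|Q\pE\|_2\leq n^{-c_3\eps\tau}$ under the stated hypotheses on $\eps$ and $\tau$. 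Multiplying these three estimates yields
\[
\bigl\|Q^\top(QQ^\top)^\dag Q\pE\bigr\|_2 \;\leq\; \wt{O}(n^{-1})\cdot n^{-c_3\eps\tau} \;\leq\; n^{-\Omega(\eps\tau)},
\]
where the residual $\wt{O}(n^{-1})$ prefactor is absorbed into the exponent by taking $C_1$ large enough that $\eps\tau\geq C_1$ dominates the polylogarithmic slack hidden in $\wt{O}(\cdot)$.

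Finally I would translate this vector-level estimate into the spectral norm bound on $\CorrectionMatrix$. Each entry of the flattened correction, indexed by a monomial $x^S$ with $|S|\leq 4$, appears as $\CorrectionMatrix(I,J)$ for at most $\binom{4}{2}=O(1)$ pairs $(I,J)$ with $I+J=S$ and $|I|,|J|\leq 2$. Hence the Frobenius norm satisfies $\|\CorrectionMatrix\|_F^2\leq O(1)\cdot \|Q^\top(QQ^\top)^\dag Q\pE\|_2^2$, and therefore
\[
\|\CorrectionMatrix\|\;\leq\;\|\CorrectionMatrix\|_F\;\leq\; n^{-\Omega(\eps\tau)},
\]
as required. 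There is no serious obstacle in this derivation itself: the entire content of the corollary lies in (i) recognizing that the minimum-norm correction to enforce $Q\FinalpE=0$ is exactly $Q^\top(QQ^\top)^\dag Q\pE$, and (ii) cleanly balancing the polynomially small constraint violation from \pref{lem:error_upper_bound} against the polynomially large lower bound on the smallest nonzero singular value of $QQ^\top$ from \pref{lem:Q_norm_singular_value}. The genuine work sits in those two ingredient lemmas, which require Hermite-tail estimates via shape enumeration and a rank/spectrum analysis of the constraint matrix $Q$, respectively.
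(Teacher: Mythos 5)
Your proof is correct and follows essentially the same route as the paper: project $\pE$ onto $\ker Q$ via the Moore--Penrose pseudo-inverse, bound $\|Q^\top(QQ^\top)^\dag Q\pE\|_2 \leq \|Q\|\,\|(QQ^\top)^\dag\|\,\|Q\pE\|_2$, and plug in \pref{lem:Q_norm_singular_value} and \pref{lem:error_upper_bound}. You additionally spell out the vector-to-matrix translation via the Frobenius norm, which the paper glosses over; your multiplicity count should technically also include pairs with $I\cap J\neq\varnothing$ (there are $O(n)$ of these per monomial, but each carries a $1/n$ factor from $x_i^2 = 1/n$, so the conclusion is unaffected).
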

\begin{proof}
    $\|Q\| \cdot \|(QQ^\top)^\dag\| \leq \wt{O}(1/n)$ due to \pref{lem:Q_norm_singular_value}.
    Thus, the correction $\|Q^\top(QQ^\top)^\dag Q\pE\|_2 \leq \|Q\| \cdot \|(QQ^\top)^\dag\| \cdot \|Q\pE\|_2 \leq n^{-\Omega(\eps\tau)}$.
\end{proof}

\paragraph{Putting things together.}
We are ready to prove \pref{lem:M_E_PSD}. The proof is essentially a summary of the results in this overview.

\begin{proof}[Proof of \pref{lem:M_E_PSD}]
    The candidate moment matrix given by \pref{def:candidate-pd} can be written as a sum of graph matrices of shapes in $\ValidShapes$: $\MomentMatrix = \sum_{a\in \ValidShapes} \lambda_a M_a$, and has the following structure,
    \begin{equation*}
        \MomentMatrix =
        \begin{bmatrix}
            \MomentMatrix_{00} & 0 & \MomentMatrix_{02} \\
            0 & \MomentMatrix_{11} & 0 \\
            \MomentMatrix_{20} & 0 & \MomentMatrix_{22}
        \end{bmatrix} \mper
    \end{equation*}

    By \pref{lem:correction_norm}, we can correct the moment matrix so that $\FinalMomentMatrix \seteq \MomentMatrix + \CorrectionMatrix$ satisfies all constraints with error $\|\CorrectionMatrix\| \leq n^{-\Omega(\eps \tau)}$.

    Next, by \pref{lem:spider_expansion}, there exists a matrix $A$ such that $\FinalMomentMatrix A = 0$ and that the spider term $\lambda_{\spider} M_{\spider}$ equals $A$ plus some errors:
    \begin{equation*}
        \begin{aligned}
            \MomentMatrix' \seteq \FinalMomentMatrix - A = 
            \begin{bmatrix}
                \MomentMatrix_{00} + \SpiderErrorBlock{00} & 0 & \MomentMatrix_{20}^\top + \SpiderErrorBlock{20}^\top \\
                0 & \MomentMatrix_{11} + \SpiderErrorBlock{11} & 0 \\
                \MomentMatrix_{20} + \SpiderErrorBlock{20} & 0 & \MomentMatrix_{22}' + \SpiderErrorBlock{22}
            \end{bmatrix} \mcom
        \end{aligned}
    \end{equation*}
    where $\MomentMatrix_{22}'$ is the block $\MomentMatrix_{22}$ with the spider removed, and the errors $|\SpiderErrorBlock{00}| = \wt{O}(n^{-3})$, $\|\SpiderErrorBlock{20}\| = \wt{O}(n^{-5/2})$, $\|\SpiderErrorBlock{22}\| = \wt{O}(n^{-2-\eps})$, and $\|\SpiderErrorBlock{11}\| = n^{-\Omega(\eps\tau)}$.
    Now, due to \pref{lem:kill_spider} it suffices to prove that $\MomentMatrix'$ is positive semidefinite.

    Next, let $u \seteq (1, 0, \frac{\MomentMatrix_{20} + \SpiderErrorBlock{20}}{\MomentMatrix_{00} + \SpiderErrorBlock{00}})$, the first column of $\MomentMatrix'$ divided by the first entry, and consider the matrix $(\MomentMatrix_{00} + \SpiderErrorBlock{00}) \cdot uu^\top$ (note that $\MomentMatrix_{00} + \SpiderErrorBlock{00} = 1+o(1)$).
    This matrix is approximately the matrix $\MomentMatrix_{00} \cdot vv^\top$ in \pref{lem:disconnected_shapes_captured} that captures the disconnected shapes in $\MomentMatrix_{22}$:
    \begin{equation*}
        \MomentMatrix' = (\MomentMatrix_{00} + \SpiderErrorBlock{00}) \cdot uu^\top + 
        \begin{bmatrix}
            0 & 0 & 0 \\
            0 & \MomentMatrix_{11} + \SpiderErrorBlock{11} & 0 \\
            0 & 0 & \MomentMatrix_{22}'' + \SpiderErrorBlock{22}''
        \end{bmatrix} \mcom
    \end{equation*}
    where $\MomentMatrix_{22}''$ is $\MomentMatrix_{22}'$ with the disconnected shapes removed, and $\SpiderErrorBlock{22}''$ contains error terms including negligible shapes, $\MomentMatrix_{20} \SpiderErrorBlock{20}^\top$, and $\SpiderErrorBlock{00} \MomentMatrix_{20} \MomentMatrix_{20}^\top$.
    By \pref{lem:M_norm}, the latter two have norms $\wt{O}(n^{-3}) \ll n^2$.

    Finally, both $\MomentMatrix_{11} + \SpiderErrorBlock{11}$ and $\MomentMatrix_{22}'' + \SpiderErrorBlock{22}''$ now only contain the trivial shapes and negligible shapes, hence by \pref{lem:trivial_negl_psd} they are PSD.
    This proves that $\MomentMatrix' \succeq  0$, which completes the proof.
\end{proof}

\begin{remark}[Comparison to the proof strategy of~\cite{GJJ20}] \label{rem:difference_from_GJJ}
    Our proof is conceptually similar and builds heavily on the analysis in~\cite{GJJ20} with some key differences.
    In~\cite{GJJ20}, the goal is to work with a special form of ``rank 1'' polynomial constraints $\{\brac{x,g_i}^2 = 1\}_{i\leq m}$ where the $g_i$s are random vectors (the ``affine planes'' problem). As a result, the construction of pseudo-distribution leads to a moment matrix with a different set of shapes playing a prominent role -- 2-uniform graphs as opposed to 3-uniform hypergraphs in our case. As a result, several components in the proof (including the spectral norm bounds, the characterization of negligible shapes and spiders) are different. 

    Our analysis also requires dealing with certain disconnected shapes a bit differently by ``charging'' them to an appropriate extra PSD component. This actually leads to an important quantitative difference: in the result of~\cite{GJJ20}, the sum-of-squares lower bound (that works for $n^{O(1)}$-degree as against just degree $4$ in our work) succeeds only for $m \leq n^{3/2-\eps}$. This is despite the fact that low-degree hardness even for the rank-1 random polynomial above suggests a threshold of $m \leq n^{2-\eps}$. In contrast, our analysis provides a nearly optimal lower bound at degree $4$ that matches the prediction of low-degree hardness.
\end{remark}

\section*{Acknowledgments}
We thank anonymous reviewers for their comments and suggestions. We also thank Alperen Ergür, Amit Sahai and Aayush Jain for illuminating discussions and pointing us to relevant related work.
Finally, we would like to thank Sidhanth Mohanty and Jeff Xu for discussions on low-degree hardness and SoS lower bounds in general.

\bibliographystyle{alpha}
\bibliography{main}

\newcommand{\etalchar}[1]{$^{#1}$}
\begin{thebibliography}{KMOW17}

\bibitem[AGK21]{10.5555/3458064.3458092}
Jackson Abascal, Venkatesan Guruswami, and Pravesh~K. Kothari.
\newblock {\em Strongly Refuting All Semi-Random Boolean CSPs}, page 454–472.
\newblock Society for Industrial and Applied Mathematics, USA, 2021.

\bibitem[ALMT14]{ALMT14}
Dennis Amelunxen, Martin Lotz, Michael~B McCoy, and Joel~A Tropp.
\newblock Living on the edge: Phase transitions in convex programs with random
  data.
\newblock {\em Information and Inference: A Journal of the IMA}, 3(3):224--294,
  2014.

\bibitem[AMP20]{AMP20}
Kwangjun Ahn, Dhruv Medarametla, and Aaron Potechin.
\newblock Graph matrices: Norm bounds and applications.
\newblock {\em arXiv preprint arXiv:1604.03423}, 2020.

\bibitem[AOW15]{AOW15}
Sarah~R Allen, Ryan ODonnell, and David Witmer.
\newblock How to refute a random csp.
\newblock In {\em 2015 IEEE 56th Annual Symposium on Foundations of Computer
  Science}, pages 689--708. IEEE, 2015.

\bibitem[AS17]{10.1007/978-3-319-56620-7_6}
Prabhanjan Ananth and Amit Sahai.
\newblock Projective arithmetic functional encryption and indistinguishability
  obfuscation from degree-5 multilinear maps.
\newblock In Jean-S{\'e}bastien Coron and Jesper~Buus Nielsen, editors, {\em
  Advances in Cryptology -- EUROCRYPT 2017}, pages 152--181, Cham, 2017.
  Springer International Publishing.

\bibitem[BBH{\etalchar{+}}20]{BBH+20}
Matthew Brennan, Guy Bresler, Samuel~B Hopkins, Jerry Li, and Tselil Schramm.
\newblock Statistical query algorithms and low-degree tests are almost
  equivalent.
\newblock {\em arXiv preprint arXiv:2009.06107}, 2020.

\bibitem[BBKK18]{BBKK18}
Boaz Barak, Zvika Brakerski, Ilan Komargodski, and Pravesh~K Kothari.
\newblock Limits on low-degree pseudorandom generators (or: Sum-of-squares
  meets program obfuscation).
\newblock In {\em Annual International Conference on the Theory and
  Applications of Cryptographic Techniques}, pages 649--679. Springer, 2018.

\bibitem[BC11]{MR2846491}
Peter B\"{u}rgisser and Felipe Cucker.
\newblock On a problem posed by {S}teve {S}male.
\newblock {\em Ann. of Math. (2)}, 174(3):1785--1836, 2011.

\bibitem[BCK15]{10.1145/2746539.2746625}
Boaz Barak, Siu~On Chan, and Pravesh~K. Kothari.
\newblock Sum of squares lower bounds from pairwise independence.
\newblock In {\em Proceedings of the Forty-Seventh Annual ACM Symposium on
  Theory of Computing}, STOC '15, page 97–106, New York, NY, USA, 2015.
  Association for Computing Machinery.

\bibitem[BGL17]{DBLP:conf/approx/BhattiproluGL17}
Vijay V. S.~P. Bhattiprolu, Venkatesan Guruswami, and Euiwoong Lee.
\newblock Sum-of-squares certificates for maxima of random tensors on the
  sphere.
\newblock In {\em {APPROX-RANDOM}}, volume~81 of {\em LIPIcs}, pages
  31:1--31:20. Schloss Dagstuhl - Leibniz-Zentrum fuer Informatik, 2017.

\bibitem[BHJ{\etalchar{+}}19]{BHJK19}
Boaz Barak, Samuel~B Hopkins, Aayush Jain, Pravesh Kothari, and Amit Sahai.
\newblock Sum-of-squares meets program obfuscation, revisited.
\newblock In {\em Annual International Conference on the Theory and
  Applications of Cryptographic Techniques}, pages 226--250. Springer, 2019.

\bibitem[BHK{\etalchar{+}}16]{DBLP:conf/focs/BarakHKKMP16}
Boaz Barak, Samuel~B. Hopkins, Jonathan~A. Kelner, Pravesh Kothari, Ankur
  Moitra, and Aaron Potechin.
\newblock A nearly tight sum-of-squares lower bound for the planted clique
  problem.
\newblock In {\em {FOCS}}, pages 428--437. {IEEE} Computer Society, 2016.

\bibitem[BHK{\etalchar{+}}19]{BHK19}
Boaz Barak, Samuel Hopkins, Jonathan Kelner, Pravesh~K Kothari, Ankur Moitra,
  and Aaron Potechin.
\newblock A nearly tight sum-of-squares lower bound for the planted clique
  problem.
\newblock {\em SIAM Journal on Computing}, 48(2):687--735, 2019.

\bibitem[BM16]{DBLP:conf/colt/BarakM16}
Boaz Barak and Ankur Moitra.
\newblock Noisy tensor completion via the sum-of-squares hierarchy.
\newblock In {\em {COLT}}, volume~49 of {\em {JMLR} Workshop and Conference
  Proceedings}, pages 417--445. JMLR.org, 2016.

\bibitem[BP31]{MR1576817}
A.~Bloch and G.~P\'{o}lya.
\newblock On the {R}oots of {C}ertain {A}lgebraic {E}quations.
\newblock {\em Proc. London Math. Soc. (2)}, 33(2):102--114, 1931.

\bibitem[BP08]{MR2403529}
Carlos Beltr\'{a}n and Luis~Miguel Pardo.
\newblock On {S}male's 17th problem: a probabilistic positive solution.
\newblock {\em Found. Comput. Math.}, 8(1):1--43, 2008.

\bibitem[BS09]{MR2496559}
Carlos Beltr\'{a}n and Michael Shub.
\newblock Complexity of {B}ezout's theorem. {VII}. {D}istance estimates in the
  condition metric.
\newblock {\em Found. Comput. Math.}, 9(2):179--195, 2009.

\bibitem[Can10]{DBLP:conf/soda/Candes10}
Emmanuel~J. Cand{\`{e}}s.
\newblock The power of convex relaxation: The surprising stories of matrix
  completion and compressed sensing.
\newblock In {\em {SODA}}, page 1321. {SIAM}, 2010.

\bibitem[COGL07]{MR2286509}
Amin Coja-Oghlan, Andreas Goerdt, and Andr\'{e} Lanka.
\newblock Strong refutation heuristics for random {$k$}-{SAT}.
\newblock {\em Combin. Probab. Comput.}, 16(1):5--28, 2007.

\bibitem[dKNS20]{9317965}
Tommaso d'Orsi, Pravesh~K. Kothari, Gleb Novikov, and David Steurer.
\newblock Sparse pca: Algorithms, adversarial perturbations and certificates.
\newblock In {\em 2020 IEEE 61st Annual Symposium on Foundations of Computer
  Science (FOCS)}, pages 553--564, 2020.

\bibitem[DKWB19]{DKWB19}
Yunzi Ding, Dmitriy Kunisky, Alexander~S Wein, and Afonso~S Bandeira.
\newblock Subexponential-time algorithms for sparse pca.
\newblock {\em arXiv preprint arXiv:1907.11635}, 2019.

\bibitem[DM15]{pmlr-v40-Deshpande15}
Yash Deshpande and Andrea Montanari.
\newblock Improved sum-of-squares lower bounds for hidden clique and hidden
  submatrix problems.
\newblock In Peter Grünwald, Elad Hazan, and Satyen Kale, editors, {\em
  Proceedings of The 28th Conference on Learning Theory}, volume~40 of {\em
  Proceedings of Machine Learning Research}, pages 523--562, Paris, France,
  03--06 Jul 2015. PMLR.

\bibitem[DMM09]{DMM09}
David~L Donoho, Arian Maleki, and Andrea Montanari.
\newblock Message-passing algorithms for compressed sensing.
\newblock {\em Proceedings of the National Academy of Sciences},
  106(45):18914--18919, 2009.

\bibitem[Fei02]{MR2121179-Feige02}
Uriel Feige.
\newblock Relations between average case complexity and approximation
  complexity.
\newblock In {\em Proceedings of the {T}hirty-{F}ourth {A}nnual {ACM}
  {S}ymposium on {T}heory of {C}omputing}, pages 534--543. ACM, New York, 2002.

\bibitem[Fei07]{4389511}
Uriel Feige.
\newblock Refuting smoothed 3cnf formulas.
\newblock In {\em 48th Annual IEEE Symposium on Foundations of Computer Science
  (FOCS'07)}, pages 407--417, 2007.

\bibitem[FGR{\etalchar{+}}17]{FGR+17}
Vitaly Feldman, Elena Grigorescu, Lev Reyzin, Santosh~S Vempala, and Ying Xiao.
\newblock Statistical algorithms and a lower bound for detecting planted
  cliques.
\newblock {\em Journal of the ACM (JACM)}, 64(2):1--37, 2017.

\bibitem[FKP19]{TCS-086}
Noah Fleming, Pravesh Kothari, and Toniann Pitassi.
\newblock Semialgebraic proofs and efficient algorithm design.
\newblock {\em Foundations and Trends® in Theoretical Computer Science},
  14(1-2):1--221, 2019.

\bibitem[GJJ{\etalchar{+}}20]{GJJ20}
Mrinalkanti Ghosh, Fernando~Granha Jeronimo, Chris Jones, Aaron Potechin, and
  Goutham Rajendran.
\newblock Sum-of-squares lower bounds for sherrington-kirkpatrick via planted
  affine planes.
\newblock {\em arXiv preprint arXiv:2009.01874}, 2020.

\bibitem[GJW20]{GJW20}
David Gamarnik, Aukosh Jagannath, and Alexander~S Wein.
\newblock Low-degree hardness of random optimization problems.
\newblock {\em arXiv preprint arXiv:2004.12063}, 2020.

\bibitem[Gol00]{DBLP:journals/iacr/Goldreich00a}
Oded Goldreich.
\newblock Candidate one-way functions based on expander graphs.
\newblock {\em {IACR} Cryptology ePrint Archive}, 2000:63, 2000.

\bibitem[Gro11]{5714248}
David Gross.
\newblock Recovering low-rank matrices from few coefficients in any basis.
\newblock {\em IEEE Transactions on Information Theory}, 57(3):1548--1566,
  2011.

\bibitem[HKP{\etalchar{+}}16]{DBLP:conf/soda/HopkinsKPRS16}
Samuel~B. Hopkins, Pravesh Kothari, Aaron~Henry Potechin, Prasad Raghavendra,
  and Tselil Schramm.
\newblock On the integrality gap of degree-4 sum of squares for planted clique.
\newblock In {\em {SODA}}, pages 1079--1095. {SIAM}, 2016.

\bibitem[HKP{\etalchar{+}}17]{HKP+17}
Samuel~B Hopkins, Pravesh~K Kothari, Aaron Potechin, Prasad Raghavendra, Tselil
  Schramm, and David Steurer.
\newblock The power of sum-of-squares for detecting hidden structures.
\newblock In {\em 2017 IEEE 58th Annual Symposium on Foundations of Computer
  Science (FOCS)}, pages 720--731. IEEE, 2017.

\bibitem[Hop18]{Hop18}
Samuel Hopkins.
\newblock Statistical inference and the sum of squares method.
\newblock {\em PhD thesis, Cornell University}, 2018.

\bibitem[HS17]{HS17}
Samuel~B Hopkins and David Steurer.
\newblock Efficient bayesian estimation from few samples: community detection
  and related problems.
\newblock In {\em 2017 IEEE 58th Annual Symposium on Foundations of Computer
  Science (FOCS)}, pages 379--390. IEEE, 2017.

\bibitem[JH16]{MR3441448-Josz16}
C\'edric Josz and Didier Henrion.
\newblock Strong duality in {L}asserre's hierarchy for polynomial optimization.
\newblock {\em Optim. Lett.}, 10(1):3--10, 2016.

\bibitem[JLS21]{JLS21}
Aayush Jain, Huijia Lin, and Amit Sahai.
\newblock Indistinguishability obfuscation from well-founded assumptions.
\newblock In {\em Proceedings of the 53rd Annual ACM SIGACT Symposium on Theory
  of Computing}, pages 60--73, 2021.

\bibitem[Kac49]{MR1575245}
M.~Kac.
\newblock On the {A}verage {N}umber of {R}eal {R}oots of a {R}andom {A}lgebraic
  {E}quation ({II}).
\newblock {\em Proc. London Math. Soc. (2)}, 50(6):390--408, 1949.

\bibitem[Kan85]{MR807935}
R.~Kannan.
\newblock Solving systems of linear equations over polynomials.
\newblock {\em Theoret. Comput. Sci.}, 39(1):69--88, 1985.

\bibitem[KMOW17]{KMOW17}
Pravesh~K Kothari, Ryuhei Mori, Ryan O'Donnell, and David Witmer.
\newblock {Sum of squares lower bounds for refuting any CSP}.
\newblock In {\em Proceedings of the 49th Annual ACM SIGACT Symposium on Theory
  of Computing}, pages 132--145, 2017.

\bibitem[Kos02]{MR2021981}
Eric Kostlan.
\newblock On the expected number of real roots of a system of random polynomial
  equations.
\newblock In {\em Foundations of computational mathematics ({H}ong {K}ong,
  2000)}, pages 149--188. World Sci. Publ., River Edge, NJ, 2002.

\bibitem[Kri64]{Kri64}
Jean-Louis Krivine.
\newblock Anneaux pr{\'e}ordonn{\'e}s.
\newblock {\em Journal d’analyse math{\'e}matique}, 12(1):307--326, 1964.

\bibitem[KWB19]{KWB19}
Dmitriy Kunisky, Alexander~S Wein, and Afonso~S Bandeira.
\newblock Notes on computational hardness of hypothesis testing: Predictions
  using the low-degree likelihood ratio.
\newblock {\em arXiv preprint arXiv:1907.11636}, 2019.

\bibitem[KZ13]{MR3106890}
Gady Kozma and Ofer Zeitouni.
\newblock On common roots of random {B}ernoulli polynomials.
\newblock {\em Int. Math. Res. Not. IMRN}, (18):4334--4347, 2013.

\bibitem[Lai17]{MR3709332}
Pierre Lairez.
\newblock A deterministic algorithm to compute approximate roots of polynomial
  systems in polynomial average time.
\newblock {\em Found. Comput. Math.}, 17(5):1265--1292, 2017.

\bibitem[Las01]{MR1846160-Lasserre01}
Jean~B. Lasserre.
\newblock New positive semidefinite relaxations for nonconvex quadratic
  programs.
\newblock In {\em Advances in convex analysis and global optimization
  ({P}ythagorion, 2000)}, volume~54 of {\em Nonconvex Optim. Appl.}, pages
  319--331. Kluwer Acad. Publ., Dordrecht, 2001.

\bibitem[Lin16]{10.5555/3081770.3081772}
Huijia Lin.
\newblock Indistinguishability obfuscation from constant-degree graded encoding
  schemes.
\newblock In {\em Proceedings, Part I, of the 35th Annual International
  Conference on Advances in Cryptology --- EUROCRYPT 2016 - Volume 9665}, page
  28–57, Berlin, Heidelberg, 2016. Springer-Verlag.

\bibitem[Lin17]{10.1007/978-3-319-63688-7_20}
Huijia Lin.
\newblock Indistinguishability obfuscation from sxdh on 5-linear maps and
  locality-5 prgs.
\newblock In Jonathan Katz and Hovav Shacham, editors, {\em Advances in
  Cryptology -- CRYPTO 2017}, pages 599--629, Cham, 2017. Springer
  International Publishing.

\bibitem[LT17]{cryptoeprint:2017:250}
Huijia Lin and Stefano Tessaro.
\newblock Indistinguishability obfuscation from trilinear maps and block-wise
  local prgs.
\newblock Cryptology ePrint Archive, Report 2017/250, 2017.
\newblock \url{https://eprint.iacr.org/2017/250}.

\bibitem[LV17]{DBLP:conf/tcc/LombardiV17}
Alex Lombardi and Vinod Vaikuntanathan.
\newblock Limits on the locality of pseudorandom generators and applications to
  indistinguishability obfuscation.
\newblock In Yael Kalai and Leonid Reyzin, editors, {\em Theory of Cryptography
  - 15th International Conference, {TCC} 2017, Baltimore, MD, USA, November
  12-15, 2017, Proceedings, Part {I}}, volume 10677 of {\em Lecture Notes in
  Computer Science}, pages 119--137. Springer, 2017.

\bibitem[MPW15]{DBLP:conf/stoc/MekaPW15}
Raghu Meka, Aaron Potechin, and Avi Wigderson.
\newblock Sum-of-squares lower bounds for planted clique.
\newblock In {\em {STOC}}, pages 87--96. {ACM}, 2015.

\bibitem[MRX20]{MRX20}
Sidhanth Mohanty, Prasad Raghavendra, and Jeff Xu.
\newblock Lifting sum-of-squares lower bounds: degree-2 to degree-4.
\newblock In {\em Proceedings of the 52nd Annual ACM SIGACT Symposium on Theory
  of Computing}, pages 840--853, 2020.

\bibitem[Nes00]{MR1748764-Nesterov00}
Yurii Nesterov.
\newblock Squared functional systems and optimization problems.
\newblock In {\em High performance optimization}, volume~33 of {\em Appl.
  Optim.}, pages 405--440. Kluwer Acad. Publ., Dordrecht, 2000.

\bibitem[Par00]{parrilo2000structured}
Pablo~A Parrilo.
\newblock {\em Structured semidefinite programs and semialgebraic geometry
  methods in robustness and optimization}.
\newblock PhD thesis, California Institute of Technology, 2000.

\bibitem[Pit97]{MR1451386}
Toniann Pitassi.
\newblock Algebraic propositional proof systems.
\newblock In {\em Descriptive complexity and finite models ({P}rinceton, {NJ},
  1996)}, volume~31 of {\em DIMACS Ser. Discrete Math. Theoret. Comput. Sci.},
  pages 215--244. Amer. Math. Soc., Providence, RI, 1997.

\bibitem[PS17]{DBLP:conf/colt/PotechinS17}
Aaron Potechin and David Steurer.
\newblock Exact tensor completion with sum-of-squares.
\newblock In {\em {COLT}}, volume~65 of {\em Proceedings of Machine Learning
  Research}, pages 1619--1673. {PMLR}, 2017.

\bibitem[Rec11]{DBLP:journals/jmlr/Recht11}
Benjamin Recht.
\newblock A simpler approach to matrix completion.
\newblock {\em Journal of Machine Learning Research}, 12:3413--3430, 2011.

\bibitem[RRS17]{DBLP:conf/stoc/RaghavendraRS17}
Prasad Raghavendra, Satish Rao, and Tselil Schramm.
\newblock Strongly refuting random csps below the spectral threshold.
\newblock In {\em {STOC}}, pages 121--131. {ACM}, 2017.

\bibitem[Sho87]{MR939596-Shor87}
N.~Z. Shor.
\newblock Quadratic optimization problems.
\newblock {\em Izv. Akad. Nauk SSSR Tekhn. Kibernet.}, (1):128--139, 222, 1987.

\bibitem[Shu09]{MR2496558}
Michael Shub.
\newblock Complexity of {B}ezout's theorem. {VI}. {G}eodesics in the condition
  (number) metric.
\newblock {\em Found. Comput. Math.}, 9(2):171--178, 2009.

\bibitem[SS93a]{MR1230872}
M.~Shub and S.~Smale.
\newblock Complexity of {B}ezout's theorem. {II}. {V}olumes and probabilities.
\newblock In {\em Computational algebraic geometry ({N}ice, 1992)}, volume 109
  of {\em Progr. Math.}, pages 267--285. Birkh\"{a}user Boston, Boston, MA,
  1993.

\bibitem[SS93b]{MR1175980}
Michael Shub and Steve Smale.
\newblock Complexity of {B}\'{e}zout's theorem. {I}. {G}eometric aspects.
\newblock {\em J. Amer. Math. Soc.}, 6(2):459--501, 1993.

\bibitem[SS93c]{MR1213484}
Michael Shub and Steve Smale.
\newblock Complexity of {B}ezout's theorem. {III}. {C}ondition number and
  packing.
\newblock volume~9, pages 4--14. 1993.
\newblock Festschrift for Joseph F. Traub, Part I.

\bibitem[SS94]{MR1294430}
M.~Shub and S.~Smale.
\newblock Complexity of {B}ezout's theorem. {V}. {P}olynomial time.
\newblock volume 133, pages 141--164. 1994.
\newblock Selected papers of the Workshop on Continuous Algorithms and
  Complexity (Barcelona, 1993).

\bibitem[SS96]{MR1377247}
Michael Shub and Steve Smale.
\newblock Complexity of {B}ezout's theorem. {IV}. {P}robability of success;
  extensions.
\newblock {\em SIAM J. Numer. Anal.}, 33(1):128--148, 1996.

\bibitem[Ste74]{Ste74}
Gilbert Stengle.
\newblock A nullstellensatz and a positivstellensatz in semialgebraic geometry.
\newblock {\em Mathematische Annalen}, 207(2):87--97, 1974.

\bibitem[SW08]{SW08}
Rolf Schneider and Wolfgang Weil.
\newblock {\em Stochastic and integral geometry}.
\newblock Springer Science \& Business Media, 2008.

\bibitem[SW20]{SW20}
Tselil Schramm and Alexander~S Wein.
\newblock Computational barriers to estimation from low-degree polynomials.
\newblock {\em arXiv preprint arXiv:2008.02269}, 2020.

\bibitem[Sze39]{Sze39}
Gabor Szeg{\"o}.
\newblock {\em Orthogonal polynomials}, volume~23.
\newblock American Mathematical Soc., 1939.

\bibitem[Wei20]{Wei20}
Alexander~S Wein.
\newblock Optimal low-degree hardness of maximum independent set.
\newblock {\em arXiv preprint arXiv:2010.06563}, 2020.

\bibitem[Wik]{smale17}
Wikipedia.
\newblock Smale's 17 problems.

\end{thebibliography}
\newpage

\appendix
\section{Omitted Proofs}

\subsection{Contributions from negligible shapes}
\label{sec:negligible_shapes}

In this section, we look at what shapes are negligible, and we show that the total contribution of the negligible shapes is dominated by the trivial shapes.

By the graph matrix norm bounds (\pref{prop:graph_matrix_norm_bound}) and \pref{eq:lambda_bound},
\begin{equation*} \label{eq:lambda_M_contribution}
    |\lambda_a| \cdot \|M_a\| \leq n^{-|E(a)| - \frac{|U_a|+|V_a|}{2}}\cdot n^{\frac{w(V(a)) - w(\MVS) + w(\Wiso)}{2}} \cdot \left(|E(a)|\cdot \log n\right)^{O(|E(a)|)} \mcom
    \numberthis
\end{equation*}
since $|V(a)| \leq O(|E(a)|)$ for $a\in \ValidShapes$.

In the following lemma, we identify a quantity that indicates if a shape is negligible or not.

\begin{lemma} \label{lem:negligible_shape_large_phi}
    For a shape $a\in \ValidShapes$, define
    \begin{equation*}  \label{eq:shape_norm_exponent}
        \varphi(a) \seteq |E(a)| - \frac{w(V(a)) -w(\MVS) + w(\Wiso)}{2}.
        \numberthis
    \end{equation*}
    Then, there exist constants $C, c_1>0$ such that for $\eps = \frac{C\log\log n}{\log n}$ and $\tau = n^{c_1\eps}$,
    if a shape $a\in \ValidShapes$ satisfies $|E(a)| \neq 0$ and
    \begin{equation*}
        \varphi(a) \geq \frac{\eps}{8} |E(a)| \mcom
    \end{equation*}
    then it is negligible.
\end{lemma}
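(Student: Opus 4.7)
The strategy is to combine the deterministic bound in \pref{eq:lambda_M_contribution} with the assumption $\varphi(a) \geq \frac{\eps}{8}|E(a)|$, and then show that the logarithmic overhead from the trace-moment norm bound is harmless under the parameter choices for $\eps$ and $\tau$. Concretely, I will start from
\[
|\lambda_a| \cdot \|M_a\| \;\leq\; n^{-\frac{|U_a|+|V_a|}{2}} \cdot n^{-\varphi(a)} \cdot \bigl(|E(a)| \log n\bigr)^{O(|E(a)|)},
\]
which is exactly the rewriting of \pref{eq:lambda_M_contribution} using the definition of $\varphi(a)$. Under the hypothesis on $\varphi(a)$, the second factor on the right is already at most $n^{-\eps|E(a)|/8}$, so the only thing standing between us and negligibility is the polylogarithmic factor.

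The main (easy) step is to absorb $(|E(a)|\log n)^{O(|E(a)|)}$ into $n^{\eps|E(a)|/16}$, which is equivalent to checking
\[
\log(|E(a)|\log n) \;\leq\; c \cdot \eps \log n
\]
for a suitable absolute constant $c>0$. Since $a\in \ValidShapes$ satisfies $|E(a)|\leq \tau \leq n^{c_1 \eps}$, the left-hand side is at most $c_1 \eps \log n + \log\log n$, and the plan is to choose the constant $C$ in $\eps = C\log\log n/\log n$ large enough and the constant $c_1$ in $\tau = n^{c_1\eps}$ small enough so that this is dominated by $c \cdot \eps \log n$. This is the step where one must track constants carefully but there is no conceptual obstacle.

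Combining the two bounds then yields $|\lambda_a|\cdot \|M_a\| \leq n^{-\frac{|U_a|+|V_a|}{2}} \cdot n^{-\eps|E(a)|/16}$, which is precisely the defining inequality for negligibility in \pref{def:negligible_shapes} (with the $\Omega(\cdot)$ hidden constant equal to $1/16$). The hypothesis $|E(a)|\neq 0$ in \pref{def:negligible_shapes} is inherited directly from the hypothesis of the lemma. I expect the entire proof to be short: the only substantive thing is the constants-chasing calibration between $C$ and $c_1$ described in the previous paragraph, and all other inputs (the norm bound \pref{prop:graph_matrix_norm_bound}, the coefficient bound \pref{eq:lambda_bound}, and the inequality $|V(a)| = O(|E(a)|)$ valid for shapes in $\ValidShapes$ without isolated middle vertices) are already available from earlier in the paper.
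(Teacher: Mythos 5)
Your proposal is correct and follows essentially the same route as the paper's proof: rewrite \pref{eq:lambda_M_contribution} in terms of $\varphi(a)$, use the hypothesis $\varphi(a) \geq \frac{\eps}{8}|E(a)|$, and absorb the $(|E(a)|\log n)^{O(|E(a)|)}$ factor into a small power of $n^{\eps|E(a)|}$ using $|E(a)| \leq \tau = n^{c_1\eps}$ and $\log\log n = \eps\log n/C$, calibrating $c_1$ small and $C$ large. The paper additionally records up front that $\Wiso = \varnothing$ for shapes in $\ValidShapes$, but since $w(\Wiso)$ is already built into the definition of $\varphi$, that remark is cosmetic and your argument does not need it.
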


\begin{proof}
    First, $\Wiso = \varnothing$ due to \pref{cond:no_isolated_vertices} of $\ValidShapes$ (in \pref{def:valid_shapes}).
    If we choose $C$ to be sufficiently large and $c_1$ sufficiently small, then since $|E(a)| \leq \tau = n^{c_1\eps}$, by \pref{eq:lambda_M_contribution} we have
    \begin{equation*}
        \|\lambda_a M_a\| \leq n^{- \frac{|U_a|+|V_a|}{2} - \varphi(a)} n^{O(\eps|E(a)|)}
        \leq n^{-\frac{|U_a|+|V_a|}{2}} \cdot n^{-c_2 \eps |E(a)|}
    \end{equation*}
    for some constant $c_2$.
    This satisfies the definition of negligible shapes (\pref{def:negligible_shapes}), thus $a$ is negligible.
\end{proof}

\pref{fig:negligible_shapes} includes some negligible shapes for illustration.

\FigNegligibleShapes

Next, we analyze the quantity $\varphi(a)$.
To do so, we first introduce some more notations.
For a shape $a$, let $S_a, C_a$ be the set of square and circle vertices respectively.
Let $\wt{U}_a \seteq U_a \setminus (U_a\cap V_a)$ and $\wt{V}_a \seteq V_a \setminus (U_a\cap V_a)$, and let $\MVSnointer \seteq \MVS \setminus (U_a \cap V_a)$.
The conditions of $\ValidShapes$ in \pref{def:valid_shapes} ensure that
\begin{enumerate}
    \item The degree of any square vertex in $\wt{U}_a \cup \wt{V}_a$ must be odd; minimum degree $1$,
    \item The degree of any square vertex in $U_a \cap V_a$ must be even; minimum degree $0$,
    \item The degree of any other vertex must be even; minimum degree $2$.
\end{enumerate}

\begin{definition}[Large-degree vertex] \label{def:large_degree_vertex}
    We say that a vertex has \textit{large degree} if its degree is larger than the minimum degree (and must be larger by at least 2 due to the conditions in \pref{def:valid_shapes}).
\end{definition}

% if (1) $\deg(\square{i}) \geq 3$ for $\square{i} \in \wt{U}_a\cap \wt{V}_a$, (2) $\deg(\square{i}) \geq 2$ for $\square{i} \in U_a \cap V_a$, or (3) degree $\geq 4$ for other vertices.
The following lemma lets us determine whether a shape in $\ValidShapes$ is negligible or not,

\begin{lemma} \label{lem:shape_phi}
    For a shape $a\in \ValidShapes$, let $\delta_s,\delta_c$ be the number of large-degree square and circle vertices,
    \begin{equation*}
        \varphi(a) \geq -\frac{1}{4}(|\wt{U}_a| + |\wt{V}_a|) + \frac{\eps}{4}|E(a)| + \frac{1}{2}w(\MVSnointer) + \frac{1}{2}\delta_s + \left(1-\frac{\eps}{2}\right) \delta_c \mper
    \end{equation*}
\end{lemma}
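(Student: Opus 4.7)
Proof proposal: The plan is a direct computation. By condition 5 of $\ValidShapes$ the isolated-vertex set $\Wiso$ is empty, so
\[
\varphi(a) = |E(a)| - \tfrac{1}{2}\bigl(w(V(a)) - w(\MVS)\bigr).
\]
Let $s = |S_a|$, $c = |C_a|$, $e = |E(a)|$, and $u_\cap = |U_a \cap V_a|$. Since $U_a, V_a$ contain only square vertices, $U_a \cap V_a \subseteq \MVS$, so $w(\MVS) = u_\cap + w(\MVSnointer)$. With $w(\square{})=1$ and $w(\circle{})=2-\eps$ (from $m=n^{2-\eps}$), this gives
\[
\varphi(a) = e - \tfrac{s}{2} - c + \tfrac{c\eps}{2} + \tfrac{u_\cap}{2} + \tfrac{1}{2}w(\MVSnointer).
\]

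The key step is to upper bound $s$ and $c$ by counting edge endpoints. Each hyperedge contributes $2$ to the total square-degree and $1$ to the total circle-degree, so the two constraints are $\sum_{v\in S_a}\deg(v)=2e$ and $\sum_{v\in C_a}\deg(v)=e$. Under the parity and membership rules of $\ValidShapes$: squares in $\wt{U}_a\cup\wt{V}_a$ have odd degree with minimum $1$; squares in $U_a\cap V_a$ have even degree with minimum $0$; all other squares have even degree with minimum $2$; and all circles have even degree with minimum $2$. A \emph{large-degree} vertex contributes at least $2$ beyond its minimum (by parity). Counting contributions,
\[
2e \;\geq\; \bigl(|\wt{U}_a|+|\wt{V}_a|\bigr) + 0\cdot u_\cap + 2\bigl(s - u_\cap - |\wt{U}_a|-|\wt{V}_a|\bigr) + 2\delta_s,
\]
\[
e \;\geq\; 2c + 2\delta_c,
\]
which rearrange to $s \leq e + u_\cap + \tfrac{1}{2}(|\wt{U}_a|+|\wt{V}_a|) - \delta_s$ and $c \leq \tfrac{e}{2} - \delta_c$.

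Substituting the bound on $s$ into the formula for $\varphi(a)$ cancels $\tfrac{u_\cap}{2}$ against $-\tfrac{u_\cap}{2}$ and produces the terms $-\tfrac{1}{4}(|\wt{U}_a|+|\wt{V}_a|)$ and $+\tfrac{\delta_s}{2}$, while leaving a remaining $\tfrac{e}{2}$ and $-c + \tfrac{c\eps}{2} = c(\tfrac{\eps}{2}-1)$. Since $\tfrac{\eps}{2}-1 < 0$, using the upper bound $c \leq \tfrac{e}{2}-\delta_c$ yields a lower bound on this term equal to $\tfrac{e\eps}{4} - \tfrac{e}{2} + \delta_c(1-\tfrac{\eps}{2})$. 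The $\tfrac{e}{2}$ pieces cancel, producing exactly
\[
\varphi(a) \;\geq\; -\tfrac{1}{4}(|\wt{U}_a|+|\wt{V}_a|) + \tfrac{\eps}{4}|E(a)| + \tfrac{1}{2}w(\MVSnointer) + \tfrac{1}{2}\delta_s + \bigl(1-\tfrac{\eps}{2}\bigr)\delta_c,
\]
as claimed.

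I expect no real obstacle: the argument is essentially bookkeeping. The only subtlety to get right is the careful case-split on minimum degrees for squares depending on their membership in $U_a\cap V_a$, $\wt{U}_a\cup\wt{V}_a$, or neither, together with the parity-based observation that any excess over the minimum is at least $2$. Once the two inequalities $s \leq e + u_\cap + \tfrac{1}{2}(|\wt{U}_a|+|\wt{V}_a|) - \delta_s$ and $c \leq \tfrac{e}{2}-\delta_c$ are in place, the substitution into $\varphi(a)$ is a mechanical simplification, and the sign of $\tfrac{\eps}{2}-1$ is what makes using an upper bound on $c$ the right move for producing a lower bound on $\varphi(a)$.
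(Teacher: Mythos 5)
Your proof is correct and follows essentially the same approach as the paper's: express $\varphi(a)$ via the vertex weights after dropping the cancelling $U_a\cap V_a$ contribution, bound the number of square and circle vertices using edge-degree counting together with the parity/minimum-degree constraints from $\ValidShapes$ and the large-degree excess, and substitute. The only difference is notational (you track $s$ and $u_\cap$ and let $u_\cap$ cancel, whereas the paper defines $\wt{S}_a$ and drops the intersection up front).
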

\begin{proof}
    First, by \pref{cond:no_isolated_vertices} of $\ValidShapes$, $\Wiso = \varnothing$.
    Moreover, $U_a\cap V_a$ must be in the minimum vertex separator, so their contributions in \pref{eq:shape_norm_exponent} cancel out.
    Then, using $w(\square{}) = 1$ and $w(\circle{})= 2-\eps$, we can rewrite $\varphi(a)$ as
    \begin{equation*}
        \varphi(a) = |E(a)| - \frac{1}{2}|\wt{S}_a| - \left(1-\frac{\eps}{2}\right) |C_a| + \frac{1}{2} w(\MVSnointer) \mcom
    \end{equation*}
    where $\wt{S}_a$ is the set of square vertices excluding $U_a \cap V_a$.

    By \pref{cond:square_degree} of $\ValidShapes$, each square vertex in $\wt{U}_a,\wt{V}_a$ must have degree at least $1$, and each square vertex in $\wt{S}_a \setminus (\wt{U}_a \cup \wt{V}_a)$ must have degree at least $2$ (they are not isolated).
    Each large degree vertex introduces at least two more.
    Thus, the total number of hyperedges
    \begin{equation*}
        |E(a)| \geq \frac{1}{2} \left( |\wt{U}_a| + |\wt{V}_a| + 2|\wt{S}_a \setminus (\wt{U}_a \cup \wt{V}_a)| + 2\delta_s \right) = |\wt{S}_a| - \frac{1}{2}\left(|\wt{U}_a|+|\wt{V}_a|\right) + \delta_s \mper
    \end{equation*}
    Moreover, each circle vertex must have degree at least 2. Thus,
    \begin{equation*}
        |E(a)| \geq 2|C_a| + 2\delta_c.
    \end{equation*}
    Combining the two, we get
    \begin{equation*}
    \begin{aligned}
        \varphi(a) &= |E(a)| - \left(\frac{1}{2}|E(a)| + \frac{1}{4}(|\wt{U}_a| + |\wt{V}_a|) -\frac{1}{2}\delta_s \right) - \left( 1-\frac{\eps}{2} \right)\left(\frac{1}{2}|E(a)| - \delta_c\right) + \frac{1}{2} w(\MVSnointer) \\
        &\geq -\frac{1}{4}(|\wt{U}_a| + |\wt{V}_a|) + \frac{\eps}{4}|E(a)| + \frac{1}{2}w(\MVSnointer) + \frac{1}{2}\delta_s + \left(1-\frac{\eps}{2}\right) \delta_c \mper
    \end{aligned}
    \end{equation*}
    This completes the proof.
\end{proof}

We now prove \pref{lem:trivial_negl_psd}.
We first derive a bound on the number of negligible shapes.

\begin{lemma}\label{lem:number_of_shapes}
    Let $\ell \geq 2$.
    The number of shapes with exactly $\ell$ edges and no isolated vertices is at most $\ell^{O(\ell)}$.
\end{lemma}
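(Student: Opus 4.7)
\textbf{Proof plan for \pref{lem:number_of_shapes}.}

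The plan is to upper-bound the number of such shapes by enumerating, in order, (i) the vertex set, (ii) the square/circle type assignment, (iii) the edge multiset, and (iv) the choice of $U_a, V_a \subseteq V(a)$, and then showing that each step contributes a factor of at most $\ell^{O(\ell)}$.

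First I would bound the number of vertices. Since each hyperedge contributes exactly $3$ to the sum of vertex degrees (two squares and one circle) and the no-isolated-vertex assumption forces every vertex to have degree at least $1$, we obtain
\[
|V(a)| \leq \sum_{v \in V(a)} \deg(v) = 3\ell.
\]
Hence it suffices to fix a value $v \leq 3\ell$ and count shapes on exactly $v$ (abstractly labeled) vertices, and sum over $v \in \{1,\dots,3\ell\}$. Working with abstractly labeled vertices only overcounts the number of unlabeled shape isomorphism classes, which is fine for an upper bound.

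Next I would count the remaining structural choices for a fixed $v$. The assignment of each vertex as either a square or a circle contributes at most $2^v \leq 2^{3\ell}$. The number of distinct possible edge triples (ordered pair of squares together with a circle) is at most $v^3 \leq (3\ell)^3$, so the number of ways to pick a multiset of $\ell$ edges from these is at most $\binom{v^3 + \ell}{\ell} \leq (v^3 + \ell)^\ell \leq (28\ell^3)^\ell = \ell^{O(\ell)}$. Finally, specifying $U_a$ and $V_a$ amounts to assigning each vertex one of four states (in $U_a$ only, in $V_a$ only, in both, in neither), for at most $4^v \leq 4^{3\ell}$ choices.

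Multiplying these bounds and summing over $v \leq 3\ell$ gives
\[
\#\{\text{shapes}\} \;\leq\; 3\ell \cdot 2^{3\ell} \cdot (28\ell^3)^{\ell} \cdot 4^{3\ell} \;=\; \ell^{O(\ell)},
\]
as claimed. I do not expect any genuine obstacle: the only thing to be careful about is that parallel hyperedges are allowed (so edges form a multiset rather than a set), which is precisely why I use the multiset count $\binom{v^3+\ell}{\ell}$ rather than $\binom{v^3}{\ell}$.
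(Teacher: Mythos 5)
Your proof is correct and takes essentially the same approach as the paper: bound $|V(a)| \le 3\ell$ using that each hyperedge has three endpoints and there are no isolated vertices, then bound the number of ways to assign edges (and other structure) by $\ell^{O(\ell)}$ and sum over the vertex count. The paper's version is terser (it writes "$v^{3\ell}$ ways to assign edges" and does not explicitly mention the square/circle type assignment or the choice of $U_a,V_a$, all of which are absorbed into $\ell^{O(\ell)}$), while you spell out those extra $2^{O(\ell)}$ factors — a more careful rendering of the same idea.
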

\begin{proof}
    Since each hyperedge connects to 3 vertices, there can be at most $3\ell$ vertices.
    If there are $v$ vertices, then there are $v^{3\ell}$ ways to assign edges.
    Thus, in total, the number of shapes is at most $\sum_{v} v^{3\ell} \leq (3\ell) \cdot (3\ell)^{3\ell} \leq \ell^{O(\ell)}$.
\end{proof}

\begin{lemma}[Restatement of \pref{lem:trivial_negl_psd}]
    For $k \in \{0,1,2\}$, let $\NeglShapes{k}$ be the set of negligible shapes in block $\MomentMatrix_{kk}$, and let $\NeglError{k} \seteq \sum_{a\in \NeglShapes{k}} \lambda_a M_a$.
    There exist constants $c_1,c_2 > 0$ such that if the threshold $\tau \leq n^{c_1 \eps}$, then
    \begin{equation*}
        \|\NeglError{k}\| \leq n^{-k - c_2\eps} \mper
    \end{equation*}
    This implies that
    \begin{equation*}
        \lambda_{\TrivialShape{k}} M_{\TrivialShape{k}} + \NeglError{k} \succ 0 \mper
    \end{equation*}
\end{lemma}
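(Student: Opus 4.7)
The plan is to bound $\|\NeglError{k}\|$ by a direct triangle-inequality sum over negligible shapes, grouped by the number of hyperedges. By definition of negligible (\pref{def:negligible_shapes}), for every $a \in \NeglShapes{k}$ (so $|U_a| = |V_a| = k$) we have $\|\lambda_a M_a\| \leq n^{-k} \cdot n^{-c\eps|E(a)|}$ for some constant $c > 0$, and the conditions of $\ValidShapes$ force $|E(a)|$ to be even and at least $2$.

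The first step is to count shapes by edge count: by \pref{lem:number_of_shapes}, the number of shapes with exactly $\ell$ edges and no isolated vertices is at most $\ell^{O(\ell)}$. In particular, the number of shapes with $\ell$ edges appearing in $\NeglShapes{k}$ is at most $\ell^{O(\ell)}$ (the extra choice of which vertices lie in $U_a, V_a$ only costs a $\poly(\ell)$ factor since $|U_a|+|V_a| = 2k \leq 4$, absorbed into $\ell^{O(\ell)}$). Combining with the per-shape norm bound,
\begin{equation*}
    \|\NeglError{k}\| \;\leq\; \sum_{\ell = 2}^{\tau} \ell^{O(\ell)} \cdot n^{-k - c\eps \ell}
    \;=\; n^{-k} \sum_{\ell=2}^{\tau} \exp\!\Paren{O(\ell \log \ell) - c\eps \ell \log n}.
\end{equation*}

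The second step is to choose the constant $c_1$ in $\tau \leq n^{c_1 \eps}$ small enough that $O(\ell \log \ell) \leq \frac{c}{2}\eps \ell \log n$ for every $\ell \leq \tau$. Since $\ell \leq \tau \leq n^{c_1 \eps}$ implies $\log \ell \leq c_1 \eps \log n$, taking $c_1$ a small enough constant times $c$ makes this hold. Then each summand is at most $n^{-k - (c\eps/2) \ell}$, and the series is dominated by its first term $\ell = 2$, giving $\|\NeglError{k}\| \leq O(n^{-k - c\eps})$; absorbing constants yields $\|\NeglError{k}\| \leq n^{-k - c_2 \eps}$ for an appropriate $c_2 > 0$.

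Finally, the trivial shape $\TrivialShape{k}$ has $\lambda_{\TrivialShape{k}} M_{\TrivialShape{k}} = n^{-k} \Id$ (its coefficient comes from \pref{eq:lambda_explicit} with $\alpha = 0$, $|I|+|J| = 2k$, and $M_a = \Id$ by \pref{def:trivial_shapes}). Therefore
\begin{equation*}
    \lambda_{\TrivialShape{k}} M_{\TrivialShape{k}} + \NeglError{k} \;\succeq\; \Paren{n^{-k} - \|\NeglError{k}\|} \Id \;\succeq\; \Paren{n^{-k} - n^{-k - c_2 \eps}} \Id \;\succ\; 0,
\end{equation*}
as claimed. The only potentially delicate point is the shape-counting step: one must verify that the combinatorial factor $\ell^{O(\ell)}$ from \pref{lem:number_of_shapes} is genuinely beaten by the geometric decay $n^{-c\eps\ell}$ uniformly in $\ell$ up to $\tau$, but this is exactly what the choice $\tau \leq n^{c_1 \eps}$ with $c_1$ small was designed to guarantee.
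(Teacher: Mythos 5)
Your proof is correct and follows essentially the same route as the paper: both invoke \pref{def:negligible_shapes} for the per-shape bound $\|\lambda_a M_a\| \leq n^{-k} n^{-\Omega(\eps|E(a)|)}$, group by edge count $\ell$, use \pref{lem:number_of_shapes} to bound the number of shapes at a given $\ell$ by $\ell^{O(\ell)}$, choose $c_1$ small enough so that $\ell^{O(\ell)}$ is absorbed by the $n^{-\Omega(\eps\ell)}$ decay for all $\ell \leq \tau \leq n^{c_1\eps}$, and conclude by comparing against the $n^{-k}$-scaled identity from the trivial shape. The extra verification you give (that $U_a, V_a$ choices only contribute a $\poly(\ell)$ factor, and the explicit identification $\lambda_{\TrivialShape{k}} M_{\TrivialShape{k}} = n^{-k}\Id$ from \pref{eq:lambda_explicit} and \pref{def:trivial_shapes}) is sound and a bit more detailed than the paper's terse treatment, but the argument is the same.
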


\begin{proof}
    The shapes in $\NeglShapes{k}$ can have number of hyperedges ranging from 2 to $\tau$.
    By \pref{lem:number_of_shapes}, \pref{def:negligible_shapes} of the negligible shapes, and the triangle inequality,
    \begin{equation*}
        \|\NeglError{k}\| \leq \sum_{a \in \NeglShapes{k}} |\lambda_a| \cdot \|M_a\|
        \leq \sum_{\ell=2}^\tau \ell^{O(\ell)} \cdot n^{-k - \Omega(\eps\ell)}
        \leq n^{-k} \sum_{\ell=2}^\tau n^{-c_2' \eps \ell} \mcom
    \end{equation*}
    for some constant $c_2' > 0$ provided that $c_1$ is small enough.
    The summation is upper bounded by $n^{-c_2 \eps}$ for some $c_2$.
    Thus, $\|\NeglError{k}\| \leq n^{-k-c_2\eps} \leq o(n^{-k})$, much smaller than the minimum singular value of $\lambda_{\TrivialShape{k}} M_{\TrivialShape{k}}$.
    This completes the proof.
\end{proof}

The same analysis shows that the norm of block $\MomentMatrix_{k\ell}$ is dominated by the shape with the largest norm.

\begin{lemma}[Restatement of \pref{lem:M_norm}]
    There exists a constant $c_1 > 0$ such that if the threshold $\tau \leq n^{c_1\eps}$, then for any $k,\ell$, $\|\MomentMatrix_{k\ell}\| \leq n^{-\frac{k+\ell}{4}}$.
\end{lemma}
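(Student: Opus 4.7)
\textbf{Proof plan for the restated \pref{lem:M_norm}.} The plan is to reuse the exact strategy that was used in the previous lemma (the $\lambda_{\TrivialShape{k}} M_{\TrivialShape{k}} + \NeglError{k} \succ 0$ statement), but now applied to an arbitrary block $\MomentMatrix_{k\ell}$. Start from the decomposition
\[
\MomentMatrix_{k\ell} = \sum_{\substack{a \in \ValidShapes \\ |U_a|=k,\ |V_a|=\ell}} \lambda_a M_a,
\]
and apply the triangle inequality so that it suffices to control $|\lambda_a|\cdot\|M_a\|$ uniformly in $a$, and then sum the bounds over all shapes with a fixed number of edges.

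For each individual shape, combine the spectral-norm estimate \pref{eq:lambda_M_contribution} with the structural lower bound \pref{lem:shape_phi} on $\varphi(a)$ to get
\[
|\lambda_a|\cdot\|M_a\| \;\le\; n^{-\frac{|U_a|+|V_a|}{2}+\frac{1}{4}(|\wt U_a|+|\wt V_a|)-\frac{\eps}{4}|E(a)|}\cdot(|E(a)|\log n)^{O(|E(a)|)}.
\]
The key algebraic observation is that $|\wt U_a|+|\wt V_a| \le |U_a|+|V_a| = k+\ell$, so the first two exponents together are at most $-\tfrac{k+\ell}{4}$ regardless of how much $U_a$ and $V_a$ overlap. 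Consequently each shape with $e := |E(a)|$ edges contributes at most $n^{-(k+\ell)/4 - \eps e/4}\cdot(e\log n)^{O(e)}$. The trivial shape (only possible when $k=\ell$) has $e=0$ and contributes $n^{-k} = n^{-(k+\ell)/2}$, which is safely below the target.

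Then invoke \pref{lem:number_of_shapes} to bound the number of shapes with exactly $e$ edges by $e^{O(e)}$, and sum over $e$ from $0$ up to the truncation $\tau$:
\[
\|\MomentMatrix_{k\ell}\| \;\le\; n^{-(k+\ell)/4}\Bigl(1+\sum_{e=2}^{\tau} e^{O(e)}(\log n)^{O(e)}\, n^{-\eps e/4}\Bigr).
\]
The main (only) technical point is to choose the constant $c_1$ small enough so that $\tau \le n^{c_1\eps}$ forces $e^{O(e)}(\log n)^{O(e)} = \exp(O(e\log\tau)) \le n^{\eps e/8}$, which makes the series geometric with ratio $n^{-\Omega(\eps)}$ and total $O(1)$. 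This absorbs into the prefactor and yields the claimed $\|\MomentMatrix_{k\ell}\| \le n^{-(k+\ell)/4}$, completing the plan. (For odd $k+\ell$ the block is identically zero by the parity argument used in the decomposition, so nothing further is required.)
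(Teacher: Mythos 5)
Your proposal is correct and follows the same route as the paper, which simply invokes ``the same calculations'' as in the preceding lemma (\pref{lem:trivial_negl_psd}) after combining \pref{eq:lambda_M_contribution} with \pref{lem:shape_phi}; your write-up correctly spells out the key algebraic step $|\wt{U}_a|+|\wt{V}_a| \leq k+\ell$ and the shape-count summation via \pref{lem:number_of_shapes}. One small imprecision: the intermediate claim $e^{O(e)}(\log n)^{O(e)} = \exp(O(e\log\tau))$ is not an equality when $\tau$ is small compared to $\log n$ -- the correct comparison is $e^{O(e)}(\log n)^{O(e)} = \exp\bigl(O(e(\log e + \log\log n))\bigr) \leq n^{\eps e/8}$ using $\log e \leq \log\tau \leq c_1\eps\log n = c_1 C\log\log n$ together with $n^{\eps} = (\log n)^{C}$ and the freedom to take $C$ large and $c_1$ small -- but this does not affect the validity of your conclusion.
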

\begin{proof}
    By \pref{lem:shape_phi}, $\varphi(a) \geq -\frac{|U_a|+|V_a|}{4} - \frac{\eps}{4}|E(a)|$.
    The same calculations show that
    \begin{equation*}
        \Norm{\MomentMatrix_{k\ell}} \leq  \sum_{\stackrel{a\in \ValidShapes:}{|U_a|=k, |V_a|=\ell}} \|\lambda_a M_a\|
        \leq n^{-\frac{k+\ell}{4}} \mper
    \end{equation*}
\end{proof}

\subsection{Non-trivial non-spider connected shapes are negligible}
\label{sec:connected_shapes}

We say that a shape is connected if there is path from $U_a$ to $V_a$.
For connected shapes, we will show that except for \textit{one} shape, all connected shapes can be charged to the trivial shapes.
We call that shape a \textit{spider}; see \pref{fig:spider}.

We first show the following result about the structure of connected shapes.

\begin{lemma} \label{lem:connected_shape_mvs_degree}
    For connected shapes $\alpha\in \ValidShapes$, suppose $|U_a| = |V_a| = 2$, $U_a\cap V_a = \varnothing$, and the minimum vertex separator contains only one square vertex, then that vertex must have degree at least $4$.
\end{lemma}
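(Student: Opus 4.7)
My plan is to prove \pref{lem:connected_shape_mvs_degree} by a parity argument on the $3$-uniform sub-hypergraph obtained after deleting the separator vertex $v$. The essential tool is the handshake identity restricted to square vertices: every hyperedge in any sub-hypergraph contributes exactly $2$ to the sum of the square-vertex degrees, so the number of squares with odd degree in a sub-hypergraph is always even.

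Following the standard graph-matrix convention that the minimum vertex separator is disjoint from $U_a \cup V_a$, I may assume $v \notin U_a \cup V_a$, and then condition~\pref{cond:square_degree} of \pref{def:valid_shapes} forces $\deg(v)$ to be even. So to prove $\deg(v) \geq 4$ it suffices to rule out $\deg(v) = 0$ and $\deg(v) = 2$. If $\deg(v) = 0$, then $v$ is isolated and cannot lie on any $U_a$--$V_a$ path, contradicting that $\{v\}$ is a separator of a shape that is connected from $U_a$ to $V_a$. If $\deg(v) = 2$ is realized by a single self-loop $(\square{v}, \square{v}, \circle{\s})$, then $v$ has no non-loop incidences, so again no $U_a$--$V_a$ path traverses $v$, ruling this case out identically.

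The heart of the argument is the remaining case, where $v$ carries exactly two non-loop hyperedges $e_i = (\square{v}, \square{w_i}, \circle{\s_i})$ for $i = 1, 2$. Since $\{v\}$ separates $U_a$ from $V_a$, removing $v$ partitions the component containing $U_a \cup V_a$ into a part $L \supseteq U_a$ and a part $R \supseteq V_a$ with no hyperedges crossing. For the shape to be connected from $U_a$ to $V_a$, $v$ must actually bridge both sides, so $\{w_1, w_2\}$ must split one into $L$ and one into $R$; say $w_1 \in L$ and $w_2 \in R$. Applying the handshake identity to the $3$-uniform sub-hypergraph induced on $L$ shows that the number of odd-degree square vertices of $L$ is even. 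On the other hand, condition~\pref{cond:square_degree} makes every $U_a$-vertex have odd total degree and every middle square have even total degree, and passing from total degree to $L$-degree flips the parity only at $w_1$, the unique square incident to $v$'s edge into $L$. Thus the number of odd-degree squares in $L$ equals $1$ when $w_1 \in U_a$ (the non-$w_1$ vertex of $U_a$ is the lone odd square) and $3$ when $w_1$ is a middle square (both vertices of $U_a$ together with $w_1$); either value is odd, a contradiction that rules out $\deg(v) = 2$ and completes the proof.

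The main obstacle I anticipate is the careful sub-case analysis needed to justify that the two non-loop edges of $v$ must split across $L$ and $R$ and that the relevant parity flip in $L$ occurs only at $w_1$; the hypergraph setting admits subtleties involving parallel edges, shared circles between $e_1$ and $e_2$, and targets coinciding with $U_a$ vertices, but none of these interactions escapes the handshake count used here.
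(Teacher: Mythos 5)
Your proof uses the same handshake/parity argument as the paper's: assume the separator square $v$ has degree $2$, observe that its two hyperedges must split one to each side of the separation, and derive a contradiction from the parity of square-vertex degrees on one side (each hyperedge contributes $2$, so that sum must be even, whereas condition~\pref{cond:square_degree} forces an odd count after accounting for the removed edge into that side). Your explicit treatment of self-loops and of whether $w_1$ lies in $U_a$ or is a middle vertex is a bit more careful than the paper's terse version, but the underlying idea is identical.

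The one point to push back on is your appeal to a ``standard graph-matrix convention that the minimum vertex separator is disjoint from $U_a \cup V_a$.'' Under \pref{def:shape} the only constraint on a vertex separator is $U_a \cap V_a \subseteq \MVS$; a minimum separator is perfectly allowed to contain vertices of $U_a \setminus V_a$ or $V_a \setminus U_a$ (indeed $U_a$ itself is always a separator). Your assumption $v \notin U_a \cup V_a$ happens to be harmless for the step you actually run, because condition~\pref{cond:square_degree} together with $U_a \cap V_a = \varnothing$ forces any square in $U_a \cup V_a$ to have \emph{odd} degree, so $\deg(v)=2$ already implies $v$ is a middle vertex; the paper's proof tacitly makes the same restriction by starting from ``$\deg(v)=2$.'' But neither argument addresses the case where the minimum-weight separator is a single $U_a$- or $V_a$-vertex of odd degree $3$ --- a configuration that is constructible within $\ValidShapes$ and would violate the lemma's stated bound of $4$ (though not its actual use in \pref{lem:connected_shapes_negligible}, which only needs a vertex of large degree in the sense of \pref{def:large_degree_vertex}). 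So the ``convention'' you cite is not one the paper uses, even though the restriction you draw from it coincides with what the paper implicitly assumes.
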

\begin{proof}
    Suppose for contradiction $\MVS$ contains one square vertex $\square{i}$ and it has degree 2 (it cannot be isolated).
    Then, consider the left and right sides of the graph separated by this vertex.
    Since $\square{i}$ must contribute exactly one degree to each side, the total degree of square vertices on each side must be odd (each hyperedge contributes two degrees).
    However, by the conditions in \pref{def:valid_shapes}, the degree of any $\square{j}\notin U_a \cup V_a$ must be even, whereas the degree of $\square{j} \in U_a \cup V_a$ must be odd, thus due to $|U_a| = |V_a| = 2$ the total degree must be even.
    This is a contradiction.
    Thus, $\deg(\square{i})$ must be larger than 2, which means it must be at least 4.
\end{proof}

Next, we show that all non-spider non-trivial connected shapes are negligible.
Recall that a shape is negligible if $\varphi(a) \geq \frac{\eps}{8} |E(a)|$, and this can be determined by \pref{lem:shape_phi}.

\begin{lemma}[Restatement of \pref{lem:connected_shapes_negligible}]
    If $a \in \ValidShapes$ is a connected shape and not a trivial shape nor a spider, then $a$ is negligible.
\end{lemma}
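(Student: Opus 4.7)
My approach is a case analysis on $|\wt{U}_a| + |\wt{V}_a| \in \{0, 2, 4\}$ (the only parities compatible with the degree conditions of $\ValidShapes$ when $|U_a|, |V_a| \leq 2$), based on the exact identity
\[
\varphi(a) \;=\; |E(a)| \;-\; \tfrac{1}{2}|\wt{S}_a| \;-\; (1 - \tfrac{\eps}{2})|C_a| \;+\; \tfrac{1}{2}\,w(\MVSnointer),
\]
which is derived in the proof of \pref{lem:shape_phi} using $\Wiso = \varnothing$ for $a \in \ValidShapes$ and the fact that $U_a \cap V_a \subseteq \MVS$ consists only of squares. The key inputs are two elementary consequences of the degree conditions of \pref{def:valid_shapes}: every circle has even degree $\geq 2$, so $|C_a| \leq |E(a)|/2$; and the square-degree sum identity $\sum_{v\in S_a}\deg(v) = 2|E(a)|$, together with the minimum degrees ($1$ for vertices in $\wt{U}_a \cup \wt{V}_a$, $2$ for middle squares, and $0$ for vertices in $U_a \cap V_a$), yields case-dependent upper bounds on $|\wt{S}_a|$. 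The target is $\varphi(a) \geq \tfrac{\eps}{8}|E(a)|$; in every case I actually obtain the stronger bound $\varphi(a) \geq \tfrac{\eps}{4}|E(a)|$ except in the single hardest sub-case, where it degrades to precisely the required $\tfrac{\eps}{8}|E(a)|$.

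When $|\wt{U}_a| + |\wt{V}_a| = 0$ we have $U_a = V_a$, and every square in $\wt{S}_a$ is a middle square of degree $\geq 2$, giving $|\wt{S}_a| \leq |E(a)|$; direct substitution yields $\varphi(a) \geq \tfrac{\eps}{4}|E(a)|$. When $|\wt{U}_a| + |\wt{V}_a| = 2$, either (i) $|U_a| = |V_a| = 1$ with $U_a \cap V_a = \varnothing$, where connectedness forces $\MVSnointer \neq \varnothing$ and thus $w(\MVSnointer) \geq 1$; or (ii) $|U_a| = |V_a| = 2$ with a single shared vertex $v^* \in U_a \cap V_a$, where connectedness forces $v^*$ to be non-isolated and hence have even degree $\geq 2$, tightening the square bound to $|\wt{S}_a| \leq |E(a)|$. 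In both sub-cases the identity closes to $\varphi(a) \geq \tfrac{\eps}{4}|E(a)|$.

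The delicate case is $|\wt{U}_a| + |\wt{V}_a| = 4$, i.e., $U_a \cap V_a = \varnothing$ and $|U_a| = |V_a| = 2$. Here the generic square bound degrades to $|\wt{S}_a| \leq |E(a)| + 2$ and the identity becomes $\varphi(a) \geq \tfrac{\eps}{4}|E(a)| - 1 + \tfrac{1}{2}w(\MVSnointer)$; connectedness forces $\MVSnointer \neq \varnothing$. If $w(\MVSnointer) \geq 2$ (which covers $|\MVSnointer| \geq 2$ and any $\MVSnointer$ containing at least one circle together with another vertex), the $-1$ is absorbed. If $\MVSnointer$ is a single square, \pref{lem:connected_shape_mvs_degree} boosts its degree to $\geq 4$, sharpening the square bound to $|\wt{S}_a| \leq |E(a)| + 1$ and again closing the calculation to $\varphi(a) \geq \tfrac{\eps}{4}|E(a)|$. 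The remaining sub-case $\MVSnointer = \{\circle{\s^*}\}$ (single circle, weight $2 - \eps$) is the main obstacle, since this is precisely the configuration where the spider lives with $\varphi(\spider) = 0$.

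To dispatch this last sub-case I will show that $|E(a)| = 2$ forces $a$ to be exactly the spider, so any non-spider shape in this sub-case must have $|E(a)| \geq 4$. Indeed, $|C_a| \leq |E(a)|/2 = 1$ makes $\circle{\s^*}$ the unique circle, and both hyperedges must pass through it; the four odd-degree squares $u_1, u_2, v_1, v_2$ then fill the four available square-slots exactly once each; and the separator condition that removing $\circle{\s^*}$ disconnects $U_a$ from $V_a$ forbids any hyperedge from containing both a $U_a$-square and a $V_a$-square, so the hyperedges are necessarily $\{u_1, u_2, \circle{\s^*}\}$ and $\{v_1, v_2, \circle{\s^*}\}$---the spider. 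Hence non-spider shapes here satisfy $|E(a)| \geq 4$, and plugging $w(\MVSnointer) = 2 - \eps$ and $|\wt{S}_a| \leq |E(a)| + 2$ into the identity yields $\varphi(a) \geq \tfrac{\eps}{4}|E(a)| - \tfrac{\eps}{2}$, which is $\geq \tfrac{\eps}{8}|E(a)|$ exactly when $|E(a)| \geq 4$, completing the proof.
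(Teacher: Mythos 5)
Your proof is correct and matches the paper's strategy: split on $|\wt{U}_a| + |\wt{V}_a|$, then on the weight of $\MVSnointer$, isolating the spider as the one exceptional shape with $|E(a)| = 2$ and a single-circle separator. Working from the raw identity $\varphi(a) = |E(a)| - \tfrac12|\wt{S}_a| - (1 - \tfrac{\eps}{2})|C_a| + \tfrac12 w(\MVSnointer)$ inside the proof of \pref{lem:shape_phi} (rather than from the packaged inequality of that lemma) is just a different bookkeeping of the same estimates. Your concluding argument that $|E(a)| = 2$ with a single-circle separator forces exactly the spider---because the separator condition forbids any hyperedge from containing both a $U_a$-square and a $V_a$-square, and the degree budget leaves no room for middle squares---is spelled out more carefully than the paper's brief assertion that non-spider shapes have $|E(a)| \geq 4$, and is a genuine improvement. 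One small imprecision in your sub-case (ii): the paper's notion of connectedness (a path from $U_a$ to $V_a$) does not actually force $v^* \in U_a \cap V_a$ to be non-isolated, since the trivial path through $v^*$ always exists and \pref{def:valid_shapes} permits isolated vertices in $U_a \cap V_a$. The estimate still closes, though: when $v^*$ is isolated, the parity constraints (each component's squares have total degree $2\cdot$edges, hence even, but $\wt{U}_a \cup \wt{V}_a$ are the only odd-degree squares) force $\wt{U}_a$ and $\wt{V}_a$ into a common component, so $\MVSnointer \neq \varnothing$ and $w(\MVSnointer) \geq 1$ compensates for the weaker bound $|\wt{S}_a| \leq |E(a)| + 1$, giving $\varphi(a) \geq \tfrac{\eps}{4}|E(a)|$ just the same. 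The paper's own case~3 relies on a similarly implicit assertion that $\wt{U}_a, \wt{V}_a$ are connected.
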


\begin{proof}
    First, observe that for connected shapes, we must have $|U_a| = |V_a| \leq 2$.
    We split into several cases and apply \pref{lem:shape_phi},
    \begin{enumerate}
        \item $U_a = V_a$. In this case, $\wt{U}_a, \wt{V}_a = \varnothing$, and since $a$ is not a trivial shape, $|E(a)| > 0$. Thus, $\varphi(a) \geq \frac{\eps}{4}|E(a)|$.

        \item $|U_a| = |V_a| = 1$ and $U_a \cap V_a = \varnothing$.
        Such shapes must be connected, so $w(\MVS)$ is at least $1$, which cancels out with $-\frac{1}{4}(|\wt{U}_a| + |\wt{V}_a|) = -\frac{1}{2}$. Thus, $\varphi(a) \geq \frac{\eps}{4}|E(a)|$.

        \item $|U_a| = |V_a| = 2$ and $|U_a \cap V_a| = 1$.
        Since $|\wt{U}_a| = |\wt{V}_a| = 1$ and $\wt{U}_a,\wt{V}_a$ must be connected, this is the same as the previous case.

        \item $|U_a| = |V_a| = 2$, $U_a \cap V_a = \varnothing$, and $w(\MVS) = 1$ or $w(\MVS) \geq 2$.
        First, if $\MVS$ contains just one square vertex, then by \pref{lem:connected_shape_mvs_degree} it must have large degree, hence $\frac{1}{2} w(\MVS) + \frac{1}{2}\delta_s \geq 1$, canceling out the term $-\frac{1}{4}(|U_a|+|V_a|) = -1$.
        On the other hand, if $w(\MVS) \geq 2$, then clearly it already cancels out the $-1$.

        \item $|U_a| = |V_a| = 2$, $U_a \cap V_a = \varnothing$, and $w(\MVS) = 2-\eps$.
        In this case, $\MVS$ contains exactly one circle vertex, and $\varphi(a) \geq -1 + \frac{\eps}{4}|E(a)| + (1-\eps/2) = \eps (\frac{1}{4}|E(a)|-\frac{1}{2})$.
        Now, if $a$ is a spider, then $|E(a)| = 2$ and $\varphi(a) = 0$.
        Fortunately, for other non-spider shapes, $|E(a)| \geq 4$, which means $\varphi(a) \geq \frac{\eps}{8}|E(a)|$.
    \end{enumerate}
    Thus, except the spider, all non-trivial connected shapes have $\varphi(a) \geq \frac{\eps}{8} |E(a)|$.
\end{proof}

\subsection{The spider is close to the null space of \texorpdfstring{$\MomentMatrix$}{M}}
\label{sec:spider}

As described in the proof overview, we first identify the null space of the moment matrix $\MomentMatrix$ that satisfies all constraints exactly: for any $I\subseteq [n]$ ($|I|\leq 2$) and $\s\in[m]$,
\begin{equation*}
    0 = \pE[x^I (x^\top \Gs x)]
    % \sum_{i,j} \Gsij \pE[x^I x_ix_j]
    = \sum_{i\neq j} \Gsij \pE[x^I x_i x_j] + \frac{1}{n}\sum_{i} \GEntries{\s}{ii} \pE[x^I] \mcom
\end{equation*}
where we use $x_i^2 = \frac{1}{n}$.
We can represent the above using a matrix $L_2 = M_{a_1} + \frac{1}{n}M_{a_2}$ (drawn as graph matrices in \pref{fig:L2_shapes}), with rows indexed by $\s\in[m]$ and columns indexed by $I \subseteq [n]$.
It is easy to see that the $(\s,I)$ entry of $L_2 \MomentMatrix$ is exactly $\pE[x^I (x^\top \Gs x)]$.
One can view $L_2$ as a ``check matrix'', i.e.\ if $\MomentMatrix$ exactly satisfies the constraints, then $L_2 \MomentMatrix = 0$.

\FigDegTwoL

Next, we prove that $M_{\spider}$ is close to the null space of $\MomentMatrix$.

\begin{lemma}[Restatement of \pref{lem:spider_expansion}]
    Suppose $\MomentMatrix$ exactly satisfies all constraints $\{\gs(x) = 0\}_{\s\leq m}$. Then there exists a matrix $A$ such that $\MomentMatrix A = 0$ and
    \begin{equation*}
        \lambda_{\spider} M_{\spider}
        = A + \SpiderErrorBlock{00} + \SpiderErrorBlock{20} + \SpiderErrorBlock{20}^\top + \SpiderErrorBlock{22},
    \end{equation*}
    where $\SpiderErrorBlock{00}$, $\SpiderErrorBlock{20}$, $\SpiderErrorBlock{20}^\top$, $\SpiderErrorBlock{22}$ are errors in blocks $\MomentMatrix_{00}$, $\MomentMatrix_{20}$, $\MomentMatrix_{02}$, $\MomentMatrix_{22}$ respectively,
    and $|\SpiderErrorBlock{00}| = \wt{O}(n^{-3})$, $\|\SpiderErrorBlock{20}\| = \wt{O}(n^{-5/2})$, and $\|\SpiderErrorBlock{22}\| = \wt{O}(n^{-2-\eps})$.
\end{lemma}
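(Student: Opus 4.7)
The plan is to construct $A$ as a scalar multiple of $L_2^\top L_2$, where $L_2$ is the degree-two constraint combinator of \pref{fig:L2_shapes}, and then expand this product into graph matrices in order to peel off $\lambda_{\spider} M_{\spider}$ from lower-order corrections that live in the $(0,0)$, $(2,0)$, $(0,2)$ blocks of $\MomentMatrix$. Because $\MomentMatrix$ exactly satisfies every equation $\gs(x)=0$ by hypothesis, $L_2 \MomentMatrix = 0$; symmetry of $\MomentMatrix$ then gives $\MomentMatrix L_2^\top = 0$ and consequently $\MomentMatrix(L_2^\top L_2) = 0$. Setting $A := \lambda_{\spider} \cdot L_2^\top L_2$ therefore lies in the right null space of $\MomentMatrix$ for free, and the whole task reduces to showing that $A$ differs from $\lambda_{\spider} M_{\spider}$ only by the four claimed block errors.

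Writing $L_2 = M_{a_1} + \tfrac{1}{n} M_{a_2}$ gives
\begin{equation*}
L_2^\top L_2 \;=\; M_{a_1^\top} M_{a_1} \;+\; \tfrac{1}{n}\bigl(M_{a_1^\top} M_{a_2} + M_{a_2^\top} M_{a_1}\bigr) \;+\; \tfrac{1}{n^2}\, M_{a_2^\top} M_{a_2},
\end{equation*}
and by the index conventions for $a_1, a_2$ the four summands live in blocks $(2,2)$, $(2,0)$, $(0,2)$, $(0,0)$ of $\MomentMatrix$ respectively. The graph matrix product rules (the bottom row of \pref{fig:LTL}) give $M_{a_1^\top} M_{a_1} = M_{\spider} + R$, where $R$ is a finite sum of ``collapsed'' shapes obtained by identifying square or circle vertices across the two factors. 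Substituting and rearranging yields the identity of the lemma with
\begin{equation*}
\SpiderErrorBlock{22} = -\lambda_{\spider} R, \qquad \SpiderErrorBlock{20} = -\tfrac{\lambda_{\spider}}{n}\, M_{a_1^\top} M_{a_2}, \qquad \SpiderErrorBlock{00} = -\tfrac{\lambda_{\spider}}{n^2}\, M_{a_2^\top} M_{a_2}.
\end{equation*}

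For the norm bounds I would substitute $|\lambda_{\spider}| = n^{-4}$ (a direct consequence of \pref{eq:lambda_explicit}, since $|\alpha|=2$, $|I|+|J|=4$, $(|\Alphas|{-}1)!!=1$, $\alpha!=1$) and then apply the graph matrix spectral bound \pref{prop:graph_matrix_norm_bound} shape by shape. Every collapsed shape in $R$ gains at least one extra vertex in its minimum vertex separator relative to the spider, so by \pref{lem:shape_phi} its norm is at most $\wt{O}(n^{1-\eps/2})$ versus the $\wt{O}(n^2)$ norm of $M_{\spider}$, which after multiplication by $|\lambda_{\spider}|$ gives $\|\SpiderErrorBlock{22}\| \leq \wt{O}(n^{-2-\eps})$. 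For the cross product, a direct second-moment calculation on the column vector $M_{a_1^\top} M_{a_2}$, whose entries are (up to Hermite corrections) $\sum_{\s} \GEntries{\s}{i_1 i_2} \cdot \sum_{k} h_2(\GEntries{\s}{kk})$, yields squared Euclidean norm $\binom{n}{2}\cdot O(mn) = \wt{O}(n^{5})$ and hence $\|\SpiderErrorBlock{20}\| = \wt{O}(n^{-5/2})$. Finally $M_{a_2^\top} M_{a_2}$ is the scalar $\sum_{\s}\bigl(\sum_{k}h_2(\GEntries{\s}{kk})\bigr)^{\!2}$, whose mean and variance both concentrate at $\Theta(mn) = \wt{O}(n^{3})$, giving $|\SpiderErrorBlock{00}| = \wt{O}(n^{-3})$.

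The principal obstacle is the combinatorial bookkeeping of the expansion $M_{a_1^\top} M_{a_1} = M_{\spider} + R$. I must enumerate every collapsed shape produced when multiplying the two factors (the candidates are pictured in the bottom row of \pref{fig:LTL}), verify the exact multiplicity with which each appears so that the coefficient of $M_{\spider}$ itself is exactly one, and confirm that no collapsed shape lands in block $\MomentMatrix_{11}$---this is consistent with the absence of an $\SpiderErrorBlock{11}$ term in the statement, and follows because in the product $M_{a_1^\top} M_{a_1}$ any identification of a vertex on one side with a vertex on the other still leaves both surviving labeled vertices in the combined $U_a \cup V_a$. A secondary delicate point is that some shapes in $R$ have $U_a = V_a$ fully overlapping, so that the naive $\varphi(a)$ bound is not strong enough; to squeeze out the extra factor of $n^{-\eps}$ needed for $\wt{O}(n^{-2-\eps})$, I would have to invoke the large-degree case of \pref{lem:shape_phi} (observing that in those shapes the surviving circle vertex carries a parallel hyperedge and is therefore of ``large degree'' in the sense of \pref{def:large_degree_vertex}).
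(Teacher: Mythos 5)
Your overall plan — take $A = \lambda_{\spider}\,L_2^\top L_2 = n^{-4}L_2^\top L_2$, use $L_2\MomentMatrix = 0$ (hence $\MomentMatrix L_2^\top = 0$ by symmetry) to get $\MomentMatrix A = 0$, then expand $L_2^\top L_2$ by graph-matrix multiplication and peel off $M_{\spider}$ from $M_{a_1}^\top M_{a_1}$ — is exactly the paper's argument. The block bookkeeping and the norm estimates for the $(2,0)$ and $(0,0)$ errors are fine, and the second-moment calculations you substitute for the graph-matrix norm bounds there give the same answers.

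However, the intermediate bound on the collapsed-shape remainder $R$ is wrong, and the diagnosis of the ``delicate point'' is misattributed. You claim every shape in $R$ has norm $\wt{O}(n^{1-\eps/2})$, but then write that multiplying by $|\lambda_{\spider}| = n^{-4}$ gives $\wt{O}(n^{-2-\eps})$; the arithmetic gives $n^{-3-\eps/2}$, not $n^{-2-\eps}$, so the numbers are internally inconsistent. More importantly, the true norm of the collapsed shapes is $\wt{O}(n^{2-\eps})$, and this bound is tight: the dominating term comes from the diagonal entries $\sum_\s (\GEntries{\s}{i_1 i_2})^2 = \sum_\s h_2(\GEntries{\s}{i_1 i_2}) + m$, and the constant piece is $m\cdot\Id = n^{2-\eps}\cdot\Id$ on the $(2,2)$ block. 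In graph-matrix language this is the fourth shape in the bottom row of \pref{fig:LTL}, which has an \emph{isolated circle vertex}; its size is governed by the $w(\Wiso)$ term in \pref{prop:graph_matrix_norm_bound}, which is exactly what \pref{lem:shape_phi} explicitly excludes. So the binding case is not handled by invoking the large-degree clause of \pref{lem:shape_phi} as you propose (the circle in the double-edge shape has degree $2$, which is its minimum, so it is not large-degree), but by charging the isolated vertex. You should either redo the $\SpiderErrorBlock{22}$ estimate shape by shape as the paper does, keeping track of $w(\Wiso)$, or at least note that the worst term is the scalar $m$ times identity and that $n^{-4}\cdot m = n^{-2-\eps}$.
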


\begin{proof}
    Consider the matrix $L_2$ and shapes $a_1,a_2$ in \pref{fig:L2_shapes}.
    Using the graph matrix norm bounds (\pref{prop:graph_matrix_norm_bound}), $\|M_{a_1}\| = \wt{O}(n)$ and $\|\frac{1}{n}M_{a_2}\| = \wt{O}(n^{\frac{1}{2} - \frac{\eps}{2}})$.
    Now, we consider $L_2^\top L_2$:
    \begin{equation*}
        L_2^\top  L_2 = M_{a_1}^\top M_{a_1} + \frac{1}{n}(M_{a_2}^\top M_{a_1}+ M_{a_1}^\top M_{a_2}) + \frac{1}{n^2} M_{a_2}^\top M_{a_2} = M_{a_1}^\top M_{a_1} + E_{20}' + E_{20}'^\top + E_{00}'.
    \end{equation*}
    where $\|E_{20}'\| \leq \wt{O}(n^{\frac{3}{2}-\frac{\eps}{2}})$ and $\|E_{00}'\| \leq \wt{O}(n^{1-\eps})$.
    For the first term $M_{a_1}^\top M_{a_1}$, for $i_1 \neq i_2$ and $j_1 \neq j_2$,
    \begin{equation*}
        M_{a_1}^\top M_{a_1}( \{i_1,i_2\}, \{j_1,j_2\} ) = \sum_{\s\in[m]} \GEntries{\s}{i_1 i_2} \GEntries{\s}{j_1 j_2}.
    \end{equation*}
    Represented using graph matrix multiplication, $M_{a_1}^\top M_{a_1}$ is a sum of shapes in \pref{fig:LTL}.
    Note that the last two graphs come from the term $M_{a_1}^\top M_{a_1}( \{i_1,i_2\}, \{i_1,i_2\} ) = \sum_{\s\in[m]} (\GEntries{\s}{i_1 i_2})^2$ and using the fact $h_1(z)^2 = (z^2-1)+ 1 = h_2(z) + h_0(z)$.

    \FigLTL

    Observe that the first shape in \pref{fig:LTL} is the spider, which is the dominating term in the expansion:
    $\|M_{\spider}\| = \wt{O}(n^2)$, whereas the rest of the shapes have norms $\wt{O}(n^{2- \eps})$.
    Since $\lambda_{\spider} = n^{-4}$, we can rewrite the spider term
    \begin{equation*}
        \lambda_{\spider} M_{\spider} = n^{-4} (M_{a_1}^\top M_{a_1} + E_{22}')
        = n^{-4} L_2^\top L_2 + \SpiderErrorBlock{00} + \SpiderErrorBlock{20} + \SpiderErrorBlock{20}^\top + \SpiderErrorBlock{22},
    \end{equation*}
    where $\|\SpiderErrorBlock{00}\| = \wt{O}(n^{-3})$, $\|\SpiderErrorBlock{20}\| = \wt{O}(n^{-5/2})$, and $\|\SpiderErrorBlock{22}\| = \wt{O}(n^{-2-\eps})$.
\end{proof}

Thus, since $\MomentMatrix \cdot L_2^\top L_2 = 0$, by \pref{lem:kill_spider} it suffices to show that $\MomentMatrix - n^{-4} L_2^\top L_2 \succeq 0$.
This essentially kills the spider term $\lambda_{\spider} M_{\spider}$ in $\MomentMatrix$.

\subsection{Disconnected shapes are captured in a PSD component}
\label{sec:disconnected_shapes}

Disconnected shapes are the shapes where $U_a,V_a$ are disconnected, i.e.\ there is no minimum vertex separator.
Many of these shapes are not negligible compared to the trivial shapes.
However, we will prove that such shapes can be captured in a PSD component of $\MomentMatrix$.

% Note that the matrix $M_a$ of a left one-sided shape is simply a vector, and moreover the contributions from left one-sided shapes are all in the vector $\MomentMatrix_{20}$.
% Now, consider a disconnected shape $a = (a_1,a_2^\top)$ in $\MomentMatrix_{22}$ that consists of left one-sided shapes $a_1, a_2$ (here $a_2^\top$ is a right one-sided shape).
% We claim that $M_{a}$ is captured in $M_{a_1} M_{a_2}^\top$.

% Let $\LeftOneSidedShapes, \RightOneSidedShapes$ be the set of left and right one-sided shapes in $\ValidShapes$.
% Note that there is a bijective map between $\LeftOneSidedShapes$ and $\RightOneSidedShapes$: the transpose of a left one-sided shape is the corresponding right one-sided shape.
% Next, define $\OneSidedVector \seteq \sum_{a\in \LeftOneSidedShapes} \lambda_a M_a$, and consider $\OneSidedVector \OneSidedVector^\top$:
% \begin{equation*}
%     \OneSidedVector \OneSidedVector^\top = \sum_{a_1,a_2\in \LeftOneSidedShapes} \lambda_{a_1} \lambda_{a_2} M_{a_1} M_{a_2}^\top = \sum_{a_1\in \LeftOneSidedShapes,\ a_2\in \RightOneSidedShapes} \lambda_{a_1} \lambda_{a_2} M_{a_1} M_{a_2}.
% \end{equation*}

Given a disconnected shape $a = (a_1,a_2^\top)$ where $a_1, a_2$ are left one-sided shapes, we first analyze the multiplication $M_{a_1}M_{a_2}^\top$ (recall that $M_{a_1}, M_{a_2}$ are both vectors).
Consider the example in \pref{fig:disconnected_shapes}.
% For simplicity we use the fact that $h_1(x) = x$.
For $i_1\neq i_2 \neq j_1\neq j_2$, the entry of $M_{a_1}M_{a_2}^\top$ is
\begin{equation*}
\begin{aligned}
    (M_{a_1}M_{a_2}^\top)(\{i_1,i_2\}, \{j_1,j_2\}) &=
    \left(\sum_{\stackrel{k\neq i_1,i_2}{\s_1\in [m]}} \GEntries{\s_1}{i_1 k} \GEntries{\s_1}{k i_2} \right)
    \cdot
    \left(\sum_{\stackrel{\ell\neq j_1,j_2}{\s_2\in [m]}} \GEntries{\s_2}{j_1 j_2} \GEntries{\s_2}{\ell\ell} \right) \\
    &= \sum_{\stackrel{k\neq \ell \notin \{i_1, i_2,j_1,j_2\}}{\s_1\neq \s_2}} \GEntries{\s_1}{i_1 k} \GEntries{\s_1}{k i_2} \GEntries{\s_2}{j_1 j_2} \GEntries{\s_2}{\ell\ell}
    + \sum_{\stackrel{k\neq \ell \notin \{i_1, i_2,j_1,j_2\}}{\s_1 = \s_2}} \GEntries{\s_1}{i_1 k} \GEntries{\s_1}{k i_2} \GEntries{\s_1}{j_1 j_2} \GEntries{\s_1}{\ell\ell} \\
    &\quad + \sum_{\stackrel{k= \ell \notin \{i_1, i_2,j_1,j_2\}}{\s_1\neq \s_2}} \GEntries{\s_1}{i_1 k} \GEntries{\s_1}{k i_2} \GEntries{\s_2}{j_1 j_2} \GEntries{\s_2}{kk}
    + \sum_{\stackrel{k= \ell \notin \{i_1, i_2,j_1,j_2\}}{\s_1 = \s_2}} \GEntries{\s_1}{i_1 k} \GEntries{\s_1}{k i_2} \GEntries{\s_1}{j_1 j_2} \GEntries{\s_1}{kk} + \cdots \mper
\end{aligned}
\end{equation*}
This expansion introduces several graph matrices, the first is the disconnected shape $a$.
The first two terms are drawn out in \pref{fig:disconnected_shapes}.
For other entries such as $i\seteq i_1 = j_1$ and $i \neq i_2 \neq j_2$,
\begin{equation*}
\begin{aligned}
    (M_{a_1}M_{a_2}^\top)(\{i,i_2\}, \{i,j_2\})
    % &= \left(\sum_{\stackrel{k\neq i,i_2}{\s_1\in [m]}} \GEntries{\s_1}{i k} \GEntries{\s_1}{k i_2} \right)
    % \cdot
    % \left(\sum_{\stackrel{\ell\neq i,j_2}{\s_2\in [m]}} \GEntries{\s_2}{i j_2} \GEntries{\s_2}{\ell\ell} \right) \\
    &= \sum_{\stackrel{k\neq \ell \notin \{i, i_2,i,j_2\}}{\s_1\neq \s_2}} \GEntries{\s_1}{i k} \GEntries{\s_1}{k i_2} \GEntries{\s_2}{i j_2} \GEntries{\s_2}{\ell\ell}
    + \sum_{\stackrel{k\neq \ell \notin \{i, i_2,i,j_2\}}{\s_1 = \s_2}} \GEntries{\s_1}{i k} \GEntries{\s_1}{k i_2} \GEntries{\s_1}{i j_2} \GEntries{\s_1}{\ell\ell} \\
    &\quad + \sum_{\stackrel{k= \ell \notin \{i, i_2,i,j_2\}}{\s_1\neq \s_2}} \GEntries{\s_1}{i k} \GEntries{\s_1}{k i_2} \GEntries{\s_2}{i j_2} \GEntries{\s_2}{kk}
    + \sum_{\stackrel{k= \ell \notin \{i, i_2,i,j_2\}}{\s_1 = \s_2}} \GEntries{\s_1}{i k} \GEntries{\s_1}{k i_2} \GEntries{\s_1}{i j_2} \GEntries{\s_1}{kk} + \cdots \mper
\end{aligned}
\end{equation*}
These terms correspond to the collapsed shapes: each collapsed shape is obtained by iteratively merging one vertex from the left one-sided shape with one vertex (of the same type) from the right one-sided shape.
The first and second terms above are drawn out in \pref{fig:disconnected_shapes}.

\FigDisconnectedShapes

Now, we proceed to prove \pref{lem:collapsed_shapes_negligible}.

\begin{lemma}[Restatement of \pref{lem:collapsed_shapes_negligible}]
    For a disconnected shape $a = (a_1,a_2^\top)$ where $a_1,a_2$ are left one-sided shapes,
    \begin{equation*}
        M_{a} = M_{a_1} M_{a_2}^\top + \CollapsedErrorShapes{a_1,a_2^\top}
    \end{equation*}
    where $\CollapsedErrorShapes{a_1,a_2^\top}$ consists of shapes obtained from \emph{collapsing} $a_1$ and $a_2^\top$.
    Moreover, all such collapsed shapes are negligible.
\end{lemma}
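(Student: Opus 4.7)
The plan is to first expand the outer product $M_{a_1} M_{a_2}^\top$ as a double sum over ribbons and then classify the terms according to which vertex labels coincide between the two sides. Since $a_1$ and $a_2^\top$ are one-sided, $M_{a_1}$ and $M_{a_2^\top}$ are vectors, so $(M_{a_1} M_{a_2}^\top)(I,J) = M_{a_1}(I)\cdot M_{a_2^\top}(J) = \sum_{R_1,R_2} h_{\alpha(R_1)}(\bG)\,h_{\alpha(R_2)}(\bG)$, where $R_1$ ranges over ribbons of shape $a_1$ with $A_{R_1}=I$ and $R_2$ over ribbons of shape $a_2^\top$ with $B_{R_2}=J$. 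I partition this double sum according to the pattern $\pi$ of matching-type vertex-label coincidences between $R_1$ and $R_2$ (square-to-square, circle-to-circle). The empty pattern gives $h_{\alpha(R_1)+\alpha(R_2)}$ and, after summing over ribbons, reproduces exactly $M_a$ for the disconnected shape $a=(a_1,a_2^\top)$, accounting for the first term in the claim.

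For each nontrivial pattern $\pi$, the contribution is a graph matrix for a \emph{collapsed shape} $a'=a_\pi$ obtained by gluing the identified vertices of $a_1$ and $a_2^\top$. When the gluing also identifies a pair of hyperedges, I will use the univariate Hermite product formula $h_k(z)\,h_\ell(z) = \sum_{j} \binom{k}{j}\binom{\ell}{j} j!\, h_{k+\ell-2j}(z)$ (applied coordinate-wise) to re-expand the product into a sum of Hermite polynomials on the resulting edge multiset; each term in this expansion is still a graph matrix on a collapsed shape, possibly with reduced edge multiplicities. Grouping all of these together defines $\CollapsedErrorShapes{a_1,a_2^\top}$ and establishes the decomposition $M_a = M_{a_1}M_{a_2}^\top + \CollapsedErrorShapes{a_1,a_2^\top}$, up to the sign of rewriting. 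Throughout this step I will verify that every $a'$ appearing in $\CollapsedErrorShapes{a_1,a_2^\top}$ still belongs to $\ValidShapes$: the parity conditions of \pref{def:valid_shapes} are preserved because merging vertices of odd+odd or even+even degree yields even degree, and because the Hermite product only removes even multiples of an edge.

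To prove negligibility of every such $a'$, I will invoke \pref{lem:shape_phi}: it suffices to show $\varphi(a') \ge \tfrac{\eps}{8}|E(a')|$. The crucial structural fact is that $a$ itself was disconnected, so $w(\MVS) = 0$; any nontrivial $\pi$ creates at least one path from $U_{a'}$ to $V_{a'}$ through a merged vertex, so the merged vertex (or, more generally, the minimum separator routed through the identified region) contributes either $\tfrac{1}{2}$ (square merge) or $1-\tfrac{\eps}{2}$ (circle merge) to the $\tfrac{1}{2} w(\MVSnointer)$ term in \pref{lem:shape_phi}. Combined with the $-\tfrac{1}{4}(|\wt U_{a'}|+|\wt V_{a'}|)$ loss of at most $-1$, a single circle merge already yields $\varphi(a') \ge \tfrac{\eps}{4}|E(a')| - \tfrac{\eps}{2}$, which is $\ge \tfrac{\eps}{8}|E(a')|$ whenever $|E(a')|\ge 4$; shapes with $|E(a')|\le 2$ are ruled out in $\ValidShapes$ by the parity and no-isolated-vertex conditions once any side has $|U|\ge 1$. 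The remaining case of a single square merge I will handle by observing that whenever the merge alone saves only $\tfrac{1}{2}$ against a loss of $1$, it must create an excess of incident edges at the merged vertex (the degrees of the two glued squares are each at least $1$ or $2$, so their sum is at least $4$ in the $U\cap V$ case and at least $3$ in the $W$ case, the latter necessarily upgrading to $4$ by parity), hence $\delta_s\ge 1$, contributing an extra $\tfrac{1}{2}$ that closes the deficit.

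The main obstacle will be this final subcase analysis, especially for the ``hardest'' regime $|U_a|=|V_a|=2$, $U_a\cap V_a=\varnothing$, single square merge, where I need to combine the separator-weight gain with the large-degree indicator $\delta_s$ and with parity constraints on the circle vertices to reach the $\tfrac{\eps}{8}|E(a')|$ threshold uniformly. A secondary bookkeeping difficulty is that Hermite-product edge contractions reduce $|E(a')|$ by multiples of $2$ while leaving the vertex set unchanged; I will absorb these by noting that any such contraction strictly decreases $|V(a')| - w(\MVS')$ at the same rate, so the resulting $\varphi$ is nondecreasing under contraction and the bound obtained for the pre-contraction shape transfers.
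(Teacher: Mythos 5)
Your high-level plan (expand $M_{a_1}M_{a_2}^\top$ as a double ribbon sum, sort terms by the pattern of coinciding vertex labels, identify the empty pattern with $M_a$, invoke \pref{lem:shape_phi} for the collapsed shapes) is the same as the paper's, and your analysis of the ``no edges collapse'' case is sound: a merged vertex always ends up with large degree (odd+odd, odd+even, or even+even all exceed the relevant minimum by at least $2$), and the separator gain together with $\delta_s$ or $\delta_c$ closes the $-1$ deficit.

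The gap is in your treatment of the Hermite edge-contraction terms. Two of your claims there are wrong. First, a collapsed shape obtained from the $h_{k+\ell-2j}$ term with $k+\ell-2j=0$ can leave the formerly incident circle (and possibly square) vertices isolated inside $W$, so such shapes do \emph{not} belong to $\ValidShapes$: condition~\ref{cond:no_isolated_vertices} of \pref{def:valid_shapes} fails, and \pref{lem:shape_phi} does not directly apply because its proof starts from $\Wiso=\varnothing$. Second, your monotonicity claim that ``$\varphi$ is nondecreasing under contraction'' is false: removing $k$ edges while keeping the vertex set fixed decreases $|E|$ by $k$, leaves $w(V)$ unchanged, cannot increase $w(\MVS)$, and can only increase $w(\Wiso)$ (by turning vertices isolated), so $\varphi$ strictly decreases by at least $k$. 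The bound on the pre-contraction shape does not transfer to the post-contraction one.

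What actually closes this case in the paper's argument is the fact that the coefficient attached to the collapsed term is still $\lambda_a$, not $\lambda_b$, and $|\lambda_a|=|\lambda_b|\cdot O(n^{-|\DeletedEdges|})$; this extra $n^{-|\DeletedEdges|}$ compensates for the drop in $\varphi$. One then shows $\varphi(b)+|\DeletedEdges|\ge\Omega(\eps|E(b)|)$, which requires splitting into cases depending on whether the newly isolated vertices were previously attached only to circles or also to squares, because the two cases give different lower bounds on $|\DeletedEdges|$ relative to $|\Siso|$ and $|\Ciso|$. Your outline misses both the coefficient bookkeeping and this case split, and the ``monotonicity of $\varphi$'' shortcut cannot replace them.
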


\begin{proof}
    By the discussions above, expanding the matrix $M_{a_1} M_{a_2}^\top$ results in a summation of several shapes, one of which is the disconnected shape $a = (a_1, a_2^\top)$.
    Now, it suffices to show that all collapsed shapes are negligible.

    For a collapsed shape $a$, since it is connected, $\MVS$ must contain at least one vertex.
    Moreover, the merged vertices must have large degree (recall \pref{def:large_degree_vertex}), hence $\delta_s$ or $\delta_c$ must be at least 1.
    Since $|U_a|+|V_a| = 4$, by \pref{lem:shape_phi} we must have $\varphi(a) \geq \frac{\eps}{4}|E(a)|$, meaning that $a$ is negligible.

    Some care is required if the collapsed shape has parallel edges, which may happen if the endpoints of two different hyperedges collapse.
    We handle this by breaking the parallel edge into a sum of graph matrices.
    For example, suppose a shape collapsed from $a_1$ and $a_2^\top$ has two parallel edges $e$ labeled 1.
    Then by $h_1(z)^2 = h_2(z) + 1$, we get a sum of two shapes $b_1,b_2$ (with the same coefficient $\lambda_{a_1} \lambda_{a_2}$), where $b_1$ has the same edge $e$ labeled 2, and $b_2$ has no edge (and may have an isolated vertex).
    Clearly, $b_1$ is negligible, but to show that $b_2$ is also negligible requires some work.

    Let $a = (a_1, a_2^\top)$ be the disconnected shape, and consider a shape $b$ with isolated vertices collapsed from $a_1$ and $a_2^\top$.
    This collapsed shape introduces an error $\lambda_a M_b$, and we must show that $\|\lambda_a M_b\| \leq O(n^{-2 - \Omega(\eps |E(b)|)})$.
    Let $\DeletedEdges$ be the set of deleted edges, and note that $|\lambda_a| = |\lambda_b| \cdot O(n^{-|\DeletedEdges|})$ and $\|\lambda_b M_b\| = \wt{O}(n^{-2 - \varphi(b)})$ (recall the definition of $\varphi$ in \pref{eq:shape_norm_exponent}).
    Thus, it suffices to show that
    \begin{equation*}
        \varphi(b) + |\DeletedEdges| \geq \Omega(\eps |E(b)|).
    \end{equation*}
    Let $\Siso, \Ciso$ be the set of isolated square and circle vertices respectively, and let $b' = b \setminus (\Siso \cup \Ciso)$, the shape without the isolated vertices.
    Clearly, $\varphi(b) = \varphi(b') - w(\Wiso) = \varphi(b') - |\Siso| - (2-\eps) |\Ciso|$, and further by \pref{lem:shape_phi}, $\varphi(b') \geq -1 + \frac{\eps}{4}|E(b)| + \frac{1}{2}w(\MVS)$.
    Next, observe that the vertices in $\Wiso$ must have degree $\geq 4$ before the edges were removed.
    We consider two cases:
    \begin{enumerate}
        \item The isolated vertices were originally connected to circle vertices only:
        In this case, we have $w(\MVS) \geq 2-\eps$.
        Further, $|\DeletedEdges| \geq \frac{1}{2}\cdot 4 |\Siso|$ and $|\DeletedEdges| \geq 4|\Ciso|+2$.
        Thus,
        \begin{equation*}
            \varphi(b) + |\DeletedEdges| \geq -1 + \frac{\eps}{4}|E(b)| + \frac{1}{2}(2-\eps) - \frac{1}{2} |\DeletedEdges| - (2-\eps) \frac{|\DeletedEdges| - 2}{4}+ |\DeletedEdges|
            \geq \frac{\eps}{4}|E(b)| \mper
        \end{equation*}

        \item The isolated vertices were originally connected to some square vertices:
        In this case, we have $w(\MVS) \geq 1$.
        Observe that the originally connecting square vertices must contribute at least 2 deleted edges.
        Thus, $|\DeletedEdges| \geq 2 |\Siso| + 2$ and $|\DeletedEdges| \geq 4|\Ciso|$, and we have
        \begin{equation*}
            \varphi(b) + |\DeletedEdges| \geq -1 + \frac{\eps}{4}|E(b)| + \frac{1}{2} - \frac{|\DeletedEdges|-2}{2} - (2-\eps) \frac{|\DeletedEdges|}{4}+ |\DeletedEdges|
            \geq \frac{\eps}{4}|E(b)| \mper
        \end{equation*}
    \end{enumerate}
    In both cases, the collapsed shape is negligible.
    This completes the proof.
\end{proof}

Finally, we handle the disconnected shapes with an additional disconnected component.
Consider $a = (a_1, a_2)$ where $a_1$ is a left one-sided shape and $a_2$ is the disconnected component.
Note that $a_2$ must be a shape in $\ValidShapes$ and $M_{a_2}$ is a scalar which is negligible due to \pref{lem:shape_phi}.
Moreover, $\lambda_{(a_1,a_2)} = \lambda_{a_1}\lambda_{a_2}$.
The matrix $M_a$ can be written as $M_a = M_{a_1} M_{a_2} - \CollapsedErrorShapes{a_1,a_2}$, where $\CollapsedErrorShapes{a_1,a_2}$ consists of shapes obtained by collapsing $a_1, a_2$.
Now summing up all possible $a_2$'s for a fixed $a_1$, we get
\begin{equation*}
    \lambda_{a_1} M_{a_1} + \sum_{a_2} \lambda_{(a_1,a_2)} M_{(a_1,a_2)} = \lambda_{a_1} M_{a_1} \Paren{1 + \sum_{a_2} \lambda_{a_2} M_{a_2}} - \lambda_{a_1} \sum_{a_2} \lambda_{a_2} \CollapsedErrorShapes{a_1,a_2}
\end{equation*}
Observe that $1+ \sum_{a_2} \lambda_{a_2} M_{a_2}$ is simply $\MomentMatrix_{00}$!
% This factor will be removed when we normalize $\MomentMatrix'$ to have 1 in the top-left entry (with negligible errors).
The same procedure can be done for shapes $(a_1,a_2, a_3^\top)$ where $a_1,a_3$ are left one-sided shapes.
This essentially shows that all disconnected components can be absorbed into the shape without that component.

\begin{lemma}[Restatement of \pref{lem:disconnected_shapes_captured}]
    Consider the first column of $\MomentMatrix$: $(\MomentMatrix_{00}, 0, \MomentMatrix_{20})$, and let $v \seteq (1, 0, \frac{\MomentMatrix_{20}}{\MomentMatrix_{00}})$.
    The matrix $\MomentMatrix_{00} \cdot vv^\top$ captures all disconnected shapes in $\MomentMatrix_{22}$ modulo some error consisting of negligible shapes.
\end{lemma}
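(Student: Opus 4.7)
The plan is to compute $\MomentMatrix_{00} \cdot vv^\top$ block by block and match each block against the corresponding portion of $\MomentMatrix$. Since $v = (1, 0, \MomentMatrix_{20}/\MomentMatrix_{00})$, the rank-one matrix $\MomentMatrix_{00} \cdot vv^\top$ has $(0,0)$ block equal to $\MomentMatrix_{00}$, $(2,0)$ block equal to $\MomentMatrix_{20}$, $(0,2)$ block equal to $\MomentMatrix_{20}^\top$, and $(2,2)$ block equal to $\MomentMatrix_{20} \MomentMatrix_{20}^\top / \MomentMatrix_{00}$, with all other blocks zero. The first three identifications are exact by construction, so the content of the lemma is really the identification of the $(2,2)$ block with the sum of disconnected shapes in $\MomentMatrix_{22}$, up to negligible error.

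The key structural observation is that any disconnected shape $a \in \ValidShapes$ contributing to $\MomentMatrix_{22}$ factors uniquely as $a = (a_1, a_2, a_3^\top)$ where $a_1$ is the connected left one-sided shape containing $U_a$, $a_3$ is the connected left one-sided shape whose transpose contains $V_a$, and $a_2$ is the union of the remaining connected components (so that $U_{a_2} = V_{a_2} = \varnothing$); moreover $\lambda_a = \lambda_{a_1}\lambda_{a_2}\lambda_{a_3}$. Applying \pref{lem:collapsed_shapes_negligible} iteratively — once to the pair $(a_1, a_2)$ and then to the pair obtained with $a_3^\top$ — yields
\begin{equation*}
    M_a \;=\; M_{a_1} M_{a_2} M_{a_3}^\top \;+\; \Delta_a,
\end{equation*}
where $\Delta_a$ is a sum of collapsed shapes, each of which is negligible. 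Summing over all such triples and using linearity gives
\begin{equation*}
    \MomentMatrix_{22}^{\mathrm{disc}} \;=\; N_L \cdot \MomentMatrix_{00} \cdot N_L^\top \;+\; E_{22}^{\mathrm{negl}},
\end{equation*}
where $N_L \seteq \sum_{a_1} \lambda_{a_1} M_{a_1}$ ranges over connected left one-sided shapes with $|U_{a_1}|=2$, and where $E_{22}^{\mathrm{negl}}$ is a sum of negligible graph matrices.

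Next I would apply the same factorization/collapsing argument to $\MomentMatrix_{20}$ itself. Any shape contributing to $\MomentMatrix_{20}$ splits as $(a_1, a_2)$ with $a_1$ connected and $a_2$ having empty $U$ and $V$, giving $\MomentMatrix_{20} = N_L \cdot \MomentMatrix_{00} + E_{20}^{\mathrm{negl}}$. Since $\MomentMatrix_{00} = 1 + o(1)$ is invertible, this rearranges to $N_L = \MomentMatrix_{20}/\MomentMatrix_{00} - E_{20}^{\mathrm{negl}}/\MomentMatrix_{00}$. Substituting back:
\begin{equation*}
    \MomentMatrix_{22}^{\mathrm{disc}} \;=\; \frac{\MomentMatrix_{20}\MomentMatrix_{20}^\top}{\MomentMatrix_{00}} \;+\; E_{22}^{\mathrm{negl}} \;-\; \frac{E_{20}^{\mathrm{negl}}\MomentMatrix_{20}^\top + \MomentMatrix_{20}\,(E_{20}^{\mathrm{negl}})^\top}{\MomentMatrix_{00}} \;+\; \frac{E_{20}^{\mathrm{negl}}(E_{20}^{\mathrm{negl}})^\top}{\MomentMatrix_{00}}.
\end{equation*}
The first term is exactly the $(2,2)$ block of $\MomentMatrix_{00} \cdot vv^\top$, and the remaining three terms form the error.

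The main obstacle will be to confirm that the three error terms above are absorbed into the ``negligible shapes'' allowance in the ambient proof of \pref{lem:M_E_PSD}. The direct summand $E_{22}^{\mathrm{negl}}$ is a sum of collapsed shapes already shown to be negligible by \pref{lem:collapsed_shapes_negligible}. The cross terms involve products of a negligible matrix with $\MomentMatrix_{20}$: using $\|\MomentMatrix_{20}\| \leq n^{-1/2}$ from \pref{lem:M_norm} together with the negligibility bound $\|E_{20}^{\mathrm{negl}}\| \leq n^{-1 - \Omega(\eps)}$ (inherited from the collapsed shapes in the expansion of $\MomentMatrix_{20}$), the products have norm $\leq n^{-3/2 - \Omega(\eps)}$, which is of the order demanded by the PSD argument in the $(2,2)$ block (where we need to beat $n^{-2}$ times any log factor). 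Care is needed in verifying that, when one re-expands these error matrices in the graph-matrix basis, every constituent shape satisfies the criterion of \pref{lem:shape_phi} and can be lumped into the overall negligible budget — this follows because products of graph matrices that are already negligible (or that arise from further collapses) inherit the same $\varphi(a) \geq \Omega(\eps|E(a)|)$ bound up to constants. Finally, dividing by $\MomentMatrix_{00} = 1+o(1)$ only perturbs norms by a $(1+o(1))$ factor and does not affect negligibility.
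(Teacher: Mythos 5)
Your overall strategy — factor each disconnected shape as $(a_1,a_2,a_3^\top)$, relate $\MomentMatrix_{22}^{\mathrm{disc}}$ and $\MomentMatrix_{20}$ to a common one-sided sum $N_L$ and $\MomentMatrix_{00}$, and then substitute — matches the paper's approach. However, your error accounting has a genuine gap in two places, and these two mistakes happen to offset; once you see the cancellation you are forced to track it explicitly, which you do not.

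The first problem is the claim that the collapse term $\Delta_a$ obtained from the iterated factorization $M_a = M_{a_1}M_{a_2}M_{a_3}^\top + \Delta_a$ is a sum of \emph{negligible} shapes. \pref{lem:collapsed_shapes_negligible} only covers collapses of two one-sided shapes $a_1, a_2^\top$, where the result is a \emph{connected} shape with $\MVS \neq \varnothing$ and a large-degree merged vertex — and it is exactly the combination of these two gains in \pref{lem:shape_phi} that makes such collapses negligible. When the \emph{middle} piece $a_2$ (which has $U_{a_2}=V_{a_2}=\varnothing$) merges with only one of $a_1,a_3^\top$, the resulting shape is still \emph{disconnected} ($\MVS = \varnothing$), of the form $(a_1, \mathrm{coll}(a_2,a_3^\top))$. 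One verifies from \pref{lem:shape_phi} that this shape has $\varphi \approx -\tfrac12 + \Omega(\eps)$, giving $\|\lambda_a M_a\| = \wt{O}(n^{-3/2-\Omega(\eps)})$, which is far above the $n^{-2-\Omega(\eps)}$ threshold for a negligible shape in the $(2,2)$ block. The second problem is the final paragraph: you compute the cross terms to have norm $\leq n^{-3/2-\Omega(\eps)}$ and assert that this is ``of the order demanded \dots\ (where we need to beat $n^{-2}$)''. This is backwards: $n^{-3/2-\Omega(\eps)}$ is a factor $n^{1/2}$ \emph{larger} than $n^{-2}$, so the bound as stated would wreck the PSD argument rather than save it.

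What actually makes the lemma true is a cancellation between these two non-negligible contributions. Writing $\MomentMatrix_{20} = \MomentMatrix_{00}N_L + E_{20}$ and $\MomentMatrix_{22}^{\mathrm{disc}} = \MomentMatrix_{00}N_LN_L^\top + E_{22}$ (both $E$'s being collapse errors), the difference is
\[
\frac{\MomentMatrix_{20}\MomentMatrix_{20}^\top}{\MomentMatrix_{00}} - \MomentMatrix_{22}^{\mathrm{disc}}
= N_L E_{20}^\top + E_{20}N_L^\top + \frac{E_{20}E_{20}^\top}{\MomentMatrix_{00}} - E_{22}.
\]
Expanding $N_LE_{20}^\top$ as graph matrices produces exactly the disconnected shapes $(a_1,\mathrm{coll}(a_3,a_2)^\top)$ with coefficient $\lambda_{a_1}\lambda_{a_3}\lambda_{a_2}$, which appear in $E_{22}$ with the same coefficient and opposite sign (they are the collapses of $(a_1,a_2,a_3^\top)$ in which $a_2$ merges only with $a_3^\top$). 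After this cancellation, and the symmetric one for $E_{20}N_L^\top$, every remaining shape in the error either involves two separate merges (so $\delta_s + \delta_c \geq 2$), or involves a merge \emph{together} with a connecting path from $U$ to $V$ (so $\MVSnointer \neq \varnothing$ \emph{and} $\delta_s+\delta_c \geq 1$), or is a term in $E_{20}E_{20}^\top/\MomentMatrix_{00}$ of norm $n^{-2-\Omega(\eps)}$; in every case \pref{lem:shape_phi} gives $\varphi \geq \frac{\eps}{4}|E|$, i.e.\ genuine negligibility. Without making this cancellation explicit, your bound of $n^{-3/2-\Omega(\eps)}$ on the individual pieces is not enough, and the proof as written does not go through.
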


\begin{proof}
    Let $\LeftOneSidedShapes, \RightOneSidedShapes$ be the set of left and right one-sided shapes in $\ValidShapes$.
    Also, note that $\MomentMatrix_{00} = 1+o(1)$.
    From the discussion above, we can write the vector
    \begin{equation*}
        \MomentMatrix_{20} = \MomentMatrix_{00} \sum_{a\in \LeftOneSidedShapes} \lambda_a M_a + \sum_{a\in \LeftOneSidedShapes} \sum_{b\in \ValidShapes} \lambda_a \lambda_b \CollapsedErrorShapes{a,b} \mper
    \end{equation*}
    Similarly, the sum of disconnected shapes in $\MomentMatrix_{22}$ can be written this way:
    \begin{equation*}
        \MomentMatrix_{00} \sum_{a_1, a_2\in \LeftOneSidedShapes} \lambda_{a_1} \lambda_{a_2} M_{a_1,a_2} + \sum_{a_1,a_2\in \LeftOneSidedShapes} \sum_{b\in \ValidShapes} \lambda_{a_1} \lambda_{a_2} \lambda_b \CollapsedErrorShapes{(a_1,a_2),b} \mcom
    \end{equation*}
    where the first term
    \begin{equation*}
        \sum_{a_1, a_2\in \LeftOneSidedShapes} \lambda_{a_1} \lambda_{a_2} M_{a_1,a_2}
        = \Paren{ \sum_{a\in \LeftOneSidedShapes} \lambda_a M_a } \Paren{ \sum_{a\in \LeftOneSidedShapes} \lambda_a M_a }^\top + \CollapsedError
    \end{equation*}
    with $\CollapsedError$ consists of negligible collapsed shapes due to \pref{lem:collapsed_shapes_negligible}.

    Then, consider the first column of $\MomentMatrix$: $(\MomentMatrix_{00}, 0, \MomentMatrix_{20})$, and let $v \seteq (1, 0, \frac{\MomentMatrix_{20}}{\MomentMatrix_{00}})$.
    Clearly, the matrix $\MomentMatrix_{00} \cdot vv^\top$ captures all disconnected shapes in $\MomentMatrix_{22}$ modulo some negligible collapsed shapes.
\end{proof}

% By \pref{lem:shape_phi}, $a_3$ contributes a factor of $n^{-\frac{\eps}{4} |E(a_3)|}$ (since $|U_{a_3}| = |V_{a_3}| = 0$), and the same analysis of \pref{lem:negligible_shapes_contribution} shows that all non-empty $a_3$ can be ignored: $\sum_{a_3} \|\lambda_{a_3} M_{a_3}\| = o(1)$.
% Thus, we can scale $\OneSidedVector \OneSidedVector^\top$ by a factor of $1 + o(1)$ to account for this.
% However, there is a technical subtlety in that the vertices in $a_3$ cannot be assigned the same label as vertices in $a_1,a_2$, hence multiplying everything by the same factor introduces some errors which contain collapsed shapes between $a_1,a_3$ or $a_2,a_3$.
% But once it's collapsed, it becomes a shape without the disconnected component, hence we can handle it by adjusting the coefficients in $\OneSidedVector$.

\subsection{Truncation error is small}
\label{sec:truncation_error}

We first prove that the candidate moment matrix $\MomentMatrix$ given by the \pc method already approximately satisfies the constraints $x^\top \Gs x = 0$ with very small error.
Specifically, we show that $\pE[x^I (x^\top \Gs x)]$ is close to 0 for any $I\subseteq [n]$, $|I| \leq 2$.

Adopting the notation of \cite{GJJ20}, we use $Q\pE$ to denote the error, where $\pE$ is treated as a dimension $\binom{n}{\leq 4}$ vector, $Q$ is a matrix with rows indexed by $(I,\s)$ for $I\subseteq [n], \s\in[m]$, and
\begin{equation*}
    Q\pE(I,\s) \seteq \pEover{x^I (x^\top \Gs x)}.
\end{equation*}
Note that we only work with degree-4 SoS, so $|I| \leq 2$.

\begin{lemma}[Restatement of \pref{lem:error_upper_bound}]
    There exist constants $C,C_1, c_2, c_3 > 0$ such that if $\eps \geq \frac{C\log\log n}{\log n}$ and $\frac{C_1}{\eps} \leq \tau \leq n^{c_2\eps}$, then $\|Q\pE\|_2 \leq n^{-c_3 \eps \tau}$.
\end{lemma}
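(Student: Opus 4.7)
My plan hinges on the fact that the construction in \pref{def:candidate-pd} is the Hermite-degree-$\tau$ truncation (under the null measure $\nu_N$) of the full planted conditional expectation $\pE^{\star}[x^I] := \E_{(G',z)\sim \nu_P}[z^I \mid G' = G]$. Because the planted distribution \pref{def:planted_distribution} was built with $\scaling = 0$, its planted solution $z$ satisfies $z^\top G_\s z = 0$ almost surely, so $\pE^{\star}[x^I \cdot x^\top G_\s x] \equiv 0$ as a polynomial in $G$. Thus $Q\pE(I,\s)$ is purely the truncation remainder; the task reduces to showing this remainder has small Hermite mass.

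Concretely, I write $Q\pE(I,\s) = \sum_{i,j} G_{\s,ij}\, \pE[x^I x_i x_j]$ and apply the Hermite product identity $G_{\s,ij}\, h_\alpha(G) = h_{\alpha+e_{\s,ij}}(G) + \alpha_{\s,ij}\, h_{\alpha-e_{\s,ij}}(G)$. For any $\gamma$ with $|\gamma| \leq \tau - 1$, both $|\gamma \pm e_{\s,ij}| \leq \tau$, so the pseudo-calibration orthogonality $\E_{\nu_N}[\pE[x^J] h_\delta(G)] = \E_{\nu_P}[z^J h_\delta(G)]$ (which holds whenever $|\delta| \leq \tau$) combined with the product identity gives
\begin{equation*}
    \E_{\nu_N}\!\big[Q\pE(I,\s)\, h_\gamma(G)\big] = \E_{\nu_P}\!\big[z^I (z^\top G_\s z) h_\gamma(G)\big] = 0 \mper
\end{equation*}
Hence the Hermite expansion of $Q\pE(I,\s)$ is supported only on degrees $|\beta| \in \{\tau, \tau+1\}$, and each surviving coefficient is a signed sum of $\lambda_{\alpha, \cdot}$'s with $|\alpha| \in \{\tau, \tau+1\}$, each of which is controlled by the explicit bound \pref{eq:lambda_bound}.

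By Hermite orthogonality, $\E_{\nu_N}\|Q\pE\|_2^2 = \sum_{I,\s}\sum_{|\beta|\in\{\tau,\tau+1\}} \wh{c}_{\beta,I,\s}^2 \cdot \beta!$. I will bound this sum by the same shape-counting machinery used in \pref{sec:bounding_variance}: multiplication by $G_{\s,ij}$ inserts one hyperedge of label $\s$ into the hypergraph associated to $\alpha$, and the sets $I, \{i,j\}$ contribute at most four marked square vertices. Reusing the analogues of \pref{lem:graph_upper_bound} and \pref{lem:edge_labels_contribution} at a \emph{fixed} edge count $\ell \in \{\tau, \tau+1\}$ yields a total contribution of order $(Cm\tau/n^2)^{\tau/2}$. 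Summing over the $\poly(n,m)$ choices of $(I,\s)$, invoking $m \leq n^2/\poly(\log n)$, and using $\tau \geq C_1/\eps$ gives $\E_{\nu_N}\|Q\pE\|_2^2 \leq n^{-\Omega(\eps\tau)}$; a final Markov step converts this to $\|Q\pE\|_2 \leq n^{-c_3 \eps \tau}$ with probability $1 - o(1)$.

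The main obstacle is the tightness of this last counting step. Unlike in \pref{sec:bounding_variance}, where summing over $1 \leq \ell \leq d$ only needs to total $O(1)$, here we need an \emph{exponentially small} bound at the single edge count $\ell \approx \tau$. The $\ell^{O(\ell)}$ overhead from shape counting and the $\poly(n,m)$ overhead from summing over $(I,\s)$ must both be absorbed into the exponent of $n^{-\eps\tau}$. This is exactly why the hypothesis imposes $C_1/\eps \leq \tau \leq n^{c_2 \eps}$: the lower bound ensures $\eps\tau$ is large enough to dominate these overheads, while the upper bound prevents the double-factorial factor $(\tau/2)^{\tau/2}$ in $|\lambda_{\alpha,\cdot}|$ from overwhelming the $n^{-\tau}$ decay coming from the factor $n^{-|\alpha|}$ in \pref{eq:lambda_bound}.
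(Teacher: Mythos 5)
Your proposal is correct and follows essentially the same route as the paper's own proof. Both hinge on the observation that the \emph{untruncated} pseudo-calibrated construction satisfies the constraints exactly (since $z^\top G'_\s z = 0$ holds almost surely under $\nu_P$ with $\scaling=0$), and both use the Hermite recurrence $x\,h_k(x) = h_{k+1}(x) + k\,h_{k-1}(x)$ to push the multiplication by $G_{\s,ij}$ through the Hermite expansion and show that all coefficients at degree $\leq \tau-1$ vanish; the error is thus concentrated at the top degree(s) $\approx \tau$. Your orthogonality computation $\E_{\nu_N}[Q\pE(I,\s)h_\gamma(G)] = \E_{\nu_P}[z^I(z^\top G'_\s z)h_\gamma(G')] = 0$ for $|\gamma|\leq\tau-1$ is exactly the manipulation the paper performs in-line. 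The one place you diverge slightly is the final step: you propose computing $\E_{\nu_N}\|Q\pE\|_2^2$ by Hermite orthogonality and finishing with Markov, whereas the paper bounds each entry $|\eps_\tau(I,\s)|$ directly via the graph-matrix norm bounds (\pref{prop:graph_matrix_norm_bound} and \pref{lem:shape_phi}) and then sums trivially over the $\poly(n,m)$ pairs $(I,\s)$. Both routes carry the same counting burden — a $\tau^{O(\tau)}$ shape-count overhead against an $n^{-\Omega(\eps\tau)}$ decay — and you correctly identify that the two-sided hypothesis $C_1/\eps \leq \tau \leq n^{c_2\eps}$ is what makes this tradeoff close; so the approaches are interchangeable.
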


\begin{proof}
    By \pref{def:pseudoexpectation},
    \begin{equation*} \label{eq:constraint_error}
    \begin{aligned}
        \pEover{x^I (x^\top \Gs x)}
        &= \sum_{i,j\in[n]} \Gsij \pEover{x^I x_i x_j} \\
        &= \sum_{i,j\in[n]} \sum_{\alpha: |\alpha|\leq \tau} \Gsij \cdot h_{\alpha}(\bG) \cdot \frac{1}{\alpha!} \expover{(\bG',\bz)\sim \Pl}{\bz^{I}\bz_i \bz_j h_{\alpha}(\bG')} \mper
    \end{aligned}
    \numberthis
    \end{equation*}
    Next, using the recurrence of Hermite polynomials, we have $x h_k(x) = h_{k+1}(x) + k h_{k-1}(x)$ for all $k\in \N$ (assuming $h_{-1}(x) = 0$).
    For simplicity of presentation, let us single out an entry $(\s,i,j)$ and let $k \seteq \Alphasij$. We look at the terms $h_k(x)$ and $h_{k+2}(x)$,
    \begin{equation*}
    \begin{aligned}
        & \frac{1}{k!} x \cdot h_k(x) h_k(x') + \frac{1}{(k+2)!} x \cdot h_{k+2}(x) h_{k+2}(x') \\
        =\ & \frac{1}{k!} \left(h_{k+1}(x) + kh_{k-1}(x)\right) h_k(x')
        + \frac{1}{(k+2)!} \left(h_{k+3}(x) + (k+2)h_{k+1}(x)\right) h_{k+2}(x') \mper
    \end{aligned}
    \end{equation*}
    Now, the coefficient of $h_{k+1}(x)$ is
    \begin{equation*}
        \frac{1}{(k+1)!}h_{k+1}(x) \cdot ( (k+1)h_k(x') + h_{k+2}(x') )
        = \frac{1}{(k+1)!} h_{k+1}(x) \cdot x' h_{k+1}(x') \mcom
    \end{equation*}
    again using the recurrence of Hermite polynomials.
    Intuitively, this allows us to rewrite a sum of $x h_k(x) h_k(x')$ as a sum of $x' h_k(x')h_k(x)$.
    Thus, \pref{eq:constraint_error} can be rewritten as
    \begin{equation*}
    \begin{aligned}
        Q\pE (I,\s) \seteq \pEover{x^I (x^\top \Gs x)}
        &= \sum_{\beta: |\beta|\leq \tau-1} \frac{h_{\beta}(\bG)}{\beta!} \sum_{i,j\in[n]} \expover{(\bG',\bz)\sim\Pl}{\bz^{I} \bz_i\bz_j \CoeffEntries{\bG'}{\s}{ij} h_{\beta}(\bG') } + \eps_{\tau}(I,\s) \\
        &=   \sum_{\beta: |\beta|\leq \tau-1} \frac{h_{\beta}(\bG)}{\beta!}  \expover{(\bG',\bz)\sim\Pl}{\bz^{I} h_{\beta}(\bG') (\bz \Coeff{\bG'}{\s}\bz^\top) } + \eps_{\tau}(I,\s) \\
        &= \eps_{\tau}(I,\s) \mcom
    \end{aligned}
    \end{equation*}
    here we see the importance of the planted distribution: any $(\bG',\bz)\sim \Pl$ satisfies $\bz \Coeff{\bG'}{\s} \bz^\top = 0$.

    Finally, we analyze the remaining error term $\eps_\tau(I,\s)$.
    Denote $\IndexPlussij{\alpha}\in \N^{m\times n\times n}$ as the index $\alpha$ with the entry $\Alphasij$ incremented by 1.
    Since $|\alpha|$ must be even, $|\beta|$ must be odd, and the error only consists of terms $h_{\beta}(\bG)$ where $\beta = \IndexPlussij{\alpha}$ and $|\alpha| = \tau$.
    \begin{equation*} \label{eq:error}
        \eps_\tau(I,\s) = \sum_{\alpha: |\alpha|=\tau} \sum_{i,j\in[n]} \frac{h_{\IndexPlussij{\alpha}}(\bG)}{\alpha!}  \cdot \expover{\Pl}{\bz^{I+\{i,j\}} h_{\alpha}(\bG')}
        = \sum_{\alpha: |\alpha| = \tau} \sum_{i,j\in[n]} \lambda_{\alpha,I,\{i,j\}} h_{\IndexPlussij{\alpha}}(\bG) \mper
        \numberthis
    \end{equation*}

    First, the magnitude of $\lambda_{\alpha,I,\{i,j\}}$ (recall equation \pref{eq:lambda_explicit}) can be upper bounded by
    \begin{equation*}
        |\lambda_{\alpha,I,\{i,j\}}| \leq n^{-|\alpha| + |I|/2+1} (|\alpha|-1)!!
        \leq n^{-\tau+2} \tau^{\tau/2}
    \end{equation*}
    here we use the fact that $|I| \leq 2$, $|\alpha| =\tau$, and $(2k-1)!! \leq (2k)^k$.

    Next, fix $I,\s,i,j$.
    The quantity $\sum_{|\alpha|=\tau} \lambda_{\alpha,I+\{i,j\}} h_{\IndexPlussij{\alpha}}(\bG)$ is a sum of graph matrices over shapes with $\tau$ edges.
    By \pref{lem:number_of_shapes}, there are at most $\tau^{O(\tau)}$ such shapes.
    Since $(\alpha,I,\{i,j\})$ must satisfy the conditions in \pref{def:valid_shapes} so that $\lambda_{\alpha,I,\{i,j\}}$ is nonzero, we can use \pref{lem:shape_phi} to upper bound
    \begin{equation*}
        \Abs{ \sum_{|\alpha|=\tau} \lambda_{\alpha,I+\{i,j\}} h_{\IndexPlussij{\alpha}}(\bG) } \leq n^{-\frac{\eps \tau}{4} + O(1)} \tau^{O(\tau)} \mper
    \end{equation*}

    Summing over all $i,j$, we get $|\eps_{\tau}(I,\s)| \leq n^{-\Omega(\eps \tau) + O(1)}$ if $\tau \leq n^{c_2\eps}$ for a small enough constant $c_2$.
    Moreover, if $\tau \geq \frac{C_1}{\eps}$ for a large enough constant $C_1$, then $\|\eps_{\tau}\|_2 \leq n^{-\Omega(\eps\tau)}$.
    This requires $\eps \geq \frac{C\log\log n}{\log n}$ for some constant $C$.
    This completes the proof.
\end{proof}

We remark that a result similar to \pref{lem:error_upper_bound} can be also obtained using \cite[Lemma 7.7]{GJJ20}.
In general, due to the \pc method, if the truncation threshold $\tau$ is not too small, then the candidate moment matrix already approximately satisfies all constraints with tiny error.

\subsection{Bounds on the norm and nonzero singular values of \texorpdfstring{$Q$}{Q}}
\label{sec:fixing_pd}

Since $Q\pE = 0$ if and only if $\pE$ exactly satisfies all constraints, the natural ``fix'' is
\begin{equation*}
    \pE_{\text{fix}} \seteq \pE - Q^\top (QQ^\top)^{\dagger} Q \pE \mper
\end{equation*}
$(QQ^\top)^\dagger$ is the \textit{pseudo inverse}.
Clearly, $Q\pE_{\text{fix}} = 0$.

We assume that $\pE$ only contains the even degree monomials since the odd monomials are zero and don't need to be fixed.
Moreover, we assume that the $\Gs$'s are symmetrized so that $\GEntries{\s}{ij} = \GEntries{\s}{ji}$; this has no effect on the results and will greatly simplify the presentation.
The entries $\GEntries{\s}{ij}$ and $\GEntries{\s}{ii}$ will thus have different scaling, but this is only a constant factor difference.

Recall that the rows of $Q$ are indexed by $(\s,I)$ where $|I| = 0$ or $2$.

\paragraph{$|I| = 0$ case.}
We first look at the entries of $Q\pE$ corresponding to $I= \varnothing$:
\begin{equation*}
    Q\pE(\varnothing,\s) = \pE[x^\top \Gs x] = 2\sum_{i < j} \Gsij \pE[x_ix_j] + \frac{1}{n}\sum_{i\in[n]} \GEntries{\s}{ii} \mper
\end{equation*}
Here we use $x_i^2 = \frac{1}{n}$.
We can see that this is same as the analysis in \pref{sec:spider}, and the above can be represented by the matrix $L_2$ (see \pref{fig:L2_shapes}).
% Further, $\|M_{a_1}\| = \wt{O}(n)$ and $\|\frac{1}{n}M_{a_2}\| = \wt{O}(n^{\frac{1}{2} - \frac{\eps}{2}})$.

\begin{lemma} \label{lem:L_two_min_eigenvalue}
    $\|L_2\| = \wt{O}(n)$ and $L_2 L_2^\top$ has minimum eigenvalue $\Omega(n^2)$.
\end{lemma}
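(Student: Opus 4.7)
The plan is to handle the two claims separately, using graph-matrix norm bounds for the upper bound and a Wishart-matrix argument for the lower bound on the smallest eigenvalue.

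For $\|L_2\| = \wt{O}(n)$, I would apply Proposition~\ref{prop:graph_matrix_norm_bound} to the two shapes $a_1, a_2$ from Figure~\ref{fig:L2_shapes} appearing in $L_2 = M_{a_1} + \tfrac{1}{n}M_{a_2}$. A direct weight count gives, for $a_1$ (circle $\s$ on the left, two squares $i_1,i_2$ on the right, one hyperedge): $w(V(a_1)) = (2-\eps) + 2$, $w(\MVS) = 2-\eps$ (taking $\MVS = \{\s\}$), and $\Wiso = \varnothing$, yielding $\|M_{a_1}\| = \wt{O}(n)$. For $a_2$ (one circle, one square, parallel edges): $w(V(a_2)) = (2-\eps)+1$, $w(\MVS) = 1$, giving $\|M_{a_2}\| = \wt{O}(n^{1-\eps/2})$, so $\tfrac{1}{n}\|M_{a_2}\| = \wt{O}(n^{-\eps/2}) = o(1)$. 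Triangle inequality then yields $\|L_2\| = \wt{O}(n)$.

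For the minimum eigenvalue of $L_2 L_2^\top$, I would expand
\begin{equation*}
L_2 L_2^\top = M_{a_1} M_{a_1}^\top + \tfrac{1}{n}\bigl(M_{a_1} M_{a_2}^\top + M_{a_2} M_{a_1}^\top\bigr) + \tfrac{1}{n^2} M_{a_2} M_{a_2}^\top
\end{equation*}
and argue that the first term dominates. Viewing $M_{a_1}$ (after the symmetrization $\GEntries{\s}{ij} = \GEntries{\s}{ji}$) as an $m \times \binom{n}{2}$ matrix whose entries $\GEntries{\s}{i_1 i_2}$ (indexed by $\s$ and by unordered pairs $\{i_1 < i_2\}$) are i.i.d.\ Gaussians, $M_{a_1} M_{a_1}^\top$ is a Wishart matrix with aspect ratio $m/\binom{n}{2} \leq 2/\log^{C_1} n$. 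The Davidson--Szarek smallest singular value bound then gives $\sigma_{\min}(M_{a_1}) \geq \sqrt{\binom{n}{2}} - \sqrt{m} - O(\sqrt{\log n}) = \Omega(n)$ with probability $1-o(1)$, hence $M_{a_1} M_{a_1}^\top \succeq \Omega(n^2)\cdot I$. The cross and pure-$a_2$ corrections have norms at most $\tfrac{2}{n}\|M_{a_1}\|\|M_{a_2}\| = \wt{O}(n^{1-\eps/2})$ and $\tfrac{1}{n^2}\|M_{a_2}\|^2 = \wt{O}(n^{-\eps})$ respectively, both $o(n^2)$, and a single application of Weyl's inequality yields $\lambda_{\min}(L_2 L_2^\top) = \Omega(n^2)$.

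The only mildly delicate point, and the place I would spend most care, is verifying that $M_{a_1}$ really does behave like an i.i.d.\ Gaussian matrix on its full column index set: the diagonal entries $\GEntries{\s}{ii}$ have been peeled off into $\tfrac{1}{n}M_{a_2}$, so the remaining column index set is exactly the set of unordered pairs $\{i_1 \neq i_2\}$, each column a fresh vector of independent Gaussians across $\s$. Once that bookkeeping is settled, Davidson--Szarek applies directly. Everything else is a union bound over the $O(1)$ shapes involved, absorbed into the $\wt{O}(\cdot)$ notation.
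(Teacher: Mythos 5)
Your proof is correct and follows essentially the same approach as the paper, which also observes that after peeling off the $\tfrac{1}{n}M_{a_2}$ part the remaining block of $L_2$ is a dense $m \times \binom{n}{2}$ matrix with independent Gaussian entries, and then invokes standard random-matrix smallest-singular-value bounds (the paper just says ``$\eps$-net argument'' rather than citing Davidson--Szarek explicitly). The only small redundancy: since $M_{a_1}$ (columns indexed by pairs $\{i,j\}$ with $i\neq j$) and $M_{a_2}$ (single column $\varnothing$) have disjoint column supports inside $L_2$, the cross terms $M_{a_1}M_{a_2}^\top$ and $M_{a_2}M_{a_1}^\top$ are identically zero, so you did not actually need to bound them.
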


\begin{proof}
    $\|L_2\| = \wt{O}(n)$ is immediate from graph matrix norm bounds.
    For the minimum singular value, observe that since $m \ll n^2$, $L_2$ is a dense rectangular matrix and every entry is independent: $L_2((\s,\varnothing),\{i,j\}) = \Gsij$.
    Standard techniques in random matrix theory (such as an $\eps$-net argument) show that $L_2 L_2^\top$ is full rank and has minimum eigenvalue $\Omega(n^2)$ with high probability.
\end{proof}

\paragraph{$|I|=2$ case.}
Suppose $I = \{k,\ell\}$ with $k \neq \ell$, we have
\begin{equation*}
\begin{aligned}
    Q\pE(I,\s) = \pE[x^I (x^\top \Gs x)] 
    =\ & 2 \sum_{i<j: i\neq j\neq k\neq \ell} \Gsij \pE[x_ix_jx_k x_\ell]
    + \frac{1}{n} \sum_{\stackrel{j: i\neq j \neq \ell}{i=k}} \GEntries{\s}{kj} \pE[x_j x_\ell] \\
    &+ \frac{1}{n} \sum_{i = j \neq k \neq \ell} \GEntries{\s}{ii} \pE[x_k x_\ell]
    + \cdots
\end{aligned}
\end{equation*}
The expansion corresponds to the shapes in \pref{fig:L4_shapes} (first 3 terms are drawn out).
We denote the sum as $L_4$.
% We remark that the shapes in $L_4$ are almost identical to the shapes in \cite[Figure 7]{GJJ20}; they even have the same norms (as detailed in \cite[Section 7.2]{GJJ20}).
% Thus, we expect that their analysis can be applied to our setting.

\FigDegFourL

Before we dive into the analysis, we first define a shape $a^*$ drawn in \pref{fig:special_shape}.
This shape appears in the expansion of $L_4 L_4^\top$ and will play a crucial role in our analysis.

\begin{definition}[Shape $a^*$]
    We define $a^*$ as the shape drawn in \pref{fig:special_shape}.
    The matrix $M_{a^*}$ has entries
    \begin{equation*}
        M_{a^*}(\{\s_1,i_1,j_1\}, \{\s_2,i_2,j_2\}) = \GEntries{\s_1}{i_2j_2} \GEntries{\s_2}{i_1 j_1} \mcom
    \end{equation*}
    if $\s_1 \neq \s_2$ and $i_1 \neq j_1 \neq i_2 \neq j_2$, and 0 otherwise.
\end{definition}

\FigSpecialShape

\paragraph{Analysis of $L_4 L_4^\top$.}
Let $a_1, a_2, a_3$ be the first three shapes in $L_4$ drawn in \pref{fig:L4_shapes}.
$\|M_{a_1}\| = \wt{O}(n)$, $\|\frac{1}{n}M_{a_2}\| \leq \wt{O}(n^{\frac{1}{2}-\frac{\eps}{2}})$, $\|\frac{1}{n}M_{a_3}\| = \wt{O}(n^{\frac{1}{2}-\frac{\eps}{2}})$, and the rest of the terms have norm $o(1)$.
Thus, our analysis will focus on $M_{a_1}$.
We show the following lemma,

\begin{lemma} \label{lem:LLT}
    There exists a matrix $A_1$ such that
    \begin{equation*}
        L_4 L_4^\top = \Theta(n^2) \cdot \Id + A_1 A_1^\top + M_{a^*} + \calE_2
    \end{equation*}
    where $\|\calE_2\| = \wt{O}(n^{2-\frac{\eps}{2}})$.
\end{lemma}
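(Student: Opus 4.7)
The plan is to isolate the dominant graph matrix $M_{a_1}$ in $L_4$, analyze $M_{a_1} M_{a_1}^\top$ via graph matrix multiplication, and push everything else into $\calE_2$ or into a PSD term $A_1 A_1^\top$. First, I would write $L_4 = M_{a_1} + B$, where $B$ collects the remaining shapes drawn in Figure~\pref{fig:L4_shapes} (the $\frac{1}{n} M_{a_2}$, $\frac{1}{n} M_{a_3}$, and further remainders). Each constituent of $B$ is negligible in the sense of \pref{def:negligible_shapes} by \pref{lem:shape_phi}, so the triangle inequality gives $\|B\| \leq \wt{O}(n^{1/2-\eps/2})$. Expanding
\begin{equation*}
L_4 L_4^\top = M_{a_1} M_{a_1}^\top + M_{a_1} B^\top + B M_{a_1}^\top + BB^\top,
\end{equation*}
and using $\|M_{a_1}\| = \wt{O}(n)$ from \pref{prop:graph_matrix_norm_bound}, the three auxiliary terms are each bounded in spectral norm by $\wt{O}(n^{3/2-\eps/2})$, safely within the $\wt{O}(n^{2-\eps/2})$ budget for $\calE_2$.

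The heart of the argument is the analysis of $M_{a_1} M_{a_1}^\top$. Fixing rows $r_1 = (\s_1,\{k_1,\ell_1\})$ and $r_2 = (\s_2,\{k_2,\ell_2\})$, the entry $\sum_c M_{a_1}(r_1,c) M_{a_1}(r_2,c)$ runs over 4-subsets $c$ that contain $\{k_1,\ell_1\}\cup\{k_2,\ell_2\}$. I would split into three regimes by how the row-pairs overlap: (i) equal pairs, (ii) singleton overlap, (iii) disjoint pairs. Regime (i) with $\s_1=\s_2$ produces $\sum_{i<j\text{ disjoint}} \GEntries{\s_1}{ij}^2$, which via $h_1(x)^2 = h_2(x)+1$ splits into the constant $\Theta(n^2)$ giving the $\Theta(n^2)\Id$ term, plus an $h_2$-Hermite shape of norm $\wt{O}(n)$ absorbed into $\calE_2$. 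Regime (iii) forces $c = \{k_1,\ell_1,k_2,\ell_2\}$ and yields $\GEntries{\s_1}{k_2\ell_2}\GEntries{\s_2}{k_1\ell_1}$, which is exactly $M_{a^*}$. Regime (i) with $\s_1 \ne \s_2$ and regime (ii) yield mean-zero Gram-style sums of the form $\sum \GEntries{\s_1}{\cdot}\GEntries{\s_2}{\cdot}$ that I will package into the PSD piece.

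I would define $A_1$ as the rectangular matrix obtained from $M_{a_1}$ by excising the column configurations responsible for the ``crossing'' pattern of regime (iii); with this definition $A_1 A_1^\top$ is manifestly a Gram matrix, hence PSD, and by construction captures exactly the mean-zero contributions arising from regimes (i) (with $\s_1 \ne \s_2$) and (ii), while $\Theta(n^2)\Id$ accounts for regime (i) with $\s_1=\s_2$ and $M_{a^*}$ accounts for regime (iii). The main technical obstacle is quantitative: verifying that the residual ``collapse'' terms introduced by the graph-matrix multiplication identifications (analogous to those in \pref{lem:collapsed_shapes_negligible} when vertex labels coincide, together with the symmetric multiplicities coming from the three possible pair-splittings of each 4-set $c$) together have spectral norm at most $\wt{O}(n^{2-\eps/2})$. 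This is accomplished by enumerating the finitely many collapsed shape-types that can appear, checking each lies in $\ValidShapes$ with the additional large-degree or large-separator structure forced by the coincidences, and invoking \pref{lem:shape_phi} to confirm $\varphi(a) \ge \Omega(\eps |E(a)|)$, so that \pref{prop:graph_matrix_norm_bound} yields the desired bound on each in direct parallel with the negligible-shape analysis of \pref{sec:negligible_shapes}.
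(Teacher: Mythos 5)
Your overall plan — isolating $M_{a_1}$ inside $L_4$, bounding the cross terms via norm estimates, and analyzing $M_{a_1} M_{a_1}^\top$ entrywise by the overlap pattern of the row pairs — matches the paper's, and your handling of the $\Theta(n^2)\Id$ term (from regime (i) with $\s_1=\s_2$, extracting the constant from $h_1^2 = h_2 + 1$) is correct. However, the proposed construction and identification of the PSD term $A_1 A_1^\top$ has two genuine problems.

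First, the assignment of regimes to the three summands is wrong. You assign the PSD piece to ``regime (i) with $\s_1\neq\s_2$ and regime (ii),'' and identify regime (iii) in its entirety with $M_{a^*}$. But $M_{a^*}$ is by definition supported only on off-diagonal blocks $\s_1\neq\s_2$ (with disjoint index pairs); the diagonal-block part of regime (iii), i.e.\ $\s_1=\s_2$ with $\{i_1,j_1\}\cap\{i_2,j_2\}=\varnothing$, contributes $\GEntries{\s}{i_1j_1}\GEntries{\s}{i_2j_2}$, which for each fixed $\s$ is (up to the exclusion of overlapping pairs) the rank-one matrix $v_\s v_\s^\top$. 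That block-diagonal Gram structure is precisely what the paper packages into $A_1A_1^\top$, and it does not appear anywhere in your accounting. Conversely, the pieces you propose to put in the PSD component — the off-diagonal-in-$\s$ version of regime (i) and regime (ii) — are not PSD: the former is, for each fixed index pair, an $m\times m$ Gram matrix with its diagonal zeroed out (which generically has negative eigenvalues), and the latter mixes different index pairs with a single shared vertex. The paper bounds both of these by $\wt{O}(n^{2-\eps/2})$ via the graph-matrix norm bound and throws them into $\calE_2$; there is no reason, and no way, to make them PSD.

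Second, the mechanism you propose for constructing $A_1$ cannot work. Excising columns of $M_{a_1}$ does produce a Gram matrix $A_1A_1^\top$, but the column $c$ of $M_{a_1}$ indexed by a four-element set $\{a,b,d,e\}$ contributes simultaneously to all three regimes: its rank-one contribution $M_{a_1}(\cdot,c)M_{a_1}(\cdot,c)^\top$ hits all $6\times 6$ row pairs whose index pairs are sub-pairs of $c$, and among those there are equal pairs, singleton-overlap pairs, and disjoint pairs. Since every non-degenerate column is ``responsible'' for all three regimes at once, there is no subset of columns whose deletion removes regime (iii) while leaving regimes (i) and (ii) intact, and hence no way to define $A_1$ as a column-restriction of $M_{a_1}$ with the separation property you claim. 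The fix is to abandon the column-deletion idea and instead, as the paper does, define the PSD piece directly from the diagonal-in-$\s$, disjoint-index-pair shape, with a small correction (to account for the overlapping-pair exclusions) that is absorbed into $\calE_2$ alongside the other $\wt{O}(n^{2-\eps/2})$-norm shapes.
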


\begin{proof}
    First, $L_4 L_4^\top$ can be written as
    \begin{equation*}
        L_4 L_4^\top = M_{a_1} M_{a_1}^\top + \calE_1
    \end{equation*}
    where $\|\calE_1\| = \wt{O}(n^{\frac{3}{2}- \frac{\eps}{2}})$.
    Thus, it suffices to analyze the matrix $M_{a_1} M_{a_1}^\top$.

    Let us write out the entries of $M_{a_1}$ explicitly:
    $M_{a_1}(\{\s,k',\ell'\}, \{i,j,k,\ell\}) = \Gsij$ if $\{k',\ell'\} = \{k,\ell\}$ and 0 otherwise.
    In other words, it is nonzero only when $\{k',\ell'\} \subset \{i,j,k,\ell\}$.
    Then, we can write the entries of $M_{a_1} M_{a_1}^\top$ explicitly,

    \begin{equation*}
        (M_{a_1} M_{a_1}^\top) (\{\s_1,i_1,j_1\}, \{\s_2, i_2, j_2\}) =
        \begin{cases}
            \sum_{k\neq \ell \neq i_1 \neq j_1} \GEntries{\s_1}{k\ell} \GEntries{\s_2}{k\ell} & \textnormal{ if $\{i_1,j_1\} = \{i_2,j_2\}$}, \\
            \sum_{k\notin \{i_1,i_2,j_1\}} \GEntries{\s_1}{i_2 k} \GEntries{\s_2}{i_1 k}  & \textnormal{ if $j_1 = j_2 \neq i_1 \neq i_2$}, \\
            \GEntries{\s_1}{i_2 j_2} \GEntries{\s_2}{i_1 j_1} & \textnormal{ if $i_1 \neq i_2 \neq j_1 \neq j_2$}.
        \end{cases}
    \end{equation*}

    The above can be represented as a sum of several graph matrices.
    We split into different cases; each case corresponds to a shape:
    \begin{itemize}
        \item $\s_1= \s_2$ (diagonal blocks of $M_{a_1} M_{a_1}^\top$):
        \begin{itemize}
            \item Case $\{i_1,j_1\} = \{i_2,j_2\}$: for this shape there is an identity component, $M_a = \Theta(n^2) \cdot \Id + \calE$ where the error $\|\calE\| = \wt{O}(n)$.

            \item Case $j_1 = j_2 \neq i_1 \neq i_2$: for this shape $\|M_a\| = \wt{O}(n^{2-\frac{\eps}{2}})$.

            \item Case $i_1 \neq i_2 \neq j_1 \neq j_2$: this shape can be decomposed into a PSD component plus some errors: $M_a = A_1 A_1^\top + \calE$ where $\|\calE\| = \wt{O}(n)$.
        \end{itemize}

        \item $\s_1 \neq \s_2$ (off-diagonal blocks of $M_{a_1} M_{a_1}^\top$):
        \begin{itemize}
            \item Case $\{i_1,j_1\} = \{i_2,j_2\}$: for this shape $\|M_a\| = \wt{O}(n^{2-\frac{\eps}{2}})$.

            \item Case $j_1 = j_2 \neq i_1 \neq i_2$: for this shape $\|M_a\| = \wt{O}(n^{2-\frac{\eps}{2}})$.

            \item Case $i_1 \neq i_2 \neq j_1 \neq j_2$: this shape is exactly $a^*$ in \pref{fig:special_shape}.
        \end{itemize}
    \end{itemize}

    Therefore, we can write
    \begin{equation*}
        M_{a_1} M_{a_1}^\top = \Theta(n^2) \cdot \Id + A_1 A_1^\top + M_{a^*} + \calE_2
    \end{equation*}
    where $\|\calE_2\| = \wt{O}(n^{2-\frac{\eps}{2}})$.
    This completes the proof.
\end{proof}

\begin{remark}
    We will later show that $L_4 L_4^\top$ has a non-trivial null space.
    Thus, the shape $a^*$ must exist in the expansion of $M_{a_1} M_{a_1}^\top$; without it, $L_4 L_4^\top$ would be full rank, which is a contradiction.
\end{remark}

\paragraph{Null space of $L_4 L_4^\top$.}
Consider any $\pE$ and fix $\s_1 < \s_2 \in[m]$.
Observe that
\begin{equation*}
    \pE \sum_{k,\ell} (x^\top \G{\s_1} x) \GEntries{\s_2}{k\ell} x_k x_\ell - \pE \sum_{k,\ell} (x^\top \G{\s_2} x) \GEntries{\s_1}{k\ell} x_k x_\ell = 0 \mper
\end{equation*}
Treating $\pE$ as a vector, this can be written as $\pE^\top L_4^\top N_{\s_1,\s_2} = 0$.
Since this holds for all vectors $\pE$, $N_{\s_1,\s_2}$ is in the null space of $L_4^\top$.
Collecting the vectors for all pairs $\s_1< \s_2$, we get a matrix $N$ such that $L_4^\top N = 0$.

Similar to the analysis of $L_4$, we look at the dominating component $M_{b_1}$ of $N$; $M_{b_1}$ has norm $\wt{O}(n)$ whereas the other term has norm $\wt{O}(n^{\frac{1}{2}-\frac{\eps}{2}})$.
The rows of $M_{b_1}$ are indexed by $\{\s,i,j\}$ and the columns are indexed by $\{\s_1,\s_2\}$:
\begin{equation*}
    M_{b_1}(\{\s,i,j\}, \{\s_1,\s_2\}) = 
    \begin{cases}
        \GEntries{\s_2}{ij} & \textnormal{if $\s=\s_1$,}\\
        -\GEntries{\s_1}{ij} & \textnormal{if $\s=\s_2$,}\\
        0  & \textnormal{otherwise.}
    \end{cases}
\end{equation*}

Next, we prove the following result for $NN^\top$,
\begin{lemma} \label{lem:null_space_NNT}
    There exists a matrix $A_2$ such that
    \begin{equation*}
        NN^\top = A_2 A_2^\top - M_{a^*} + \calE_3
    \end{equation*}
    where $\|\calE_3\| \leq \wt{O}(n^{2-\frac{\eps}{2}})$.
\end{lemma}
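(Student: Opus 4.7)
The plan is to follow the structure of the proof of \pref{lem:LLT} very closely. First, I would replace $N$ by its dominating graph matrix $M_{b_1}$. Since $N = M_{b_1} + R$ with $\|R\| = \wt{O}(n^{\frac{1}{2}-\frac{\eps}{2}})$ and $\|M_{b_1}\| = \wt{O}(n)$, expanding $NN^\top$ gives $NN^\top = M_{b_1} M_{b_1}^\top + \calE'$ with $\|\calE'\| \leq \wt{O}(n^{\frac{3}{2}-\frac{\eps}{2}})$, which is well within the $\wt{O}(n^{2-\frac{\eps}{2}})$ target for $\calE_3$. So it suffices to compute $M_{b_1} M_{b_1}^\top$ up to that error.

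Next I would expand $(M_{b_1} M_{b_1}^\top)(\{\s, i, j\}, \{\s', i', j'\})$ by summing over pair indices $\{\s_1, \s_2\}$ (with the convention $\s_1 < \s_2$). The computation splits into two cases. When $\s = \s'$, both row labels equal $\s$, and contributions come from $\s_1 = \s$ with $\s_2 > \s$ (sign $(+)(+)$) as well as from $\s_2 = \s$ with $\s_1 < \s$ (sign $(-)(-)$); both orderings contribute positively, and the total is the diagonal-block entry $\sum_{\s'' \neq \s} \GEntries{\s''}{ij}\GEntries{\s''}{i'j'}$. When $\s \neq \s'$, only the single pair $\{\s, \s'\}$ contributes, with mismatched signs from the two rows (one matches the smaller, the other the larger element of the pair), yielding exactly $-\GEntries{\s'}{ij}\GEntries{\s}{i'j'}$. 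This equals $-M_{a^*}(\{\s, i, j\}, \{\s', i', j'\})$ whenever $\{i,j\}$ and $\{i',j'\}$ are disjoint.

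Third, I would define $A_2$ to have rows indexed by $\{\s, i, j\}$ and columns indexed by pairs $(\s, \s'')$ with $\s'' \neq \s$, with entry $A_2(\{\s, i, j\}, (\s, \s'')) = \GEntries{\s''}{ij}$. Because each column's label couples to a specific row label $\s$, $A_2 A_2^\top$ is supported exactly on the diagonal blocks $\s = \s'$, where it equals $\sum_{\s'' \neq \s}\GEntries{\s''}{ij}\GEntries{\s''}{i'j'}$, matching the diagonal part of $M_{b_1}M_{b_1}^\top$ exactly. Consequently $A_2 A_2^\top - M_{a^*}$ reproduces both the diagonal part and the distinct-index off-diagonal part of $M_{b_1}M_{b_1}^\top$; the only residue lives on off-diagonal entries with $\s \neq \s'$ and some index coincidence among $\{i,j,i',j'\}$, where $M_{a^*}$ vanishes but $M_{b_1} M_{b_1}^\top$ still contributes $-\GEntries{\s'}{ij}\GEntries{\s}{i'j'}$.

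Each such coincidence correction is a collapsed graph matrix on a smaller shape. The product structure $-\GEntries{\s'}{ij}\GEntries{\s}{ij'}$ (for example, the $i=i'$ case) factors through a tensor-product-like operator built from two $m \times n$ Gaussian matrices, so the correction's operator norm is at most $\|X\|\|Y\|$ for such Gaussians. Since $m = n^{2-\eps}$, each factor has norm $\wt{O}(n^{1-\frac{\eps}{2}})$, giving total norm $\wt{O}(n^{2-\eps}) \leq \wt{O}(n^{2-\frac{\eps}{2}})$. Summing the finitely many coincidence patterns together with $\calE'$ gives $\|\calE_3\| \leq \wt{O}(n^{2-\frac{\eps}{2}})$. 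The main obstacle, as in \pref{lem:LLT}, is the careful bookkeeping of the multiple coincidence cases (e.g.\ $i=i'$, $i=j'$, $\{i,j\}=\{i',j'\}$) together with the associated boundary terms in the graph matrix expansion, verifying in each case that the shape's operator norm is bounded by $\wt{O}(n^{2-\frac{\eps}{2}})$ either via the $\varphi(a)$ criterion of \pref{lem:shape_phi} or, when that criterion is too loose, via a direct random-matrix estimate exploiting the tensor-product factorization.
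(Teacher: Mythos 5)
Your proposal is correct and follows the paper's proof essentially line for line: replace $N$ by its dominating shape $M_{b_1}$, compute $M_{b_1}M_{b_1}^\top$ by cases on whether the row/column labels $\s,\s'$ coincide, identify the diagonal $\s=\s'$ block as the PSD piece $A_2A_2^\top$, recognize the off-diagonal distinct-index contribution as exactly $-M_{a^*}$, and absorb the remaining coincidence shapes into $\calE_3$ with norm $\wt{O}(n^{2-\eps})\le\wt{O}(n^{2-\eps/2})$. Your explicit construction of $A_2$ and the per-case norm discussion add welcome detail that the paper leaves implicit, but there is no substantive difference in approach.
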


\begin{proof}
    It suffices to consider $M_{b_1} M_{b_1}^\top$.
    \begin{equation*}
        (M_{b_1} M_{b_1}^\top) \Paren{ \{\s_1,i_1,j_1\}, \{\s_2,i_2,j_2\}} = 
        \begin{cases}
            - \GEntries{\s_2}{i_1 j_1} \GEntries{\s_1}{i_2 j_2} & \textnormal{if $\s_1 \neq \s_2$} \\
            \sum_{\s_3\neq \s_1}\GEntries{\s_3}{i_1 j_1} \GEntries{\s_3}{i_2 j_2} & \textnormal{if $\s_1 = \s_2$} \\
        \end{cases}
    \end{equation*}
    We can also write $M_{b_1} M_{b_1}^\top$ as a sum of graph matrices:

    \begin{itemize}
        \item $s = s_1 = s_2$ (diagonal blocks of $M_{b_1} M_{b_1}^\top$): it is clear that $\sum_{\s_3\neq \s} \GEntries{\s_3}{i_1 j_1} \GEntries{\s_3}{i_2 j_2}$ is a PSD component.
        Thus, we can write this component as $A_2 A_2^\top$ for some matrix $A_2$.

        \item $\s_1 \neq \s_2$ (off-diagonal blocks of $M_{b_1} M_{b_1}^\top$):
        \begin{itemize}
            \item Case $i_1 \neq i_2 \neq j_1 \neq j_2$: this shape is exactly $a^*$ but with a crucial negative sign.
            \item Other cases: these shapes have norms bounded by $\wt{O}(n^{2-\frac{\eps}{2}})$.
        \end{itemize}
    \end{itemize}
    Thus, we have
    \begin{equation*}
        NN^\top = A_2 A_2^\top - M_{a^*} + \calE_3
    \end{equation*}
    where $\|\calE_3\| \leq \wt{O}(n^{2-\frac{\eps}{2}})$.
\end{proof}

\paragraph{Proof of \pref{lem:Q_norm_singular_value}.}
Combining \pref{lem:LLT} and \pref{lem:null_space_NNT}, we see that the term $M_{a^*}$ cancels out.
This implies that $L_4 L_4^\top + N N^\top$ is full rank and has minimum eigenvalue $\Omega(n^2)$.
Now, we are ready to prove \pref{lem:Q_norm_singular_value}.

\begin{lemma}[Restatement of \pref{lem:Q_norm_singular_value}]
    There exists a constant $C$ such that for $\eps \geq \frac{C\log\log n}{\log n}$,
    $\|Q\| \leq \wt{O}(n)$ and the smallest nonzero eigenvalue of $QQ^\top$ is $\Omega(n^2)$.
\end{lemma}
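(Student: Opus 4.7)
The plan is to prove the two conclusions—$\|Q\| \leq \wt O(n)$ and $\sigma_{\min}^+(QQ^\top) \geq \Omega(n^2)$—in sequence. For the norm bound, the decomposition $Q = \begin{bmatrix} L_2 \\ L_4 \end{bmatrix}$ (with the columns of the two blocks zero-padded to a common index set, since $L_2$ acts only on the degree-$0, 2$ coordinates of $\pE$ while $L_4$ acts only on the degree-$2, 4$ coordinates) gives $Q^\top Q = L_2^\top L_2 + L_4^\top L_4$, so by the triangle inequality
\[
\|Q\|^2 = \|Q^\top Q\| \leq \|L_2\|^2 + \|L_4\|^2 = \wt O(n^2),
\]
combining $\|L_2\| = \wt O(n)$ from \pref{lem:L_two_min_eigenvalue} with $\|L_4\|^2 = \|L_4 L_4^\top\| = \wt O(n^2)$ from \pref{lem:LLT}.

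For the lower bound on the smallest nonzero eigenvalue of $QQ^\top$, I will use the null-space augmentation $\tilde N := \begin{bmatrix} 0 \\ N \end{bmatrix}$. Because $L_4^\top N = 0$ by construction of $N$, we have $Q^\top \tilde N = 0$, so the columns of $\tilde N$ lie in $\ker(Q^\top)$. Hence every $u \perp \ker(Q^\top)$ satisfies $\tilde N^\top u = 0$, and
\[
u^\top QQ^\top u \;=\; u^\top(QQ^\top + \tilde N \tilde N^\top)\,u.
\]
This reduces the task to proving $\hat M := QQ^\top + \tilde N \tilde N^\top \succeq \Omega(n^2)\,I$. Writing $\hat M$ in block form,
\[
\hat M = \begin{bmatrix} L_2 L_2^\top & L_2 L_4^\top \\ L_4 L_2^\top & L_4 L_4^\top + NN^\top \end{bmatrix},
\]
both diagonal blocks are already lower-bounded by $\Omega(n^2)\,I$: the top-left by \pref{lem:L_two_min_eigenvalue}; the bottom-right by combining \pref{lem:LLT} and \pref{lem:null_space_NNT}, wherein the $M_{a^*}$ terms cancel exactly and the residual error has norm $\wt O(n^{2-\eps/2}) = o(n^2)$ provided $\eps \geq C\log\log n /\log n$ for a large enough constant $C$.

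What remains, and what I expect to be the main obstacle, is controlling the off-diagonal block $\|L_2 L_4^\top\|$ at a level $o(n^2)$, since then a Schur-complement argument delivers $\hat M \succeq \Omega(n^2)\,I$. The heuristic is promising: $L_2$ couples only to degree-$0$ and degree-$2$ monomials while $L_4$ couples only to degree-$2$ and degree-$4$ monomials, so $L_2 L_4^\top = L_2^{(2)}(L_4^{(2)})^\top$ is supported on the degree-$2$ overlap; and the degree-$2$ entries of $L_4$ arise only via the booleanity substitution $x_i^2 = 1/n$, contributing an extra $1/n$ factor absent from the diagonal blocks. To turn this intuition into a rigorous bound, I will expand $L_2 L_4^\top$ into graph matrices (mirroring the decomposition of $L_4 L_4^\top$ and $NN^\top$ carried out earlier in this section); each resulting shape has an additional collapsed or large-degree vertex compared with the trivial shapes, so the criterion of \pref{lem:shape_phi} forces $\varphi(a) \geq \Omega(\eps |E(a)|)$ for every shape, and summing as in \pref{lem:trivial_negl_psd} yields $\|L_2 L_4^\top\| = \wt O(n^{2-\Omega(\eps)}) = o(n^2)$. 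Given this bound, the cross term $L_4 L_2^\top (L_2 L_2^\top)^{-1} L_2 L_4^\top$ has norm at most $O(\|L_2 L_4^\top\|^2/n^2) = o(n^2)$, so the Schur complement of $\hat M$ against its top-left block is at least $\Omega(n^2)\,I$; the standard block PSD criterion then gives $\hat M \succeq \Omega(n^2)\,I$, completing the proof.
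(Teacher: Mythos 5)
Your high-level structure is the same as the paper's: pad $N$ to $\tilde N$ with $Q^\top\tilde N = 0$, reduce to showing $QQ^\top + \tilde N\tilde N^\top \succeq \Omega(n^2)\,\Id$, split into the $2\times 2$ block form, use \pref{lem:L_two_min_eigenvalue} for the top-left block and \pref{lem:LLT}$\,+\,$\pref{lem:null_space_NNT} (with the $M_{a^*}$ cancellation) for the bottom-right, then control the off-diagonal block. The norm bound via $Q^\top Q = L_2^\top L_2 + L_4^\top L_4$ is also fine.

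The gap is in the off-diagonal block. You propose to expand $L_2 L_4^\top$ into graph matrices and invoke \pref{lem:shape_phi} to get $\varphi(a) \ge \Omega(\eps|E(a)|)$ for each resulting shape, but \pref{lem:shape_phi} is stated only for $a\in\ValidShapes$, and condition 1 of \pref{def:valid_shapes} requires $U_a,V_a$ to contain only square vertices. The shapes arising in $L_2L_4^\top$ have circle vertices in $U_a$ and $V_a$ (they are indexed by pairs $(\s,I)$), so the lemma does not apply as stated, and the phrase ``additional collapsed or large-degree vertex compared with the trivial shapes'' has no precise referent here since trivial shapes live only in the diagonal blocks of $\MomentMatrix$. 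You would need to either redo the degree-counting in \pref{lem:shape_phi} for this enlarged shape family or, better, use the paper's much shorter route: the dominant component $M_{a_1}$ of $L_4$ has columns indexed by degree-$4$ monomials while $L_2$ has columns only on degree-$0$ and degree-$2$ monomials, so $L_2M_{a_1}^\top = 0$ exactly, and then $\|L_2L_4^\top\| = \|L_2(L_4-M_{a_1})^\top\| \le \|L_2\|\,\|L_4 - M_{a_1}\| = \wt O(n)\cdot\wt O(n^{1/2-\eps/2}) = \wt O(n^{3/2-\eps/2})$. Your own ``heuristic'' paragraph already contains this idea (the degree-$2$ part of $L_4$ carries an extra $1/n$); you just need to finish it with submultiplicativity rather than detouring through a shape lemma that does not cover the case. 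One small additional inaccuracy: $L_4$ also couples to the degree-$0$ column (via a double contraction $x_k^2x_\ell^2$), so $L_2L_4^\top$ is not supported purely on the degree-$2$ overlap, but that extra piece carries $n^{-2}$ and is dominated, so it does not affect the final bound.
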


\begin{proof}
    $Q = L_2 + L_4$.
    By the graph matrix norm bounds, we have $\|Q\| \leq \wt{O}(n)$.

    Next, we lower bound the minimum eigenvalue of $QQ^\top$.
    Observe that
    \begin{equation*}
        QQ^\top =
        \begin{bmatrix}
            L_2 L_2^\top & L_2 L_4^\top \\
            L_4 L_2^\top & L_4 L_4^\top \\
        \end{bmatrix} \mper
    \end{equation*}
    For $L_2 L_2^\top$, \pref{lem:L_two_min_eigenvalue} shows that it has minimum eigenvalue $\Omega(n^2)$.

    For $L_4L_4^\top$, by \pref{lem:LLT} and \pref{lem:null_space_NNT} we have
    \begin{equation*}
        L_4 L_4^\top + NN^\top = \Theta(n^2) \cdot \Id + A_1 A_1^\top + A_2 A_2^\top + \calE_4 \mcom
    \end{equation*}
    where $\|\calE_4\| \leq \wt{O}(n^{2-\frac{\eps}{2}})$.
    This means that $L_4 L_4^\top + NN^\top$ is full rank and has minimum eigenvalue $\Omega(n^2)$.

    For the off-diagonal block $L_2 L_4^\top$, although both $\|L_2\|$ and $\|L_4\| = \wt{O}(n)$, note that $L_2$ and $M_{a_1}$ (the dominating component of $L_4$) have disjoint rows and columns in $Q$, meaning that $L_2 M_{a_1}^\top$ does not contribute to $L_2 L_4^\top$.
    Then, since $\|L_4 - M_{a_1}\| \leq \wt{O}(n^{\frac{1}{2}- \frac{\eps}{2}})$, we have $\|L_2 L_4^\top\| \leq \wt{O}(n^{\frac{3}{2}-\frac{\eps}{2}})$.

    We have shown that $QQ^\top$ plus an orthogonal matrix is full rank and has minimum eigenvalue $\Omega(n^2)$.
    This implies that the minimum nonzero eigenvalue of $QQ^\top$ is $\Omega(n^2)$.
    This completes the proof.
\end{proof}

\end{document}